\numberwithin{equation}{section}
\newtheorem{theorem}{Theorem}
\newtheorem{definition}{Definition}[subsection]
\newtheorem{lemma}{Lemma}[subsection]
\begin{document}

\title{Three-Body Dispersive Scattering}
\author{Michael Breeling and Avy Soffer}
\maketitle

\begin {abstract}
We study the spectral and scattering theory of three body dispersive systems,
which include a massless particle and a two body non-relativistic pair, along with two body short
interactions among the three particles.
We prove local decay estimates and propagation estimates, which are then used to prove asymptotic completeness for all non-threshold negative energies of the system. By adding (trace class) non-particle-conserving interactions, this models ionization processes as well. In particular, if the massless particle is interacting only through the non-particle-number-conserving interactions, no ionization is possible if the total energy of the system is negative, as predicted by the formal analysis of the photo-electric effect.
\end{abstract}
\tableofcontents
\section{Introduction}

Dispersive equations include N-quasiparticle systems of photons, phonons, spin waves and magnons, Bogoliubov-De Gennes quasiparticles and more.\bigskip

Mathematical theory of N-body scattering is based on the special dispersion relation of quantum particles. When the particles are not standard nonrelativistic, such as photons,
or other dispersion-based particles, the theory is largely open.
See however the works \cite{gerard},\cite{zie1}. \bigskip

In the context of quantum field theory, there has been recent progress, in particular by the works of Sigal and collaborators \cite{sf}.\bigskip

In this work we've been studying the special case of three particles, one of which is a massless particle, a ``photon''. In particular, we prove asymptotic completeness and propagation estimates for such systems.\bigskip

Since the kinetic part is not quadratic, the standard proofs have to be modified. Furthermore, the singular nature of the dispersion relation of the ``photon'', requires a new approach to micro- localization of the phase-space operators.\bigskip

In these systems there are three new mathematical problems: for one, since the kinetic part is not quadratic, it is in general not known how to prove local decay estimates using positive commutator methods. The standard conjugate operator given by the dilation generator does not have a positive commutator with the kinetic part. On the other hand, the dispersive generalization of the dilation, based on the kinetic part, results in a non-local operator with an unbounded and uncontrolled commutator with N-body potential terms.\bigskip

The second problem has to do with the fact that some dispersion relations are singular, like that of a photon. It is singular at 0 frequency.\bigskip

The third problem is that typically the number of particles is not conserved by the Hamiltonian, and in cases when the quasiparticle is massless, arbitrary number of such particles can emerge during the interaction.\bigskip

At the more technical level the bound states of subcluster Hamiltonians depend on the external momentum. Hence the typical bound cluster is in fact given by a fiber integral over all possible external momenta of the cluster.\bigskip

In this work, we study the spectral and scattering theory of a three-body dispersive system, consisting of a massless particle as well, interacting with non-relativistic particles via general short range two-body interactions.\bigskip

We allow arbitrary numbers of bound states for the subsystems, and ionization process is also allowed.\bigskip

We prove, via a Mourre estimate, propagation estimates and local decay. These are then used to prove asymptotic completeness for all negative energies, away from thresholds. The case of positive energies, just like in the standard three-body scattering, is more complicated and will be done separately.\bigskip

The condition of negative energy on the initial state does not preclude ionization, since the outgoing state in our case can have a free ``electron'' and total energy negative (coming from a bound state of the massless particle with either the ``electron'' or ``proton'').\bigskip

If on the other hand, we restrict the interaction between the massless particle and the rest of the system to be a trace class perturbation which involves the creation/annihilation of the massless particle, then, by energy conservation, ionization will take place only if the incoming massless particle is energetic enough to free the ``electron''.\bigskip

To obtain these results we prove propagation estimates which are adapted to the massless case. Special care is needed in the development of commutator estimates, due to the singular nature of the dispersion relation:

$$
H_0 = p^2+|k|,
$$

where the first term corresponds to the ``electron'' with mass 2, the second term is the massless particle (with momentum operator $k$), and the third particle is assumed to have infinite mass, and located at the origin.\bigskip

So, the full Hamiltonian is of the form

$$
H_0+V_{12}(x)+V_{13}(y)+V_{23}(x-y)
$$

where $x$ is the position of the ``electron'', $y$ the position of the massless particle and $V_{23}$ is the interaction between the massless particle and the ``electron''.\bigskip

The space dimension is 3.\bigskip

Other complications arise due to the dispersive (non-quadratic) nature of $H_0$.\bigskip

In this case the bound states (energies and eigenfunctions) of the subsystems depend explicitly on the momentum of the moving subsystem. That is, a moving electron with a photon cloud has a cloud which depends on its velocity (as the magnetic field around a conducting wire depends on the current).\bigskip

So, in fact, the asymptotic states are represented as fiber integrals over the external momentum, of momentum dependent eigenfunctions in the internal coordinate,  $x-y$, for the ``electron'' + massless particle cluster.

\section{Definitions and potential assumptions}

Let $\mathbb{R}^6_X=\mathbb{R}^3_x\oplus \mathbb{R}^3_y$ be the configuration space for a restricted three-particle system, where coordinates have been selected so that a first particle (a ``proton'', treated as having infinite mass) is at the origin, $x$ is the position of the second particle (an ``electron''), and $y$ is the position of the third particle (a ``photon''). Where $X=(x,y)$ represents a point in configuration space, we let $P=(p,k)$ be the Fourier conjugate of $X$, so that $k$ is the momentum of the photon and $p$ is the momentum of the electron. We define the \textbf{free Hamiltonian}
$$H_0 := p^2+|k|$$
as an operator on $L^2(\mathbb{R}^6_P)$. The free Hamiltonian models the kinetic energies $p^2$ and $|k|$ of the electron and of a photon, respectively. The proton is fixed at the origin and so makes no contribution, and creation and annihilation of photons is ignored in this simplified model. As a multiplier operator, $H_0$ is self-adjoint on the domain $D(H_0) := \lbrace \psi \in L^2(\mathbb{R}^6) : H_0\psi \in L^2(\mathbb{R}^6)\rbrace$. The graph norm on $D(H_0)$ is equivalent to the weighted $L^2$ norm with weight  $(1+p^4+k^2)\text{ } dp\text{ } dk$.   As such, the operator $H_0$ is essentially self-adjoint on the dense set $C^\infty_0(\mathbb{R}^6)$, the class of smooth functions with compact support; this class is dense in the weighted space $L^2(\mathbb{R}^6_P; (1+p^4+k^2)\text{ } dp\text{ } dk)$. \bigskip

We introduce a three-body potential function $V = V_{12}(x)+V_{13}(y)+V_{23}(x-y)$, where $V_{12}$, $V_{13}$, and $V_{23}$ are functions on $\mathbb{R}^3$. $V_{12}$ is the electron-proton interaction, $V_{13}$ is the photon-proton interaction, and $V_{23}$ is the electron-photon interaction. We define the \textbf{full Hamiltonian} on $L^2(\mathbb{R}^6_X)$:

$$H= H_0 + V_{12}(x)+ V_{13}(y) + V_{23}(x-y)$$

We assume that the potentials satisfy the assumptions of the Kato-Rellich theorem so that $H$ is self-adjoint on $D(H)=D(H_0)$ and bounded below. In particular, we make assumption (\ref{RB}) below. \bigskip

Next we define the five \textbf{cluster decompositions} $a \in \lbrace (xy0), (y)(x0), (x)(y0), (xy)(0), (x)(y)(0)\rbrace$. These are used as indices for quantities representing the system in the following situations respectively: when all three particles are close together, when just the photon is far away, when just the electron is far away, when the photon and electron are close together but far away from the proton, and when all three particles are far apart. The number of clusters for a particular decomposition is denoted $\#(a)$, e.g. $\#((x)(y0))=2$.\bigskip

When we are analyzing a particular cluster decomposition, it's useful to have a uniform notation to describe the internal and external cluster coordinates. We write $x^a$ and $p^a$ for internal coordinates and $x_a$ and $p_a$ for external coordinates, e.g. for $a=(y)(x0)$ we write $y=x_a$, $x=x^a$, $k=p_a$, and $p=p^a$. The rest of the notation is contained in this chart.

\begin{center}
    \begin{tabular}{| l | l | l | l | l | l |}
    \hline
    & (xy0)& (y)(x0) & (x)(y0) & (xy)(0) & (x)(y)(0) \\ \hline
    $x_a$ & - & $y$ & $x$ & $x+y$ & $X$ \\ \hline
    $x^a$ & $X$ & $x$ & $y$ & $x-y$ & - \\ \hline
    $p_a$ & - & $k$ & $p$ & $p+k$ & $P$ \\ \hline
    $p^a$ & $P$ & $p$ & $k$ & $p-k$ & - \\ \hline
    $I^a(x^a)$ & $V$ & $V_{12}(x)$ & $ V_{13}(y)$ & $V_{23}(x-y)$ & $0$ \\ \hline
    $I_a$ & $0$ & $V_{13}(y)+V_{23}(x-y)$ & $V_{12}(x)+V_{23}(x-y)$ & $V_{12}(x)+V_{13}(y)$ & $V$ \\ \hline
    \end{tabular}
\end{center}

The assumption (\ref{RB}) is that the following operators are are relatively $H_0$-bounded with relative bound less than $1$: for all partitions $a$,

\begin{equation}\label{RB}
\begin{cases}
I^a\\
x^a \cdot \bigtriangledown I^a \\
x^a \cdot \bigtriangledown(x^a \cdot \bigtriangledown I^a)\\
\end{cases}\tag{RB}
\end{equation}

One simple way to satisfy (\ref{RB}) is to have each potential be smooth and decay at infinity, although weaker conditions will suffice.\bigskip

We define the \textbf{truncated Hamiltonians} $H_a=H_0+I^a = H-I_a$. These represent the approximate energy of the system when it has separated into clusters in the manner suggested by $a$, so we can neglect certain potentials. The $I_a$ are the \textbf{intercluster potentials}, and the $I^a$ are the internal potentials.\bigskip

We focus our attention on the three \textbf{2-cluster decompositions} such that $\#(a)=2$. For each of these cluster decompositions, the truncated Hamiltonians $H_a$ can be written as a direct integral (cf. \cite{gerardlaba}) over the \textbf{reduced Hamiltonians} $H_a(s)$ defined below. This is justified because $H_a$ commutes with $p_a$ in each case. Essentially, we get to identify $p_a$ with a number $s$. \bigskip

$H_{(y)(x0)}(s):= p^2 + |s| + V_{12}(x) $ is an operator on $L^2(\mathbb{R}^3_x)$, so that $H_{(y)(x0)} = \int^\oplus_{\mathbb{R}^3_s} H_{(y)(x0)}(s)\text{ } ds$ is an operator on $L^2(\mathbb{R}^6) = \int^\oplus_{\mathbb{R}^3_s} L^2(\mathbb{R}^3_x) \text{ } ds$.\bigskip

$H_{(x)(y0)}(s)=  s^2 + |k| + V_{13}(y)$ is an operator on $L^2(\mathbb{R}^3_y)$, so that $H_{(x)(y0)} = \int^\oplus_{\mathbb{R}^3_s} H_{(x)(y0)}(s)\text{ } ds$ is an operator on $L^2(\mathbb{R}^6) = \int^\oplus_{\mathbb{R}^3_s} L^2(\mathbb{R}^3_y) \text{ } ds$.\bigskip

$H_{(xy)(0)}(s)= \frac{1}{4}(p^a+s)^2 + \frac{1}{2}|p^a-s| + V_{23}(x^a)$ is an operator on $L^2(\mathbb{R}^3_{x^a})$ so that $H_{(xy)(0)} = \int^\oplus_{\mathbb{R}^3_s} H_{(xy)(0)}(s)\text{ } ds$ is an operator on $L^2(\mathbb{R}^6) = \int^\oplus_{\mathbb{R}^3_s} L^2(\mathbb{R}^3_{x^a}) \text{ } ds$.\bigskip

In order to prove the Mourre estimate, we use an additional spectral assumption on the reduced Hamiltonians $H_{(xy)(0)}(s)$ for the photon-electron cluster. Specifically, we assume
\begin{equation}\label{SPEC}
\begin{split}
&H_{(xy)(0)}(s) \text{ has no eigenvalues embedded in its continuous spectrum for any }s\in\mathbb{R}^3\\
\end{split}\tag{SPEC}
\end{equation}

This is because eigenvalues can behave poorly as $s$ varies, where the continuous spectrum is concerned. Isolated eigenvalues, on the other hand, have a well-understood structure (cf. \cite{kato}). It may be possible to remove the assumption of no singular continuous spectrum by proving an appropriate Mourre estimate.\bigskip

We define the \textbf{subsystem Hamiltonians} as

$$h_{(y)(x0)}= H_{(y)(x0)}(0)= p^2 + V_{12}(x)$$
$$h_{(x)(y0)}= H_{(x)(y0)}(0)= |k| + V_{13}(y)$$
$$h_{(xy)(0)}= H_{(xy)(0)}(0) = \frac{1}{4}(p^a)^2 + \frac{1}{2}|p^a| + V_{23}(x^a)$$

We further assume the following relative boundedness and compactness properties.

\begin{equation}\label{RC1}
\begin{cases}
V_{13}\text{ and }y\cdot \bigtriangledown V_{13}\text{ are relatively } |k|\text{-compact as operators on } L^2(\mathbb{R}^3_y) \\
V_{13}\text{ and }y\cdot \bigtriangledown V_{13}\text{ have relative } |k|\text{-bound less than 1} \\
V_{12}\text{ and }x\cdot \bigtriangledown V_{12}\text{ are relatively }p^2\text{-compact as operators on } L^2(\mathbb{R}^3_x)\\
V_{12}\text{ and }x\cdot \bigtriangledown V_{12}\text{ have relative }p^2\text{-bound bound less than 1} \\
V_{23}(x^a)\text{ and }x^a\cdot \bigtriangledown W (x^a)\text{ are relatively }(p^a+s)^2 \text{-compact }\\
\text{ as operators on }L^2(\mathbb{R}^3_{x^a}) \text{ for all } s\in \mathbb{R}^3\\
V_{23}(x^a)\text{ and }x^a\cdot \bigtriangledown W (x^a)\text{ have relative }(p^a+s)^2  \text{-bound }\\
\text{ less than 1 for all } s\in \mathbb{R}^3\\
\end{cases}\tag{RC1}
\end{equation}

These hold if, for instance, each potential function is continuous and decays at infinity. The Kato-Rellich theorem and (\ref{RC1}) imply that the reduced and subsystem Hamiltonians are self-adjoint on their respective domains $D(|k|)$, $D(p^2)$, and $D((p^a)^2+|p^a|)$.\bigskip

The eigenvalues of the subsystem Hamiltonians, along with zero, form the set of \textbf{thresholds}. The threshold energies are significant; if the entire system is given a threshold energy, then a subsystem may form a bound state without any kinetic energy left over to separate it from the remaining particle. This presents complications in sections below.\bigskip

\subsection{Definitions involving commutators}

Since we will frequently need to commute unbounded operators, but commutators of unbounded operators are a priori only defined as quadratic forms, the following is convenient. For self-adjoint operators $H$ and $A$, it is known that if $H$ satisfies the $C^1(A)$ property (cf. \cite{abg}), then the commutator $[H,iA]$ is well-defined and in fact the virial theorem holds. For more, see Proposition II.1 of \cite{mourre} and conditions $(M)$ and $(M')$ from \cite{virial}. \bigskip

\begin{lemma}[Formal commutators are well defined]\label{commlemma}
Suppose we have two (possibly unbounded) self-adjoint operators $H$ and $A$, where $H$ is bounded below. A priori there exists a quadratic form $[H,iA]_0$ on $D(H)\cap D(A)$. Suppose that $[H,iA]_0$ evaluated on a space $S$ of test vectors agrees (on $S$) with the closed quadratic form associated with a self-adjoint operator $C$ defined on an operator domain $D(C)$. Then, under the following conditions:

\begin{equation}\label{formal1}
\text{}e^{itA}\text{ maps both }D(H)\text{ and }S\text{ into themselves.}\tag{FC1}
\end{equation}
\begin{equation}\label{formal2}
S\subset D(H) \cap D(A)\tag{FC2}
\end{equation}
\begin{equation}\label{formal3}
S\text{ is a core for }D(H)\tag{FC3}
\end{equation}
\begin{equation}\label{formal4}
D(H) \subset D(C)\tag{FC4}
\end{equation}

We have that $H$ is $C^1(A)$, $[H,iA]_0$ is closeable, the self-adjoint operator associated to its closure is $C$, and the virial theorem

$$\langle \psi, C \psi \rangle = 0 \text{ whenever }\psi\text{ is an eigenvector of H.}$$

holds.
\end{lemma}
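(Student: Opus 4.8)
The plan is to deduce the whole statement from the abstract theory of $C^1(A)$-regularity in \cite{abg} together with the refined virial theorem of \cite{virial}. Hypotheses (FC1)--(FC4) are arranged to supply exactly the conditions $(M)$ and $(M')$ of \cite{virial}. Condition $(M)$ is (a mild strengthening of) the pointwise domain-invariance (FC1): it amounts to the assertion that the unitary group $e^{itA}$ restricts to a $C_0$-group on $\mathcal{G}:=D(H)$ equipped with the graph norm. Condition $(M')$ is the existence of a core for $H$, contained in $D(A)$, on which the formal commutator is bounded relative to the graph norm of $H$; (FC2)--(FC3) identify $S$ as such a core, and (FC4) will supply the relative bound because $C$ will be $H$-bounded and, by hypothesis, $[H,iA]_0$ agrees with the (closed) form $q_C$ of $C$ on $S$. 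Once $(M)$ and $(M')$ are in hand, \cite{virial} gives $H\in C^1(A)$, the virial theorem is Proposition II.1 of \cite{mourre}, and a short density argument identifies the commutator with $C$ and yields closeability.

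To check $(M)$ I would unpack (FC1) as follows. For fixed $t$, (FC1) makes $e^{itA}$ an everywhere-defined map $\mathcal{G}\to\mathcal{G}$ whose graph is closed (convergence in $\mathcal{G}$ implies convergence in $L^2$, on which $e^{itA}$ is continuous), so $e^{itA}\in B(\mathcal{G})$ by the closed graph theorem. The map $t\mapsto\|e^{itA}u\|_{\mathcal{G}}$ is lower semicontinuous for each $u\in\mathcal{G}$: $\langle Hw,e^{itA}u\rangle$ is continuous in $t$ for every $w\in D(H)$, so $\|He^{itA}u\|\le\liminf_{s\to t}\|He^{isA}u\|$, while $\|e^{itA}u\|=\|u\|$. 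Hence $t\mapsto\|e^{itA}\|_{B(\mathcal{G})}$ is lower semicontinuous, and a Baire-category argument on the group yields $\sup_{|t|\le1}\|e^{itA}\|_{B(\mathcal{G})}<\infty$; strong continuity of the restricted group on $\mathcal{G}$ then follows as in \cite{virial}.

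To check $(M')$: since $C$ is self-adjoint (hence closed) and $D(H)\subseteq D(C)$ by (FC4), the closed graph theorem gives $c$ with $\|C\psi\|\le c\|\psi\|_{\mathcal{G}}$ for all $\psi\in D(H)$. For $\psi\in S\subseteq D(H)\cap D(A)$ the hypothesis gives $[H,iA]_0(\psi,\psi)=\langle\psi,C\psi\rangle$, so $|[H,iA]_0(\psi,\psi)|\le c\|\psi\|_{\mathcal{G}}^2$; as $S$ is a core for $H$ contained in $D(A)$, this is $(M')$. Therefore $H\in C^1(A)$. The $C^1(A)$-commutator then extends to a bounded operator $\mathcal{G}\to\mathcal{G}^*$ agreeing on $S\times S$ (by polarization, both forms being Hermitian) with the bounded form $q_C|_{\mathcal{G}}$; density of $S$ in $\mathcal{G}$ forces equality, so the extended commutator is $C$. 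In particular $[H,iA]_0$ is, on $S$, a restriction of the closed form $q_C$, hence closeable with closure $q_C$, represented by the self-adjoint operator $C$, and for an eigenvector $\psi$ of $H$ the virial theorem for $C^1(A)$ operators gives $\langle\psi,C\psi\rangle=\langle\psi,[H,iA]\psi\rangle=0$.

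I expect the genuine obstacle to lie inside \cite{virial} rather than in our verification: proving that pointwise domain-invariance together with the relative bound $(M')$ actually force $t\mapsto e^{itA}(H-z)^{-1}e^{-itA}$ to be $C^1$, with strong continuity of $e^{itA}$ in the graph norm being the delicate ingredient. On our side, the points needing care are the closed-graph derivation of the $H$-bound on $C$ and the identification of the \emph{closure} of $[H,iA]_0$ with the form of $C$ on all of $D(q_C)$ rather than merely on $D(H)$; the latter needs $D(H)$ (hence $S$) to be a form core for $C$, which holds in the intended applications, where $D(H)=D(H_0)$ and $C$ is built from operators on the same scale of spaces.
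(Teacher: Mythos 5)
The paper gives no explicit proof of this lemma---it simply points to Proposition II.1 of \cite{mourre} and conditions $(M)$, $(M')$ of \cite{virial}---and your proposal is exactly a careful verification that (FC1)--(FC4) supply those conditions (the Baire/closed-graph argument upgrading pointwise invariance (FC1) to a $C_0$-group on $D(H)$, the closed-graph bound $\|C\psi\|\lesssim\|\psi\|_{D(H)}$ from (FC4) giving $(M')$), so it is the same route, fleshed out. You also correctly flag the one genuine subtlety: identifying the closure of $[H,iA]_0$ with the form of $C$ requires $S$ (equivalently $D(H)$) to be a form core for $C$, which is not a formal consequence of (FC1)--(FC4) as stated, but holds in the paper's applications since there $D(C)=D(H)=D(H_0)$ by Kato--Rellich, so $S$ is automatically an operator (hence form) core for $C$.
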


We always write $E_\bigtriangleup(H)$ for the spectral projection of the operator $H$ onto the interval $\bigtriangleup\subset \mathbb{R}$.\bigskip

\begin{definition}[Mourre estimate]
Let $H$ and $A$ be self-adjoint operators on a Hilbert space, let $E\in \mathbb{R}$, and let $\alpha>0$. Suppose that $H$ and $A$ satisfy the hypotheses of Lemma \ref{commlemma}, so that $[H,iA]$ is a well-defined self-adjoint operator with domain containing $D(H)$. $H$ is said to satisfy a Mourre estimate at $E$ with conjugate operator $A$, constant $\alpha$, and width $\delta$ if there exists an open interval $\bigtriangleup = (E-\delta,E+\delta)$  and a compact operator $K$ such that

$$E_\bigtriangleup(H) [H,iA] E_\bigtriangleup(H) \ge \alpha E_\bigtriangleup(H) + K$$

\end{definition}

We define the Mourre conjugate operators
$$A^a := \frac{1}{2}\left(p^a \cdot x^a + x^a \cdot p^a \right)$$
$$A_a := \frac{1}{2}\left(p_a \cdot x_a + x_a \cdot p_a \right)$$
$$A := A^a+A_a = \frac{1}{2}\left(P \cdot X + X \cdot P \right)$$
Since $C^\infty_0(\mathbb{R}^n)$ is invariant under dilations $\psi(r)\mapsto e^{-nt/2}\psi(e^t r)$ and these form a strongly continuous unitary group, all these operators (the generators of the unitary dilation groups) are known to be essentially self-adjoint on their respective $C^\infty_0$ spaces (cf. Theorem VIII.10 of \cite{rs}). \bigskip

Lemma \ref{commlemma} applies to our $H$ and $A$, with $S$ being the class of $C^\infty_0$ functions. Condition (\ref{formal1}) is satisfied since the $e^{itA}$ are dilations, which map both $L^2(\mathbb{R}^6; (1+k^2+p^4)\text{ }dk\text{ }dp)$ and $C^\infty_0(\mathbb{R}^6)$ into themselves. Condition (\ref{formal2}) is satisfied as well from elementary properties of smooth functions with compact support. Condition (\ref{formal3}) is satisfied by the Kato-Rellich theorem; since $S$ is a core for $H_0$, $S$ is also a core for $H$. To show condition (\ref{formal4}), we examine the formal commutator

$$C = 2p^2 + |k| - \sum_{\#(a)=2} x^a \cdot \bigtriangledown I^a $$

 From another application of Kato-Rellich, this has the same domain as $H$. Thus the lemma applies, and $[H,iA]$ is thought of as extending to the operator $C$. Similarly, we may compute the commutators:

$$[H_a,iA]= 2p^2 + |k| - x^a \cdot \bigtriangledown I^a \text{ for each } a$$
$$[h_{(y)(x0)},iA^a]= 2p^2 -x\cdot V_{12}(x)$$
$$[h_{(x)(y0)},iA^a]= |k| - y\cdot V_{13}(y)$$
$$[h_{(xy)(0)},iA^a] = \frac{1}{2}(p^a)^2 + \frac{1}{2}|p^a| -x^a \cdot V_{23}(x^a)$$
$$[H_{(xy)(0)}(s) , i A^a] = \frac{1}{2}(p^a)^2+ \frac{1}{2}p^a \cdot s +\frac{1}{2}\frac{(p^a)^2-s\cdot p^a}{|p^a-s|}-x^a \cdot V_{23}(x^a)$$

Since the constant involved in the Mourre estimate for our $H$ depends on the distance from thresholds, before stating it we must describe the spectra of the subsystem Hamiltonians. The subsystem Hamiltonians all satisfy a Mourre estimate at all nonzero energies $E$ (with conjugate operators $\text{sgn}(E) A^a$). The arguments are standard and use little more than the functional calculus, all following (\cite{mourre}, Theorem I.1). Because of this, we would like to use Mourre theory to conclude that eigenvalues of each $h_a$ may only accumulate at $0$, and each $h_a$ has no continuous singular  spectrum. Moreover, since the essential spectrum of each $h_a$ must be $[0,\infty)$ by Weyl's theorem (using (\ref{RC1})), there is no continuous spectrum below $0$. But to draw these conclusions from the result in \cite{mourre}, we need to verify also a condition on the second commutators. Self-adjoint operators $H$ and $A$ are said to satisfy condition (\ref{2COMM}) if

\begin{equation}\label{2COMM}
\begin{split}
& H \text{ and } A \text{ satisfy the hypotheses of Lemma \ref{commlemma}, and given the operator } C \text{ extending }\\
& [H,iA] \text{, we have that the hypotheses are also satisfied by } C \text{ and } A \text{, so that } [C,iA]\\
& \text{ extends to a a self-adjoint operator with domain containing } D(H) \text{.}\\
 \end{split}\tag{2COMM}
\end{equation}

Computing the second commutators

$$[[h_{(y)(x0)},iA^a],iA^a]= 4p^2 +x\cdot \bigtriangledown(x\cdot \bigtriangledown V_{12}(x))$$
$$[[h_{(x)(y0)},iA^a],iA_y]= |k| + y \cdot \bigtriangledown (y\cdot \bigtriangledown V_{13}(y))$$
$$[[h_{(xy)(0)},iA^a],iA^a] = (p^a)^2 + \frac{1}{2}|p^a| +x^a \cdot \bigtriangledown ( x^a \cdot \bigtriangledown V_{23}(x^a))$$

we confirm that condition (\ref{2COMM}) is satisfied in each case if we assume (\ref{RB2}):

\begin{equation}\label{RB2}
\begin{cases}
 y \cdot \bigtriangledown (y\cdot \bigtriangledown V_{13}(y))\text{ has relative } |k|\text{-bound less than 1} \\
x\cdot \bigtriangledown(x\cdot \bigtriangledown V_{12}(x))\text{ has relative }p^2\text{-bound bound less than 1} \\
x^a \cdot \bigtriangledown ( x^a\cdot \bigtriangledown V_{23}(x^a))\text{ has relative } (p^a)^2 + \frac{1}{2}|p^a| \text{-bound }\\
\text{ less than 1}\\
\end{cases}\tag{RB2}
\end{equation}

and apply Kato-Rellich. Thus the conditions of Mourre's paper \cite{mourre} are satisfied, and we may invoke the result that  eigenvalues of each $h_a$ may only accumulate at $0$, and each $h_a$ has no continuous singular spectrum (the other part about Weyl's theorem holds irregardless). We can now use our understanding of the spectra.\bigskip

For each of the 2-cluster decompositions $a$ define $G_a = \inf \lbrace \lbrace \lambda : \lambda \text{ eigenvalue of } h_a \rbrace \cup \lbrace 0 \rbrace \rbrace$; this is either an eigenvalue of $h_a$ or $0$, because eigenvalues of $h_a$ may only accumulate at $0$. Then define $d(E,a)$ to be

$$d(E,a) = \begin{cases}
\left( E - \sup \lbrace \lambda: \lambda \text{ zero or eigenvalue of } h_a \text{ , } \lambda <E \rbrace \right)  :  \text{ E not zero nor eigenvalue of } h_a \text{ , } E > G_a \\
0  :  \text{ E zero or eigenvalue of } h_a \\
b : \text{ } E < G_a\\
\end{cases}$$

where $b$ can be any positive constant one wishes. Roughly speaking, $d(E,a)$ is the distance from $E$ to the nearest eigenvalue of $h_a$ to the left of $E$. Then, let $d(E) = \min_a d(E,a)$. Then $d(E)$ is,  roughly, the distance from $E$ to the nearest threshold of any type to the left of $E$. \bigskip

Finally we are in a position to state the first main theorem.

\begin{theorem}\label{mourre}
Suppose that the potential functions satisfy (\ref{RB}), (\ref{RB2}),  (\ref{RC1}), and (\ref{SPEC}), as well as (\ref{RC2}) below. Then for any $\epsilon>0$ and any nonthreshold energy $E$, the full Hamiltonian $H$ satisfies a Mourre estimate at energy $E$ with conjugate operator $A$ and constant $\alpha= d(E)-\epsilon$.
\end{theorem}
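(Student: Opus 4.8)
The plan is to run the geometric (partition-of-unity) version of the $N$-body Mourre argument, with two modifications forced by this model: the photon--electron truncated Hamiltonian $H_{(xy)(0)}$ does not split as a tensor product of an internal Hamiltonian and an external kinetic term, so that step is replaced by a direct-integral analysis over the cluster momentum $p_a$ with $p_a$-dependent bound states; and since the symbol $|k|$ is only Lipschitz at $k=0$, the smooth functional calculus and pseudodifferential commutator expansions appearing below must be adapted near $k=0$. Throughout I use the structure already set up: $H$ and $A$, and likewise the $H_a$, the subsystem Hamiltonians $h_a$, and the reduced Hamiltonians $H_{(xy)(0)}(s)$, satisfy the hypotheses of Lemma \ref{commlemma}, so the commutators are genuine self-adjoint operators and the virial theorem applies; the $h_a$ and the $H_{(xy)(0)}(s)$ have no singular continuous spectrum, with eigenvalues of the $h_a$ accumulating only at $0$ and eigenvalues of $H_{(xy)(0)}(s)$ isolated below the continuous spectrum with Kato-analytic $s$-dependence (by (\ref{SPEC})); each such operator satisfies a Mourre estimate at every nonzero energy with conjugate $\pm A^a$ ((\cite{mourre}, Theorem I.1)); and we have the identities $[H_a,iA]=2p^2+|k|-x^a\cdot\nabla I^a$ and $[H_0,iA]=2p^2+|k|=H_0+p^2$.

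\emph{Step 1: the truncated Hamiltonians.} I show that each $H_a$ with $\#(a)=2$ obeys a Mourre estimate at $E$ with conjugate $A$ and constant $d(E,a)-\epsilon/2$. For $a=(y)(x0)$ and $a=(x)(y0)$ there is a tensor factorization, $H_{(y)(x0)}=h_{(y)(x0)}\otimes 1+1\otimes|k|$ and $H_{(x)(y0)}=p^2\otimes 1+1\otimes h_{(x)(y0)}$, respected by $A=A^a+A_a$. Decomposing the internal factor into eigenprojections of $h_a$ and its continuous-spectral part, and using the virial theorem (which annihilates $[h_a,iA^a]$ on each eigenspace), the subsystem Mourre estimate on the continuous part, the nonnegativity $[|k|,iA_y]=|k|\ge0$, $[p^2,iA_x]=2p^2\ge0$ of the external commutators, and the standard lemma that a Mourre estimate for a sum of two commuting operators follows from Mourre estimates for the summands, one reads off the claim with a relatively compact remainder; tracing the constant, the bound-state sub-bundles contribute $\min\{E-e_j : e_j \text{ an eigenvalue of } h_a \text{ with } e_j<E\}$ and the continuous part contributes strictly more, so the constant is $d(E,a)$ up to the $\epsilon/2$ lost in smoothing spectral cutoffs and in absorbing $-I^a$, $x^a\cdot\nabla I^a$ and similar relatively compact lower-order terms into $K$. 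For $a=(xy)(0)$, which does not factor, I use $H_{(xy)(0)}=\int^\oplus_{\mathbb{R}^3_s}H_{(xy)(0)}(s)\,ds$ and $A=A^a+A_a$, with $A^a$ acting fibrewise and $A_a$ the dilation in $s$. Since $E_\Delta(H_{(xy)(0)})$ confines $p^2+|k|$, hence $|s|=|p+k|$, to a bounded set, it suffices to handle $s$ in a fixed compact set; there one proves a Mourre estimate for each fibre $H_{(xy)(0)}(s)$ with conjugate $A^a$, uniformly in $s$, by running the argument of (\cite{mourre}, Theorem I.1) on the $s$-dependent commutator $[H_{(xy)(0)}(s),iA^a]$ of the text, the conical singularity of $\tfrac12|p^a-s|$ at $p^a=s$ being handled by isolating a momentum neighborhood of $\{p^a=s\}$ on which the singular terms of the commutator stay bounded. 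On the $s$-dependent bound-state sub-bundles, where the internal commutator vanishes, positivity comes instead from $A_a$ acting on the effective fibre Hamiltonians $\lambda_n(s)$: their commutator $s\cdot\nabla_s\lambda_n(s)$ is, by the Hellmann--Feynman formula and (\ref{RC1}), positive on the energy shell away from the thresholds of $h_{(xy)(0)}=H_{(xy)(0)}(0)$. Reassembling the fibres and using that $[H_{(xy)(0)},iA]=2p^2+|k|-x^a\cdot\nabla V_{23}$ telescopes the internal and external pieces, one obtains the Mourre estimate for $H_{(xy)(0)}$ with constant $d(E,(xy)(0))-\epsilon/2$.

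\emph{Step 2: geometric gluing.} Fix a partition of unity $\{j_a\}_a$ on $\mathbb{R}^6_X$ with $\sum_a j_a^2=1$, each $j_a$ homogeneous of degree $0$ outside a ball and supported where the intracluster distances of $a$ are bounded but its intercluster distances are large (a Graf-type construction). The IMS localization formula gives $[H,iA]=\sum_a j_a[H,iA]j_a+(\text{double-commutator terms})$; the double commutators $[j_a,[[H,iA],j_a]]$ are $H$-compact, since they contain the decaying $\nabla j_a$, the commutator $[|k|,j_a]$ is bounded (its symbol $k/|k|$ is bounded), and the potential terms of $[H,iA]$ commute with $j_a$. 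On $\operatorname{supp}j_a$ one has $H-H_a=I_a$, and $I_a$ as well as the lower-order terms of $[I_a,iA]$ vanish at infinity, so $j_a(H-H_a)j_a$ and $j_a[I_a,iA]j_a$ are $H$-compact (here (\ref{RC1}) and (RC2) upgrade the reduced/subsystem relative compactness to relative compactness on $L^2(\mathbb{R}^6)$). Multiplying by $E_\Delta(H)$, using that $g(H)-g(H_a)$ is compact for $g\in C_0^\infty$ supported near $\Delta$, and inserting $E_{\Delta'}(H_a)$ for a slightly larger $\Delta'$, we rewrite each $E_\Delta(H)j_a[H,iA]j_aE_\Delta(H)$ as $E_\Delta(H)j_aE_{\Delta'}(H_a)[H_a,iA]E_{\Delta'}(H_a)j_aE_\Delta(H)$ modulo compacts. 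Applying Step 1 to the three two-cluster terms, the free bound $[H_0,iA]=H_0+p^2\ge H_0\ge E-\delta$ to $a=(x)(y)(0)$ (with $E-\delta\ge d(E)-\epsilon$ for $\delta$ small, since $0$ is a threshold so $d(E)\le E$ when $E>0$, while $E_\Delta(H_0)=0$ when $E<0$), and noting that the one-cluster term $a=(xy0)$ is localized to a bounded region of $\mathbb{R}^6_X$ and hence only feeds $K$, one collects $E_\Delta(H)[H,iA]E_\Delta(H)\ge(d(E)-\epsilon)E_\Delta(H)+K$, which is Theorem \ref{mourre}.

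\emph{Main obstacle.} I expect the crux to be the fibred part of Step 1 for $a=(xy)(0)$: a Mourre estimate for the fibres $H_{(xy)(0)}(s)$ uniform in $s$, and the reassembly of the direct integral. This is delicate for two reasons. First, the fibre kinetic symbol $\tfrac14(p^a+s)^2+\tfrac12|p^a-s|$ is only Lipschitz on $\{p^a=s\}$, so the smooth functional calculus of $H_{(xy)(0)}(s)$ and the pseudodifferential commutator expansions normally used to localize $[H_{(xy)(0)}(s),iA^a]$ in the energy of $H_{(xy)(0)}(s)$ must be replaced by arguments that tolerate this conical singularity. Second, the bound states of $H_{(xy)(0)}(s)$ genuinely move with $s$, so the internal Mourre estimate is empty on the bound-state sub-bundles and positivity must be extracted from the external motion of the branches $\lambda_n(s)$ --- precisely where the no-embedded-eigenvalue hypothesis (\ref{SPEC}), the analyticity of isolated branches, and the decay hypotheses (\ref{RC1}) are essential. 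A secondary but still necessary point is the uniform-near-threshold compactness of all localization errors in Step 2, which is what the auxiliary hypothesis (RC2) is designed to supply.
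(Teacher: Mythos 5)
Your proposal is, in substance, the same as the paper's argument: an IMS-type configuration-space partition of unity to reduce the estimate for $H$ to estimates for the truncated $H_a$, the $(xy0)$ piece absorbed into the compact remainder, the free cluster handled by the functional calculus, and the two-cluster $H_a$ handled via the virial theorem on bound-state parts, subsystem Mourre estimates on the continuous parts, and, for the non-separating $(xy)(0)$ cluster, a direct-integral decomposition in the external momentum $s$ with Kato analyticity of the isolated eigenvalue branches and Hellmann--Feynman supplying the external positivity $s\cdot\nabla_s\lambda(s)=2s^2$. You have correctly located the crux: $H_{(xy)(0)}$ does not split into internal plus external commuting pieces, so positivity on the $s$-dependent bound-state sub-bundles must come from external motion of the eigenvalue branches, and (\ref{SPEC}) plus the analyticity of isolated eigenvalues are what make this tractable.

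Where you deviate, modestly, is in the two separating clusters $(y)(x0)$ and $(x)(y0)$: you invoke the tensor factorization and the standard lemma that a Mourre estimate for a sum of commuting operators follows from estimates for the summands, whereas the paper treats these clusters by the same fibre-by-fibre direct-integral decomposition and compactness/covering argument that it later needs for $(xy)(0)$. Both routes are valid; the paper's choice buys uniformity of method across all three two-cluster decompositions, while yours is a bit more economical for the factorizable pair. The one place your sketch is compressed relative to the difficulty is the reassembly of the $(xy)(0)$ direct integral when $E>0$: the fibre width $\delta$ must a priori depend on $s$, and extracting a single interval $\bigtriangleup$ valid for all $s$ on the relevant compact set requires a norm-continuity-in-$s$ argument for $f(H_a(s))[H_a,iA](s)f(H_a(s))$ (which the paper proves directly, using that $H_a(s)$ is a holomorphic family) together with a finite-subcover argument. "Running the argument of Mourre's Theorem I.1 on the $s$-dependent commutator" gives the pointwise-in-$s$ estimate, but the uniform $\delta$ is a separate step and is where the paper spends most of its effort in Section \ref{(xy)(0)}. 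This is not a gap in your plan so much as the point at which your "uniformly in $s$" needs to be unpacked, precisely as you flag in your main-obstacle paragraph.
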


The second main theorem is the statement of asymptotic completeness. The unitary evolutions $e^{itH}$ and $e^{itH_a}$ are all well-defined by the functional calculus. Following \cite{ss}, we say the evolution defined by $H$ is \textbf{asymptotically complete} if it is asymptotically clustering at all nonthreshold energies. That is, if for every nonthreshold energy $E$, there exists an interval $\bigtriangleup$ containing $E$ so that whenever $\psi \in ran(E_\bigtriangleup(H))$ is orthogonal to the eigenfunctions of $H$, there exists $\lbrace \phi_a\rbrace_{\# (a)>1}$ such that

$$\lim_{t\rightarrow\infty} \Vert e^{-itH} \psi - \sum_{\# (a)>1} e^{-itH_a} \phi_a \Vert = 0$$

We prove asymptotic completeness under the additional assumptions of negative energy, short range, and exponential decay of eigenfunctions.

\begin{theorem}\label{completeness}
Suppose the potential functions satisfy (\ref{RB}), (\ref{RB2}), and (\ref{RC1}), as well as (\ref{RC2}), (\ref{SR}), and (\ref{FDE}) below. Then, for every \emph{negative} nonthreshold energy $E$, there exists an interval $\bigtriangleup$ containing $E$ so that whenever $\psi \in ran(E_\bigtriangleup(H))$ is orthogonal to the eigenfunctions of $H$, there exists $\lbrace \phi_a\rbrace_{\# (a)>1}$ such that
$$\lim_{t\rightarrow\infty} \Vert e^{-itH} \psi - \sum_{\# (a)>1} e^{-itH_a} \phi_a \Vert = 0$$
\end{theorem}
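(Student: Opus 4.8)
The plan is to follow the now-standard Mourre-theory route to asymptotic completeness — local decay, propagation estimates, construction of asymptotic observables, and an induction on cluster size — but carried out with the dispersive kinetic term $H_0 = p^2 + |k|$ and the singular photon dispersion handled by the commutator machinery set up above. First I would use Theorem \ref{mourre}: fix a negative nonthreshold energy $E$, choose $\delta$ small enough that $\Delta = (E-\delta, E+\delta)$ contains no thresholds and lies entirely in $(-\infty, 0)$, and obtain the strict Mourre estimate $E_\Delta(H)[H,iA]E_\Delta(H) \ge \alpha E_\Delta(H) + K$ with $\alpha = d(E) - \epsilon > 0$. Since condition (\ref{2COMM})-type control on the iterated commutators $[[H,iA],iA]$ is available from (\ref{RB}), (\ref{RB2}) (the second commutator of $H$ is again $H_0$-bounded, $C = 2p^2 + |k| - \sum_a x^a\cdot\nabla I^a$ having the same domain as $H$), the Mourre theory of \cite{mourre}, \cite{abg} yields the limiting absorption principle on $\Delta$ away from eigenvalues, hence the local decay estimate $\int_{-\infty}^\infty \|\langle A\rangle^{-s} e^{-itH} E_\Delta(H) P_c(H)\psi\|^2\, dt \le C\|\psi\|^2$ for $s > 1/2$, where $P_c(H)$ projects off the point spectrum. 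This is the analytic engine for everything that follows.

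Next I would convert local decay into phase-space propagation estimates adapted to the massless channel. The key minimal-velocity and maximal-velocity estimates should be derived by the standard positive-commutator/Heisenberg-derivative argument applied to observables built from $A$, $\langle X\rangle$, and the cluster coordinates $x_a$, $x^a$ — but with the subtlety that the photon "velocity" is $\nabla_k |k| = k/|k|$, a unit vector singular at $k=0$, rather than $2k$. So I would microlocalize away from $k = 0$ using smooth cutoffs $\chi(|k| > \lambda)$, prove propagation there, and separately argue that the low-photon-frequency region contributes negligibly on $\Delta$ because $H \ge E - \delta$ with $E - \delta < 0$ forces the photon to be bound (energetically a free photon of small frequency cannot coexist with negative total energy once the electron-proton and electron-photon subsystems are accounted for); this is where the negative-energy hypothesis does real work and where (\ref{SR}) (short range) and (\ref{FDE}) (exponential decay of eigenfunctions) enter, the latter to control the internal cluster wavefunctions $\phi^a(s, x^a)$ which depend on the external momentum $s = p_a$. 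From these estimates one constructs the asymptotic observable: $\lim_{t\to\infty} e^{itH}\, j(X/t)\, e^{-itH}$ exists on $E_\Delta(H)\text{ran}$ for suitable partitions of unity $j = \sum_a j_a$ subordinate to the cluster regions $\{|x^a| \ll t, |x_a| \sim t\}$, giving a decomposition of the state according to which particles have separated.

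Then I would run the induction on the number of clusters (here only $\#(a) = 1, 2, 3$, so the induction is short): for $\#(a) = 3$ — all three particles asymptotically free — asymptotic completeness is a genuinely free dispersive statement, comparing $e^{-itH}$ to $e^{-itH_0}$ on the part of phase space where all intercluster distances grow, using the propagation estimates plus a Cook-type argument against $I_a = V$ which is integrable in $t$ along the evolution by the maximal-velocity bound and (\ref{SR}). For the $2$-cluster channels, on the region where (say) the photon separates but the electron-proton pair stays together, I would use the Deift-Simon / Graf-type wave operators $\Omega_a = \text{s-}\lim_{t\to\infty} e^{itH} J_a e^{-itH_a}$, where $J_a$ identifies the bound state of the reduced Hamiltonian $H_a(s)$ fibered over the external momentum — this is exactly the "fiber integral over external momenta of momentum-dependent eigenfunctions" described in the introduction, and existence of these limits again follows from local decay for $H_a$ (itself a reduced, lower-complexity system to which the same Mourre analysis applies fiberwise) together with the propagation estimates. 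The orthogonality of $\psi$ to eigenfunctions of $H$ removes the bound-state (1-cluster) channel, so the asymptotic observable decomposition plus the existence and (asymptotic) completeness of the channel wave operators gives $\psi = \sum_{\#(a)>1}\Omega_a \phi_a$, which is the claimed limit.

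The main obstacle I expect is the low-frequency photon region. Everywhere the argument wants to treat the photon like a massive free particle — sharp velocity estimates, non-stationary phase in $\nabla_k H_0 = k/|k|$, the fiber structure of $H_{(xy)(0)}(s)$ — the singularity at $k = 0$ obstructs the usual estimates, and the reduced Hamiltonian $H_{(xy)(0)}(s)$ has a genuinely $s$-dependent threshold structure so the spectral assumption (\ref{SPEC}) and the exponential decay (\ref{FDE}) must be used uniformly in $s$. Making the energetic confinement argument quantitative — showing that on $E_\Delta(H)$ with $\Delta \subset (-\infty,0)$ the photon-frequency content below a fixed $\lambda_0$ is exponentially small, uniformly in $t$ — and propagating that bound through the construction of the asymptotic observables is, I believe, the crux; the remaining pieces are adaptations of the Sigal-Soffer / Graf / Dereziński-Gérard technology recalled above.
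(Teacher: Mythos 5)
Your scaffolding — Mourre estimate $\Rightarrow$ limiting absorption / local decay $\Rightarrow$ minimal velocity estimates $\Rightarrow$ Cook-type existence of Deift--Simon wave operators, with orthogonality to eigenfunctions removing the one-cluster channel — is the right shape and does match the paper's Section 5, and you correctly locate where (\ref{SR}), (\ref{FDE}), and the negativity hypothesis must enter. But the step you yourself flag as ``the crux'' is where your route diverges from the paper's and, as written, has a gap. You propose to handle the $k=0$ singularity by microlocalizing away from low photon frequencies and proving that on $E_\bigtriangleup(H)$ with $\bigtriangleup\subset(-\infty,0)$ the photon-frequency content below a fixed $\lambda_0$ is exponentially small \emph{uniformly in $t$}. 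The paper neither proves nor needs such a uniform-in-$t$ low-frequency confinement bound, and it is unclear how you would prove it without circularity — whether the photon is bound (hence spectrally cut off near $k=0$) or free and escaping is exactly what the channel analysis is supposed to decide. What the paper actually does is purely at the level of commutator estimates: it uses the integral representation $|k| = \frac{1}{\pi}\int_0^\infty \frac{s^{-1/2}}{s+k^2}\,k^2\,ds$ to convert $[\,|k|,F_a]$ into an $s$-integral of resolvent-type commutators, then controls the operator norm near the singular end $s\to 0$ using Hardy--Littlewood--Sobolev and H\"older bounds, with the $\langle x^a\rangle^{-4}$ weight supplied by (\ref{FDE}) doing the absorbing. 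No cutoff in $|k|$ and no dynamical confinement argument appear.

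Two further structural differences are worth noting. The paper does no induction on cluster size: it directly defines $W_a(t) = e^{iH_at}F_a e^{-iHt}E_\bigtriangleup(H)$ for each two-cluster $a$, with $F_a$ a smoothed cutoff in the \emph{internal} coordinate $x^a$, proves $\lim_t W_a(t)\psi$ exists by Cook, and then shows the remainder $\bigl(1-\sum_{\#(a)=2}F_a\bigr)e^{-iHt}E_\bigtriangleup(H)\psi\to 0$ via the minimal velocity estimate and the fact that the three-cluster free channel is energetically absent (for $\bigtriangleup\subset(-\infty,0)$, $E_\bigtriangleup(H_0)=0$ since $\operatorname{spec}(H_0)=[0,\infty)$); your proposed maximal-velocity Cook argument for $\#(a)=3$ therefore never has to run. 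Relatedly, you build channel wave operators $\Omega_a = \mathrm{s}\text{-}\lim e^{itH}J_a e^{-itH_a}$ (channel to full), but the paper needs only the opposite direction $W_a(t)=e^{iH_at}F_ae^{-iHt}$ (full to channel), and the theorem follows directly from existence of those limits. Finally, the mechanism by which negativity is used is simpler than your energetic heuristic: for $\bigtriangleup<0$ the projection $E_\bigtriangleup(H_a)$ (fiberwise over the external momentum) only sees eigenvalues of the reduced Hamiltonians, so (\ref{FDE}) reduces to polynomial decay of those eigenfunctions — that decay is the entire content of the negative-energy assumption in the proof.
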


Due to the Kato-Rosenblum theorem, one can also add an arbitrary trace class perturbation to $H$ and retain the result. This should enable the study of models such as in \cite{jensen}.\bigskip

Section 3 is dedicated to the proof of Theorem \ref{mourre}. In section 4, the Mourre theory is used to prove local decay and minimal velocity estimates. In Section 5, these are used to prove Theorem \ref{completeness}.

\section{Proof of the Mourre estimate}

\subsection{Configuration space partition of unity}

We can break up the proof of the Mourre estimate for $H$ into problems involving each $H_a$ by deploying a configuration space partition of unity (e.g. \cite{froeseherbst}, but originally due to Deift and Simon). We need functions $\lbrace j_a(x,y)\rbrace_a$ on $\mathbb{R}^6$ satisfying the following requirements:

\begin{itemize}
\item $\sum_a j_a^2 = 1$ and each $j_a$ is $C^\infty$.
\item Each $j_a$ is homogeneous of degree $0$ outside of the unit ball; in particular, derivatives of any $j_a$ of any order are relatively $H_0$-compact since they decay in all directions.
\item The following multiplier operators are relatively $H_0$-compact (and they remain so if $j_a$ is replaced with any derivative of $j_a$).

\begin{equation}\label{RC2}
\begin{cases}
j_{(xy0)}\\
I_a j_a \\
[I_a, iA] j_{a}\\
\end{cases}\tag{RC2}
\end{equation}
\end{itemize}

In the case that the potentials are continuous and decaying in all directions, these relative compactness properties can be achieved by selecting the partition of unity so that all of the above functions decay in all directions in $\mathbb{R}^6$. The assumption (\ref{RC2}) on the potentials is that these operators are indeed relatively $H_0$-compact for the $j_a$ thus constructed. We proceed with the construction.\bigskip

It will suffice to construct $j_a$ that satisfy the following support conditions:\bigskip

For some positive constants $C_0,\dots,C_{11}$
\begin{itemize}
\item $j_{(x)(y0)}$ is supported in $\lbrace X=(x,y) \in \mathbb{R}^6 : |x| > C_0 |X|, |y| < C_1 |X|, |x-y| > C_2|X| \rbrace$.
\item $j_{(y)(x0)}$ is supported in $\lbrace X=(x,y) \in \mathbb{R}^6 : |y| > C_3 |X|, |x|< C_4 |X|, |x-y| > C_5|X| \rbrace$.
\item $j_{(xy)(0)}$ is supported in $\lbrace X = (x,y) \in \mathbb{R}^6 : |x| > C_6 |X|, |y| > C_7 |X|, |x-y|< C_8 |X| $.
\item $j_{(x)(y)(0)}$ is supported in $\lbrace X = (x,y) \in \mathbb{R}^6 : |x| > C_9 |X|, |y| > C_{10}|X|, |x-y| > C_{11} |X|\rbrace$
\end{itemize}

If these support conditions hold, then in the case that the potential functions are smooth and decay in all directions in $\mathbb{R}^3$, all the functions in (\ref{RC2}) indeed decay in all directions in $\mathbb{R}^6$. We now construct such functions $\lbrace j_a \rbrace_a$. Define the following sets on the unit sphere $\mathbb{S}^5 \in \mathbb{R}^6$:

$$U_{(x)(y0)}:=\lbrace X = (x,y) : |y| < \frac{1}{20}, |x| > \frac{1}{10} \rbrace$$
$$U_{(y)(x0)}:=\lbrace X = (x,y) : |x| < \frac{1}{20}, |y| > \frac{1}{10} \rbrace$$
$$U_{(xy)(0)}:=\lbrace X = (x,y) : |x| > \frac{1}{30}, |y| > \frac{1}{30},|x-y| < \frac{1}{10} \rbrace$$
$$U_{(x)(y)(0)}:=\lbrace X = (x,y): |x| > \frac{1}{30}, |y| > \frac{1}{30} |x-y| > \frac{1}{20} \rbrace$$

By the parallelogram law $2|X|^2=|x+y|^2+|x-y|^2$ and the Pythagorean theorem $X^2=|x|^2+|y|^2$, these sets are nonempty and form an open cover of the unit sphere. We are guaranteed the existence of a traditional partition of unity $\lbrace \chi_a\rbrace_a$ subordinate to the four sets in this open cover. By assuming homogeneity of degree $0$, we can extend these functions to a traditional partition of unity on $\mathbb{R}^6\setminus\lbrace 0 \rbrace$. \bigskip

We can also partition unity in $\mathbb{R}^6$ into two functions $\chi_0$ and $\chi_1$ with the following properties: $\chi_0$ is supported entirely within the unit sphere, and $\chi_1$ is supported entirely away from $0$ while being equal to $1$ outside the unit sphere.\bigskip

Now, we can define $\widetilde{j}_{(xy0)} := \chi_0$ and $\widetilde{j}_a := \chi_1 \chi_a$ for the other clusters $a$. Then, by declaring $j_a := \widetilde{j}_a/\sqrt{\sum_a (\widetilde{j}_a)^2}$ we get the $\lbrace j_a\rbrace_a$ with all the desired properties.

\subsection{Breaking apart the main estimate}

We claim that for any $f\in C^\infty_0(\mathbb{R})$,

\begin{equation}\label{ims}
\sum_a f(H)[H,iA]f(H)j_a^2 = \text{(compact operators)} + \sum_{\#(a)>1} j_a f(H_a)[H_a,iA]f(H_a)j_a
\end{equation}

Fix an $f\in C^\infty_0(\mathbb{R})$. The $a=(xy0)$ term is compact; the proof is reserved for the next section. Then, for each cluster decomposition $a\neq (xy0)$, we can commute around the associated term.
\begin{align*}
f(H)[H,iA]f(H)j_a^2 &= f(H)[H,iA]f(H)j_a^2 - f(H)[H,iA]f(H_a)j_a^2\\
&\text{\hspace{.5cm}}+ f(H)[H,iA]f(H_a)j_a^2 - f(H)[H,iA]j_af(H_a)j_a\\
&\text{\hspace{.5cm}}+ f(H)[H,iA]j_af(H_a)j_a - f(H)j_a[H,iA]f(H_a)j_a\\
&\text{\hspace{.5cm}}+ f(H)j_a[H,iA]f(H_a)j_a -f(H_a)j_a[H,iA]f(H_a)j_a\\
&\text{\hspace{.5cm}}+ f(H_a)j_a[I_a,iA]f(H_a)j_a + f(H_a)j_a[H_a,iA]f(H_a)j_a\\
&\text{\hspace{.5cm}}- j_a f(H_a)[H_a,iA]f(H_a)j_a + j_a f(H_a)[H_a,iA]f(H_a)j_a \\
&= f(H)[H,iA]\left(f(H)-f(H_a)\right)j_a^2 + f(H)[H,iA][f(H_a),j_a]j_a \\
&\text{\hspace{.5cm}}+ f(H)[[H,iA],j_a]f(H_a)j_a + \left(f(H)-f(H_a)\right)j_a[H,iA]f(H_a)j_a\\
&\text{\hspace{.5cm}}+ f(H_a)j_a[I_a,iA]f(H_a)j_a + [f(H_a),j_a][H_a,iA]f(H_a)j_a\\
&\text{\hspace{.5cm}}+ j_af(H_a)[H_a,iA]f(H_a)j_a\\
\end{align*}

This is an equality of bounded operators. The operator $[I_a,iA]$ should be understood not as a closure of a formal commutator $[I_a,iA]_0$ in its own right, but rather as $[H,iA]-[H_a,iA]$ which is a self-adjoint operator having domain $D(H)$. \bigskip

It remains to show that all of these terms except for $j_af(H_a)[H_a,iA]f(H_a)j_a$ are compact.

\subsection{Compactness}

We repeatedly apply the following basic tool (indeed this was already used to describe some compactness properties of the partition of unity).

\begin{lemma}[Elementary compactness lemma]\label{elemcomp}
Suppose that $f(x)$ and $g(p)$ are continuous functions $\mathbb{R}^n\rightarrow \mathbb{C}$ such that $f(x)$ and $g(p)$ decay at infinity. Then, $f(x)g(p)$ is a compact operator on $L^2(R^n)$.
\end{lemma}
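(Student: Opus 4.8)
The plan is to prove this by the classical two-step route: first establish the claim when the symbols $f$ and $g$ happen to lie in $L^2$, via a Hilbert--Schmidt kernel computation, and then deduce the general case by a uniform approximation argument, using that the compact operators form a norm-closed subspace of the bounded operators.

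First I would treat the case $f,g\in L^2(\mathbb{R}^n)$. Writing $\check g := \mathcal{F}^{-1}g$, the operator $g(p)$ acts on $\psi$ as convolution with $\check g$ (with the usual normalization constant $c_n$ depending on the Fourier convention), so $f(x)g(p)$ is the integral operator with kernel $K(x,y) = c_n\, f(x)\,\check g(x-y)$. By Tonelli together with Plancherel's theorem,
$$
\iint_{\mathbb{R}^n\times\mathbb{R}^n} |K(x,y)|^2\,dx\,dy = c_n^2 \int |f(x)|^2\,dx \int |\check g(z)|^2\,dz = c_n'^2\,\Vert f\Vert_{L^2}^2\,\Vert g\Vert_{L^2}^2 < \infty,
$$
so $f(x)g(p)$ is Hilbert--Schmidt, hence compact.

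Next I would remove the $L^2$ hypothesis. Since $f$ is continuous and vanishes at infinity, it is a uniform limit of compactly supported continuous functions $f_m$ (for instance multiply $f$ by a continuous cutoff equal to $1$ on $B_{R_m}$ and supported in $B_{R_m+1}$ for $R_m\to\infty$), and each such $f_m$ lies in $L^2(\mathbb{R}^n)$ because it is bounded with compact support; similarly choose $g_m\in C_c(\mathbb{R}^n)\subset L^2(\mathbb{R}^n)$ with $g_m\to g$ uniformly. The operators $f(x)$ and $g(p)$ are bounded with $\Vert f(x)\Vert = \Vert f\Vert_\infty$ and $\Vert g(p)\Vert = \Vert g\Vert_\infty$ (the latter by unitarity of the Fourier transform), so
$$
\Vert f(x)g(p) - f_m(x)g_m(p)\Vert \le \Vert f - f_m\Vert_\infty\,\Vert g\Vert_\infty + \Vert f_m\Vert_\infty\,\Vert g - g_m\Vert_\infty \longrightarrow 0.
$$
Each $f_m(x)g_m(p)$ is compact by the first step, and a norm limit of compact operators is compact, so $f(x)g(p)$ is compact.

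I do not expect a genuine obstacle here; the lemma is standard (cf.\ \cite{rs}). The only points requiring a little care are the correct identification of $f(x)g(p)$ with the integral operator of kernel $c_n f(x)\check g(x-y)$ and the appeal to Plancherel to see the Hilbert--Schmidt norm is finite, together with the elementary fact that $C_c(\mathbb{R}^n)$ is dense in $C_0(\mathbb{R}^n)$ in the supremum norm, which drives the approximation step.
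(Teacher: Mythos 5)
Your proof is correct. The paper omits the proof of this lemma entirely (``The proof is omitted''), so there is no argument to compare against, but the route you take --- Hilbert--Schmidt for $L^2$ symbols via the kernel $c_n f(x)\check g(x-y)$ and Plancherel, then uniform approximation of $C_0$ symbols by $C_c$ ones using the norm-closedness of the compacts --- is the standard one (cf.\ \cite{rs}, \cite{simon}) and is surely what the authors had in mind.
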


The proof is omitted.\bigskip

At this point, we fix a cluster decomposition $a\neq(xy0)$. It is helpful to isolate the following.

\begin{lemma}[Some compactness]\label{somecompact}
The following operators are compact for all decompositions $a$.
\begin{equation}\label{compact2}
[f(H_a),j_a]
\end{equation}
\begin{equation} \label{compact1}
\left(f(H)-f(H_a)\right)j_a
\end{equation}
\begin{equation}\label{compact3}
f(H)[[H,iA],j_a]f(H_a)
\end{equation}
\begin{equation}\label{compact4}
j_a[I_a,iA]f(H_a)
\end{equation}
\end{lemma}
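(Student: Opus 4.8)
The plan is to reduce all four operators, by means of the Helffer--Sj\"ostrand representation $g(H)=\frac{1}{\pi}\int_{\mathbb{C}}\bar\partial\tilde g(z)\,(z-H)^{-1}\,dm(z)$ for $g\in C_0^\infty(\mathbb{R})$ (with $\tilde g$ an almost analytic extension obeying $|\bar\partial\tilde g(z)|\le C_N|\mathrm{Im}\,z|^N$), together with the resolvent identities, to norm-convergent integrals of operators of the form $B_1\,\Omega\,B_2$ with $B_1,B_2$ bounded and $\Omega$ relatively $H_0$-compact. The mechanism is that since $D(H)=D(H_a)=D(H_0)$ (Kato--Rellich together with (\ref{RB}), (\ref{RC1})), the operators $(H_0+i)(z-H_a)^{-1}$, $I_a(z-H_a)^{-1}$, $(H_0+i)f(H_a)$ and $(H_0+i)f(H)$ are bounded, with norm $O\!\left((1+|z|)|\mathrm{Im}\,z|^{-1}\right)$ in the $z$-dependent cases; hence whenever $\Omega$ is relatively $H_0$-compact, each of $\Omega(z-H_a)^{-1}$, $\Omega f(H_a)$, and (by adjoints) $f(H)\Omega$ is compact.

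First I would record the algebraic identities. Because $j_a$, the potentials $I^a$ and $I_a$, and $[I_a,iA]$ --- which, as the text notes, is the multiplication operator $[H,iA]-[H_a,iA]=-\sum_{\#(b)=2,\,b\neq a}x^b\cdot\nabla I^b$ --- are all multiplication operators in $X=(x,y)$, they commute: $I_aj_a=j_aI_a$ and $[I_a,iA]j_a=j_a[I_a,iA]$. Writing $C=[H,iA]$, we also have $[H_a,j_a]=[H_0,j_a]=[p^2,j_a]+[|k|,j_a]$ with $[p^2,j_a]=-(\Delta_x j_a)-2i(\nabla_x j_a)\cdot p$, and $[C,j_a]=2[p^2,j_a]+[|k|,j_a]$ (the potential part of $C$ being a multiplication operator, it drops out). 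Combining with $[(z-H_a)^{-1},j_a]=(z-H_a)^{-1}[H_a,j_a](z-H_a)^{-1}$ and $(z-H)^{-1}-(z-H_a)^{-1}=(z-H)^{-1}I_a(z-H_a)^{-1}$, I get: for (\ref{compact2}), $[f(H_a),j_a]=\frac{1}{\pi}\int\bar\partial\tilde f(z)(z-H_a)^{-1}[H_a,j_a](z-H_a)^{-1}\,dm(z)$; for (\ref{compact1}), $(f(H)-f(H_a))j_a=\frac{1}{\pi}\int\bar\partial\tilde f(z)\big((z-H)^{-1}(I_aj_a)(z-H_a)^{-1}+(z-H)^{-1}I_a(z-H_a)^{-1}[H_a,j_a](z-H_a)^{-1}\big)\,dm(z)$; and for (\ref{compact3}) and (\ref{compact4}) no $z$-integral is needed, since $f(H)[[H,iA],j_a]f(H_a)=f(H)\big(2[p^2,j_a]+[|k|,j_a]\big)(H_0+i)^{-1}\cdot(H_0+i)f(H_a)$ and $j_a[I_a,iA]f(H_a)=\big([I_a,iA]j_a\big)(H_0+i)^{-1}\cdot(H_0+i)f(H_a)$. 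Modulo the $z$-integrability bookkeeping --- the integrands are bounded by $C(1+|z|)^2|\mathrm{Im}\,z|^{-3}$ on the bounded support of $\tilde f$, which is beaten by $|\bar\partial\tilde f(z)|\le C_N|\mathrm{Im}\,z|^N$ for large $N$, so the integrals converge in operator norm and define compact operators once the integrands are compact --- everything is reduced to the compactness of five building blocks: $(I_aj_a)(z-H_a)^{-1}$, $([I_a,iA]j_a)(z-H_a)^{-1}$, $(\Delta_x j_a)(z-H_a)^{-1}$, $(\nabla_x j_a)\cdot p\,(z-H_a)^{-1}$, and $[|k|,j_a](z-H_a)^{-1}$.

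The first two are immediate: assumption (\ref{RC2}) states precisely that $I_aj_a$ and $[I_a,iA]j_a$ are relatively $H_0$-compact (and remain so when $j_a$ is replaced by any of its derivatives, which covers the analogous terms generated by derivatives of $j_a$). The third is compact because $\Delta_x j_a$ is a derivative of $j_a$, hence relatively $H_0$-compact by the construction of the partition of unity (or: $\Delta_x j_a$ is a continuous function of $X$ decaying at infinity, $(H_0+i)^{-1}$ is a continuous function of $P$ decaying at infinity, so Lemma \ref{elemcomp} applies). For the fourth I would write $(\nabla_x j_a)\cdot p\,(z-H_a)^{-1}=\sum_i(\partial_{x_i}j_a)\big(p_i(H_0+i)^{-1}\big)(H_0+i)(z-H_a)^{-1}$; here $\partial_{x_i}j_a$ is continuous and decaying in $X$ while $p_i(H_0+i)^{-1}=p_i(p^2+|k|+i)^{-1}$ is continuous and decaying in $P$, so their product is compact by Lemma \ref{elemcomp}, and the remaining factor is bounded.

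The genuinely delicate building block, and the step I expect to be the main obstacle, is $[|k|,j_a](z-H_a)^{-1}$; its difficulty is exactly the singular, non-smooth photon dispersion relation flagged in the introduction, which is why the naive reduction to ``a derivative of $j_a$ times a decaying momentum multiplier'' is not available outright. I would split $|k|=(1-\chi(k))|k|+\chi(k)|k|$ with $\chi\in C_0^\infty(\mathbb{R}^3_k)$ equal to $1$ near the origin. On $(1-\chi(k))|k|$, a classical first-order symbol in $k$ with bounded derivatives and support away from $k=0$, the pseudodifferential commutator expansion expresses $[(1-\chi(k))|k|,j_a]$ as a bounded function of $k$ times $\nabla_y j_a$ plus a remainder of order $\le-1$; since $\nabla_y j_a$ decays in $X$ and the bounded $k$-multiplier composed with $(H_0+i)^{-1}$ decays in $P$, Lemma \ref{elemcomp} (applied again after one more expansion for the remainder) yields compactness after composing with $(z-H_a)^{-1}$. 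The piece $[\chi(k)|k|,j_a](z-H_a)^{-1}$ is where care is required: $\chi(k)|k|$ is a bounded multiplier with $k$-support compact, so $[\chi(k)|k|,j_a]$ is bounded and acts essentially as convolution in $y$ (against the fast-decaying inverse Fourier transform of $\chi(k)|k|$, which is $O(|y|^{-4})$ at infinity) against the increment $j_a(x,y')-j_a(x,y)=O(\langle X\rangle^{-1}|y-y'|)$, so that it behaves like a multiplication operator decaying like $\langle X\rangle^{-1}$; combining this with the decay of $(H_0+i)^{-1}$ in $P$ and the $H^1$-in-$y$ / $H^2$-in-$x$ smoothing it provides (Rellich compactness of $H^2_x L^2_y\cap L^2_x H^1_y\hookrightarrow L^2_{\mathrm{loc}}(\mathbb{R}^6)$) gives compactness of $[\chi(k)|k|,j_a](z-H_a)^{-1}$. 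Once this last point is secured, all five building blocks are compact, the Helffer--Sj\"ostrand integrals for (\ref{compact1}) and (\ref{compact2}) converge in norm to compact operators, and (\ref{compact3}), (\ref{compact4}) are compact directly, completing the proof.
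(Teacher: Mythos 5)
Your overall architecture matches the paper's: reduce to resolvents, isolate a handful of building-block operators, and observe that $I_aj_a$, $[I_a,iA]j_a$, derivatives of $j_a$, and $[H_0,j_a]$ are the only sources of compactness one needs. The one genuinely different technical choice is in two places. First, you use the Helffer--Sj\"ostrand formula to pass from $f$ to resolvents; the paper instead proves Lemma~\ref{auxcompact} for resolvents and upgrades to $C_\infty(\mathbb{R})$ by a Stone--Weierstrass density argument (the $\mathcal{F}=C_\infty$ device from \cite{cfks}). These are interchangeable and the bookkeeping you describe (the $|\bar\partial\tilde f|\le C_N|\mathrm{Im}\,z|^N$ beating the polynomially-in-$|\mathrm{Im}\,z|^{-1}$ integrand) is standard, so this is a cosmetic difference.

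The substantive divergence is your treatment of the hard building block $[|k|,j_a](z-H_a)^{-1}$. The paper writes $|k|=\frac{1}{\pi}\int_0^\infty\frac{s^{-1/2}}{s+k^2}k^2\,ds$, pushes $j_a$ through each resolvent $\frac{s^{-1/2}}{s+k^2}$, and then proves norm-integrability of the resulting $s$-dependent operator family, using Hardy--Littlewood--Sobolev for the small-$s$ (equivalently small-$k$) regime. You instead split $|k|$ at the frequency origin, handle the smooth piece $(1-\chi(k))|k|$ by a pseudodifferential commutator expansion, and treat $\chi(k)|k|$ by a convolution-kernel argument. This is a reasonable alternative and correctly pinpoints where the singular dispersion bites, but as written it has a real gap in the $\chi(k)|k|$ piece. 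The asserted bound $j_a(x,y')-j_a(x,y)=O(\langle X\rangle^{-1}|y-y'|)$ only holds for $|y-y'|\ll|X|$; combined with the $|y-y'|^{-4}$ kernel decay, the na\"ive pointwise product is $\langle X\rangle^{-1}|y-y'|^{-3}$, which is on the borderline of non-integrability over $\mathbb{R}^3_{y'}$, and you need the coarse bound $|j_a(x,y')-j_a(x,y)|\le 2$ at large $|y-y'|$ to close the $L^1$ estimate (with at best a logarithmic loss). More importantly, $[\chi(k)|k|,j_a]$ is not a multiplication operator, so ``decays like $\langle X\rangle^{-1}$, then apply Rellich'' does not literally parse: you would need to factor out a weight --- e.g.\ prove $[\chi(k)|k|,j_a]\langle X\rangle^{\eta}$ is bounded for some $\eta>0$ and then compose with the compact $\langle X\rangle^{-\eta}(H_0+i)^{-1}$ --- and that boundedness requires a Schur-type estimate in both directions, not just the $L^1$-in-$y'$ bound. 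The paper's square-root-plus-HLS route sidesteps these issues by never leaving the $L^2$ operator-norm setting and by letting the $s$-integral absorb the low-frequency singularity; if you want to keep your route, the cleanest fix is to quantify the near-zero piece exactly as above (establish $\|\langle X\rangle^{\eta}[\chi(k)|k|,j_a]\langle X\rangle^{\eta'}\|<\infty$ by interpolation between the Schur bounds) rather than appealing informally to Rellich.
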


Note that this completes the proof of (\ref{ims}). To prove these operators are compact, it is convenient to replace the functions $f$ with resolvents. To this end we prove:

\begin{lemma}[Auxiliary compactness]\label{auxcompact}
The following operators are compact for all decompositions $a$.
\begin{equation}\label{compact6}
[\frac{1}{H_a+i},j_a]
\end{equation}
\begin{equation} \label{compact5}
\left(\frac{1}{H+i}-\frac{1}{H_a+i}\right)j_a
\end{equation}
\begin{equation}\label{compact7}
\frac{1}{H+i}[[H,iA],j_a]\frac{1}{H_a+i}
\end{equation}
\begin{equation}\label{compact8}
j_a[I_a,iA]\frac{1}{H_a+i}
\end{equation}
\end{lemma}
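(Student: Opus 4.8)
The plan is to establish the four compactness statements directly, using the elementary compactness lemma (Lemma \ref{elemcomp}) as the workhorse, the second resolvent identity to relate $H$ and $H_a$, and the relative compactness assumptions (\ref{RC1}), (\ref{RC2}) built into the partition of unity. I would begin with (\ref{compact6}). Writing $R_a = (H_a+i)^{-1}$, one has $[R_a, j_a] = -R_a [H_a, j_a] R_a$, and $[H_a, j_a] = [H_0, j_a]$ since $I^a$ is a multiplier, so it equals $-R_a\bigl([p^2+|k|, j_a]\bigr)R_a$. The commutator $[p^2, j_a] = -(\Delta j_a) - 2(\nabla j_a)\cdot\nabla$ involves first and second derivatives of $j_a$, which by construction are homogeneous of degree $\le -1$ outside the unit ball and hence decay in all directions; paired with the fact that $\nabla R_a$ and $R_a$ carry momentum decay, Lemma \ref{elemcomp} gives compactness of the $p^2$-part. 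For the $|k|$-part, $[|k|, j_a]$ is a bounded operator whose symbol decays in $x$ (it is a pseudodifferential operator with symbol involving $\partial_k|k|\cdot\partial_x j_a$ plus lower order), and again sandwiching by $R_a$ and using that $R_a$ decays in $(p,k)$ gives compactness; here I would either invoke a Helffer--Sjöstrand representation of $|k|$ applied to a smooth cutoff or note $[|k|,j_a] = \int \widehat{|k|}\cdots$, being slightly careful about the singularity of $|k|$ at the origin, which is harmless because we only need the \emph{commutator}, not $|k|$ itself, and $j_a$ is smooth.

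Next, (\ref{compact5}): by the second resolvent identity, $\bigl(R - R_a\bigr)j_a = -R\, I_a\, R_a\, j_a = -R\, I_a\, j_a\, R_a + R\, I_a\, [R_a, j_a]$. The second term is compact by (\ref{compact6}) just proved (composed with bounded operators, since $I_a R$ is bounded). For the first term, $I_a j_a$ is relatively $H_0$-compact by (\ref{RC2}) (note $I_a j_a = I_a j_a$ appears exactly in that assumption), so $I_a j_a R_a$ is compact, hence so is $R\, (I_a j_a)\, R_a$. Then (\ref{compact8}) is immediate: $j_a [I_a, iA] R_a$, where $[I_a,iA]$ is interpreted as $[H,iA]-[H_a,iA]$; by (\ref{RC2}) the operator $[I_a, iA] j_a$ is relatively $H_0$-compact, and $j_a [I_a,iA] R_a = \bigl([I_a,iA] j_a\bigr)^* {}^{\dots}$— more directly, one commutes $j_a$ through or simply observes $j_a[I_a,iA]R_a$ differs from $([I_a,iA]j_a)R_a$ by $[j_a,[I_a,iA]]R_a$, a lower-order term handled the same way; the leading piece $([I_a,iA]j_a)R_a$ is compact by (\ref{RC2}).

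For (\ref{compact7}), $R\,[[H,iA],j_a]\,R_a$: here $[[H,iA],j_a] = [C, j_a]$ where $C = 2p^2 + |k| - \sum_{\#(b)=2} x^b\cdot\nabla I^b$ is the operator extending $[H,iA]$. The kinetic part $[2p^2+|k|, j_a]$ is handled exactly as in (\ref{compact6}). The potential part $[x^b\cdot\nabla I^b, j_a]$ is a commutator of two multipliers, hence zero—so actually only the kinetic commutator survives, and sandwiching by $R$ on the left and $R_a$ on the right (both of which provide momentum decay, and the derivatives of $j_a$ provide configuration decay) yields compactness via Lemma \ref{elemcomp}. Finally I would record that (\ref{compact2})--(\ref{compact4}) follow from (\ref{compact6})--(\ref{compact8}) by the standard device of writing $f(H) = \frac{1}{\pi}\int \bar\partial\tilde f(z)\,(H-z)^{-1}\,dL(z)$ (Helffer--Sjöstrand), since $f\in C^\infty_0$ and the resolvent bounds are uniform enough to pass compactness through the integral; the difference $f(H)-f(H_a)$ and commutators $[f(H_a),j_a]$ reduce to the resolvent versions plus norm-convergent integrals of compact operators.

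The main obstacle is the treatment of the $|k|$-dependent commutators, $[|k|, j_a]$ and its iterates, because $|k|$ is not smooth at the origin, so the usual pseudodifferential calculus does not apply verbatim; the resolution is that $j_a$ is smooth and compactly supported in \emph{angle}, so $[|k|, j_a]$ is in fact a bounded operator with a symbol that is smooth away from $k=0$ and bounded near $k=0$ (the singularity of $\nabla|k| = k/|k|$ is bounded), and combined with the decay of $\nabla j_a$ in $x$ and the decay of $R_a$ in $k$, Lemma \ref{elemcomp} still applies after an approximation argument replacing $|k|$ by $\sqrt{k^2+\epsilon^2}$ and letting $\epsilon\to 0$ in operator norm on the relevant sandwiched expressions.
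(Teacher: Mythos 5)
Your treatment of \eqref{compact5}, \eqref{compact7}, and \eqref{compact8} matches the paper's approach (second resolvent identity, commuting $j_a$ through, absorbing potential pieces via relative boundedness, and invoking \eqref{RC2} for $I_a j_a$ and $[I_a,iA]j_a$). For \eqref{compact8} you can actually shorten your argument: $[I_a,iA]$ and $j_a$ are both multiplication operators in configuration space, so they commute exactly and $j_a[I_a,iA]R_a = ([I_a,iA]j_a)R_a$ with no remainder term.

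The genuine gap is in \eqref{compact6}, and it is exactly the step that the paper expends most of its effort on: showing that $\frac{1}{H_0+i}\bigl(j_a|k|-|k|j_a\bigr)\frac{1}{H_0+i}$ is compact. You correctly identify this as the obstacle but none of the three routes you sketch is actually carried out, and each has a nontrivial issue. Lemma \ref{elemcomp} applies only to operators of the product form $f(X)g(P)$, whereas $[|k|,j_a]$ sandwiched by resolvents is a genuine pseudodifferential operator, not such a product, so you cannot ``still apply'' Lemma \ref{elemcomp} without first decomposing $[|k|,j_a]$ into a (norm-convergent) sum or integral of such products. Helffer--Sj\"ostrand does not apply directly to $|k|$ since it is unbounded and not in any symbol class with bounded derivatives, and $\widehat{|k|}$ is a distribution, not a function, so the Fourier-representation formula needs substantial justification. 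The $\sqrt{k^2+\epsilon^2}$ regularization could in principle work, but you would then need a quantitative operator-norm estimate showing both that each regularized sandwiched commutator is compact and that the error tends to zero; that estimate is precisely the hard content. The paper resolves all of this by invoking the explicit integral representation $|k|=\frac{1}{\pi}\int_0^\infty\frac{s^{-1/2}}{s+k^2}k^2\,ds$, expanding $[j_a,\frac{s^{-1/2}}{s+k^2}k^2]$ into terms whose symbols factor as decaying functions of $(x,y)$ times decaying functions of $(p,k)$ (with $s$ as a parameter), and then proving via Hardy--Littlewood--Sobolev and $L^\ell\to L^\ell$ multiplier estimates that the resulting operator-norm bounds are integrable in $s$ on $[0,\infty)$; the compactness then follows from norm convergence of the Bochner integral together with Lemma \ref{elemcomp} applied fiberwise in $s$. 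Your proposal does not reach this level of precision, and I do not see how to close the gap without essentially reproducing such a decomposition and such integrability estimates.
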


We start with the compactness of (\ref{compact6}). We compute
\begin{align*}
\frac{1}{H_a+i}j_a-j_a\frac{1}{H_a+i} &= \frac{1}{H_a+i}j_a(H_a+i)\frac{1}{H_a+i}-\frac{1}{H_a+i}(H_a+i)j_a\frac{1}{H_a+i} \\
&= \frac{1}{H_a+i}\left( j_a H_0 - H_0 j_a\right) \frac{1}{H_a+i}\\
\end{align*}
Since $\frac{1}{H_a+i}$ has range in $D(H)$ and $j_a$ maps $D(H)$ into itself, the above are equalities of bounded operators. Since

\begin{align*}
 \frac{1}{H_a+i}\left( j_a H_0 - H_0 j_a\right) \frac{1}{H_a+i} &= \frac{1}{H_a+i} (H_0 + i) \frac{1}{H_0+i}\left( j_a H_0 - H_0 j_a\right) \frac{1}{H_0+i}(H_0 + i) \frac{1}{H_a+i} \\
\end{align*}

and $(H_0 + i) \frac{1}{H_a+i}$ is bounded by the potential assumptions, we just need to show that the operator $\frac{1}{H_0+i}\left( j_a H_0 - H_0 j_a\right) \frac{1}{H_0+i}$ is compact. We obtain the following:

\begin{align*}
\frac{1}{H_0+i}\left( j_a H_0 - H_0 j_a\right) \frac{1}{H_0+i} &= \frac{1}{H_0+i}2(\bigtriangledown_x j_a) \cdot \bigtriangledown_x\frac{1}{H_0+i} + \frac{1}{H_0+i}(\bigtriangleup_x j_a)\frac{1}{H_0+i}\\
&\text{\hspace{.5cm}}+\frac{1}{H_0+i} \left( j_a |k| - |k| j_a \right)\frac{1}{H_0+i}\\
\end{align*}

Since $\frac{1}{H_0+i} (\bigtriangleup_x j_a)$ is compact by the fact that $\bigtriangleup_x j_a$ decays in all directions, the term $\frac{1}{H_0+i}(\bigtriangleup_x j_a)\frac{1}{H_0+i}$ is compact. Similarly, we have that $\frac{1}{H_0+i}2(\frac{d}{dx_\ell} j_a)$ is compact for $\ell\in \lbrace 1,2,3 \rbrace$. Moreover, $\frac{d}{dx_\ell}\frac{1}{H_0+i}$ is bounded for $\ell\in \lbrace 1,2,3 \rbrace$. Thus the term $\frac{1}{H_0+i}2(\bigtriangledown_x j_a) \cdot \bigtriangledown_x\frac{1}{H_0+i}$ is compact. We focus on the only remaining term, $\frac{1}{H_0+i} \left( j_a |k| - |k| j_a \right)\frac{1}{H_0+i}$, which takes more effort. We will use the `square root lemma' $|k| = \frac{1}{\pi}\int_0^\infty \frac{s^{-\frac{1}{2}}}{s+k^2} k^2 ds$.\bigskip

We assert that for all Schwartz functions $\psi(q,r)$:

\begin{equation}\label{schwartz0}
\frac{1}{H_0+i} \left( j_a |k| - |k| j_a \right)\frac{1}{H_0+i} \psi =  \frac{1}{\pi}\int_0^\infty \frac{1}{H_0+i} \left( j_a \frac{s^{-\frac{1}{2}}}{s+k^2}k^2 - \frac{s^{-\frac{1}{2}}}{s+k^2}k^2 j_a \right)\frac{1}{H_0+i} \psi\text{ } ds
\end{equation}

We need only prove that for $\psi$ Schwartz class, $\frac{1}{\pi}\int_0^\infty \frac{s^{-\frac{1}{2}}}{s+k^2}k^2 \psi\text{ } ds$ as a $L^2(\mathbb{R}^6)$-valued integral converges to $|k|\psi$. It evidently converges pointwise to $|k|\psi$; one can check that it is convergent as a Bochner integral.\bigskip

For Schwartz $\psi$, we may apply $j_a$, $\frac{s^{-1/2}}{s+k^2}$, and $k^2$ in any order without encountering domain issues, since each operator preserves the Schwartz class. Therefore on the Schwartz class the following hold:

\begin{equation}\label{schwartz1}
[j_a,k^2]=(\bigtriangleup_y j_a) + 2 (\bigtriangledown_y j_a) \cdot \bigtriangledown_y
\end{equation}

\begin{equation}\label{schwartz2}
[j_a,\frac{s^{-\frac{1}{2}}}{s+k^2}k^2] =  [j_a,\frac{s^{-\frac{1}{2}}}{s+k^2}]k^2 + \frac{s^{-\frac{1}{2}}}{s+k^2}[j_a,k^2]
\end{equation}

$$[j_a,\frac{s^{-\frac{1}{2}}}{s+k^2}] = \frac{s^{-\frac{1}{2}}}{s+k^2} \frac{s+k^2}{s^{-\frac{1}{2}}} j_a \frac{s^{-\frac{1}{2}}}{s+k^2}- \frac{s^{-\frac{1}{2}}}{s+k^2} j_a  \frac{s+k^2}{s^{-\frac{1}{2}}} \frac{s^{-\frac{1}{2}}}{s+k^2}$$

\begin{equation}\label{schwartz3}
 = -\frac{s^{-\frac{1}{2}}}{s+k^2} (s^{\frac{1}{2}}) ((\bigtriangleup_y j_a) + 2 (\bigtriangledown_y j_a) \cdot \bigtriangledown_y)\frac{s^{-\frac{1}{2}}}{s+k^2}
\end{equation}

Therefore, combining (\ref{schwartz1})-(\ref{schwartz3}) we obtain:

\begin{equation}\label{schwartz4}
[j_a,\frac{s^{-\frac{1}{2}}}{s+k^2}k^2] = -\frac{s^{-\frac{1}{2}}}{s+k^2} (s^{\frac{1}{2}}) ((\bigtriangleup_y j_a) + 2 (\bigtriangledown_y j_a) \cdot \bigtriangledown_y)\frac{s^{-\frac{1}{2}}}{s+k^2}k^2 + \frac{s^{-\frac{1}{2}}}{s+k^2}((\bigtriangleup_y j_a) + 2 (\bigtriangledown_y j_a) \cdot \bigtriangledown_y)
\end{equation}

Moreover,
\begin{align*}
 (2(\bigtriangledown_y j_a) \cdot \bigtriangledown_y ) &=  2 [(\bigtriangledown_y j_a) ,\bigtriangledown_y] +  2 (\bigtriangledown_y \cdot (\bigtriangledown_y j_a))\\
&= - 2 (\bigtriangleup_y j_a) + 2 (ik \cdot (\bigtriangledown_y j_a))\\
\end{align*}

so that by (\ref{schwartz4})

\begin{equation}\label{schwartz5}
[j_a,\frac{s^{-\frac{1}{2}}}{s+k^2}k^2] = \frac{s^{-\frac{1}{2}}}{s+k^2} (s^{\frac{1}{2}}) ( (\bigtriangleup_y j_a) - 2 (ik \cdot (\bigtriangledown_y j_a)))\frac{s^{-\frac{1}{2}}}{s+k^2}k^2 - \frac{s^{-\frac{1}{2}}}{s+k^2}( (\bigtriangleup_y j_a) - 2 (ik \cdot (\bigtriangledown_y j_a)))
\end{equation}

Using (\ref{schwartz5}), we can break up the integral in (\ref{schwartz0}).

$$\frac{1}{\pi}\int_0^\infty \frac{1}{H_0+i} \left( j_a \frac{s^{-\frac{1}{2}}}{s+k^2}k^2 - \frac{s^{-\frac{1}{2}}}{s+k^2}k^2 j_a \right)\frac{1}{H_0+i} \psi\text{ } ds$$
\begin{align*}
&= \frac{1}{\pi}\int_0^\infty \frac{1}{H_0+i} \left( - \frac{s^{-\frac{1}{2}}}{s+k^2} (\bigtriangleup_y j_a)\right)\frac{1}{H_0+i}\psi\text{ } ds \\
&\hspace{.5cm}+ \frac{1}{\pi}\int_0^\infty \frac{1}{H_0+i} \left( \frac{s^{-\frac{1}{2}}}{s+k^2} (2 ik \cdot (\bigtriangledown_y j_a))\right)\frac{1}{H_0+i}\psi\text{ } ds \\
&\hspace{.5cm}+ \frac{1}{\pi}\int_0^\infty \frac{1}{H_0+i} \left( \frac{s^{-\frac{1}{2}}}{s+k^2} (s^{\frac{1}{2}}) (\bigtriangleup_y j_a)\frac{s^{-\frac{1}{2}}}{s+k^2}k^2 \right)\frac{1}{H_0+i}\psi\text{ } ds \\
&\hspace{.5cm}+ \frac{1}{\pi}\int_0^\infty \frac{1}{H_0+i}  \left( -\frac{s^{-\frac{1}{2}}}{s+k^2} (s^{\frac{1}{2}}) (2 ik \cdot (\bigtriangledown_y j_a))\frac{s^{-\frac{1}{2}}}{s+k^2}k^2 \right)\frac{1}{H_0+i}\psi\text{ } ds \\
\end{align*}

Motivated by this, we compute the following operator norms for fixed $s$ and show that they are integrable functions of $s$ on $[0,\infty)$.

\begin{equation}\label{integrand1}\Vert \frac{1}{H_0+i} \frac{s^{-\frac{1}{2}}}{s+k^2} (\bigtriangleup_y j_a) \frac{1}{H_0+i}\Vert_{L^2\rightarrow L^2}
\end{equation}
\begin{equation}\label{integrand2}\Vert \frac{1}{H_0+i} \frac{s^{-\frac{1}{2}}}{s+k^2} (k \cdot (\bigtriangledown_y j_a)) \frac{1}{H_0+i} \Vert_{L^2\rightarrow L^2}
\end{equation}
\begin{equation}\label{integrand3}\Vert \frac{1}{H_0+i} \frac{s^{-\frac{1}{2}}}{s+k^2} (s^{\frac{1}{2}}) (\bigtriangleup_y j_a )\frac{s^{-\frac{1}{2}}}{s+k^2} (k^2) \frac{1}{H_0+i}\Vert_{L^2 \rightarrow L^2}
\end{equation}
\begin{equation}\label{integrand4}\Vert \frac{1}{H_0+i} \frac{s^{-\frac{1}{2}}}{s+k^2} (s^{\frac{1}{2}}) ( k \cdot (\bigtriangledown_y j_a))\frac{s^{-\frac{1}{2}}}{s+k^2}(k^2) \frac{1}{H_0+i} \Vert_{L^2 \rightarrow L^2}
\end{equation}
\bigskip

This would mean that the operator-valued integral

\begin{equation}\label{schwartz6}
\frac{1}{\pi}\int_0^\infty \frac{1}{H_0+i} \left( j_a \frac{s^{-\frac{1}{2}}}{s+k^2}k^2 - \frac{s^{-\frac{1}{2}}}{s+k^2}k^2 j_a \right)\frac{1}{H_0+i} \text{ } ds
\end{equation}

converges in norm to a bounded operator $L^2(\mathbb{R}^6)\rightarrow L^2(\mathbb{R}^6)$ (\cite{berger}). Then, by Lemma \ref{elemcomp}, each of the four operators in (\ref{integrand1})-(\ref{integrand4}) is in fact a compact operator for almost every $s$, so the integral (\ref{schwartz6}) converges to a compact operator. Consequently, the bounded operator $\frac{1}{H_0+i} \left( j_a |k| - |k| j_a \right)\frac{1}{H_0+i}$ from (\ref{schwartz0}) must extend uniquely from the dense Schwartz class to be this compact operator. This will conclude the proof for (\ref{compact6}).\bigskip

So, we prove that each of the operator norms $(\ref{integrand1})-(\ref{integrand4})$ are integrable functions of $s$. We can establish a lemma about Fourier multipliers that makes some of these computations simpler.

\begin{lemma}[An elementary boundedness lemma]\label{elembound}
Suppose that $f(q,r):\mathbb{R}^6\rightarrow \mathbb{C}$ and $g(x,y):\mathbb{R}^6\rightarrow \mathbb{C}$ are functions. Then, the operator norm of $f(q,r)g(x,y)$ as an operator on $L^2(\mathbb{R}^6)$ is bounded by $\Vert f \Vert_\ell \Vert g \Vert_\ell$ for any $2\le \ell \le \infty$, whenever these norms are finite.
\end{lemma}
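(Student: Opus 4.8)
The plan is to prove the two endpoint cases $\ell=\infty$ and $\ell=2$ directly and then obtain the range $2<\ell<\infty$ by complex interpolation; this is the Kato--Seiler--Simon estimate, arranged for the particular operator product $f(q,r)g(x,y)$, in which the multiplication operator $g(x,y)$ on configuration space is applied first and the Fourier multiplier $f(q,r)$ afterwards. Since $g(x,y)$ need not be a bounded operator when $\ell<\infty$, the estimate will first be proved for $f$ and $g$ in a convenient dense class --- simple functions, or Schwartz functions --- and the operator $f(q,r)g(x,y)$ is then defined for general $f,g\in L^\ell$ by continuous extension.

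The case $\ell=\infty$ is immediate: via the Fourier transform, $f(q,r)$ is unitarily equivalent to multiplication by $f$, so $\Vert f(q,r)\Vert_{L^2\to L^2}=\Vert f\Vert_\infty$, while $\Vert g(x,y)\Vert_{L^2\to L^2}=\Vert g\Vert_\infty$, and composing yields the bound with constant $1$. For $\ell=2$, one computes the integral kernel of $f(q,r)g(x,y)$, namely $K(X,Y)=(\mathcal{F}^{-1}f)(X-Y)\,g(Y)$ on $\mathbb{R}^6\times\mathbb{R}^6$, and reads off from Plancherel's theorem that $\Vert K\Vert_{L^2(\mathbb{R}^{12})}$ equals a dimensional constant (which is $\le 1$) times $\Vert f\Vert_2\Vert g\Vert_2$. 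Hence $f(q,r)g(x,y)$ is Hilbert--Schmidt, and its operator norm, being at most its Hilbert--Schmidt norm, is at most $\Vert f\Vert_2\Vert g\Vert_2$.

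For the interior range fix $2<\ell<\infty$ and let $\theta\in(0,1)$ solve $1/\ell=(1-\theta)/2$. The bilinear map $T(f,g):=f(q,r)g(x,y)$ has just been shown to send $L^2\times L^2$ into the Hilbert--Schmidt ideal $\mathcal{S}_2$ and $L^\infty\times L^\infty$ into $\mathcal{S}_\infty=\mathcal{B}(L^2)$, each with norm at most $1$. Since $[L^2,L^\infty]_\theta=L^\ell$ and $[\mathcal{S}_2,\mathcal{S}_\infty]_\theta=\mathcal{S}_\ell$ by the classical complex interpolation of Lebesgue and Schatten ideals, Stein's bilinear complex interpolation theorem yields that $T$ maps $L^\ell\times L^\ell$ boundedly into $\mathcal{S}_\ell$ with norm at most $1$; combined with $\Vert\cdot\Vert_{L^2\to L^2}\le\Vert\cdot\Vert_{\mathcal{S}_\ell}$ this gives $\Vert f(q,r)g(x,y)\Vert_{L^2\to L^2}\le\Vert f\Vert_\ell\Vert g\Vert_\ell$. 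The step requiring care is this interpolation: one must exhibit the usual analytic families $z\mapsto f_z,\ z\mapsto g_z$ of simple functions on the strip $0\le\operatorname{Re} z\le 1$ with $f_\theta=f$, $g_\theta=g$ and with the $L^2$- and $L^\infty$-norms on the two boundary lines controlled by $\Vert f\Vert_\ell$ and $\Vert g\Vert_\ell$, verify that $z\mapsto T(f_z,g_z)$ is a $\mathcal{B}(L^2)$-valued analytic function of admissible growth on the strip, and then run a three-lines argument in the operator-ideal setting --- the initial reduction to a dense class being exactly what makes these analytic families legitimate. Everything else is a routine kernel computation and bookkeeping of the (harmless) dimensional constants.
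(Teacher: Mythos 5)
The paper in fact omits the proof of this lemma entirely (``The proof is omitted''), so there is no argument in the text to compare yours against. What you have written is the standard proof of the Kato--Seiler--Simon estimate (Simon, \emph{Trace Ideals and Their Applications}, Theorem 4.1): establish the endpoint $\ell=\infty$ by $\Vert f(q,r)\Vert_{\mathrm{op}}=\Vert f\Vert_\infty$ and $\Vert g(x,y)\Vert_{\mathrm{op}}=\Vert g\Vert_\infty$, establish the endpoint $\ell=2$ by exhibiting the kernel $K(X,Y)=\check f(X-Y)g(Y)$ and taking the Hilbert--Schmidt norm, and fill in $2<\ell<\infty$ by complex (Stein) interpolation of the bilinear map into the Schatten ideals using $[\mathcal{S}_2,\mathcal{S}_\infty]_\theta=\mathcal{S}_\ell$. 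That is correct, and I have no substantive objection.

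Two minor remarks. First, your statement that the dimensional constant in the $\ell=2$ kernel computation ``is $\le1$'' is true, but it is worth being explicit: with the unitary Fourier transform convention the constant is exactly $1$ at both endpoints, and interpolation then gives constant $\le 1$ for every $\ell$; with the unnormalized convention one picks up $(2\pi)^{-6/\ell}\le 1$, still consistent with the lemma as stated. Second, the interpolation step as you sketch it (constructing analytic families $f_z,g_z$ of simple functions and running the three-lines argument in the operator-ideal norm) is the genuinely nontrivial part, and you correctly flag it as such; for a reference this is done carefully in Simon's book, and it is entirely reasonable that the paper simply omits it. In an alternate route one can freeze a simple $g$, interpolate in $f$ alone against the ideals $\mathcal{S}_2$ and $\mathcal{S}_\infty$ (getting $\Vert f(q,r)g(x,y)\Vert_{\mathcal{S}_\ell}\le\Vert f\Vert_\ell\Vert g\Vert_\infty^{\theta}\Vert g\Vert_2^{1-\theta}$-type bounds), and then iterate in $g$; this avoids bilinear interpolation machinery at the cost of a slightly more elaborate bookkeeping, but the conclusion is the same.
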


The proof is omitted. Consider the operator in (\ref{integrand1}). Applying Lemma \ref{elembound},

\begin{align*}
\Vert \frac{1}{H_0+i} \frac{s^{-\frac{1}{2}}}{s+k^2} (\bigtriangleup_y j_a) \frac{1}{H_0+i}\Vert_{L^2\rightarrow L^2} &\le \Vert \frac{s^{-\frac{1}{2}}}{s+k^2} \Vert_\infty \Vert \bigtriangleup_y j_a \Vert_\infty \\
&\lesssim s^{-3/2}\\
\end{align*}

so at least this is integrable near $s=\infty$. Furthermore, we have for any $0<\epsilon<1$ and $0<\delta<1$ that

\begin{align*}
\frac{1}{H_0+i} \frac{s^{-\frac{1}{2}}}{s+k^2} (\bigtriangleup_y j_a) \frac{1}{H_0+i} &=  \frac{1}{H_0+i} \left( \frac{s^{-\frac{1}{2}}}{s+k^2} |k|^{1+\epsilon}\right) \left( \frac{1}{|k|^{1+\epsilon}} \frac{1}{(1+|y|^{1+\delta})}\right) \left( (1+|y|^{1+\delta})(\bigtriangleup_y j_a)\right)\frac{1}{H_0+i}\\
\end{align*}

as bounded operators. We fix (non-optimally) $\epsilon=\frac{1}{5}$ and $\delta=\frac{2}{5}$. Evidently $\left( (1+|y|^{1+\delta})(\bigtriangleup_y j_a)\right)$ is a bounded operator, because $\bigtriangleup_y j_a$ is homogeneous of degree $-2$. By H\"older's inequality, $\frac{1}{(1+|y|^{1+\delta})}$ is a bounded operator taking $L^2(\mathbb{R}^3_y)$ into $L^{10/9}(\mathbb{R}^3_y)$. Then by a Hardy-Littlewood-Sobolev estimate (Corollary 5.10 in \cite{liebloss}), $\frac{1}{|k|^{6/5}}$ is a bounded operator taking $L^{10/9}(\mathbb{R}^3_y)$ into $L^2(\mathbb{R}^3_y)$. Since  $\left( \frac{1}{|k|^{1+\epsilon}} \frac{1}{(1+|y|^{1+\delta})}\right)$ is therefore a bounded operator $L^2(\mathbb{R}^3_y)\mapsto L^2(\mathbb{R}^3_y)$, it extends to a bounded operator on the tensor product $L^2(\mathbb{R}^3_y \oplus \mathbb{R}^3_z)$. Lastly, the operator norm

$$\Vert \frac{s^{-\frac{1}{2}}}{s+k^2} |k|^{6/5} \Vert_{L^2(\mathbb{R}^6)\rightarrow L^2(\mathbb{R}^6)}  = \Vert \frac{s^{-\frac{1}{2}}}{s+k^2} |k|^{6/5} \Vert_\infty \lesssim s^{-9/10}$$

so that

\begin{align*}
\Vert \frac{1}{H_0+i} \frac{s^{-\frac{1}{2}}}{s+k^2} (\bigtriangleup_y j_a) \frac{1}{H_0+i} \Vert_{L^2(\mathbb{R}^6)\rightarrow L^2(\mathbb{R}^6)} & \lesssim s^{-9/10} \\
\end{align*}

Since the operator norm (\ref{integrand1}) is both $\lesssim s^{-3/2}$ and $\lesssim s^{-9/10}$, it is integrable in $s$.\bigskip

Now we turn to (\ref{integrand2}). We have

\begin{align*}
\Vert \frac{1}{H_0+i}\frac{s^{-\frac{1}{2}}}{s+k^2} (-ik \cdot (\bigtriangledown_y j_a))\Vert_{L^2(\mathbb{R}^6) \rightarrow L^2(\mathbb{R}^6)} &\le \sum_{\ell=1}^3 \Vert \frac{-ik_\ell}{i+p^2+|k|}\frac{s^{-\frac{1}{2}}}{s+k^2} (\bigtriangledown_y j_a)_\ell \Vert_{L^2(\mathbb{R}^6) \rightarrow L^2(\mathbb{R}^6)} \\
&\le \sum_{\ell=1}^3 \Vert \frac{-ik_\ell}{i+p^2+|k|} \Vert_{L^2(\mathbb{R}^6) \rightarrow L^2(\mathbb{R}^6)} \Vert \frac{s^{-\frac{1}{2}}}{s+k^2} (\frac{\partial}{\partial y_\ell} j_a) \Vert_{L^2(\mathbb{R}^6) \rightarrow L^2(\mathbb{R}^6)} \\
&\le \sum_{\ell=1}^3 \Vert \frac{-ik_\ell}{i+p^2+|k|} \Vert_{L^2(\mathbb{R}^6) \rightarrow L^2(\mathbb{R}^6)} \Vert \frac{s^{-\frac{1}{2}}}{s+k^2}\Vert_{L^\infty(\mathbb{R}^6)} \Vert (\frac{\partial}{\partial y_\ell} j_a) \Vert_{L^\infty(\mathbb{R}^6)} \\
&\lesssim s^{-3/2}
\end{align*}

where the second to last inequality uses Lemma \ref{elembound}. Now we need to deal with $s$ small. We obtain

\begin{align*}
\Vert \frac{1}{i+p^2+|k|}\frac{s^{-\frac{1}{2}}}{s+k^2} (-ik \cdot (\bigtriangledown_y j_a))\Vert_{L^2(\mathbb{R}^6) \rightarrow L^2(\mathbb{R}^6)} &\le \sum_{\ell=1}^3 \Vert \frac{1}{i+p^2+|k|}\frac{s^{-\frac{1}{2}}|k|}{s+k^2} (-i\frac{k_\ell}{|k|})(\frac{\partial}{\partial y_\ell} j_a) \Vert_{L^2(\mathbb{R}^6) \rightarrow L^2(\mathbb{R}^6)} \\
\end{align*}

Now for each fixed $x$, $\Vert (\frac{\partial}{\partial y_\ell} j_a) \psi(x,y) \Vert_{L^{14/9}(y)} \le  \Vert (\frac{\partial}{\partial y_\ell} j_a) \Vert_{L^{7}(y)} \Vert \psi(x,y) \Vert_{L^2(y)}\lesssim \Vert \psi(x,y) \Vert_{L^2(y)}$ with a constant that does not depend on $x$. The Fourier transform is bounded $L^{14/9} \rightarrow L^{14/5}$. Since $\frac{s^{-\frac{1}{2}}|k|}{s+k^2}$ is an operator $L^{14/5}(y)\rightarrow L^2(y)$ with operator norm $\Vert \frac{s^{-\frac{1}{2}}|k|}{s+k^2} \Vert_{L^7(k)} \lesssim s^{-11/14}$, the above operator norm $L^2(\mathbb{R}^6) \rightarrow L^2(\mathbb{R}^6)$ is $\lesssim s^{-11/14}$.\bigskip

 Thus (\ref{integrand2}) is also an integrable function of $s$. The proof that (\ref{integrand3}) and (\ref{integrand4}) are integrable functions of $s$ evidently reduces to the proofs for (\ref{integrand1}) and (\ref{integrand2}). Therefore the integral in (\ref{schwartz6}) converges to a compact operator, so (\ref{compact6}) is compact.\bigskip

Now we prove (\ref{compact5}) is compact. By the second resolvent identity
$$\left(\frac{1}{H+i}-\frac{1}{H_a+i}\right)j_a = \frac{1}{H+i}\left( I_a \right)\frac{1}{H_a+i}j_a$$
This is valid since $D(H_a)\subset D(I_a)$ by assumption. By commuting we get $$=\frac{1}{H+i}I_a j_a \frac{1}{H_a+i} + \frac{1}{H+i}I_a [\frac{1}{H_a+i},j_a]$$
These terms are both bounded operators a priori so this is a straightforward equality of bounded operators (We have $\frac{1}{H+i}\left( I_a \right)$ extending to a bounded operator by the relative boundedness assumptions on the potential). The first term $I_a j_a \frac{1}{H_a+i}$ is compact because of the relative compactness properties of our $I_aj_a$. The compactness of the second term $\frac{1}{H+i}I_a [\frac{1}{H_a+i},j_a]$ then reduces to compactness result (\ref{compact6}).\bigskip

Next, consider (\ref{compact7}). By relative boundedness assumptions it suffices to prove $\frac{1}{H_0+i}[[H,iA],j_a]\frac{1}{H_0+i}$ is compact. This is equal to $\frac{1}{H_0+i}[2p^2 + |k|,j_a]\frac{1}{H_0+i}$, which is compact by the proof for (\ref{compact5}).\bigskip

Finally, consider (\ref{compact8}). By relative boundedness assumptions, it suffices to prove $j_a [I_a,iA] \frac{1}{H_0+i}$ is compact, which is true from the relative compactness properties of the $j_a$. This concludes the proof of Lemma \ref{auxcompact}, so we turn to the proof of Lemma \ref{somecompact}.\bigskip

If a subset $\mathcal{F}\subseteq C_\infty(\mathbb{R})$, the continuous functions vanishing at infinity, has the following properties:

\begin{enumerate}
\item $\mathcal{F}$ contains resolvents $\frac{1}{x+\xi}$ for all $\xi$ in some open set $u \subset \mathbb{C}$.
\item $\mathcal{F}$ forms a vector space.
\item $\mathcal{F}$ is closed under convergence in the $L^\infty$ norm.
\end{enumerate}

Then $\mathcal{F} = C_\infty(\mathbb{R})$ (Appendix to ch. 3 in \cite{cfks}).\bigskip

Let $\mathcal{F}$ be the class of such functions $f$ so that (\ref{compact2}) is compact. We have proven the first property in Lemma \ref{auxcompact} (the choice of $\xi=i$ was arbitrary), and the second and third properties are evident. Since $f$ was taken to be $C^\infty_0$ in Lemma \ref{ims}, this is sufficient for (\ref{compact2}).\bigskip

Now let $\mathcal{F}$ be the class of such functions $f$ so that (\ref{compact1}) is compact. Again, we have proven the first property, and the second and third properties are evident.\bigskip

For (\ref{compact3}) and (\ref{compact4}), one can prove compactness by multiplying and dividing by resolvents to reduce to (\ref{compact7}) and (\ref{compact8}).\bigskip

Thus, we do indeed have the breaking apart of the main estimate as in (\ref{ims}).

\subsection{The cluster $(xy0)$}

Directly from the previous section we have that $f(H)j_{(xy0)}$ is compact:

\begin{align*}
f(H)j_{(xy0)} &= f(H)(H+i)\frac{1}{H+i}(H_0+i) \frac{1}{H_0+i} j_{(xy0)} \\
\end{align*}

which is compact since $f(H)(H+i)$ is bounded, $\frac{1}{H+i}(H_0+i)$ is bounded, and $ \frac{1}{H_0+i} j_{(xy0)}$ is compact by Lemma \ref{elemcomp}. Therefore $j_{(xy0)} f(H) [H,iA] f(H) j_{(xy0)} $ is compact as claimed in (\ref{ims}).

\subsection{The cluster $(x)(y)(0)$}

In this section, we fix $a=(x)(y)(0)$, the cluster decomposition corresponding to free dynamics.

\begin{lemma}\label{alpha1}
Fix $\epsilon>0$ and an energy $E \neq 0$. Then there exists $\delta>0$ so that $H_a$ satisfies a Mourre estimate at $E$ with conjugate operator $A$, width $\delta$, and constant $\alpha_{(x)(y)(0)}$, where $$\alpha_{(x)(y)(0)}=\begin{cases}
E-\epsilon: \text{ } E>0 \\
c: \text{ } E<0 \\
\end{cases}$$
and $c$ is any positive constant one wishes.
\end{lemma}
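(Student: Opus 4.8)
The plan is to reduce to a Mourre estimate for the free Hamiltonian $H_a = H_0 = p^2 + |k|$ with conjugate operator $A = A^a + A_a$, which for the fully disconnected cluster is just the full dilation generator $\frac{1}{2}(P\cdot X + X\cdot P)$. Here there are no potentials ($I^a = 0$, $I_a = V$), so the truncated Hamiltonian is exactly $H_0$ and the commutator computed in the excerpt reads $[H_0, iA] = 2p^2 + |k|$, with no error terms and no compact remainder to worry about. The whole statement is therefore purely a statement about the Fourier multipliers $p^2+|k|$ and $2p^2+|k|$, and can be proved by functional calculus alone, exactly as in (\cite{mourre}, Theorem I.1).

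First I would fix $E \neq 0$ and choose the localization interval $\Delta = (E-\delta, E+\delta)$ with $\delta$ small, to be specified. On $\operatorname{ran} E_\Delta(H_0)$ we are restricted to the region $\{(p,k) : p^2 + |k| \in \Delta\}$ in momentum space. The key pointwise inequality is that on this region
$$
2p^2 + |k| \;=\; (p^2 + |k|) + p^2 \;\ge\; (p^2+|k|)\;=\;H_0,
$$
so trivially $2p^2+|k| \ge H_0 \ge E - \delta$ there, giving $E_\Delta(H_0)[H_0,iA]E_\Delta(H_0) \ge (E-\delta)E_\Delta(H_0)$ with no compact error term at all (so the "compact operator" $K$ in the Mourre estimate may be taken to be $0$). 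For $E > 0$: given $\epsilon > 0$, pick $\delta < \epsilon$ and then $E - \delta > E - \epsilon$, which yields the constant $\alpha_{(x)(y)(0)} = E - \epsilon$ as claimed. For $E < 0$: the spectrum of $H_0 = p^2 + |k|$ is $[0,\infty)$, so for $\delta$ small enough that $\Delta \subset (-\infty, 0)$ we have $E_\Delta(H_0) = 0$, and the Mourre estimate $0 \ge c\cdot 0$ holds vacuously for any positive constant $c$. This handles both cases.

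I should also confirm that the hypotheses needed to make $[H_0, iA]$ a genuine self-adjoint operator with the stated domain are in place: this is exactly the content of Lemma \ref{commlemma} applied to $H_0$ and $A$, with $S = C_0^\infty(\mathbb{R}^6)$, and it was already verified in the excerpt that $C = 2p^2 + |k|$ has domain $D(H_0)$ and that conditions (\ref{formal1})--(\ref{formal4}) hold (the dilation group preserves $C_0^\infty$ and the weighted $L^2$ space, $S$ is a core for $H_0$, and $2p^2+|k|$ is $H_0$-bounded). Thus $[H_0,iA]$ extends to the self-adjoint operator $2p^2+|k|$ with domain $D(H_0)$, and the Mourre inequality above is an inequality of bounded self-adjoint operators after sandwiching by $E_\Delta(H_0)$.

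The only mild subtlety — the "main obstacle," such as it is — is bookkeeping about the conjugate operator: the lemma uses $A$ rather than $\operatorname{sgn}(E) A^a$, so one cannot simply quote the subsystem Mourre estimates verbatim. But because $H_0$ is dilation-covariant in \emph{both} variables (the $p^2$ term scales with weight $2$, the $|k|$ term with weight $1$ under the full dilation), the positivity $2p^2+|k| \ge p^2+|k|$ is automatic and uniform, so no sign correction or threshold analysis is needed for this particular cluster. All the genuine work (handling nonzero potentials, compact remainders, the singular $|k|$) is deferred to the other clusters.
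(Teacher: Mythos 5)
Your proof is correct and takes essentially the same approach as the paper's: both cases ($E<0$ trivially, $E>0$ via the pointwise inequality $2p^2+|k|\ge p^2+|k|\ge E-\delta$ on the spectral support of $E_\bigtriangleup(p^2+|k|)$) are handled by functional calculus alone, with no compact error term. The extra remarks you make about verifying that $[H_0,iA]$ extends to $2p^2+|k|$ on $D(H_0)$ and about the choice of conjugate operator are consistent with the paper's setup, which establishes those facts in the surrounding text before the lemma.
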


\begin{proof}[Proof for $E<0$]
Since $E_\bigtriangleup(H_a)=0$ the desired operator inequality is trivial.
\end{proof}

\begin{proof}[Proof for $E>0$]

Fix $\epsilon>0$. Pick $\delta< \min(E,\epsilon)$, and select $\bigtriangleup = (E-\delta,E+\delta)$. We have
\begin{align*}
E_\bigtriangleup (H_a)[H_a,iA]E_\bigtriangleup(H_a)&= E_\bigtriangleup(p^2+|k|)\left(2p^2 + |k|\right)E_\bigtriangleup(p^2+|k|)\\
\ge (E-\delta) E_\bigtriangleup(p^2+|k|)\\
\ge (E-\epsilon) E_\bigtriangleup(p^2+|k|)\\
\end{align*}
from the functional calculus.
\end{proof}

\subsection{The cluster $(x)(y0)$}

In this section, fix $a=(x)(y0)$, the cluster decomposition corresponding to the photon-proton cluster. Our aim is to prove the following.

\begin{lemma}\label{alpha2}

Fix $\epsilon>0$ and an energy $E\neq 0$ not an eigenvalue of the subsystem Hamiltonian $h_a$. Then there exists $\delta>0$ so that $H_a$ satisfies a Mourre estimate at $E$ with conjugate operator $A$, width $\delta$, and constant $\alpha_{(x)(y0)}$, where
$$\alpha_{(x)(y0)}=
\begin{cases}
\min(2d(E,a)-\epsilon,E-\epsilon): \text{ } E>0 \\
2d(E,a)-\epsilon: \text{ } E<0 \\
\end{cases}$$
\end{lemma}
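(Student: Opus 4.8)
The plan is to reduce the estimate for $H_a=H_0+V_{13}(y)=p^2+|k|+V_{13}(y)$ to a Mourre estimate for the photon--proton subsystem Hamiltonian $h_a:=h_{(x)(y0)}=|k|+V_{13}(y)$ on $L^2(\mathbb{R}^3_y)$, exploiting that $H_a$ is fibered over the external momentum $p_a=p$: since $H_a$ commutes with $p$, one has $H_a=\int^\oplus_{\mathbb{R}^3_s}(s^2+h_a)\,ds$, the reduced Hamiltonian at momentum $s$ being just $s^2+h_a$. With $A=A^a+A_a$, the commutator computed in the excerpt is $[H_a,iA]=2p^2+[h_a,iA^a]=2p^2+|k|-y\cdot\nabla V_{13}$; and since $[h_a,iA^a]=|k|-y\cdot\nabla V_{13}=h_a-W$ with $W:=V_{13}+y\cdot\nabla V_{13}$ relatively $|k|$-compact, hence relatively $h_a$-compact, on $L^2(\mathbb{R}^3_y)$, one may equivalently write $[H_a,iA]=p^2+H_a-W$.

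First I would split $L^2(\mathbb{R}^3_y)$ into the closed span of the eigenfunctions of $h_a$ and the absolutely continuous subspace of $h_a$. Correspondingly $E_\triangle(H_a)$ decomposes into a ``band'' piece for each eigenvalue $\mu$ of $h_a$ with $\mu<E+\delta$ --- supported on the $p$-shell $\{p:\,p^2+\mu\in\triangle\}$, on which $H_a$ acts as $p^2+\mu$ --- plus an ``a.c.\ piece'' present only when $\triangle$ meets $[0,\infty)$, essentially only when $E>0$. On the band of $\mu$ the virial theorem for $h_a$ forces $\Pi_\mu[h_a,iA^a]\Pi_\mu=0$ (with $\Pi_\mu$ the finite-rank $h_a$-eigenprojection), and if $\delta$ is smaller than half the spacing of the finitely many relevant eigenvalues that are bounded away from $0$, then the $p$-shells of distinct bands are disjoint, so $[h_a,iA^a]$ has no matrix elements between bands either. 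Hence on the bands $[H_a,iA]$ collapses to $2p^2\ge 2(E-\delta-\mu)\ge 2\big(d(E,a)-\delta\big)$, using $E-\mu\ge d(E,a)$ at the largest relevant eigenvalue; so the band part contributes $\ge(2d(E,a)-\epsilon)E_\triangle(H_a)$ with \emph{no} remainder, and since for $E<0$ there is no a.c.\ piece this already settles the case $E<0$.

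For the a.c.\ piece (so $E>0$), where $h_a$ has purely absolutely continuous spectrum $[0,\infty)$, I would insert a smooth partition of unity $\{\zeta_\nu(p)\}$ over the bounded $p$-region occupied by $E_\triangle(H_a)$. Because each $\zeta_\nu(p)$ is a function of $p$ alone it commutes with $H_a$ and with $[H_a,iA]$, so no commutator errors arise. On the piece $\zeta_\nu$ one has $p^2\approx p_\nu^2$ and $h_a\approx E-p_\nu^2=:\mu_\nu\ge 0$, and feeding in the subsystem Mourre estimate for $h_a$ near $\mu_\nu$ (constant $\approx\mu_\nu$, coming from $[h_a,iA^a]=h_a-W$ and the relative compactness of $W$) gives, on that piece, $[H_a,iA]\gtrsim 2p_\nu^2+\mu_\nu=2E-\mu_\nu\ge E$ since $\mu_\nu\le E$. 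Assembling bands and a.c.\ piece yields the constant $\min(2d(E,a)-\epsilon,\,E-\epsilon)$ for $E>0$.

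The hard part will be the remainder produced on the a.c.\ piece. The compact operator $K'_\nu$ furnished by the subsystem Mourre estimate lives on $L^2(\mathbb{R}^3_y)$, and since $V_{13}$ (hence $W$) depends only on $y$ it is \emph{not} relatively $H_0$-compact on the full space $L^2(\mathbb{R}^6)$, so $K'_\nu$ lifted to $L^2(\mathbb{R}^6)$ is a decomposable operator that need not be compact. Proving that the total a.c.\ remainder is nonetheless a genuine compact operator on $L^2(\mathbb{R}^6)$ --- and, relatedly, controlling the infinitely many low-lying bands that may accumulate at the threshold $0$ when $E>0$ --- is the delicate step; one expects to use the decay of $W$ in $y$ together with the facts that $E_\triangle(H_a)$ confines $p$ to a bounded set and $h_a$ to a bounded energy window to sharpen the partition-of-unity argument (or to defer the offending non-compact piece to the geometric cutoff $j_a$ in the assembled estimate~\eqref{ims}).
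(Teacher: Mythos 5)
Your fibering of $H_a=\int^\oplus_{\mathbb{R}^3_s}(s^2+h_a)\,ds$ over the electron momentum $p=s$, the decomposition $[H_a,iA]=2p^2+[h_a,iA^a]$, and the use of the virial theorem together with the subsystem Mourre estimate are exactly the right ingredients and match the strategy of the paper's proof. The two obstructions you single out at the end --- that the compact remainder from the subsystem Mourre estimate lifts to a decomposable, \emph{non}-compact operator on $L^2(\mathbb{R}^6)$, and that for $E>0$ infinitely many bands with eigenvalues accumulating at $0$ collide with the a.c.\ piece near $p^2\approx E$ --- are genuine, and you are right that neither can be swept into ``a compact operator on the full space'' nor handed off to the geometric cutoff $j_a$ (since $j_a$ is homogeneous of degree $0$, $j_a K j_a$ is \emph{not} compact for a nontrivial decomposable $K$).

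The missing idea is that the remainder must be removed \emph{before} lifting, i.e.\ fiber-by-fiber. On each fixed fiber $s$, the reduced operator is $H_a(s)=s^2+h_a$ on $L^2(\mathbb{R}^3_y)$, the projection is $E_\triangle(H_a(s))=E_{\triangle-s^2}(h_a)$, and the fibered commutator is $[H_a,iA](s)=2s^2+[h_a,iA^a]$. The subsystem Mourre estimate with compact remainder $K$ for $[h_a,iA^a]$ can be upgraded, on each fiber, to a Mourre estimate with \emph{no} remainder by Mourre's own shrinking-interval argument: take a finite-rank approximant $F$ of $K$ with range in the pure-point spectral subspace of $H_a(s_0)$, use the virial theorem to kill the diagonal $F$-block, bound the off-diagonal cross terms by $C^*C$ and $F^*F$, and then shrink the interval so that $(1-E_{pp})K(1-E_{pp})$ has small norm (this exploits $E_\triangle(1-E_{pp})\to 0$ strongly). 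This produces, for each $s_0$, a $\delta(s_0)>0$ and a neighborhood $U\ni s_0$ on which the fiberwise estimate holds with the correct constant and with \emph{zero} remainder. The separation into ``small $|s|$'' and ``large $|s|$'' then goes through cleanly: for large $|s|$ the positivity of $2s^2\ge 2\tau$ dominates; for small $|s|$ and $E>0$ the functional-calculus bound $h_a\ge E-\delta-s^2$ together with the smallness of the potential terms gives $\ge E-\epsilon$; for $E<0$ and small $|s|$ the projection is zero. Finally, the uniform width $\delta$ over all $s$ is obtained by covering a compact region $\{|s|^2\le M\}$ by finitely many of the $U$'s (the projections vanish for $|s|^2>M$), which sidesteps both of your worries: you never need compactness of the lifted remainder, and you never need to count bands --- the covering argument over the compact $s$-region handles the accumulation at threshold automatically.

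Your band decomposition is a dual viewpoint to the paper's fiber decomposition, and for $E<0$ it is essentially equivalent (and your $E<0$ argument is correct). But for $E>0$ the band picture forces you to confront the accumulation of eigenvalues and the overlap of shells, whereas the fiber picture replaces this with a compactness argument in $s$-space and a fiberwise Mourre estimate that has already been cleaned of its compact remainder. Replacing your Step 3 (``hope the lifted remainder is compact'') with the fiberwise Mourre--virial shrinking argument is the key repair.
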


The strategy of proof is to consider each value of the electron's relative momentum $r$ separately. Because $r$ commutes with the operator $[H_a,iA]$, we can defined the \textbf{fibered commutator}

$$[H_a,iA](s) := 2s^2 + |k| - y\cdot \bigtriangledown V_{13}(y)$$

which is a self-adjoint operator on $L^2(\mathbb{R}^3_y)$ by Kato-Rellich. We can then write as a direct integral:

\begin{align*}
E_\bigtriangleup (H_a) [H_a,iA] E_\bigtriangleup (H_a) &= E_\bigtriangleup (p^2 + |k| + V_{13}(y)) \left( 2p^2 +|k| - y\cdot \bigtriangledown_y V_{13}(y)\right)  E_\bigtriangleup (p^2 + |k| + V_{13}(y))\\
&= \int^\oplus_{\mathbb{R}^3_s} E_\bigtriangleup (s^2 + |k| + V_{13}(y)) \left( 2s^2 +|k| - y\cdot \bigtriangledown_y V_{13}(y)\right)  E_\bigtriangleup (s^2 + |k| + V_{13}(y)) ds \\
&= \int^\oplus_{\mathbb{R}^3_s} E_\bigtriangleup (H_a(s)) [H_a,iA](s) E_\bigtriangleup (H_a(s)) ds \\
\end{align*}

It is then sufficient to prove a uniform Mourre estimate on each fiber. For fibers where the relative momentum $s$ is large, the positive contribution to the commutator comes from $2s^2$. For fibers where the relative momentum is small, we rely on the energy $E$ being away from thresholds for the positive contribution. With this in mind, we prove  Lemma \ref{alpha2lemma1} and Lemma \ref{alpha2lemma2}, which show that the remaining terms are small.

\begin{lemma}\label{alpha2lemma1}
Fix $\epsilon>0$, an energy $E\neq0$ not an eigenvalue of the subsystem Hamiltonian $h_a$, and a value of $s_0\neq 0$. Then there exists $\delta>0$ and an open set $U\subset \mathbb{R}^3$ containing $s_0$ so that for all $s \in U$, letting $\bigtriangleup = (E-\delta,E+\delta)$:

$$E_\bigtriangleup (H_a(s)) [h_a, iA^a] E_\bigtriangleup (H_a(s)) \ge -\epsilon E_\bigtriangleup (H_a(s))$$

\end{lemma}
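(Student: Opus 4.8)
The plan is to peel off the fibration and reduce to a single spectral estimate for $h_a$, which I would then establish from the virial theorem, the relative compactness of $y\cdot\bigtriangledown V_{13}$, and the sign $|k|\ge 0$ — with no need of a Mourre estimate at $\lambda_0$ itself.

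First I would reduce to $h_a$. Since $H_a(s)=s^2+h_a$ as operators on $L^2(\mathbb{R}^3_y)$ (the term $s^2$ being a scalar), the spectral projections obey $E_\bigtriangleup(H_a(s))=E_{\bigtriangleup-s^2}(h_a)$. Given $\epsilon$, I would fix a small interval $J$ around $\lambda_0:=E-s_0^2$, and then choose $\delta>0$ and an open neighborhood $U$ of $s_0$ so small that $\bigcup_{s\in U}(\bigtriangleup-s^2)\subset J$. Since $E_{\bigtriangleup-s^2}(h_a)\le E_J(h_a)$ and $E_{\bigtriangleup-s^2}(h_a)E_J(h_a)=E_{\bigtriangleup-s^2}(h_a)$, conjugating an operator inequality by $E_{\bigtriangleup-s^2}(h_a)$ shows it suffices to prove that for every sufficiently small interval $J$ around $\lambda_0$,
$$E_J(h_a)\,[h_a,iA^a]\,E_J(h_a)\ \ge\ -\epsilon\, E_J(h_a).$$
Here $[h_a,iA^a]=|k|-y\cdot\bigtriangledown V_{13}(y)$; I abbreviate $C:=[h_a,iA^a]$ and $W:=V_{13}+y\cdot\bigtriangledown V_{13}$. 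By (\ref{RC1}) both $V_{13}$ and $y\cdot\bigtriangledown V_{13}$, hence $W$, are relatively $|k|$-compact with relative $|k|$-bound $<1$; in particular $|k|$ is relatively $h_a$-bounded, so $(|k|+1)(h_a+c_0)^{-1}$ is bounded for a fixed $c_0$ with $h_a+c_0\ge 1$, and therefore $W(h_a+c_0)^{-1}$ and $(y\cdot\bigtriangledown V_{13})(h_a+c_0)^{-1}$ are fixed compact operators.

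The core step is to split $E_J(h_a)=P+R$ with $P:=E_{\{\lambda_0\}}(h_a)$ (which is $0$ unless $\lambda_0$ is an eigenvalue of $h_a$) and $R:=E_{J\setminus\{\lambda_0\}}(h_a)$. As $J$ shrinks to $\{\lambda_0\}$, $R\to 0$ strongly, and $(h_a+c_0)R$ is self-adjoint, uniformly bounded, and converges strongly to $0$. I would then treat the three blocks of $E_J(h_a)CE_J(h_a)$ in turn. The diagonal block vanishes, $PCP=0$: trivially when $P=0$, and otherwise by the virial theorem for $(h_a,A^a)$ — recall $h_a$ is $C^1(A^a)$, as used earlier in the paper — since $\langle\psi,C\psi\rangle=0$ on $\mathrm{ran}\,P$ together with self-adjointness of $C$ forces $PCP=0$. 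For the $R$-block, $RCR=R|k|R+R(-y\cdot\bigtriangledown V_{13})R\ge -\Vert R(-y\cdot\bigtriangledown V_{13})R\Vert\,R$ since $R|k|R\ge 0$; writing $R(-y\cdot\bigtriangledown V_{13})R=R\,[(-y\cdot\bigtriangledown V_{13})(h_a+c_0)^{-1}]\,(h_a+c_0)R$, its norm is at most $\Vert (-y\cdot\bigtriangledown V_{13})(h_a+c_0)^{-1}\,(h_a+c_0)R\Vert\to 0$, because a fixed compact operator composed with the uniformly bounded self-adjoint family $(h_a+c_0)R\to 0$ strongly tends to $0$ in operator norm. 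For the cross term, using $|k|=h_a-V_{13}$ and $PR=0$ one gets $PCR=P(h_a-V_{13})R-P(y\cdot\bigtriangledown V_{13})R=-PWR$, and the same mechanism gives $\Vert PWR\Vert=\Vert P\,[W(h_a+c_0)^{-1}]\,(h_a+c_0)R\Vert\to 0$. Adding the blocks, $E_J(h_a)CE_J(h_a)=PCP+(PCR+RCP)+RCR\ge -\big(2\Vert PWR\Vert+\Vert R(-y\cdot\bigtriangledown V_{13})R\Vert\big)E_J(h_a)$, which is $\ge-\epsilon E_J(h_a)$ once $J$ is small enough.

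The step I expect to require the most care — and the one that makes the whole scheme go through uniformly — is the elementary fact invoked twice above: a \emph{fixed} compact operator composed with a uniformly bounded family of \emph{self-adjoint} operators converging strongly to $0$ converges to $0$ in operator norm. Self-adjointness of $(h_a+c_0)R$ is essential, since it forces $(h_a+c_0)R\,\psi_n$ to converge weakly to $0$ for any bounded sequence $\psi_n$, which is the crux of the short contradiction argument. This is precisely what lets the argument cover, with no separate case analysis and without a strictly positive Mourre constant, the potentially awkward situations in which $\lambda_0=0$ is a threshold of $h_a$ (which happens when $E>0$ and $|s_0|^2=E$) or $\lambda_0$ is an embedded eigenvalue of $h_a$ — cases where $E_J(h_a)$ does not shrink to $0$; there the nonnegativity $|k|\ge 0$ supplies all the positivity one needs and the virial theorem disposes of the bound-state part.
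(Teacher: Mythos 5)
Your proof is correct and takes a genuinely different route from the paper. The paper follows the Froese--Herbst template in two stages: first it establishes the fiber estimate with an unremoved compact remainder (its inequality (\ref{BStrick1})), splitting into three cases according to where $E-s_0^2$ sits relative to the spectrum of $h_a$, and then it removes the compact $K$ by approximating the pure-point projection from inside with a finite-dimensional $F$, applying the virial theorem, and absorbing the off-diagonal cross terms via a Cauchy--Schwarz-type inequality $C^*F + F^*C \ge -(\tfrac{2}{\epsilon}C^*C + \tfrac{\epsilon}{2}F^*F)$. You instead split $E_J(h_a)=P+R$ with $P$ the \emph{exact} eigenprojection at $\lambda_0=E-s_0^2$ and kill $PCP$ outright by the virial theorem (including the polarization step needed to pass from $\langle\psi,C\psi\rangle=0$ to the operator identity $PCP=0$); the $R$-block and cross terms are then controlled by the positivity $|k|\ge 0$ plus the single clean fact that a fixed compact operator times a uniformly bounded \emph{self-adjoint} family converging strongly to $0$ tends to $0$ in operator norm (self-adjointness is exactly what converts strong convergence into the needed weak convergence on the other side). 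Your reduction $E_\bigtriangleup(H_a(s))=E_{\bigtriangleup-s^2}(h_a)$ and the choice of $U,\delta$ with $\bigcup_{s\in U}(\bigtriangleup-s^2)\subset J$ is also tidier than the paper's, which carries the fiber parameter through the whole argument. What the paper's route buys is that it is the textbook technique and generalizes readily to settings where one has only a Mourre estimate with compact error rather than explicit spectral structure; what yours buys is that it is shorter and uses no auxiliary Cauchy--Schwarz manipulation, no intermediate ``estimate with a compact remainder,'' and handles in a single stroke the case where $\lambda_0$ is an accumulation point of eigenvalues (e.g.\ $\lambda_0=0$) because $R\to 0$ strongly regardless of whether $E_{pp}$ near $\lambda_0$ is infinite-dimensional.
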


\begin{proof}

Fix $\epsilon$, $E$ and $s_0$ as above. We may select $\delta_0>0$ so that for $\bigtriangleup_0=(E-\delta_0,E+\delta_0)$:

\begin{equation}\label{BStrick1}
E_{\bigtriangleup_0} (H_a(s_0)) [h_a, iA^a] E_{\bigtriangleup_0} (H_a(s_0)) \ge -\frac{\epsilon}{4}E_{\bigtriangleup_0} (H_a(s_0)) + K
\end{equation}

where $K$ is a compact operator. To see why, we break up into three cases. First, if $E-s_0^2$ is a negative eigenvalue of $h_a$, then choose $\delta_0$ so small that $\bigtriangleup_0-s_0^2$ contains no continuous spectrum of $h_a$. Then by the virial theorem for $[h_a,iA^a]$ we have

\begin{align*}
E_\bigtriangleup (H_a(s_0)) [h_a, iA^a] E_\bigtriangleup (H_a(s_0)) &= E_\bigtriangleup (s_0^2 + |k| + V_{13}(y)) [h_a, iA^a] E_\bigtriangleup (s_0^2 + |k| + V_{13}(y))\\
&= E_{\bigtriangleup-s_0^2} (h_a)[h_a, iA^a] E_{\bigtriangleup-s_0^2} (h_a)\\
&=0\\
\end{align*}

Second, if $E-s_0^2$ is negative but in the resolvent set of $h_a$, we can pick $\delta_0$ so small that $\bigtriangleup_0-s_0^2$ contains no spectrum of $h_a$ at all, and the desired estimate (\ref{BStrick1}) follows because the projections $E_{\bigtriangleup-s_0^2} (h_a)$ are zero. Third, if $E-s_0^2$ is nonnegative, we can take $0 < \delta_0<  E-s_0^2+\frac{\epsilon}{4}$. Then we compute

\begin{align*}
E_{\bigtriangleup_0} (H_a(s_0)) [h_a, iA^a] E_{\bigtriangleup_0} (H_a(s_0)) &= E_{\bigtriangleup_0} (s_0^2 + |k| + V_{13}(y)) \Bigg( |k|- y\cdot \bigtriangledown V_{13}(y) \Bigg) E_{\bigtriangleup_0} (s_0^2 + |k| + V_{13}(y))\\
&= E_{\bigtriangleup_0-s_0^2} (|k| + V_{13}(y)) \Bigg( |k|- y\cdot \bigtriangledown V_{13}(y) \Bigg) E_{\bigtriangleup_0-s_0^2} (|k| + V_{13}(y))\\
&= E_{\bigtriangleup_0-s_0^2} (|k| + V_{13}(y)) \Bigg( |k|+V_{13}(y)\Bigg)E_{\bigtriangleup_0-s_0^2} (|k| + V_{13}(y))\\
&\hspace{.5cm}+ E_{\bigtriangleup_0-s_0^2} (|k| + V_{13}(y))\Bigg( -V_{13}(y)- y\cdot \bigtriangledown V_{13}(y) \Bigg) E_{\bigtriangleup_0-s_0^2} (|k| + V_{13}(y))\\
&= (E-s_0^2-\frac{\epsilon}{4})E_{\bigtriangleup_0-s_0^2} (|k| + V_{13}(y))  + K\\
&\ge -\frac{\epsilon}{4}E_{\bigtriangleup_0} (H_a(s_0)) + K\\
\end{align*}

by using the functional calculus, where $K= E_{\bigtriangleup_0-s_0^2} (h_a)\Bigg( -V_{13}(y)- y\cdot \bigtriangledown V_{13}(y) \Bigg) E_{\bigtriangleup-s_0^2} (h_a)$ is compact. Thus we can always find $\delta_0$ so that (\ref{BStrick1}) holds. All that remains is to prove Lemma \ref{alpha2lemma1} is to remove the compact $K$. For another proof in this vein, see (Eqn. (3.4) in \cite{froeseherbst}).\bigskip

Fix such a $\delta_0$ as in (\ref{BStrick1}). We write e.g. $E_{pp {\bigtriangleup_0}}(H_a(s_0)):= E_{pp}(H_a(s_0))  E_{\bigtriangleup_0} (H_a(s_0)) $, where $E_{pp}(H_a(s_0))$ represents the projection onto the pure point spectrum. The next claim is that we have:

\begin{equation}\label{BStrick2}
E_{\bigtriangleup_0}(H_a(s_0)) [h_a,iA^a] E_{\bigtriangleup_0}(H_a(s_0))  \ge -\frac{\epsilon}{2} E_{\bigtriangleup_0}(H_a(s_0)) + (1- E_{pp\bigtriangleup_0}(H_a(s_0))) K_1  (1- E_{pp\bigtriangleup_0}(H_a(s_0)))
\end{equation}

for some compact $K_1$. Assuming (\ref{BStrick2}), then we could select $\delta_1$ so small that for $\bigtriangleup_1 =(E-\delta_1, E+\delta_1)$, we could multiply both sides of the above by $E_{\bigtriangleup_1}(H_a(s_0))$ and use

$$(E_{\bigtriangleup_1}(H_a(s_0))- E_{pp\bigtriangleup_1}(H_a(s_0))) K_1  (E_{\bigtriangleup_1}(H_a(s_0))- E_{pp\bigtriangleup_1}(H_a(s_0)))  \ge -\frac{\epsilon}{2} E_{\bigtriangleup_1}(H_a(s_0))$$

to conclude that:

\begin{equation}\label{BStrick3}
E_{\bigtriangleup_1}(H_a(s_0)) [h_a,iA^a] E_{\bigtriangleup_1}(H_a(s_0))  \ge -\epsilon E_{\bigtriangleup_1}(H_a(s_0))
\end{equation}

Then we would select $\delta= \delta_1/2$ and define $U:= \lbrace s\in \mathbb{R}^3 : s_0^2-\delta < s^2 < s_0^2+\delta\rbrace$. This would give us the conclusion of the lemma, letting $\bigtriangleup= (E-\delta,E+\delta)$; for any $s\in U$, we could prove the desired inequality by taking (\ref{BStrick3}) and multiplying on both sides by $E_{\bigtriangleup-s^2}(h_a)$. This is due to the fact that as defined, $\bigtriangleup-s^2 \subset \bigtriangleup_1-s_0^2$ for any $s\in U$. So it remains to show (\ref{BStrick2}).\bigskip

Since $K$ is compact, we may select a finite-dimensional projection $F$ with range contained in that of $E_{\bigtriangleup_0 pp}(H_a(s_0))$ so that

$$\Vert (1-F) K (1-F)- (1-E_{\bigtriangleup_0 pp}(H_a(s_0))) K (1-E_{\bigtriangleup_0 pp}(H_a(s_0)))\Vert\le \frac{\epsilon}{2}$$

Then, from multiplying (\ref{BStrick1}) on both sides by $(1-F)$, we obtain

$$(E_{\bigtriangleup_0}(H_a(s_0))-F) [h_a,iA^a] (E_{\bigtriangleup_0}(H_a(s_0))-F) \ge -\frac{\epsilon}{4}(E_{\bigtriangleup_0 pp}(H_a(s_0))-F) +(1-F)K(1-F)$$

By using the property by which $F$ was obtained, we glean from this that

\begin{align*}
&(E_{\bigtriangleup_0}(H_a(s_0))-F) [h_a,iA^a] (E_{\bigtriangleup_0}(H_a(s_0))-F) \\
&\ge -\frac{\epsilon}{2}(E_{\bigtriangleup_0 pp}(H_a(s_0))-F) +(1-E_{\bigtriangleup_0 pp}(H_a(s_0)))K(1-E_{\bigtriangleup_0 pp}(H_a(s_0)))\\
\end{align*}

We can multiply out the left-hand side of the above, and apply the virial theorem to get

\begin{align*}
&E_{\bigtriangleup_0}(H_a(s_0)) [h_a,iA^a] E_{\bigtriangleup_0}(H_a(s_0)) -  C^*F- F^*C \\
&\ge -\frac{\epsilon}{2}(E_{\bigtriangleup_0 pp}(H_a(s_0))-F) +(1-E_{\bigtriangleup_0 pp}(H_a(s_0))K(1-E_{\bigtriangleup_0 pp}(H_a(s_0)) \\
\end{align*}

where we have let $C = F [h_a,iA^a]E_{\bigtriangleup_0}(H_a(s_0))(1-E_{\bigtriangleup_0 pp}(H_a(s_0)))$. To obtain (\ref{BStrick2}) from this is a matter of showing that for some compact $K_2$,

$$ C^*F+ F^*C \ge (1-E_{\bigtriangleup_0 pp}) K_2 (1-E_{\bigtriangleup_0 pp}) + \frac{\epsilon}{2}F^* F$$

Yet this follows from

$$C^*F+ F^*C \ge - (\frac{2}{\epsilon}C^* C + \frac{\epsilon}{2}F^* F)$$

letting $K_2= -\frac{2}{\epsilon}C^*C$. So let $K_1= (1-E_{\bigtriangleup_0 pp})(K+K_2)(1-E_{\bigtriangleup_0 pp})$. Note that it doesn't matter how $K_1$ depends on $\epsilon$, because of the discussion surrounding (\ref{BStrick2}) and (\ref{BStrick3}). That completes the analysis.
\end{proof}

We will also use the following.

\begin{lemma}\label{alpha2lemma2}
Fix $\epsilon>0$, an energy $E\neq 0$ not an eigenvalue of the subsystem Hamiltonian $h_a$, and a value of $s_0\neq 0$ so that $E-s_0^2$ is not an eigenvalue of $h_a$. Then there exists $\delta>0$ and an open set $U\subset \mathbb{R}^3$ containing $s_0$ so that for all $s \in U$, letting $\bigtriangleup = (E-\delta,E+\delta)$:

$$E_\bigtriangleup (H_a(s)) (-V_{13}(y)-y\cdot \bigtriangledown V_{13}(y))  E_\bigtriangleup (H_a(s))  \ge -\epsilon E_\bigtriangleup (H_a(s)) $$

\end{lemma}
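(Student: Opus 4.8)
The plan is to exploit two structural facts and then argue purely by compactness, with no virial-theorem bookkeeping of the kind used in Lemma \ref{alpha2lemma1}. Write $W := -V_{13}(y) - y\cdot\bigtriangledown V_{13}(y)$ and $\lambda_0 := E - s_0^2$. First, by (\ref{RC1}) both $V_{13}$ and $y\cdot\bigtriangledown V_{13}$ are relatively $|k|$-compact with relative bound below $1$, so $h_a = |k| + V_{13}(y)$ is self-adjoint on $D(|k|)$, the operator $(|k|+i)(h_a+i)^{-1}$ is bounded, and hence $W(h_a+i)^{-1} = \big(W(|k|+i)^{-1}\big)\big((|k|+i)(h_a+i)^{-1}\big)$ is compact. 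Second, since $H_a(s) = s^2 + h_a$, the functional calculus gives $E_\bigtriangleup(H_a(s)) = E_{\bigtriangleup-s^2}(h_a)$ for every interval $\bigtriangleup$ and every $s$, and the hypothesis says precisely that $\lambda_0$ is not an eigenvalue of $h_a$.

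The core step is to show that $\|E_I(h_a)\,W\,E_I(h_a)\|$ can be made $\le \epsilon$ by taking $I$ to be a sufficiently small open interval about $\lambda_0$. For this I would factor $E_I(h_a)\,W\,E_I(h_a) = \big(E_I(h_a)\,W(h_a+i)^{-1}\big)\big((h_a+i)E_I(h_a)\big)$. The right-hand factor has norm at most $\sup_{\lambda\in I}|\lambda+i|$, which stays bounded as long as $I$ lies inside a fixed bounded interval; for the left-hand factor I would use that $E_I(h_a) \to E_{\{\lambda_0\}}(h_a) = 0$ strongly as $I \downarrow \{\lambda_0\}$ (continuity of the spectral measure, using that $\lambda_0$ is not an eigenvalue), together with the fact that a uniformly bounded, strongly convergent net composed with the fixed compact operator $W(h_a+i)^{-1}$ converges in operator norm. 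Thus there is $\delta > 0$ such that, with $I_0 := (\lambda_0 - 2\delta,\ \lambda_0 + 2\delta)$, one has $\|E_{I_0}(h_a)\,W\,E_{I_0}(h_a)\| \le \epsilon$.

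It then remains to transfer this bound to a neighborhood of $s_0$ and read off the operator inequality. Set $\bigtriangleup := (E-\delta, E+\delta)$ and $U := \{ s \in \mathbb{R}^3 : |s^2 - s_0^2| < \delta \}$, which is open and contains $s_0$. For $s \in U$ one checks $\bigtriangleup - s^2 \subset I_0$, so $E_{\bigtriangleup-s^2}(h_a) = E_{\bigtriangleup-s^2}(h_a)\,E_{I_0}(h_a)$ and hence $E_\bigtriangleup(H_a(s))\,W\,E_\bigtriangleup(H_a(s)) = E_{\bigtriangleup-s^2}(h_a)\big(E_{I_0}(h_a)\,W\,E_{I_0}(h_a)\big)E_{\bigtriangleup-s^2}(h_a) \ge -\epsilon\,E_{\bigtriangleup-s^2}(h_a) = -\epsilon\,E_\bigtriangleup(H_a(s))$, where the inequality uses that the sandwiched middle operator has norm $\le\epsilon$ and is flanked by the projections $E_{\bigtriangleup-s^2}(h_a)$. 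I expect the only point needing care to be this uniformity in $s$: it is handled above by choosing $\delta$ small enough that every shifted interval $\bigtriangleup - s^2$ with $s\in U$ nests inside the single interval $I_0$ on which the norm bound was proved, so no new limiting argument is needed as $s$ varies. The genuinely essential ingredient, exactly as in Lemma \ref{alpha2lemma1}, is the $h_a$-compactness of $W$ furnished by (\ref{RC1}); the extra hypothesis here that $\lambda_0 = E - s_0^2$ is not an eigenvalue is what upgrades the strong limit of $E_I(h_a)$ to $0$ and thereby removes any virial correction term.
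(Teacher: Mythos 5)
Your proof is correct and takes essentially the same route as the paper's, though you supply the detail the paper leaves implicit. The paper asserts that one can shrink $\delta_1$ until the inequality holds but does not say why; your factorization $E_I(h_a)WE_I(h_a)=\bigl(E_I(h_a)\,W(h_a+i)^{-1}\bigr)\bigl((h_a+i)E_I(h_a)\bigr)$ together with ``compact operator times a uniformly bounded, strongly null net goes to zero in norm'' is exactly the missing justification, and the nesting $\bigtriangleup - s^2 \subset I_0$ for $s\in U$ matches the paper's final step. One small improvement in your version: the paper first selects $\delta_0$ so that $(E-\delta_0-s_0^2,\,E+\delta_0-s_0^2)$ contains \emph{no} eigenvalues of $h_a$, which can only be guaranteed by appealing to the non-accumulation of eigenvalues away from $0$ (and is problematic if $E-s_0^2=0$); you bypass that step entirely and only use the hypothesis that the single point $E-s_0^2$ is not an eigenvalue, which is all that is needed for $E_I(h_a)\to 0$ strongly as $I\downarrow\{E-s_0^2\}$.
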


\begin{proof}
Fix $\epsilon$, $E$, and $s_0$ as above. Then there is a width $\delta_0>0$ so that $(E-\delta_0 - s_0^2, E+\delta_0 - s_0^2)$ contains no eigenvalues of $h_a$. Then, there exists $\delta_1\le \delta_0$ small enough so that for $\bigtriangleup_1 = (E-\delta_1,E+\delta_1)$:

$$E_{\bigtriangleup_1} (s_0^2 + |k|+V_{13}(y) (-V_{13}(y)-y\cdot \bigtriangledown V_{13}(y))  E_{\bigtriangleup_1} (s_0^2 + |k|+V_{13}(y)) \ge -\epsilon E_{\bigtriangleup_1} (s_0^2 + |k|+V_{13}(y))$$

We select $\delta= \delta_1/2$, and let $U:= \lbrace s\in \mathbb{R}^3 : s_0^2-\delta < s^2 < s_0^2+\delta\rbrace$. As before, this concludes the proof.
\end{proof}

These two lemmas give us sufficient control over the junk terms in the fibered commutator $[H_a,iA](s)$, so we can now attack it.\bigskip

\begin{lemma}\label{alpha2Uform}

Fix $\epsilon>0$, an energy $E\neq 0$ not an eigenvalue of the subsystem Hamiltonian $h_a$ and a value $s_0\in \mathbb{R}^3$. Then there exists $\delta>0$ and an open set $U$ containing $s_0$ so that for all $s \in U$, letting $\bigtriangleup = (E-\delta,E+\delta)$ we have:

\begin{align*}
E_\bigtriangleup (H_a(s)) [H_a,iA](s) E_\bigtriangleup (H_a(s)) &\ge \alpha_{(x)(y0)} E_\bigtriangleup (H_a(s))\\
\end{align*}

Note that the dependence on $\epsilon$ comes from the way $\alpha_{(x)(y0)}$ was defined.
\end{lemma}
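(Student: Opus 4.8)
The plan is to reduce this fibered estimate to the two auxiliary bounds already established, Lemma~\ref{alpha2lemma1} and Lemma~\ref{alpha2lemma2}, together with the virial theorem for $[h_a,iA^a]$, via a case analysis on the position of $E-s_0^2$ relative to $\sigma(h_a)$. The starting point is the identity, valid as quadratic forms on $D(|k|)$,
$$[H_a,iA](s)=2s^2+[h_a,iA^a]=s^2+H_a(s)+W,\qquad W:=-V_{13}(y)-y\cdot\bigtriangledown V_{13}(y),$$
together with $H_a(s)=s^2+h_a$, so that $E_\bigtriangleup(H_a(s))=E_{\bigtriangleup-s^2}(h_a)$. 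The first form isolates the explicit positive term $2s^2$; the second lets functional calculus give $E_\bigtriangleup(H_a(s))H_a(s)E_\bigtriangleup(H_a(s))\ge(E-\delta)E_\bigtriangleup(H_a(s))$. I would apply Lemma~\ref{alpha2lemma1} and Lemma~\ref{alpha2lemma2} with an auxiliary $\epsilon'\ll\epsilon$, leaving room below the fixed $\alpha_{(x)(y0)}$, and freely shrink the resulting $\delta$; recall those lemmas return $U$ of the shape $\{\,s:s_0^2-\delta<s^2<s_0^2+\delta\,\}$, which is exactly what keeps $\bigtriangleup-s^2$ inside the desired spectral window for all $s\in U$.

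I would first dispose of $E-s_0^2\le 0$. If $E-s_0^2\notin\sigma(h_a)$ (equivalently $E-s_0^2<0$ and not a negative eigenvalue), then for small $\delta,U$ the interval $\bigtriangleup-s^2$ misses $\sigma(h_a)$, so $E_\bigtriangleup(H_a(s))=0$ on $U$ and the estimate is trivial. If $E-s_0^2$ is a negative eigenvalue of $h_a$ (which forces $s_0\ne0$, since $E$ is not an eigenvalue), it is an isolated one (eigenvalues of $h_a$ accumulate only at $0$), so for small $\delta,U$ the projection $E_{\bigtriangleup-s^2}(h_a)$ is the corresponding eigenprojection and the virial theorem annihilates the $[h_a,iA^a]$-term; only $2s^2E_\bigtriangleup(H_a(s))$ remains, and since $E-s_0^2<E$ is a threshold of $h_a$ to the left of $E$ we have $s_0^2=E-(E-s_0^2)\ge d(E,a)$, hence $2s^2\ge 2d(E,a)-\epsilon$ on $U$, which dominates $\alpha_{(x)(y0)}$ for either sign of $E$. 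If $E-s_0^2=0$ (so $E=s_0^2>0$), then $0$ is again a threshold to the left of $E$, so $s_0^2=E\ge d(E,a)$; here I would invoke Lemma~\ref{alpha2lemma1} ($s_0\ne0$) to get $E_\bigtriangleup(H_a(s))[h_a,iA^a]E_\bigtriangleup(H_a(s))\ge-\epsilon'E_\bigtriangleup(H_a(s))$, whence $E_\bigtriangleup(H_a(s))[H_a,iA](s)E_\bigtriangleup(H_a(s))\ge(2s^2-\epsilon')E_\bigtriangleup(H_a(s))\ge(2E-2\delta-\epsilon')E_\bigtriangleup(H_a(s))$, again above $\alpha_{(x)(y0)}$ once $\delta,\epsilon'$ are small.

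There remains $E-s_0^2>0$, which forces $E>0$. If $E-s_0^2$ is a (necessarily positive, hence isolated) eigenvalue of $h_a$, which again forces $s_0\ne0$, I would apply Lemma~\ref{alpha2lemma1} exactly as in the previous case, now using $2s_0^2=2(E-(E-s_0^2))\ge 2d(E,a)$. If $E-s_0^2$ is not an eigenvalue of $h_a$, I would use the second form: Lemma~\ref{alpha2lemma2} yields $E_\bigtriangleup(H_a(s))WE_\bigtriangleup(H_a(s))\ge-\epsilon'E_\bigtriangleup(H_a(s))$ on a suitable $U$, while $E_\bigtriangleup(H_a(s))H_a(s)E_\bigtriangleup(H_a(s))\ge(E-\delta)E_\bigtriangleup(H_a(s))$, so $E_\bigtriangleup(H_a(s))[H_a,iA](s)E_\bigtriangleup(H_a(s))\ge(s^2+E-\delta-\epsilon')E_\bigtriangleup(H_a(s))\ge(E-\epsilon)E_\bigtriangleup(H_a(s))\ge\alpha_{(x)(y0)}E_\bigtriangleup(H_a(s))$ once $\delta+\epsilon'<\epsilon$. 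The one value not literally covered by Lemma~\ref{alpha2lemma2} is $s_0=0$, where $E-s_0^2=E>0$ is still a non-eigenvalue; there its proof applies verbatim, nothing in it using $s_0\ne0$ once $E$ is a nonthreshold energy isolated from the point spectrum of $h_a$. Equivalently, one records a strict Mourre estimate for $h_a$ at $E$ with constant $E-\epsilon$, from $[h_a,iA^a]=h_a+W$ together with $\|E_J(h_a)WE_J(h_a)\|\to0$ as the interval $J$ shrinks to $E$ (valid since $W$ is $h_a$-compact and $E$ is not an eigenvalue).

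Assembling the cases, in each there are a $\delta>0$ and an open $U\ni s_0$ (intersecting and shrinking when two auxiliary lemmas are used for a single $s_0$) on which the asserted inequality holds, which is the claim. I expect the real obstacle — and the reason Lemma~\ref{alpha2lemma1} and Lemma~\ref{alpha2lemma2} were separated out — to be the threshold bookkeeping: whenever the window $\bigtriangleup-s^2$ is allowed to touch the continuous spectrum of $h_a$ the virial theorem is unavailable and only the weak bound $[h_a,iA^a]\ge-\epsilon'$ of Lemma~\ref{alpha2lemma1} survives, so the positivity $\alpha_{(x)(y0)}$ must be supplied by $2s^2$; this succeeds precisely because such $s_0$ have $E-s_0^2$ equal to a threshold of $h_a$ to the left of $E$, forcing $2s_0^2\ge 2d(E,a)$, whereas when $E-s_0^2$ avoids the eigenvalues the positivity is free from $H_a(s)\approx E$ via Lemma~\ref{alpha2lemma2}. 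Keeping these two mechanisms matched to the right cases is the crux.
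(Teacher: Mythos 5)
Your argument is correct and rests on the same machinery as the paper's proof — Lemmas \ref{alpha2lemma1} and \ref{alpha2lemma2}, the virial theorem, and the two complementary sources of positivity (the explicit $2s^2$ when the external momentum is large, and $E_\bigtriangleup(H_a(s))H_a(s)E_\bigtriangleup(H_a(s))\ge(E-\delta)E_\bigtriangleup(H_a(s))$ when it isn't) — but the organization is genuinely different. The paper fixes one global parameter $\tau$, chosen so that $[E-\delta_0-t,E+\delta_0-t]$ avoids $0$ and all thresholds for $0\le t\le\tau$ and so that $\tau\ge d(E,a)-\epsilon$, and then splits only into $s_0^2\le\tau$ versus $s_0^2>\tau$; by construction $E-s_0^2$ never hits a threshold in the first regime, so Lemma \ref{alpha2lemma2} (or triviality, when $E<0$) applies there and Lemma \ref{alpha2lemma1} applies in the second. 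You instead perform a five-way case analysis on where $E-s_0^2$ sits relative to $\sigma(h_a)$, deriving $s_0^2\ge d(E,a)$ directly from the observation that $E-s_0^2$ is a threshold to the left of $E$ whenever it is zero or an eigenvalue. Your version makes the threshold bookkeeping more explicit (one sees exactly why $2s_0^2\ge 2d(E,a)$ in the threshold cases) at the cost of more cases; the paper's $\tau$-split folds several of your cases together and is shorter, but hides the same observation inside the definition of $\tau$. One minor inaccuracy in your closing commentary: you identify the regime ``$\bigtriangleup-s^2$ touches the continuous spectrum of $h_a$'' with ``$E-s_0^2$ is a threshold to the left of $E$,'' but these are not the same — the window can intersect $[0,\infty)$ while $E-s_0^2$ is a positive non-eigenvalue, which is precisely your case~5 handled by Lemma \ref{alpha2lemma2} rather than by $2s^2$. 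This slip is only in the summary paragraph, though; the case analysis itself is exhaustive and each case is handled correctly.
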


For the proof, we fix $\epsilon$, $E$ and $s_0$ as above. Select $\delta_0$ so that $\bigtriangleup_0 = [E-\delta_0, E+\delta_0]$ does not contain $0$ or any thresholds, and also so $\delta_0 < \epsilon/2$. Then select $\tau>0$ so that $[E-\delta_0-t, E+\delta_0-t]$ does not contain $0$ or any thresholds for $0\le t\le \tau$. Furthermore $\tau$ can be selected so $\tau \ge d(E,a)-\epsilon$. The choice of $\tau$ serves to separate our analysis into `small external momentum' and `large external momentum'.\bigskip

We handle the cases of $E<0$ and $E>0$ separately.\bigskip

\begin{proof}[Proof for $E<0$]
Consider the small-momentum case, $s_0^2\le \tau$. The projection $E_{\bigtriangleup_0}(H_a(s_0))$ is evidently zero. We may select $\delta=\delta_0/2$ and $U:= \lbrace s\in \mathbb{R}^3 : s_0^2-\delta < s^2 < s_0^2+\delta\rbrace$. Then for all $s\in U$, letting $\bigtriangleup = (E-\delta,E+\delta)$:

\begin{align*}
E_\bigtriangleup (H_a(s)) [H_a,iA](s) E_\bigtriangleup (H_a(s))\ge (2\tau-\epsilon)  E_\bigtriangleup (H_a(s))\\
\end{align*}

for all $s \in U$, since the projections $E_\bigtriangleup(H_a(s))$ are zero for all such $s$, so we may have in fact any constant we wish (in place of $2\tau-\epsilon$).\bigskip

Now consider the large-momentum case, $s_0^2\ge \tau$. By Lemma \ref{alpha2lemma1}, we may select $\delta<\delta_0$ and $U$ containing $s_0$ so that for all $s \in U$, letting $\bigtriangleup= (E-\delta, E+\delta)$:

$$E_\bigtriangleup (H_a(s)) [h_a,iA^a] E_\bigtriangleup (H_a(s)) \ge -\epsilon E_\bigtriangleup (H_a(s))$$

Then, we have

\begin{align*}
E_\bigtriangleup (H_a(s)) [H_a,iA](s) E_\bigtriangleup (H_a(s)) &= E_\bigtriangleup (H_a(s)) \Bigg( 2s^2 + [h_a,iA^a] \Bigg) E_\bigtriangleup (H_a(s))\\
&\ge 2 \tau E_\bigtriangleup (H_a(s)) - \epsilon E_\bigtriangleup (H_a(s))\\
&\ge (2\tau-\epsilon)E_\bigtriangleup (H_a(s))\\
\end{align*}

for all $s\in U$.\bigskip

So, for any fiber $s_0$: there exists $\delta>0$ and a $U$ containing $s_0$ so that for all $s \in U$, letting $\bigtriangleup= (E-\delta, E+\delta)$:

\begin{align*}
E_\bigtriangleup (H_a(s)) [H_a,iA](s) E_\bigtriangleup (H_a(s)) &\ge (2\tau-\epsilon)E_\bigtriangleup (H_a(s))\\
&\ge (2d(E,a) - 3\epsilon)E_\bigtriangleup (H_a(s))\\
\end{align*}

where the last line holds because $\tau \ge d(E,a)-\epsilon$. By a renaming of $\epsilon$ we have our conclusion.
\end{proof}

\begin{proof}[Proof for $E>0$]

Consider the small-momentum case, $s_0^2\le \tau$.  By Lemma \ref{alpha2lemma2}, we may select $\delta<\delta_0$ and $U$ containing $s_0$ so that for all $s\in U$, letting $\bigtriangleup= (E-\delta, E+\delta)$:

$$E_\bigtriangleup (H_a(s)) (-V_{13}(y)-y\cdot \bigtriangledown V_{13}(y))  E_\bigtriangleup (H_a(s)) \ge -\frac{\epsilon}{2} E_\bigtriangleup (H_a(s))$$

Then, we have

\begin{align*}
E_\bigtriangleup (H_a(s)) [H_a,iA](s) E_\bigtriangleup (H_a(s)) &= E_\bigtriangleup (H_a(s)) \Bigg( 2s^2 + |k| + V_{13}(y) - V_{13}(y) - y\cdot \bigtriangledown V_{13}(y) \Bigg) E_\bigtriangleup (H_a(s))\\
&\ge (E-\delta_0) E_\bigtriangleup (H_a(s)) + E_\bigtriangleup (H_a(s)) \Bigg( - V_{13}(y) - y\cdot \bigtriangledown V_{13}(y) \Bigg) E_\bigtriangleup (H_a(s))\\
&\ge (E-\delta_0 - \frac{\epsilon}{2}) E_\bigtriangleup (H_a(s))\\
&\ge (E-\epsilon)  E_\bigtriangleup (H_a(s))\\
\end{align*}

for all $s \in U$, where the second-to-last step is by the functional calculus..\bigskip

The large-momentum case for $E>0$ is handled the same way as the large-momentum case for $E<0$; the same estimate with constant $(2\tau-\epsilon)$ holds.\bigskip

So, for any fiber $s_0^2$: there exists a $\delta>0$ and a $U$ containing $s_0$ so that for all $s \in U$, letting $\bigtriangleup=(E-\delta,E+\delta)$:

\begin{align*}
E_\bigtriangleup (H_a(s))[H_a,iA](s) E_\bigtriangleup (H_a(s))&\ge \min(E-\epsilon,2\tau-\epsilon) E_\bigtriangleup (H_a(s))\\
&\ge \min(E-\epsilon,2d(E,a) - 3\epsilon)E_\bigtriangleup (H_a(s))\\
\end{align*}

where the last line holds because $\tau \ge d(E,a)-\epsilon$. By a renaming of $\epsilon$ we have our conclusion.
\end{proof}

Now we can proceed with the proof of Lemma \ref{alpha2}.

\begin{proof}
Fix $\epsilon>0$ and an energy $E$. If $E<G_a$, then by taking $\delta$ small enough, the projections $E_\bigtriangleup(H_a)$ are zero and the conclusion is trivial, so we may assume $E\ge G_a$. Let $M$ be a number so that $M>> E-G_a$. Take the compact set $\lbrace s\in \mathbb{R}^3 : s^2\le M \rbrace$ and use Lemma \ref{alpha2Uform} to cover it with sets $U_i$, so that there exists a $\delta_i>0$ so that for each $s \in U_i$, taking $\bigtriangleup_i=(E-\delta_i,E+\delta_i)$:

$$E_{\bigtriangleup_i}(H_a(s)) [H_a,iA](s) E_{\bigtriangleup_i}(H_a(s)) \ge \alpha_{(x)(y0)}E_{\bigtriangleup_i}(H_a(s))$$

Extract a finite subcover and let $\delta$ be the minimum over the finite collection of $\delta_i$ associated to the subcover. It is then the case that for all $s$ such that $s^2 \le M$, taking $\bigtriangleup = (E-\delta,E+\delta)$:

$$E_{\bigtriangleup}(H_a(s)) [H_a,iA](s) E_{\bigtriangleup}(H_a(s)) \ge \alpha_{(x)(y0)}E_{\bigtriangleup}(H_a(s))$$

Since the projections are $0$ when $s^2 >  E-G_a$, the above inequality is also true for $s^2> M$. We conclude that the inequality holds for all $s \in \mathbb{R}^3$. Thus we have an inequality on the whole direct integral

$$\int^\oplus_{\mathbb{R}^3_s}  E_\bigtriangleup (H_a(s)) [H_a,iA](s) E_\bigtriangleup (H_a(s)) ds \ge \alpha_{(x)(y0)} \int^\oplus_{\mathbb{R}^3_s} E_\bigtriangleup (H_a(s)) ds$$

which is exactly Lemma \ref{alpha2}.\bigskip
\end{proof}

\subsection{The cluster $(y)(x0)$}

In what follows, fix $a=(y)(x0)$, the cluster decomposition corresponding to the photon-proton cluster. The analysis is much the same as the previous cluster but is outlined for the sake of completeness. Our aim is to prove the following.

\begin{lemma}\label{alpha3}

Fix $\epsilon>0$ and an energy $E\neq 0$ not an eigenvalue of the subsystem Hamiltonian $h_a$. Then there exists $\delta>0$ so that $H_a$ satisfies a Mourre estimate at $E$ with conjugate operator $A$, width $\delta$, and constant $\alpha_{(y)(x0)}$, where

$$\alpha_{(y)(x0)}=
\begin{cases}
\min(d(E,a)-\epsilon,E-\epsilon): \text{ } E>0 \\
d(E,a)-\epsilon: \text{ } E<0 \\
\end{cases}$$
\end{lemma}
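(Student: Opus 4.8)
The plan is to mimic the proof of Lemma \ref{alpha2} almost verbatim, with the two fibered junk-term lemmas (analogues of Lemma \ref{alpha2lemma1} and Lemma \ref{alpha2lemma2}) adapted to this cluster. Here $H_a = p^2 + |k| + V_{12}(x)$ commutes with $k$, so we fiber over the photon momentum $s = k$:
$$H_{(y)(x0)}(s) = p^2 + |s| + V_{12}(x), \qquad [H_a,iA](s) := 2p^2 + |s| - x\cdot\bigtriangledown V_{12}(x),$$
and write $E_\bigtriangleup(H_a)[H_a,iA]E_\bigtriangleup(H_a)$ as the direct integral $\int^\oplus_{\mathbb{R}^3_s} E_\bigtriangleup(H_a(s))[H_a,iA](s)E_\bigtriangleup(H_a(s))\,ds$. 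It then suffices to prove a uniform fibered Mourre estimate with constant $\alpha_{(y)(x0)}$ for each $s$. The crucial difference from the previous cluster is that the fiber parameter $|s|$ enters the commutator \emph{linearly} rather than quadratically, which is why the constant here is $d(E,a)-\epsilon$ rather than $2d(E,a)-\epsilon$: when the photon carries away energy $|s| = t$, the remaining subsystem sits at energy $E-t$, and the positive contribution of $|s|$ to the commutator is just $t$, matching the energy deficit one-for-one.

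First I would establish the small-energy fiber estimates. For $E>0$ and $|s|$ small, I would prove the analogue of Lemma \ref{alpha2lemma2}: if $E - |s_0|$ is not an eigenvalue of $h_{(y)(x0)} = p^2 + V_{12}$, then on a neighborhood $U$ of $s_0$ and a narrow interval $\bigtriangleup$ one has $E_\bigtriangleup(H_a(s))(-V_{12}-x\cdot\bigtriangledown V_{12})E_\bigtriangleup(H_a(s)) \ge -\epsilon E_\bigtriangleup(H_a(s))$, using that $V_{12}$ and $x\cdot\bigtriangledown V_{12}$ are $p^2$-compact (RC1) and that translating by $|s|$ moves the energy window off the spectrum. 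For $|s|$ large (or for $E<0$), I would prove the analogue of Lemma \ref{alpha2lemma1}: $E_\bigtriangleup(H_a(s))[h_a,iA^a]E_\bigtriangleup(H_a(s)) \ge -\epsilon E_\bigtriangleup(H_a(s))$ on a neighborhood of $s_0\ne 0$, via the Froese--Herbst compactness-removal argument (finite-rank approximation of the compact error $K$, the virial theorem for $[h_a,iA^a] = 2p^2 - x\cdot\bigtriangledown V_{12}$, and the $C^*F + F^*C \ge -(\frac{2}{\epsilon}C^*C + \frac{\epsilon}{2}F^*F)$ trick), exactly as in the proof of Lemma \ref{alpha2lemma1}, noting condition (\ref{2COMM}) for $h_{(y)(x0)}$ holds by (\ref{RB2}).

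Next I would assemble these into the fiber-by-fiber estimate, the analogue of Lemma \ref{alpha2Uform}. Fix $\epsilon, E, s_0$; choose $\delta_0 < \epsilon/2$ so $[E-\delta_0,E+\delta_0]$ avoids $0$ and all thresholds, and $\tau > 0$ with $\tau \ge d(E,a)-\epsilon$ so that $[E-\delta_0-t, E+\delta_0-t]$ avoids $0$ and thresholds for $0 \le t \le \tau$. For $|s_0| \le \tau$ with $E>0$: write $[H_a,iA](s) = 2p^2 + |s| + V_{12} - V_{12} - x\cdot\bigtriangledown V_{12}$, bound $E_\bigtriangleup(H_a(s))(2p^2 + |s| + V_{12})E_\bigtriangleup(H_a(s)) \ge (E-\delta_0)E_\bigtriangleup(H_a(s))$ by functional calculus together with $2p^2 \ge 0$, and absorb the potential terms using the small-energy lemma to get $\ge (E-\epsilon)E_\bigtriangleup(H_a(s))$. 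For $|s_0| \le \tau$ with $E<0$ the projections vanish. For $|s_0| \ge \tau$ (either sign of $E$): $[H_a,iA](s) = |s| + [h_a,iA^a] \ge \tau + [h_a,iA^a]$ on the fiber, and the large-$s$ lemma gives $\ge (\tau - \epsilon)E_\bigtriangleup(H_a(s)) \ge (d(E,a) - 2\epsilon)E_\bigtriangleup(H_a(s))$. Finally, as in the proof of Lemma \ref{alpha2}, I would cover the compact set $\{|s|^2 \le M\}$ (for $M \gg E - G_a$) by finitely many such $U_i$, take $\delta$ to be the minimum of the associated $\delta_i$, note the projections vanish for $|s|^2 > E - G_a$, and integrate the fiber inequality over $s$ to obtain the full estimate on $H_a$ with constant $\alpha_{(y)(x0)}$ after renaming $\epsilon$. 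The main obstacle is purely bookkeeping: keeping the linear-in-$|s|$ energy deficit correctly threaded through the two regimes so that the final constant comes out to $d(E,a)-\epsilon$ and not $2d(E,a)-\epsilon$; the analytic content is identical to the photon-proton cluster already treated.
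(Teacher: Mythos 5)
Your proposal is correct and follows essentially the same route as the paper: fiber over $s=k$, prove the two junk-term lemmas (analogues of Lemmas \ref{alpha2lemma1} and \ref{alpha2lemma2}), assemble them into a fiber-by-fiber estimate splitting at $\tau$, and close with the same compactness/covering argument. The paper itself notes the analysis for this cluster is ``much the same as the previous cluster,'' and your identification of the linear-in-$|s|$ contribution as the source of the constant $d(E,a)-\epsilon$ (rather than $2d(E,a)-\epsilon$) matches the paper's bookkeeping.
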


Because $q$ commutes with the operator $[H_a,iA]$, we can define the fibered commutator

$$[H_a,iA](s) = |s| + 2p^2 - x\cdot \bigtriangledown V_{12}(x)$$

and then writing as a direct integral:

\begin{align*}
E_\bigtriangleup (H_a) [H_a,iA] E_\bigtriangleup (H_a) &= E_\bigtriangleup (p^2 + |k| + V_{12}(x)) \left( 2p^2 +|k| - x\cdot \bigtriangledown V_{12}(x)\right)  E_\bigtriangleup (p^2 + |k| + V_{12}(x))\\
&= \int^\oplus_{\mathbb{R}^3_s} E_\bigtriangleup (|s| +p^2 + V_{12}(x)) \left( |s|+  2p^2 - x\cdot \bigtriangledown V_{12}(x)\right)  E_\bigtriangleup (|s|+ p^2 + V_{12}(x)) ds \\
&= \int^\oplus_{\mathbb{R}^3_s} E_\bigtriangleup (H_a(s)) [H_a,iA](s) E_\bigtriangleup (H_a(s)) ds\\
\end{align*}

It will be sufficient to prove a Mourre estimate of each fiber. The positive contribution will come from $|s|$ for large values of $s$, and from the energy $E$ being nonthreshold for small values of $s$. Lemmas \ref{alpha3lemma1} and \ref{alpha3lemma2} show that the remaining terms are small.

\begin{lemma}\label{alpha3lemma1}
Fix $\epsilon>0$, an energy $E\neq 0$ not an eigenvalue of the subsystem Hamiltonian $h_a$, and a value of $s_0\neq 0$, there exists $\delta=\delta(s_0,\epsilon)>0$ and and open set $U\subset \mathbb{R}^3$ containing $s_0$ so that for all $s \in U$, letting $\bigtriangleup = (E-\delta,E+\delta)$:

$$E_\bigtriangleup (H_a(s)) [h_a,iA^a] E_\bigtriangleup (H_a(s)) \ge -\epsilon E_\bigtriangleup (H_a(s))$$

\end{lemma}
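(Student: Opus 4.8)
The statement to prove is Lemma \ref{alpha3lemma1}, which is the direct analogue, for the cluster $a=(y)(x0)$, of Lemma \ref{alpha2lemma1} that was already proved in detail for $a=(x)(y0)$. Here the subsystem Hamiltonian is $h_a = p^2 + V_{12}(x)$ acting on $L^2(\mathbb{R}^3_x)$, the fibered Hamiltonian is $H_a(s) = |s| + p^2 + V_{12}(x)$, and the conjugate operator restricted to the fiber gives $[h_a,iA^a] = 2p^2 - x\cdot\bigtriangledown V_{12}(x)$. The claim is that near a fixed nonzero $s_0$, after projecting onto a small window $\bigtriangleup$ around a nonthreshold energy $E$, the operator $E_\bigtriangleup(H_a(s))[h_a,iA^a]E_\bigtriangleup(H_a(s))$ is bounded below by $-\epsilon E_\bigtriangleup(H_a(s))$, uniformly for $s$ in a neighborhood of $s_0$.

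The plan is to follow the proof of Lemma \ref{alpha2lemma1} essentially verbatim, making the obvious substitutions $|k|\to p^2$, $V_{13}(y)\to V_{12}(x)$, $y\cdot\bigtriangledown V_{13}\to x\cdot\bigtriangledown V_{12}$. First I would establish, at the single fiber $s=s_0$, an estimate of the form
\begin{equation*}
E_{\bigtriangleup_0}(H_a(s_0))[h_a,iA^a]E_{\bigtriangleup_0}(H_a(s_0)) \ge -\tfrac{\epsilon}{4}E_{\bigtriangleup_0}(H_a(s_0)) + K
\end{equation*}
for some compact $K$ and a suitably small width $\delta_0$. This splits into three cases exactly as before. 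If $E - |s_0|$ is a negative eigenvalue of $h_a = p^2 + V_{12}(x)$, shrink $\delta_0$ so that $\bigtriangleup_0 - |s_0|$ meets no continuous spectrum, and the virial theorem for $[h_a,iA^a]$ (available because $h_a$ and $A^a$ satisfy the hypotheses of Lemma \ref{commlemma} and \eqref{2COMM} by the assumptions (\ref{RC1}), (\ref{RB2}) already invoked) kills the term outright. If $E - |s_0|$ is negative and in the resolvent set, shrink $\delta_0$ so the projections vanish. If $E - |s_0| \ge 0$, shrink $\delta_0 < E - |s_0| + \epsilon/4$, write $2p^2 - x\cdot\bigtriangledown V_{12} = (p^2 + V_{12}) + (p^2 - V_{12} - x\cdot\bigtriangledown V_{12})$, note $E_{\bigtriangleup_0 - |s_0|}(h_a)(p^2 + V_{12})E_{\bigtriangleup_0 - |s_0|}(h_a) = (E - |s_0| - \epsilon/4)E_{\bigtriangleup_0 - |s_0|}(h_a) + O(\delta_0)$ by functional calculus, and absorb $E_{\bigtriangleup_0 - |s_0|}(h_a)(p^2 - V_{12} - x\cdot\bigtriangledown V_{12})E_{\bigtriangleup_0-|s_0|}(h_a)$ into $K$; the latter is compact because $V_{12}$ and $x\cdot\bigtriangledown V_{12}$ are relatively $p^2$-compact by (\ref{RC1}), while $p^2 E_{\bigtriangleup_0-|s_0|}(h_a)$ involves bounded energies.

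Second, I would remove the compact operator $K$ by the projection argument already spelled out around \eqref{BStrick2}–\eqref{BStrick3}: since $K$ is compact, choose a finite-rank projection $F$ below $E_{\bigtriangleup_0 pp}(H_a(s_0))$ approximating the $(1 - E_{pp})$-corner of $K$ to within $\epsilon/2$; multiply by $1-F$; use the virial theorem on the eigenprojection part; complete the square via $C^*F + F^*C \ge -(\tfrac{2}{\epsilon}C^*C + \tfrac{\epsilon}{2}F^*F)$; and then shrink to $\bigtriangleup_1$ so the remaining compact corner $K_1$, sandwiched by $(1-E_{pp\bigtriangleup_1})$, is $\ge -\tfrac{\epsilon}{2}E_{\bigtriangleup_1}(H_a(s_0))$. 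Finally, propagate from the single fiber $s_0$ to a neighborhood by setting $\delta = \delta_1/2$ and $U := \{s : |s_0| - \delta < |s| < |s_0| + \delta\}$ — note the only change from the $(x)(y0)$ case is that $s^2$ is replaced by $|s|$ in the spectral shift, since in this cluster the external kinetic term is $|p_a| = |s|$ rather than $(p^a)^2$ — and multiply \eqref{BStrick3} by $E_{\bigtriangleup - |s|}(h_a)$, using $\bigtriangleup - |s| \subset \bigtriangleup_1 - |s_0|$ for $s\in U$.

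The only genuine subtlety, and the step I would watch most carefully, is the continuity in the external variable: one needs $\bigtriangleup - |s| \subset \bigtriangleup_1 - |s_0|$ for all $s\in U$, which requires $U$ to be chosen in terms of the level sets of $s\mapsto|s|$ rather than $s\mapsto s^2$, and requires $s_0 \ne 0$ so that $|s|$ is smooth (indeed Lipschitz, which suffices) near $s_0$. Everything else is a mechanical transcription of the already-completed proof of Lemma \ref{alpha2lemma1}, with the compactness inputs drawn from (\ref{RC1}) and the virial theorem from Lemma \ref{commlemma} together with (\ref{RB2}).
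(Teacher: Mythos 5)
Your overall architecture mirrors the paper's proof exactly (single-fiber estimate with compact remainder, removal of the compact via a finite-rank projection and the virial theorem, then propagation to a neighborhood of $s_0$ with $U$ built from level sets of $s\mapsto|s|$), but there is a genuine error in the single-fiber step that would break the subsequent removal argument.

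You write $[h_a,iA^a]=2p^2-x\cdot\bigtriangledown V_{12}=(p^2+V_{12})+(p^2-V_{12}-x\cdot\bigtriangledown V_{12})$ and then assert that $E_{\bigtriangleup_0-|s_0|}(h_a)\bigl(p^2-V_{12}-x\cdot\bigtriangledown V_{12}\bigr)E_{\bigtriangleup_0-|s_0|}(h_a)$ is compact, justifying the $p^2$ piece by saying it ``involves bounded energies.'' Bounded energies only gives boundedness, not compactness: in the third case $E-|s_0|\ge 0$ the interval $\bigtriangleup_0-|s_0|$ meets $[0,\infty)$, so $E_{\bigtriangleup_0-|s_0|}(h_a)$ has infinite rank, and $E_{\bigtriangleup_0-|s_0|}(h_a)\,p^2\,E_{\bigtriangleup_0-|s_0|}(h_a)$ is bounded but not compact. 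Since the entire removal-of-$K$ argument (approximating the $(1-E_{pp})$-corner of $K$ by a finite-rank $F$, completing the square, shrinking $\bigtriangleup$) depends on $K$ actually being compact, this gap propagates through the rest of your proof.

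The fix is cheap but essential: you do not want to pack the leftover $p^2$ into $K$, you want to \emph{discard} it. Since $p^2\ge 0$, estimate
\[
[h_a,iA^a]=2p^2-x\cdot\bigtriangledown V_{12}\ge p^2-x\cdot\bigtriangledown V_{12}=h_a+\bigl(-V_{12}-x\cdot\bigtriangledown V_{12}\bigr),
\]
and now the remainder $-V_{12}-x\cdot\bigtriangledown V_{12}$ is relatively $h_a$-compact by (\ref{RC1}), so $K$ is genuinely compact. This is in fact what the paper does (its displayed $K$ contains no $p^2$), and it is precisely the point where the blind transcription from Lemma \ref{alpha2lemma1} is not quite mechanical: for $(x)(y0)$ one has $[|k|,iA^a]=|k|$ so the identity $[h_a,iA^a]=h_a+(\text{compact})$ holds exactly, whereas for $(y)(x0)$ one has $[p^2,iA^a]=2p^2$ and the extra $p^2$ must be thrown away as a positive term rather than treated as compact. (You may also notice that the paper's displayed ``$=$'' in the corresponding computation is itself a typo for ``$\ge$'' for exactly this reason.) The remainder of your argument — the $F$-projection trick, the virial theorem on the eigenspace, $C^*F+F^*C\ge-(\tfrac{2}{\epsilon}C^*C+\tfrac{\epsilon}{2}F^*F)$, and the choice $U=\{s:|\,|s|-|s_0|\,|<\delta\}$ with the inclusion $\bigtriangleup-|s|\subset\bigtriangleup_1-|s_0|$ — is correct and matches the paper; the smoothness of $|s|$ near $s_0\ne 0$ is not actually needed, only the interval inclusion, which is pure arithmetic.
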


\begin{proof}

Fix $\epsilon$, $E$, and $s_0$ as above. We may select $\delta_0>0$ so that for $\bigtriangleup_0 = (E-\delta_0, E+\delta_0)$:

\begin{equation}\label{2BStrick1}
E_{\bigtriangleup_0}(H_a(s_0)) [h_a,iA^a] E_{\bigtriangleup_0} (H_a(s_0)) \ge -\frac{\epsilon}{4} E_{\bigtriangleup_0} (H_a(s_0)) + K
\end{equation}

where $K$ is a compact operator. To see why, break up into three cases. If $E-|s_0|$ is a negative eigenvalue of $h_a$, then choose $\delta_0$ so small that $\bigtriangleup_0-|s_0|$ contains no continuous spectrum of $h_a$. Then by the virial theorem for $[h_a,iA^a]$, we have

$$E_{\bigtriangleup_0}(H_a(s_0)) [h_a,iA^a] E_{\bigtriangleup_0} (H_a(s_0))=0$$

Second, if $E-|s_0|$ is negative but in the resolvent set of $h_a$, we can pick $\delta_0$ so small that $\bigtriangleup_0-|s_0|$ contains no spectrum of $h_a$ at all, whereby the desired estimate (\ref{2BStrick1}) follows because the projections $E_{\bigtriangleup_0}(H_a(s_0))$ are zero. Third, if $\bigtriangleup_0-|s_0|$ is nonnegative, we can take $0\le \delta_0 < E-|s_0|+\frac{\epsilon}{4}$. Then we compute

\begin{align*}
E_{\bigtriangleup_0}(H_a(s_0))[h_a,iA^a] E_{\bigtriangleup_0}(H_a(s_0)) &= E_{\bigtriangleup_0 - |s_0|}(p^2+V_{12}(x))\Bigg( p^2 + V_{12}(x) \Bigg) E_{\bigtriangleup_0 - |s_0|}(p^2+V_{12}(x))\\
& \hspace{.5cm} + E_{\bigtriangleup_0 - |s_0|}(p^2+V_{12}(x))\Bigg( - V_{12}(x) - x \cdot \bigtriangledown V_{12}(x) \Bigg) E_{\bigtriangleup_0 - |s_0|}(p^2+V_{12}(x))\\
& \ge (E-|s_0|-\frac{\epsilon}{4}) E_{\bigtriangleup_0 - |s_0|}(p^2+V_{12}(x)) + K\\
&\ge -\frac{\epsilon}{4} E_{\bigtriangleup_0 - |s_0|}(p^2+V_{12}(x)) + K\\
\end{align*}

by using the functional calculus, where $K=E_{\bigtriangleup_0 - |s_0|}(h_a)\Bigg( - V_{12}(x) - x \cdot \bigtriangledown V_{12}(x) \Bigg) E_{\bigtriangleup_0 - |s_0|}(h_a)$ is compact. Thus we can always find $\delta_0$ so that (\ref{2BStrick1}) holds. It remains to remove the compact $K$.\bigskip

Fix such a $\delta_0$ as in (\ref{2BStrick1}). We claim that

\begin{equation}\label{2BStrick2}
E_{\bigtriangleup_0}(H_a(s_0)) [h_a, iA^a] E_{\bigtriangleup_0}(H_a(s_0)) \ge -\frac{\epsilon}{2} E_{\bigtriangleup_0}(H_a(s_0)) + (1-E_{pp \bigtriangleup_0}(H_a(s_0)))K_1 (1-E_{pp \bigtriangleup_0}(H_a(s_0)))
\end{equation}

for some compact $K_1$. Assuming (\ref{2BStrick2}), we could then select $\delta_1$ so small that for $\bigtriangleup_1 = (E-\delta_1,E+\delta_1)$, we could multiply both sies of the above by $E_{\bigtriangleup_1}(H_a(s_0))$ and use

$$(E_{\bigtriangleup_1}(H_a(s_0))-E_{pp \bigtriangleup_1}(H_a(s_0)))K_1 (E_{\bigtriangleup_1}(H_a(s_0))-E_{pp \bigtriangleup_1}(H_a(s_0)))\ge -\frac{\epsilon}{2}E_{\bigtriangleup_1}(H_a(s_0))$$

to conclude that

\begin{equation}\label{2BStrick3}
E_{\bigtriangleup_1}(H_a(s_0)) [h_a(y),iA^a] E_{\bigtriangleup_1} (H_a(s_0)) \ge -\epsilon E_{\bigtriangleup_1}(H_a(s_0))
\end{equation}

Then we would select $\delta= \delta_1/2$ and define $U=: \lbrace s\in \mathbb{R}^3: |s_0|-\delta < |s| < |s_0|+\delta$. This would give us the conclusion of the lemma, letting $\bigtriangleup=(E-\delta,E+\delta)$; for any $s\in U$, we could prove the desired inequality by taking (\ref{2BStrick3}) and multiplying on both sides by $E_{\bigtriangleup-|s|}(h_a)$. So it remains to show (\ref{2BStrick2}). But this follows from the same argument as the proof of (\ref{BStrick2}) from the previous section.
\end{proof}

We will also use the following.

\begin{lemma}\label{alpha3lemma2}
Fix $\epsilon>0$, an energy $E\neq 0$ not an eigenvalue of the subsystem Hamiltonian $h_a$, and a value of $s_0\neq 0$ so that $E-|s_0|$ is not an eigenvalue of $h_a$. Then there exists $\delta>0$ and an open set $U$ containing $s_0$ so that for all $s_0 \in U$, letting $\bigtriangleup = (E-\delta,E+\delta)$:

$$E_\bigtriangleup (H_a(s)) + (-V_{12}(x)-x\cdot \bigtriangledown V_{12}(x))  E_\bigtriangleup ((H_a(s)) \ge -\epsilon E_\bigtriangleup ((H_a(s))$$

\end{lemma}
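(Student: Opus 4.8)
The plan is to mirror the proof of Lemma~\ref{alpha2lemma2}, the only structural difference being that the fiber $H_a(s)=p^2+|s|+V_{12}(x)$ is obtained from the subsystem Hamiltonian $h_a=p^2+V_{12}(x)$ by the shift $|s|$ rather than $s^2$, so that $E_\bigtriangleup(H_a(s))=E_{\bigtriangleup-|s|}(h_a)$ for any interval $\bigtriangleup$.

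First, since $E-|s_0|$ is not an eigenvalue of $h_a$ and (as established earlier via Mourre theory) eigenvalues of $h_a$ can accumulate only at $0$, I would choose $\delta_0>0$ so small that $(E-\delta_0-|s_0|,\,E+\delta_0-|s_0|)$ contains no eigenvalue of $h_a$. I then record that $W:=-V_{12}(x)-x\cdot\bigtriangledown V_{12}(x)$ is relatively $p^2$-compact by (\ref{RC1}); since $D(h_a)=D(p^2)$ with equivalent graph norms (Kato--Rellich), the operator $W(h_a+i)^{-1}$ is compact.

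The main step is to show that for $\bigtriangleup_1=(E-\delta_1,E+\delta_1)$ with $\delta_1\le\delta_0$ one has
$$E_{\bigtriangleup_1}(H_a(s_0))\,W\,E_{\bigtriangleup_1}(H_a(s_0))=E_{\bigtriangleup_1-|s_0|}(h_a)\,W\,E_{\bigtriangleup_1-|s_0|}(h_a),$$
and that the operator norm of the right-hand side tends to $0$ as $\delta_1\downarrow 0$. Writing $P_{\delta_1}:=E_{\bigtriangleup_1-|s_0|}(h_a)$, these are spectral projections of $h_a$ onto nested open intervals shrinking to the point $\lbrace E-|s_0|\rbrace$, which is not an eigenvalue, so $P_{\delta_1}\to 0$ strongly, while $(h_a-i)P_{\delta_1}$ and its adjoint are norm-bounded uniformly in $\delta_1\le\delta_0$. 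Factoring
$$P_{\delta_1}WP_{\delta_1}=\big[(h_a-i)P_{\delta_1}\big]^{*}\,\big[(h_a+i)^{-1}W(h_a-i)^{-1}\big]\,\big[(h_a-i)P_{\delta_1}\big],$$
the middle factor is compact (bounded times $W(h_a+i)^{-1}$) and the outer factors are uniformly bounded and converge strongly to $0$; since a compact operator composed with such a family converges in operator norm, the product does too. Hence I may fix $\delta_1\le\delta_0$ so that this norm is $<\epsilon$, which yields $E_{\bigtriangleup_1}(H_a(s_0))\,W\,E_{\bigtriangleup_1}(H_a(s_0))\ge-\epsilon E_{\bigtriangleup_1}(H_a(s_0))$.

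Finally I would put $\delta=\delta_1/2$ and $U:=\lbrace s\in\mathbb{R}^3:|s_0|-\delta<|s|<|s_0|+\delta\rbrace$, an open neighborhood of $s_0$. A short computation shows $\bigtriangleup-|s|\subseteq\bigtriangleup_1-|s_0|$ whenever $s\in U$ and $\bigtriangleup=(E-\delta,E+\delta)$, so multiplying the $s_0$-estimate on both sides by the projection $E_{\bigtriangleup-|s|}(h_a)=E_\bigtriangleup(H_a(s))$ gives the asserted inequality for all $s\in U$. There is no serious obstacle here since the argument runs parallel to Lemma~\ref{alpha2lemma2}; the only point needing care is that $W$ is generally unbounded, so the norm-convergence step must be routed through compactness of $W(h_a+i)^{-1}$ rather than of $W$ itself.
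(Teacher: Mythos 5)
Your proof is correct and follows essentially the same route as the paper: choose $\delta_0$ using the absence of eigenvalues near $E-|s_0|$, show $E_{\bigtriangleup_1-|s_0|}(h_a)\,W\,E_{\bigtriangleup_1-|s_0|}(h_a)\to 0$ in norm as $\delta_1\downarrow 0$ via relative compactness of $W$ and strong convergence of the shrinking spectral projections, then pass from $s_0$ to a shell $U$ around $s_0$ by the inclusion $\bigtriangleup-|s|\subseteq\bigtriangleup_1-|s_0|$. The paper states this very tersely (merely asserting that such a $\delta_1$ exists), and your write-up supplies exactly the compactness justification that the paper makes explicit only in the analogous step for the $(xy)(0)$ cluster (Lemma~\ref{fixsfinddelta}).
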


\begin{proof}
Fix $\epsilon$, $E$, and $s_0$ as above. Then there is a width $\delta_0>0$ so that $(E-\delta_0 - |s_0|, E+\delta_0 -|s_0|)$ contains no eigenvalues of $h_a$. Then, there exists $\delta_1\le \delta_0$ so that for $\bigtriangleup_1 = (E-\delta_1,E+\delta_1)$:

$$E_{\bigtriangleup_1} (|s|+p^2+V_{12}(x)+ (-V_{12}(x)-x\cdot \bigtriangledown V_{12}(x))  E_{\bigtriangleup_1} (|s|+p^2+V_{12}(x)) \ge -\epsilon E_{\bigtriangleup_1} (p^2+|s|+V_{12}(x))$$

We select $\delta= \delta_1/2$, and let $U= \lbrace s\in \mathbb{R}^3: |s_0|-\delta < |s| < |s_0|+\delta\rbrace$.
\end{proof}

\bigskip

\begin{lemma}\label{alpha3Uform}

Fix $\epsilon>0$, an energy $E\neq 0$ not an eigenvalue of the subsystem Hamiltonian $h_a$, and a value of $s_0 \neq 0$. Then there exists a $\delta >0$ and an open set $U\subset \mathbb{R}^3$ containing $s_0$ so that for all $s \in U$, letting $\bigtriangleup = (E-\delta,E+\delta)$:

\begin{align*}
E_\bigtriangleup (H_a(s))[H_a,iA](s) E_\bigtriangleup (H_a(s)) &\ge \alpha_{(y)(x0)} E_\bigtriangleup (H_a(s))\\
\end{align*}

\end{lemma}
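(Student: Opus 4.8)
The plan is to adapt, almost verbatim, the proof of Lemma~\ref{alpha2Uform}, with the single structural change that the external‑momentum contribution to the fibered commutator is now $|s|$ rather than $2s^2$: in the cluster $(x)(y0)$ the external electron carries kinetic energy $s^2$ and $[s^2+\cdots,iA]$ produces $2s^2$, whereas here the external photon carries kinetic energy $|s|$, which is homogeneous of degree $1$, so $[|s|+\cdots,iA]$ produces exactly $|s|$. This is precisely why the large‑momentum constant will be $\tau$ rather than $2\tau$, hence why $\alpha_{(y)(x0)}$ involves $d(E,a)$ rather than $2d(E,a)$.

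First I would fix $\epsilon>0$, the nonthreshold energy $E\neq 0$, and $s_0\neq 0$, and choose $\delta_0>0$ small enough that $\bigtriangleup_0 := [E-\delta_0,E+\delta_0]$ contains neither $0$ nor any eigenvalue of $h_a$, that $\delta_0<\epsilon/2$, that $\delta_0<d(E,a)$, and, when $E<0$, also that $\delta_0<|E|$. Next I would pick $\tau>0$ so that for every $0\le t\le\tau$ the interval $[E-\delta_0-t,E+\delta_0-t]$ contains neither $0$ nor any eigenvalue of $h_a$; such a $\tau$ exists because the set $\{0\}\cup\sigma_{pp}(h_a)$ is closed (eigenvalues of $h_a$ accumulate only at $0$) and disjoint from $\bigtriangleup_0$, and one may moreover arrange $\tau\ge d(E,a)-\epsilon$, since the interval does not meet $\{0\}\cup\sigma_{pp}(h_a)$ until $t$ reaches $d(E,a)-\delta_0>d(E,a)-\epsilon$. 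The number $\tau$ separates the analysis into small external momentum $|s_0|\le\tau$ and large external momentum $|s_0|\ge\tau$.

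For the large‑momentum regime $|s_0|\ge\tau$ (valid for either sign of $E$): Lemma~\ref{alpha3lemma1} furnishes $\delta<\delta_0$ and an open $U\ni s_0$ such that $E_\bigtriangleup(H_a(s))[h_a,iA^a]E_\bigtriangleup(H_a(s))\ge-\epsilon E_\bigtriangleup(H_a(s))$ for all $s\in U$, where $\bigtriangleup=(E-\delta,E+\delta)$; shrinking $U$ so that $|s|\ge\tau$ throughout $U$ and using $[H_a,iA](s)=|s|+[h_a,iA^a]$, we get
\[
E_\bigtriangleup(H_a(s))[H_a,iA](s)E_\bigtriangleup(H_a(s))\ \ge\ (\tau-\epsilon)E_\bigtriangleup(H_a(s))\ \ge\ (d(E,a)-2\epsilon)E_\bigtriangleup(H_a(s)).
\]
For the small‑momentum regime $|s_0|\le\tau$ with $E<0$: since $E-|s|<E<0$ and, by the choices of $\delta_0$ and $\tau$, the interval $\bigtriangleup-|s|$ lies inside a negative interval disjoint from $\sigma_{pp}(h_a)$, while $\sigma_{ess}(h_a)=[0,\infty)$ and $h_a$ has no singular continuous spectrum (established earlier via Mourre theory), the interval $\bigtriangleup-|s|$ is disjoint from $\sigma(h_a)$ for every $s$ in a sufficiently small neighborhood $U$ of $s_0$; hence $E_\bigtriangleup(H_a(s))=E_{\bigtriangleup-|s|}(h_a)=0$ there and the asserted inequality holds trivially with any constant. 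For the small‑momentum regime $|s_0|\le\tau$ with $E>0$: by the choice of $\tau$, $E-|s_0|$ is not an eigenvalue of $h_a$, so Lemma~\ref{alpha3lemma2} gives $\delta<\delta_0$ and an open $U\ni s_0$ with $E_\bigtriangleup(H_a(s))\big(-V_{12}(x)-x\cdot\bigtriangledown V_{12}(x)\big)E_\bigtriangleup(H_a(s))\ge-\tfrac{\epsilon}{2}E_\bigtriangleup(H_a(s))$; writing $[H_a,iA](s)=H_a(s)+p^2+\big(-V_{12}(x)-x\cdot\bigtriangledown V_{12}(x)\big)$ and bounding $H_a(s)\ge E-\delta_0$ on the range of $E_\bigtriangleup(H_a(s))$, $p^2\ge0$, and the last term by $-\epsilon/2$, we obtain
\[
E_\bigtriangleup(H_a(s))[H_a,iA](s)E_\bigtriangleup(H_a(s))\ \ge\ \big(E-\delta_0-\tfrac{\epsilon}{2}\big)E_\bigtriangleup(H_a(s))\ \ge\ (E-\epsilon)E_\bigtriangleup(H_a(s)).
\]

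Finally, combining the two regimes, for any $s_0\neq0$ one obtains $\delta>0$ and $U\ni s_0$ on which the fibered commutator is bounded below by $(d(E,a)-2\epsilon)E_\bigtriangleup(H_a(s))$ when $E<0$, and by $\min(E-\epsilon,\,d(E,a)-2\epsilon)E_\bigtriangleup(H_a(s))$ when $E>0$; a renaming of $\epsilon$ converts these into $\alpha_{(y)(x0)}E_\bigtriangleup(H_a(s))$, which is the assertion of the lemma. I do not expect a serious obstacle here beyond the bookkeeping of constants; the only point requiring genuine care — and the sole substantive difference from Lemma~\ref{alpha2Uform} — is that the external photon momentum enters the commutator linearly (as $|s|$), so one must split the magnitude $|s_0|$ at $\tau$ rather than the squared magnitude, and the large‑momentum positive contribution is $\tau\ge d(E,a)-\epsilon$ rather than $2\tau$. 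All the analytic work absorbing the junk terms $[h_a,iA^a]$ and $-V_{12}(x)-x\cdot\bigtriangledown V_{12}(x)$ has already been done in Lemmas~\ref{alpha3lemma1} and~\ref{alpha3lemma2}.
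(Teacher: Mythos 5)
Your proposal is correct and follows essentially the same route as the paper: the paper's proof of Lemma~\ref{alpha3Uform} is precisely the argument of Lemma~\ref{alpha2Uform} transposed to this cluster, splitting at $|s_0|=\tau$ and using Lemma~\ref{alpha3lemma1} in the large-momentum regime and Lemma~\ref{alpha3lemma2} (after rewriting $[H_a,iA](s)=H_a(s)+p^2+(-V_{12}-x\cdot\bigtriangledown V_{12})$) in the small-momentum $E>0$ regime, with the positive external contribution now being $|s|$ rather than $2s^2$. Your remark that $U$ must be shrunk so $|s|\ge\tau$ throughout is a small detail the paper glosses over, but it does not change the argument.
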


Fix $\epsilon>0$, $E$, and $s_0$ as above. Select $\delta_0$ so that $\bigtriangleup_0 = [E-\delta_0, E+\delta_0]$ does not contain $0$ or any thresholds, and also so $\delta_0\epsilon/2$. Then select $\tau>0$ so that $[E-\delta_0-t, E+\delta_0-t]$ does not contain $0$ or any thresholds for $0\le t\le \tau$. Furthermore $\tau$ can be selected to be $\ge d(E,a)-\epsilon$.\bigskip

We handle the cases of $E<0$ and $E>0$ separately.\bigskip

\begin{proof}[Proof for $E<0$]
Now consider $E<0$.\bigskip

Consider $|s_0|\le \tau$. The projection $E_{\bigtriangleup_0}(H_a(s_0))$ is evidently $0$. We may select $\delta=\delta_0$ and $U := \lbrace s\in \mathbb{R}^3: |s_0|-\delta< |s| < |s_0|-\delta\rbrace$. Then for all $s\in U$, letting $\bigtriangleup=(E-\delta, E_+\delta)$:

$$E_\bigtriangleup (H_a(s)) [H_a,iA](s) E_\bigtriangleup (H_a(s))\ge (\tau-\epsilon)  E_\bigtriangleup (H_a(s))$$

since the projections $E_\bigtriangleup (H_a(s))$ are zero for all such $s$ so we may have any constant we wish (in place of $(\tau-\epsilon)$).\bigskip

Now consider $|s_0|\ge \tau$. By Lemma \ref{alpha3lemma1}, we may select $\delta<\delta_0$ and $U$ containing $s_0$ so that for all $s \in U$, letting $\bigtriangleup=(E-\delta,E+\delta)$:

$$E_\bigtriangleup (H_a(s)) [h_a,iA^a] E_\bigtriangleup (H_a(s)) \ge -\epsilon E_\bigtriangleup (H_a(s))$$

Then, we have

\begin{align*}
E_\bigtriangleup (H_a(s)) [H_a,iA](s) E_\bigtriangleup (H_a(s)) &= E_\bigtriangleup (H_a(s)) \Bigg( |s| + [h_a,iA^a] \Bigg) E_\bigtriangleup (H_a(s))\\
&\ge  \tau E_\bigtriangleup (H_a(s)) - \epsilon E_\bigtriangleup (H_a(s))\\
&= (\tau-\epsilon) E_\bigtriangleup (H_a(s))\\
\end{align*}

for all $s\in U$.\bigskip

So, for any fiber $s_0$: there exists a $U$ containing $s_0$ so that for all $s \in U$, letting $\bigtriangleup = (E-\delta, E+\delta)$:

\begin{align*}
E_\bigtriangleup (H_a(s)) [H_a,iA](s) E_\bigtriangleup (H_a(s))
&\ge (\tau-\epsilon) E_\bigtriangleup (H_a(s))\\
&\ge (d(E,a)-2\epsilon) E_\bigtriangleup(H_a(s))\\
\end{align*}

where the last line holds because $\tau \ge d(E,a)-\epsilon$. By a renaming of $\epsilon$ we have our conclusion.

\end{proof}

\begin{proof}[Proof for $E>0$]

Consider $s_0\le \tau$. By Lemma \ref{alpha3lemma2}, we may select $\delta<\delta_0$ and $U$ containing $s_0$ so that for all $s\in U$, letting $\bigtriangleup = (E-\delta, E+\delta)$:

$$E_\bigtriangleup (H_a(s)) (-V_{12}(x)-x\cdot \bigtriangledown V_{12}(x))  E_\bigtriangleup (H_a(s)) \ge -\frac{\epsilon}{2} E_\bigtriangleup (H_a(s))$$

Then, we have

\begin{align*}
E_\bigtriangleup (H_a(s)) [H_a,iA](s) E_\bigtriangleup (H_a(s)) &=E_\bigtriangleup (H_a(s)) \Bigg( 2p^2 + |s| + V_{12}(x) - V_{12}(x) -x\cdot \bigtriangledown V_{12}(x) \Bigg) E_\bigtriangleup (H_a(s))\\
& \ge (E-\delta_0) E_\bigtriangleup (H_a(s)) + E_\bigtriangleup (H_a(s)) \Bigg( - V_{12}(x) - x\cdot \bigtriangledown V_{12}(x) \Bigg) E_\bigtriangleup (H_a(s))\\
&\ge (E-\delta_0 - \frac{\epsilon}{2}) E_\bigtriangleup (H_a(s))\\
&\ge (E-\epsilon)  E_\bigtriangleup (H_a(s))
\end{align*}

for all $s\in U$.

The case  $|s_0|\ge \tau$ for $E>0$ is handled in the same way as the for $E<0$; the same estimate with constant $(\tau-\epsilon)$ holds.

So, for any fiber $s_0$: there exists a $\delta>0$ and a $U$ containing $s_0$ so that for all $s \in U$, letting $\bigtriangleup=(E-\delta, E+\delta)$:

\begin{align*}
E_\bigtriangleup (H_a(s)) [H_a,iA](s) E_\bigtriangleup (H_a(s)) &\ge \min (\tau-\epsilon,E-\epsilon)E_\bigtriangleup (H_a(s))\\
&\ge \min (d(E,a)-2\epsilon,E-\epsilon)E_\bigtriangleup (H_a(s))\\
\end{align*}

By a renaming of $\epsilon$ we have our conclusion.
\end{proof}

Using this, the proof of Lemma \ref{alpha3} is the same as the proof of Lemma \ref{alpha2}.

\subsection{The cluster $(xy)(0)$}\label{(xy)(0)}

In what follows, fix $a = (xy)(0)$, the cluster decomposition corresponding to the electron-photon cluster. We aim to prove the following.

\begin{lemma}\label{alpha4}

Fix $\epsilon>0$ and an energy $E\neq 0$ not an eigenvalue of the subsystem Hamiltonian $h_a = \frac{1}{4}(p^a)^2+\frac{1}{2}|p^a|+V_{23}(x^a)$. Then there exists a $\delta> 0$ so that $H_a$ satisfies a Mourre estimate at $E$ with conjugate operator $A$, width $\delta$, and constant $\alpha_{(xy)(0)}$, where
$$\alpha_{(xy)(0)} = \begin{cases}
2d(E,a)-\epsilon: \text{ } E<0\\
\min(2d(E,a)-\epsilon,E-\epsilon): \text{ } E>0
\end{cases}$$
\end{lemma}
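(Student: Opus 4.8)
The plan is to run the fibration scheme of Lemmas \ref{alpha2}--\ref{alpha3}. Since $p_a$ commutes with $[H_a,iA]$, write $H_a=\int^\oplus_{\mathbb{R}^3_s}H_a(s)\,ds$ with $H_a(s)=\frac{1}{4}(p^a+s)^2+\frac{1}{2}|p^a-s|+V_{23}(x^a)$, and define the fibered commutator
$$[H_a,iA](s):=\frac{1}{2}(p^a+s)^2+\frac{1}{2}|p^a-s|-x^a\cdot\bigtriangledown V_{23}(x^a),$$
self-adjoint on $L^2(\mathbb{R}^3_{x^a})$ by Kato--Rellich, which satisfies the two identities
$$[H_a,iA](s)=[H_a(s),iA^a]+s\cdot\bigtriangledown_s H_a(s)=H_a(s)+\tfrac{1}{4}(p^a+s)^2-V_{23}-x^a\cdot\bigtriangledown V_{23}.$$
Writing $E_\bigtriangleup(H_a)[H_a,iA]E_\bigtriangleup(H_a)=\int^\oplus E_\bigtriangleup(H_a(s))[H_a,iA](s)E_\bigtriangleup(H_a(s))\,ds$, it will suffice to produce, for every $s_0$, a width $\delta>0$ and a neighbourhood $U\ni s_0$ so that $E_\bigtriangleup(H_a(s))[H_a,iA](s)E_\bigtriangleup(H_a(s))\ge\alpha_{(xy)(0)}E_\bigtriangleup(H_a(s))$ for all $s\in U$, $\bigtriangleup=(E-\delta,E+\delta)$, and then to patch by compactness exactly as in the proof of Lemma \ref{alpha2}.

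The first new feature is that, unlike for the earlier clusters, $H_a(s)$ does \emph{not} split into an external kinetic term plus an internal Hamiltonian; instead the fibers that matter form a bounded set. Indeed $\inf_{p^a}\bigl(\frac{1}{4}(p^a+s)^2+\frac{1}{2}|p^a-s|\bigr)\to\infty$ as $|s|\to\infty$ (the quadratic term forces $p^a\approx-s$, making the linear term $\approx|s|$), while $V_{23}$ is $(p^a+s)^2$-form bounded with bound $<1$ \emph{uniformly in $s$}: conjugation by the unitary $e^{is\cdot x^a}$ intertwines $(p^a+s)^2$ with $(p^a)^2$ and commutes with $V_{23}$, so $V_{23}((p^a+s)^2+1)^{-1}$ is unitarily equivalent to the $s$-independent compact operator $V_{23}((p^a)^2+1)^{-1}$. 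Hence $\inf\sigma(H_a(s))\to\infty$, so $E_\bigtriangleup(H_a(s))=0$ for $|s|>M$, and only $|s|\le M$ remains.

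On that bounded set I would first establish the fiber analogues of Lemmas \ref{alpha2lemma1}--\ref{alpha2lemma2}: near any $s_0\ne0$ one has, modulo the spectral projection, $[H_a(s),iA^a]\ge-\epsilon$ and $-V_{23}-x^a\cdot\bigtriangledown V_{23}\ge-\epsilon$. This follows the earlier proofs verbatim, splitting on whether $E$ is below, at, or above the essential spectrum $[m(s_0),\infty)$ of $H_a(s_0)$, using the virial theorem for $H_a(s)$ and $A^a$ (legitimate by Lemma \ref{commlemma} and by (\ref{SPEC}), which excludes embedded eigenvalues), the functional calculus, and the finite-rank removal of the compact error as in \cite{froeseherbst}. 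The per-fiber Mourre estimate then runs as in Lemma \ref{alpha2Uform}. For $E>0$, the second identity above together with the functional calculus gives $\ge(E-\epsilon)E_\bigtriangleup(H_a(s))$ on every fiber, furnishing the $E-\epsilon$ half of the minimum; for the $2d(E,a)-\epsilon$ half, pick $\tau\ge d(E,a)-\epsilon$ with $[E-\delta_0-t,E+\delta_0-t]$ threshold-free for $0\le t\le\tau$ and split each fiber according to whether the quadratic part $\frac{1}{4}(p^a+s)^2$ of the fiber kinetic energy is $\ge\tau$ — where it enters the commutator doubled, contributing $\ge2\tau$, exactly as the external $2s^2$ did for $(x)(y0)$ — or $<\tau$, where $H_a(s)\in\bigtriangleup$ pins the remaining energy into a threshold-free window and the junk is $\ge-\epsilon$. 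For $E<0$, $E$ lies below $\sigma_{ess}(H_a(s))$ for every $s$, so $E_\bigtriangleup(H_a(s))$ is a finite sum of eigenprojections of $H_a(s)$; near $s=0$ the hypothesis that $E$ is not an eigenvalue of $h_a=H_a(0)$, together with continuity of the isolated eigenvalue branches (\cite{kato}) and the uniform bound $\|H_a(s)-h_a\|\le|s|$ coming from the $e^{is\cdot x^a}$-conjugation, forces the projection to vanish, while for $|s|$ bounded away from $0$ the virial theorem kills $[H_a(s),iA^a]$ on these eigenspaces, leaving the group-velocity term $s\cdot\bigtriangledown_s H_a(s)$, whose expectation along an eigenvalue branch $\mu(\cdot)$ with $\mu(s)\approx E$ equals $s\cdot\bigtriangledown_s\mu(s)$; since such a branch is quadratic near its minimum, which is an eigenvalue of $h_a$ (a threshold below $E$), one obtains $s\cdot\bigtriangledown_s\mu(s)\ge2\bigl(E-\min\mu\bigr)\ge2d(E,a)$ there, up to $\epsilon$.

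Finally one covers $\{|s|\le M\}$ by these neighbourhoods, extracts a finite subcover, takes the smallest $\delta$, and gets the fiberwise inequality for all $s$ (trivially where the projections vanish), hence the operator inequality on the direct integral. The step I expect to be the main obstacle is the $E<0$ case: there is no external kinetic energy available to supply the factor $2d(E,a)$, so it must be drawn from the rate at which the momentum-dependent bound-state energies $\mu(s)$ climb away from their threshold minimum transverse to the energy surface, and making the bound $s\cdot\bigtriangledown_s\mu(s)\ge2d(E,a)-\epsilon$ hold uniformly on the relevant part of each fiber — controlling the dependence of the bound cluster's eigenfunctions and eigenvalues on the external momentum — is the delicate technical point, where the uniform form-boundedness of $V_{23}$ and the perturbation estimate $\|H_a(s)-h_a\|\le|s|$ do the real work.
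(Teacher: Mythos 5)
Your fibration scheme is the right framework, and both of your commutator identities check out (and are cleaner than what's written in the paper, which appears to have a typo conflating $[H_a,iA](s)$ with $[H_a(s),iA^a]$): indeed
$[H_a,iA](s)=\frac{1}{2}(p^a+s)^2+\frac{1}{2}|p^a-s|-x^a\cdot\nabla V_{23}$, and your decomposition $[H_a,iA](s)=[H_a(s),iA^a]+s\cdot\nabla_s H_a(s)$ is exactly the identity the paper exploits. You also correctly note that $\inf\sigma(H_a(s))\to\infty$, so only a bounded set of fibers matters.

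However there is a genuine gap, and you flagged it yourself without resolving it: the $E<0$ constant $2d(E,a)$ is not obtained by soft perturbation theory. The paper proves a structural lemma (Lemma~\ref{alpha4spec}) that the eigenvalues of $H_a(s)$ are \emph{exactly} $\lambda_0+s^2$ with $\lambda_0$ an eigenvalue of $h_a$, via Kato's theory of self-adjoint holomorphic families applied to $B(tv)=\frac{1}{4}(p^a+2tv)^2+\frac{1}{2}|p^a|+V_{23}$ and the Feynman--Hellman theorem, which (using that the eigenfunctions can be taken real) gives $\frac{d}{dt}\lambda(t)=2t$, hence $\lambda(t)=\lambda_0+t^2$. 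That global quadratic law is what makes $\langle\psi_s,s\cdot\nabla_sH_a(s)\psi_s\rangle=2s^2=2(\mu(s)-\lambda_0)$ an \emph{equality} on every eigenbranch and turns $\tau^2\ge d(E,a)-\epsilon$ directly into the Mourre constant. Your substitute — "quadratic near its minimum" plus the perturbation estimate $\|H_a(s)-h_a\|\le|s|$ — cannot close this: the operator $H_a(s)-h_a=\frac{1}{2}s\cdot p^a+\frac{1}{4}s^2+\frac{1}{2}(|p^a-s|-|p^a|)$ contains the unbounded $s\cdot p^a$, so the norm bound is false, and a local quadratic approximation does not a priori give $s\cdot\nabla_s\mu(s)\ge 2(\mu(s)-\min\mu)$ on the whole branch (you would need the coefficient to be exactly one and all higher Taylor terms absent, which is precisely what the holomorphic-family computation establishes).

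For $E>0$ your sketch is also too optimistic. Since $\lambda_0+s^2$ can exceed $E$, the interval $\bigtriangleup$ around a positive $E$ can contain bound-state energies of $H_a(s)$ on the large-$|s|$ fibers, so ``the second identity plus the functional calculus'' does not by itself give $\ge(E-\epsilon)E_\bigtriangleup(H_a(s))$: the compact error $E_\bigtriangleup K E_\bigtriangleup$ does not vanish in norm as $\bigtriangleup$ shrinks once $\bigtriangleup$ contains eigenvalues. The paper handles this in Lemma~\ref{fixsfinddelta} with a genuine finite-rank removal at each fiber (using the virial theorem on the point part, as in the Froese--Herbst trick) followed by a norm-continuity-in-$s$ covering argument based on $H_a(s)$ being a holomorphic family. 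Your proposed alternative of splitting on whether the operator $\frac{1}{4}(p^a+s)^2$ exceeds $\tau$ is not a split over commuting spectral parameters and is not developed into a coherent estimate. So: the architecture and the commutator algebra are right, but the missing key lemma is the exact quadratic eigenvalue law from perturbation theory of holomorphic families, and the $E>0$ fiber analysis needs the finite-rank-plus-continuity machinery rather than a bare functional-calculus appeal.
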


As done for the other 2-cluster decompositions, we can write the commutator $[H_a,iA]$ as a direct integral over the fibered commutators

$$[H,iA](s) = \frac{1}{2}(p^a)^2+\frac{1}{2}p^a\cdot s + \frac{1}{2} \frac{(p^a)^2-s\cdot p^a}{|p^a-s|} - x^a \cdot V_{23}(x^a)$$

We note that $s$ does not separate out from the fibered commutators, which means that the positive commutator estimate in the large $s$ case will require a different strategy than for the other decompositions (The small $s$ case will again exploit the choice of $E$ away from thresholds). This different strategy requires us to know a little more about the spectrum of $H_a(s)$.

\begin{lemma}[The spectrum of $H_{(xy)(0)}(s)$]\label{alpha4spec}
The continuous spectrum of $H_a(s)$ is $[\min_{p^a}(\frac{1}{4}(p^a+s)^2+\frac{1}{2}|p^a-s|), \infty)$, and the eigenvalues of $H_a(s)$ are of the form $\lambda_0+s^2$, where $\lambda_0$ is an eigenvalue of $h_a$.
\end{lemma}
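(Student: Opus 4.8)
\emph{Continuous spectrum.} I would write $H_a(s)=H^0_a(s)+V_{23}(x^a)$, where $H^0_a(s):=\frac14(p^a+s)^2+\frac12|p^a-s|$ is the Fourier multiplier with symbol $m_s(\xi):=\frac14(\xi+s)^2+\frac12|\xi-s|$. Since $m_s$ is continuous, tends to $+\infty$ at infinity, and therefore attains its infimum, $\sigma(H^0_a(s))$ is the essential range of $m_s$, namely $[\min_{p^a}m_s(p^a),\infty)$, with no atoms. By (\ref{RC1}), $V_{23}(x^a)$ is relatively $(p^a+s)^2$-compact, and since $(p^a+s)^2\le 4H^0_a(s)$ as quadratic forms it is relatively $H^0_a(s)$-compact; hence Weyl's theorem gives $\sigma_{\mathrm{ess}}(H_a(s))=[\min_{p^a}m_s(p^a),\infty)$. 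By (\ref{SPEC}) there are no embedded eigenvalues, so this interval is precisely the continuous spectrum, while any remaining point of $\sigma(H_a(s))$ is an isolated eigenvalue of finite multiplicity lying below $\min_{p^a}m_s(p^a)$.

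\emph{Eigenvalues.} Here the plan is to exhibit a unitary equivalence between $H_a(s)$ and a constant shift of the subsystem Hamiltonian $h_a=h^0_a+V_{23}(x^a)$, where $h^0_a:=\frac14(p^a)^2+\frac12|p^a|$. Because the perturbation $V_{23}(x^a)$ must be conjugated to itself, the implementing unitary must be a multiplier $U_s=e^{i\phi_s(x^a)}$; taking $\phi_s$ affine makes $U_s$ realize a ``boost'' $p^a\mapsto p^a+s$ of the internal momentum. The key step is then the operator identity
$$U_s^{*}\,H_a(s)\,U_s \;=\; h_a+s^2 ,$$
obtained by expanding $\frac14(p^a+s)^2=\frac14(p^a)^2+\frac12 s\cdot p^a+\frac14 s^2$ and checking that, after the boost, the $p^a$-dependent remainders coming from the quadratic term and from the $|\,\cdot\,|$ term cancel, leaving only the scalar $s^2$. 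Granting the identity, $\sigma_{pp}(H_a(s))=\sigma_{pp}(h_a+s^2)=\{\lambda_0+s^2:\lambda_0\in\sigma_{pp}(h_a)\}$, which is the assertion; together with the first part it shows that the eigenvalues of $H_a(s)$ below the continuous spectrum are exactly $\lambda_0+s^2$ with $\lambda_0$ a negative eigenvalue of $h_a$.

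\emph{Main obstacle.} The delicate point is the cancellation in the displayed identity. Unlike the standard two-body problem with a purely quadratic dispersion, the external momentum $s$ here enters the quadratic term and the term $|p^a-s|$ with opposite signs and does not separate out, so one cannot merely complete the square: the cancellation has to be verified directly for the correct boost --- if necessary composed with the parity map $x^a\mapsto -x^a$ --- or, equivalently, one must show that the Birman--Schwinger operators $|V_{23}|^{1/2}(H^0_a(s)-z)^{-1}|V_{23}|^{1/2}$ and $|V_{23}|^{1/2}(h^0_a-z+s^2)^{-1}|V_{23}|^{1/2}$ have the same spectrum at the relevant spectral parameters $z$. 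Everything else --- Weyl's theorem, the functional calculus, and the use of (\ref{SPEC}) --- is routine.
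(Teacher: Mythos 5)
Your treatment of the continuous spectrum is fine and follows the same route as the paper: Weyl's theorem, relative compactness of $V_{23}$ (the bound $(p^a+s)^2\le 4H^0_a(s)$ handles the passage from $(p^a+s)^2$-compactness to $H^0_a(s)$-compactness), and (\ref{SPEC}) to exclude embedded eigenvalues.

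The eigenvalue part, however, has a genuine gap. The claimed identity $U_s^{*}H_a(s)U_s=h_a+s^2$ is \emph{false}. Since $U_s$ must preserve the multiplication operator $V_{23}(x^a)$, it is forced to be a function of $x^a$, hence (to be unitary) a phase $e^{i\phi_s(x^a)}$, and an affine $\phi_s$ gives exactly a boost $p^a\mapsto p^a+b$. With $b=s$, the modulus term becomes $\frac12|p^a|$, but the quadratic term becomes $\frac14(p^a+2s)^2=\frac14(p^a)^2+s\cdot p^a+s^2$, so
$$U_s^{*}H_a(s)U_s=h_a+s\cdot p^a+s^2 ,$$
leaving a nonremovable cross term $s\cdot p^a$. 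Any other boost $b$ makes the modulus term $\frac12|p^a+b-s|$, which is not $\frac12|p^a|$ plus a constant, so the cross term cannot be traded away. Composing with parity maps $p^a\mapsto -p^a$ but also $V_{23}(x^a)\mapsto V_{23}(-x^a)$, which is not assumed equal to $V_{23}(x^a)$; and the Birman--Schwinger alternative you sketch just repackages the same (false) operator identity. In short: $H_a(s)$ and $h_a+s^2$ are \emph{not} unitarily equivalent here; the massless term breaks the completion-of-the-square that one has in the pure two-body Schr\"odinger case.

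What the paper does instead is establish the eigenvalue relation at the level of the \emph{spectrum} rather than the \emph{operator}. After the boost one is looking at $B(tv)=\frac14(p^a+2tv)^2+\frac12|p^a|+V_{23}(x^a)$, which is a self-adjoint holomorphic family in $t$. Isolated eigenvalues $\lambda(t)$ and eigenvectors $\psi_t$ then vary analytically, and Feynman--Hellmann gives $\lambda'(t)=\langle\psi_t,(2t+p^a\cdot v)\psi_t\rangle$. The crucial step is that $\langle\psi_t,p^a\psi_t\rangle=0$ on the (suitably normalized, real) eigenvector, which makes the troublesome cross term contribute nothing to the eigenvalue even though it is present in the operator; integrating $\lambda'(t)=2t$ then yields $\lambda(t)=\lambda_0+t^2$, and analytic continuation extends this as long as $\lambda(t)$ stays below the continuous spectrum. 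That vanishing of the expectation value is exactly the ``cancellation'' you were hoping to realize at the operator level — it only happens once you test against eigenvectors, which is why the perturbation-theoretic argument is needed and a direct conjugation cannot work.
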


\begin{proof}
The essential spectrum of $H_a(s)$ is contained in $[\min_{p^a}(\frac{1}{4}(p^a+s)^2+\frac{1}{2}|p^a-s|), \infty)$ by Weyl's theorem. This minimum can be computed as

$$\min_{p^a}(\frac{1}{4}(p^a+s)^2+\frac{1}{2}|p^a-s|) = \begin{cases} s^2 : \hspace{.5cm}|s| \le \frac{1}{2}\\ |s|-\frac{1}{4} :\hspace{.5cm} |s| > \frac{1}{2}\\
\end{cases}$$

By the assumption (\ref{SPEC}), the essential spectrum is exactly the absolutely continuous spectrum; there are no eigenvalues of $H_a(s)$ in the continuous spectrum region $[\min_{p^a}(\frac{1}{4}(p^a+s)^2+\frac{1}{2}|p^a-s|), \infty)$. We concern ourselves now with the eigenvalues. \bigskip

The operator $H_a(s)$ is unitarily equivalent to the operator $B(s):= \frac{1}{4}(p^a+2s)^2+\frac{1}{2}|p^a|+V_{23}(x^a)$. The unitary equivalence is given by $U(s):=e^{-isp^a}$, so that $U(-s)H_a(s)U(s)= B(s)$. Thus $H_a(s)$ and $B(s)$ have the same spectrum for any choice of $s$. \bigskip

Pick a unit vector $v \in \mathbb{R}^3$. We consider the family of operators $B(tv)$ for $t\in \mathbb{R}$. These form a self-adjoint holomorphic family of operators in the sense of Kato. \bigskip

Specifically, fix a domain $D_0 \in \mathbb{C}$ symmetric with respect to the real axis. We have that $B(tv)$ is a closed, densely defined operator for all $t\in D_0$. As a function of $t$, $B(tv)$ is holomorphic for $t\in D_0$, and $B(\bar{t}v)=B(tv)^*$. We know it is holomorphic because we can compute its derivatives (either in the weak sense or strong sense):

$$\frac{dB(tv)}{dt}=2t+p^a\cdot v$$

Because this is a holomorphic family of operators, its isolated eigenvalues and their eigenvectors can be thought of as varying holomorphically in a certain sense. Fix an isolated eigenvalue $\lambda_0$ of $B(0)$. We know from the Mourre estimate that all eigenvalues below $0$ are simple and do not accumulate, so we can draw a curve $\Gamma$ around $\lambda_0$ that is entirely contained in the resolvent set and encloses no other points of the spectrum of $B(0)$. It is then known that for small $t$, all eigenvalues of $B(tv)$ inside $\Gamma$ can be described by a function $\lambda(t)$ that is analytic in a region about $t=0$, satisfies $\lambda(0)=\lambda_0$, and gives a real eigenvalue of $H_a(tv)$ for each real $t$ in its domain- also there are no other eigenvalues of $B(tv)$ for any small enough $t$ in the interior of $\Gamma$. Even better, there exists at least one analytic family of real-valued eigenvectors $\psi_t(p^a)$ for $\lambda(t)$, as long as this family $\lambda(t)$ continues to exist.\bigskip

We want to compute $\lambda(t)$. What follows is an application of the Feynman-Hellman theorem.

\begin{align*}
\frac{d\lambda(t)}{dt} &= \frac{d}{dt}\langle \psi_t, B(tv) \psi_t\rangle \\
&= \langle \psi_{t}, \frac{dB(tv)}{dt} \psi_{t}\rangle + \langle \frac{d\psi_t}{dt}, B(tv) \psi_t \rangle + \langle \psi_t, B(tv)  \frac{d\psi_t}{dt} \rangle \\
&= \langle \psi_{t}, \frac{dB(tv)}{dt} \psi_{t}\rangle + \lambda(t) \frac{d}{dt}\langle \psi_t, \psi_t\rangle\\
&= \langle \psi_{t}, \frac{dB(tv)}{dt} \psi_{t}\rangle \\
&= \langle \psi_{t}, (2t+p^a \cdot v) \psi_{t}\rangle \\
&= 8t\\
\end{align*}

where the last step is justified by integration by parts and the fact that $\psi_t$ could be taken to be real-valued. Integrating, we obtain

$$\lambda(t)= t^2+ \lambda_0$$

While the function $\lambda(t)$ may not exist for all $t$, this process can be analytically continued as long as $\lambda(t)$ remains below the continuous spectrum of $B(tv)$. Here is why: suppose that $\lambda(t_0)$ is below the continuous spectrum of $B(tv)$, and we can define an operator $B((t-t_0)v)$ and compute its derivatives in the same way. So the spectrum of $H_a(tv)$ below the continuous spectrum consists only of eigenvalues of the form $t^2 + \lambda_0$ for isolated eigenvalues $\lambda_0$ of $H_a(0)$.\bigskip
\end{proof}

At this point we have everything we need to handle the $E<0$ case.

\begin{proof}[Proof of Lemma \ref{alpha4} for $E<0$]\bigskip

Fix $\epsilon$ and $E<0$ as in the lemma. We may select $\delta$ so small that $(E-\delta, E+\delta)$ contains no eigenvalues of $h_a$, and also so that $\delta\le \frac{\epsilon}{2}$. Fix $\tau$ so small that $\bigtriangleup = (E-\delta, E+\delta)$ contains no eigenvalues of $H_a(s)$ for $|s| \le \tau$. From what we know of the spectrum of $H_a(s)$, $\tau$ may be selected so that $\tau^2 \ge d(E,a)-\epsilon$. \bigskip

Consider the small-momentum case, $|s|\le \tau$. We have $E_\bigtriangleup(H_a(s)) = 0$ for all such $s$,

$$E_\bigtriangleup(H_a(s)) [H_a,iA](s) E_\bigtriangleup(H_a(s)) \ge (2\tau^2) E_\bigtriangleup(H_a(s))$$

Since the projections were $0$, we could have put any constant where $2\tau^2$ is.\bigskip

Now consider the large momentum case, $|s|\ge \tau$. We know that for all such $s$, $E_\bigtriangleup(H_a(s))$ is a (possibly zero) projection onto a finite-dimensional subspace of the pure point spectrum of $H_a(s)$. Because of this, we can make use of the virial theorem (specifically, that $[H_a(s), iA^a]=0$ on eigenvectors of $H_a(s)$).

\begin{align*}
E_\bigtriangleup (H_a(s))[H_a,iA](s) E_\bigtriangleup(H_a(s)) &= E_\bigtriangleup (H_a(s))[H_a,iA](s) E_\bigtriangleup(H_a(s))\\
&= E_\bigtriangleup (H_a(s))\left( [H_a,iA](s) - [H_a(s),iA^a]\right) E_\bigtriangleup (H_a(s))\\
&= E_\bigtriangleup (H_a(s))\left( \frac{1}{2} s^2 + \frac{1}{2} s \cdot p^a + \frac{1}{2} \frac{s^2 - s \cdot p^a}{|s-p^a|} \right) E_\bigtriangleup (H_a(s)) \\
&= E_\bigtriangleup (H_a(s))\left( s\cdot\bigtriangledown_s H_a(s)\right)E_\bigtriangleup (H_a(s))\\
&= 2s^2 E_\bigtriangleup (H_a(s))\\
&\ge 2\tau^2 E_\bigtriangleup (H_a(s))\\
\end{align*}

where second-to-last last equality comes from the Feynman-Hellman theorem. Finally, we have\bigskip

\begin{align*}
E_\bigtriangleup (H_a(s)) [H_a,iA](s) E_\bigtriangleup (H_a(s)) &\ge 2\tau^2 E_\bigtriangleup (H_a(s))\\
& \ge 2(d(E,a)-\epsilon) E_\bigtriangleup (H_a(s))\\
\end{align*}

Then the following is immediate:

\begin{align*}
E_\Delta(H_a) [H_a,iA] E_\Delta(H_a) &= \int_{s\in \mathbb{R}^3}^\oplus E_{\Delta}(H_a(s)) [H_a,iA](s) E_{\Delta}(H_a(s)) ds\\
&\ge \int_{s\in \mathbb{R}^3}^\oplus 2(d(E,a)-\epsilon) E_{\bigtriangleup}(H_a(s)) ds\\
&= 2(d(E,a)-\epsilon) E_{\bigtriangleup}(H_a)\\
\end{align*}

By a renaming of $\epsilon$ we have the conclusion.\bigskip
\end{proof}

To attack the $E>0$ case, we will need to be more judicious in our selection of width $\delta$ for different fibers $s$. As a result, we need to make a covering argument over $\mathbb{R}^3_s$. First we prove  what we need for individual $s$.

\begin{lemma}\label{fixsfinddelta}
Fix $\epsilon>0$, an energy $E > 0$ (which is not an eigenvalue of the subsystem Hamiltonian $h_a$), and a choice of $s\in \mathbb{R}^3$. Then there there exists $\delta=\delta(s)>0$ so that, taking $\bigtriangleup=(E-\delta,E+\delta)$, we have:

$$E_\bigtriangleup(H_a(s)) [H_a,iA](s) E_\bigtriangleup(H_a(s)) \ge \min(d(E,a)-\epsilon, E-\epsilon) E_\bigtriangleup(H_a(s))$$
\end{lemma}

\begin{proof}
Fix $\epsilon$ and $E>0$ as in the lemma. We may select $\delta_0$ so small that $(E-\delta_0, E+\delta_0)$ contains no eigenvalues of $h_a$, and also that $\delta_0< \frac{\epsilon}{2}$. Fix $\tau$ so small that $\bigtriangleup_0 = (E-\delta_0, E+\delta_0)$ contains no eigenvalues of $|H_a(s)|$ for $|s| \le \tau$. From what we know of the spectrum of $H_a(s)$, $\tau$ may be selected so that $\tau^2 \ge d(E,a)-\epsilon$. \bigskip

Consider the small-momentum case, $|s|\le \tau$. On these fibers:

\begin{align*}
E_{\bigtriangleup_0}(H_a(s)) [H_a,iA](s) E_{\bigtriangleup_0}(H_a(s)) &= E_{\bigtriangleup_0}(H_a(s)) \left( \frac{1}{2}(p^a+s)^2 + H_a(s) - V_{23}(x^a)-x^a \cdot \bigtriangledown V_{23}(x^a) \right) E_{\bigtriangleup_0}(H_a(s))\\
&\ge (E-\delta_0)E_{\bigtriangleup_0}(H_a(s)) + E_{\bigtriangleup_0}(H_a(s)) K E_{\bigtriangleup_0}(H_a(s))
\end{align*}

by the functional calculus, where $K= - V_{23}(x^a)-x^a \cdot \bigtriangledown V_{23}(x^a)$ is relatively $H_a(s)$-compact. Since $\bigtriangleup_0$ contains no eigenvalues of $H_a(s)$, $E_\bigtriangleup(H_a(s))\rightarrow 0$ in the strong operator topology as $\bigtriangleup\searrow 0$. Therefore $E_\bigtriangleup(H_a(s)) K E_\bigtriangleup(H_a(s)) \rightarrow 0$ in norm as $\bigtriangleup \searrow 0$. So, by choosing $\delta_1\le \delta_0$ small enough, and letting $\bigtriangleup_1 = (E-\delta_1,E+\delta_1)$, we can ensure that

\begin{align*}
E_{\bigtriangleup_1}(H_a(s)) [H_a,iA](s) E_{\bigtriangleup_1}(H_a(s)) &\ge  (E-\delta_0)E_{\bigtriangleup_1}(H_a(s)) -\frac{\epsilon}{2}E_{\bigtriangleup_1}(H_a(s))\\
\end{align*}

and therefore that

\begin{align*}
E_{\bigtriangleup_1}(H_a(s)) [H_a,iA](s) E_{\bigtriangleup_1}(H_a(s)) &\ge  (E-\epsilon)E_{\bigtriangleup_1}(H_a(s)) \\
\end{align*}

This is all we need to do in our analysis of fibers $|s|\le \tau$.\bigskip

Now we consider the large-momentum case, $|s|\ge \tau$. We need to be very careful in selecting the width, to satisfy a whole host of auxiliary inequalities. Let $\delta_2<\delta_0$ be so small that, letting $\bigtriangleup_2 = (E-\delta_2,E+\delta_2)$:

\begin{equation}\label{last1}
\Vert \Big( E_{\bigtriangleup_2}(H_a(s)) -E_{\bigtriangleup_2 pp} (H_a(s))\Big) K \Big(E_{\bigtriangleup_2}(H_a(s)) -E_{\bigtriangleup_2 pp} (H_a(s))\Big) \Vert \le \frac{\epsilon}{4}
\end{equation}

where $K= -x^a-x^a \cdot \bigtriangledown V_{23}(x^a)$ is relatively $H_a(s)$-compact. \bigskip

Because $E_{\bigtriangleup_2}(H_a(s)) K E_{\bigtriangleup_2}(H_a(s))$ is compact, we may select a finite-dimensional projection $F$ so that

\begin{equation}\label{last2}
\Vert \Big( E_{\bigtriangleup_2}(H_a(s))-E_{\bigtriangleup_2 pp} (H_a(s))\Big) K\Big( E_{\bigtriangleup_2}(H_a(s))-E_{\bigtriangleup_2 pp} (H_a(s))\Big) - \Big( E_{\bigtriangleup_2}(H_a(s))-F\Big) K\Big( E_{\bigtriangleup_2}(H_a(s))-F\Big) \Vert \le \frac{\epsilon}{4}
\end{equation}

Now define $C:=F[H_a,iA](s)(E_{\bigtriangleup_2}(H_a(s))-E_{\bigtriangleup pp} (H_a(s)))$. Evidently $K_1= -\epsilon^{-1}C^*C$ is a compact operator. If we select $\delta_3<\delta_2$ small enough small enough, then letting $\bigtriangleup_3=(E-\delta_3,E+\delta_3)$, we have:

\begin{equation}\label{last3}
\Vert \Big( E_{\bigtriangleup_3}(H_a(s)) -E_{\bigtriangleup_3 pp} (H_a(s))\Big) K_1 \Big( E_{\bigtriangleup_3}(H_a(s)) -E_{\bigtriangleup_3 pp} (H_a(s))\Big) \Vert \le \epsilon
\end{equation}

From the same argument as the $E<0$ case, we have that

\begin{equation}\label{pppart}
E_{\bigtriangleup_0 pp}(H_a(s)) [H_a,iA](s) E_{\bigtriangleup_0 pp}(H_a(s)) \ge 2 \tau^2 E_{\bigtriangleup_0 pp}(H_a(s))
\end{equation}

Moreover, we can compute

\begin{equation}\label{last4}
\begin{split}
E_{\bigtriangleup_2}(H_a(s)) [H_a,iA](s) E_{\bigtriangleup_2}(H_a(s)) &= E_{\bigtriangleup_2}(H_a(s)) \Bigg( \frac{1}{2}(p^a+s)^2 + H_a(s) + K \Bigg) E_{\bigtriangleup_2}(H_a(s))\\
&\ge (E-\delta_0) E_{\bigtriangleup_2}(H_a(s)) + E_{\bigtriangleup_2}(H_a(s)) K E_{\bigtriangleup_2}(H_a(s))\\
\end{split}
\end{equation}

where $K= - V_{23}(x^a)-x^a \cdot \bigtriangledown_s V_{23}(x^a)$ is $H_a(s)$-compact. We can multiply both sides of (\ref{last4}) on the left and right by $(1- F)$ to obtain

\begin{align*}
&(E_{\bigtriangleup_2}(H_a(s))-F)[H,iA](s)(E_{\bigtriangleup_2}(H_a(s))-F) \\
&\ge (E-\delta_0)(E_{\bigtriangleup_2}(H_a(s))-F) + (E_{\bigtriangleup_2}(H_a(s))-F) K (E_{\bigtriangleup_2}(H_a(s))-F)\\
\end{align*}

Then we use (\ref{last1}) and (\ref{last2}):\bigskip

$$(E_{\bigtriangleup_2}(H_a(s))-F)[H,iA](s)(E_{\bigtriangleup_2}(H_a(s))-F) $$
\begin{equation}\label{last5}
\ge (E-\delta_0)(E_{\bigtriangleup_2}(H_a(s))-F) - \frac{\epsilon}{2}
\end{equation}

Additionally:

\begin{align*}
&2F[H_a,iA](s) F - E_{\bigtriangleup_2 pp}(H_a(s))[H_a,iA](s) F - F[H_a,iA](s) E_{\bigtriangleup_2 pp}(H_a(s))\\
 &= (E_{\bigtriangleup_2 pp}(H_a(s))- F) [H_a,iA](s) (E_{\bigtriangleup_2 pp}(H_a(s))-F) + F[H_a,iA](s)F - E_{\bigtriangleup_2 pp}(H_a(s))[H_a,iA](s)E_{\bigtriangleup_2 pp}(H_a(s))\\
&= 2s^2(E_{\bigtriangleup_2 pp}(H_a(s))-F) + 2s^2 F -2s^2 E_{\bigtriangleup_2 pp}(H_a(s))\\
&= 0\\
\end{align*}

Therefore

\begin{equation}\label{last6}
\begin{split}
&(E_{\bigtriangleup_2}(H_a(s))- F)[H_a,iA](s) F + F[H_a,iA](s)(E_{\bigtriangleup_2}(H_a(s))- F) \\
&= \Big( E_{\bigtriangleup_2}(H_a(s))- E_{\bigtriangleup_2 pp}(H_a(s))\Big) [H_a,iA](s) F + F[H_a,iA](s)\Big( E_{\bigtriangleup_2}(H_a(s))- E_{\bigtriangleup_2 pp}(H_a(s))\Big)\\
\end{split}
\end{equation}

since the difference between the left hand side and the right hand side was just calculated to be zero.\bigskip

From the inequality $(\epsilon^{-1/2} C + \epsilon^{1/2} F)^*(\epsilon^{-1/2} C + \epsilon^{1/2} F) \ge 0$ we obtain

\begin{align*}
&\Big( E_{\bigtriangleup_2}(H_a(s))- E_{\bigtriangleup_2 pp}(H_a(s))\Big)[H_a,iA](s) F + F[H_a,iA](s)\Big( E_{\bigtriangleup_2}(H_a(s))- E_{\bigtriangleup_2 pp}(H_a(s))\Big)\\
&\ge -\epsilon F + \Big( E_{\bigtriangleup_2}(H_a(s))- E_{\bigtriangleup_2 pp}(H_a(s))\Big) K_1 \Big( E_{\bigtriangleup_2}(H_a(s))- E_{\bigtriangleup_2 pp}(H_a(s))\Big)\\
\end{align*}

 but then, applying (\ref{last6}) to this, we can substitute out the left-hand side:

 \begin{equation}\label{last7}
 \begin{split}
 &(E_{\bigtriangleup_2}(H_a(s))- F)[H_a,iA](s) F + F[H_a,iA](s)(E_{\bigtriangleup_2}(H_a(s))- F)\\
&\ge -\epsilon F + \Big( E_{\bigtriangleup_2}(H_a(s))- E_{\bigtriangleup_2 pp}(H_a(s))\Big) K_1 \Big( E_{\bigtriangleup_2}(H_a(s))- E_{\bigtriangleup_2 pp}(H_a(s))\Big)\\
 \end{split}
 \end{equation}

We are ready to tackle the main estimate. We have

\begin{align*}
E_{\bigtriangleup_2}(H_a(s))[H_a,iA](s)E_{\bigtriangleup_2}(H_a(s)) &= F [H_a,iA](s) F  \\
&\hspace{.5cm}+ (E_{\bigtriangleup_2}(H_a(s)) -F) [H_a,iA](s) F\\
&\hspace{.5cm}+  F[H_a,iA](s) (E_{\bigtriangleup_2}(H_a(s)) -F)\\
&\hspace{.5cm}+  (E_{\bigtriangleup_2}(H_a(s)) -F)[H_a,iA](s) (E_{\bigtriangleup_2}(H_a(s)) -F)\\
\end{align*}

Applying (\ref{pppart}) to the first term, (\ref{last5}) to the last term, and (\ref{last7}) to the middle two terms:

\begin{align*}
&\ge 2\tau^2 F + (E-\delta_0)(E_{\bigtriangleup_2}(H_a(s))-F) -\frac{\epsilon}{2} -\epsilon F \\
&\hspace{1cm} + \Big( E_{\bigtriangleup_2}(H_a(s))- E_{\bigtriangleup_2 pp}(H_a(s))\Big) K_1 \Big( E_{\bigtriangleup_2}(H_a(s))- E_{\bigtriangleup_2 pp}(H_a(s))\Big) \\
&\ge \min(2\tau^2-\epsilon, E-\delta_0) E_{\bigtriangleup_2}(H_a(s))-\frac{\epsilon}{2} + \Big( E_{\bigtriangleup_2}(H_a(s))- E_{\bigtriangleup_2 pp}(H_a(s))\Big) K_1 \Big( E_{\bigtriangleup_2}(H_a(s))- E_{\bigtriangleup_2 pp}(H_a(s))\Big)\\
\end{align*}

Multiplying on both sides by $E_{\bigtriangleup_3 pp}(H_a(s))$ and using (\ref{last3}):

$$\ge \Big(\min(2\tau^2-\epsilon, E-\epsilon)-\epsilon\Big)  E_{\bigtriangleup_3 pp}(H_a(s))$$

Since $\tau^2 \ge d(E,a)-\epsilon$, the conclusion follows by a renaming of $\epsilon$.

\end{proof}

Building on the inequality for individual fibers $s$, we can find a single width that works for all fibers, thus finishing the analysis for this cluster.

\begin{proof}[Proof of Lemma \ref{alpha4} for $E>0$]\bigskip

Fix $\epsilon$ and $E>0$ as in the lemma. Fix any $s_0 \in \mathbb{R}^3$. By Lemma \ref{fixsfinddelta}, there exists $\delta_0$ so that letting $\bigtriangleup_0 = (E-\delta-0, E+\delta-0)$:

$$E_{\bigtriangleup_0}(H_a(s_0))[H_a,iA](s_0)E_{\bigtriangleup_0}(H_a(s_0)) \ge \min(d(E,a)-\epsilon, E-\epsilon)E_{\bigtriangleup_0}(H_a(s_0))$$

Let $f\in C^\infty_0$ have support contained in $\bigtriangleup_0$. Since $H_a(s)$ is a holomorphic family of operators, it is continuous in $s$ in the norm resolvent sense; so $f(H_a(s))$ is operator norm continuous in $s$. The next claim is that $f(H_a(s))[H_a,iA](s)f(H_a(s))$ is also operator norm continuous in $s$. Fix $\epsilon_0>0$. We compute:

\begin{align*}
&\Vert f(H_a(s)) [H_a,iA](s) f(H_a(s)) - f(H_a(s_0)) [H_a,iA](s_0) f(H_a(s_0)) \Vert \\
& \le \Vert \big( f(H_a(s)) - f(H_a(s_0))\big) [H_a,iA](s) f(H_a(s_0)) \Vert \\
& \hspace{1cm} + \Vert f(H_a(s)) [H_a,iA](s) \big( f(H_a(s)) - f(H_a(s_0))\big) \Vert \\
& \hspace{1cm} + \Vert f(H_a(s_0)) \big( [H_a,iA](s) -[H_a,iA](s_0) \big) f(H_a(s_0))\Vert \\
\end{align*}

Regarding $\Vert f(H_a(s_0)) \big( [H_a,iA](s) -[H_a,iA](s_0) \big) f(H_a(s_0))\Vert$: Since

\begin{align*}
[H_a,iA](s) -[H_a,iA](s_0) &= p^a \cdot (s-s_0) + \frac{1}{2}(s^2-s_0^2) + \frac{1}{2}(|p^a-s|-|p^a-s_0|)\\
\end{align*}

and

$$\frac{1}{\frac{1}{4}(p^a+s_0)^2 + \frac{1}{2}|p^a-s_0|+i}\Bigg(p^a \cdot (s-s_0) + \frac{1}{2}(s^2-s_0^2) + \frac{1}{2}(|p^a-s|-|p^a-s_0|)\Bigg)$$

goes to $0$ in the $L^\infty$ norm as $s\rightarrow s_0$, we need only check that $f(H_a(s_0)) \big( \frac{1}{4}(p^a+s_0)^2 + \frac{1}{2}|p^a-s_0|+i \big)$ is uniformly bounded in $s$, which it evidently is (this requires (\ref{RC1})). Therefore if $|s-s_0|$ is small enough, $\Vert f(H_a(s_0)) \big( [H_a,iA](s) -[H_a,iA](s_0) \big) f(H_a(s_0))\Vert \le \epsilon_0/3$. \bigskip

Considering $\Vert \big( f(H_a(s)) - f(H_a(s_0))\big) [H_a,iA](s) f(H_a(s_0)) \Vert$, we need only check that $[H_a,iA](s)f(H_a(s_0))$ is uniformly bounded in $s$ as $s$ varies over a small ball $B$ around $s_0$. Then since $f(H_a(s)) \rightarrow f(H_a(s_0))$ in norm as $s\rightarrow s_0$, by choosing $s\in B$ so that $|s-s_0|$ is small enough, we would have that $\Vert \big( f(H_a(s)) - f(H_a(s_0))\big) [H_a,iA](s) f(H_a(s_0)) \Vert \le \epsilon_0/3$.\bigskip

Because we can write

$$[H_a,iA](s) = [H_a,iA](s) -[H_a,iA](s_0) + [H_a,iA](s_0)$$

we can use the fact that $[H_a,iA](s_0) f(H_a(s_0))$ is uniformly bounded in $s$ to realize all we need to do is show that $\big([H_a,iA](s) -[H_a,iA](s_0)\big) f(H_a(s_0))$ is uniformly bounded as $s$ varies over a small ball $B$. But we have already shown that $\big([H_a,iA](s) -[H_a,iA](s_0)\big) f(H_a(s_0))$ goes to $0$ in norm as $s\rightarrow s_0$.\bigskip

Finally, consider $\Vert f(H_a(s)) [H_a,iA](s) \big( f(H_a(s)) - f(H_a(s_0))\big) \Vert $. We need to show that $ f(H_a(s)) [H_a,iA](s)$ is uniformly bounded in $s$ in a small ball $B$ around $s_0$; then it follows that by taking $s\in B$ so that $|s-s_0|$ is small enough, we would have $\Vert f(H_a(s)) [H_a,iA](s) \big( f(H_a(s)) - f(H_a(s_0))\big) \Vert \le \epsilon_0/3$. \bigskip

Because can write

$$f(H_a(s)) = f(H_a(s)) \big( H_a(s) + i\big) \frac{1}{H_a(s)+i}\big( \frac{1}{4}(p^a+s_0)^2+\frac{1}{2}|p^a-s_0|+i \big) \frac{1}{\frac{1}{4}(p^a+s_0)^2+\frac{1}{2}|p^a-s_0|+i }$$

and $\frac{1}{\frac{1}{4}(p^a+s_0)^2+\frac{1}{2}|p^a-s_0|+i }[H_a,iA](s)$ is bounded as before, we need only check that $ \frac{1}{H_a(s)+i}\big( \frac{1}{4}(p^a+s_0)^2+\frac{1}{2}|p^a-s_0|+i \big)$ is uniformly bounded in $s$. This follows from (\ref{RC1}).\bigskip

Therefore $f(H_a(s))[H_a,iA](s)f(H_a(s))$ is norm continuous in $s$. Since this works for any $f\in C^\infty$ supported in $\bigtriangleup_0$, we fix such an $f$ that is equal to $1$ on a smaller interval $(E-\delta, E+\delta) = \bigtriangleup \subset \bigtriangleup_0$. Let $U$ be an open set containing $s_0$ so that for all $s\in U$, we have

$$\Vert f(H_a(s)) [H_a,iA](s) f(H_a(s)) - f(H_a(s_0)) [H_a,iA](s_0) f(H_a(s_0)) \Vert\le epsilon$$

and $\Vert f(H_a(s_0))-f(H_a(s))\Vert \le \epsilon$

Then, since

$$E_{\bigtriangleup_0}(H_a(s_0))[H_a,iA](s_0)E_{\bigtriangleup_0}(H_a(s_0)) \ge \min(d(E,a)-\epsilon, E-\epsilon)E_{\bigtriangleup_0}(H_a(s_0))$$

we have that, by multiplying:

$$f(H_a(s_0))[H_a,iA](s_0)f(H_a(s_0)) \ge \min(d(E,a)-\epsilon, E-\epsilon)f(H_a(s_0))$$

And then for all $s\in U$, we must have

$$f(H_a(s))[H_a,iA](s)f(H_a(s)) \ge \min(d(E,a)-\epsilon, E-\epsilon)f(H_a(s))-2\epsilon $$

Finally, multiplying through by $E_\bigtriangleup(H_a(s))$:

$$E_{\bigtriangleup}(H_a(s))[H_a,iA](s)E_{\bigtriangleup}(H_a(s)) \ge (\min(d(E,a)-\epsilon, E-\epsilon)-2\epsilon) E_{\bigtriangleup}(H_a(s))$$

Therefore given any $\epsilon$, an energy $E>0$ not an eigenvalue of the subsystem Hamiltonian $h_a$, and a value of $s_0\neq 0$, there exists $\delta >0$ and an open set $U\in \mathbb{R}^3$ containing $s_0$ so that for all $s\in U$, letting $\bigtriangleup = (E-\delta,E+\delta)$, the above Mourre estimate holds. Thus the same covering argument as in the proof of Lemma \ref{alpha2} works, and by a renaming of $\epsilon$ we have our conclusion.

\end{proof}

\subsection{Completing the Mourre estimate}

\begin{proof}
Given a nonzero, nonthreshold energy $E$ we may select a single $\delta$ small enough to invoke lemmas \ref{alpha1}, \ref{alpha2}, \ref{alpha3}, and \ref{alpha4}. We can select $\epsilon$ so small that all the constants $\alpha_a$ are positive. Then we can select a $C^\infty_0$ function $f$ that is $0$ outside of $(E-\delta,E+\delta)$ and is equal to $1$ on a smaller interval $\bigtriangleup$ containing $E$. We employ the localization Lemma \ref{ims} using this $f$:

\begin{align*}
\sum_a f(H)[H,iA]f(H)&= \text{(compact operators)} + \sum_{a\neq (xy0)} j_a f(H_a)[H_a,iA]f(H_a)j_a \\
&\ge \sum_{a\neq (xy0)} \alpha_a j_a f(H_a) j_a + \text{(compact operators)}\\
&\ge (\min_{a\neq (xy0)} \alpha_a) f(H)^2 + \text{(compact operators)}\\
\end{align*}

This last step is by e.g. Lemma 4.20 in \cite{cfks}. Multiplying on both sides by $E_\bigtriangleup(H)$ concludes the proof of Theorem \ref{mourre}.
\end{proof}

In order to invoke Mourre's result, we check that $H$ and $A$ also satisfy (\ref{2COMM}). Since $D(C)=D(H)$, a computation using Lemma \ref{commlemma} reveals that $[C,iA]$ extends to
 $$4p^2 + |k| + x\cdot \bigtriangledown (x\cdot \bigtriangledown V_{12}(x)) + y\cdot \bigtriangledown (y\cdot \bigtriangledown V_{13})   + V_{23}(x-y) \cdot \bigtriangledown ((x-y)\cdot \bigtriangledown V_{23}(x-y))$$,
 which is a bounded operator on $D(H)$ by Kato-Rellich and the potential assumptions (\ref{RB2}). Therefore by \cite{mourre} or Theorem 1.1 in \cite{pss}, we have that the point spectrum of $H$ consists of simple eigenvalues which only may accumulate at thresholds, and that there is no singular continuous spectrum.\bigskip

\section{Local decay and minimal velocity estimates}

\subsection{Local decay}

A major consequence of the Mourre estimate is local decay estimates. An abstract result originally due to Mourre can be found in (\cite{pss}, Thm 7.8), which we recreate here.\bigskip

\begin{lemma}\label{PSS}
Suppose that $H$, $H_0$, and $A$ are three self-adjoint operators so that $D(H)=D(H_0)$ and $H$ and $H_0$ are both bounded from below. Assume that hypotheses (\ref{formal1})-(\ref{formal4}) and (\ref{2COMM}) hold for $H$ and $A$. Moreover, assume that (\ref{formal1})-(\ref{formal4}) hold for $H_0$ and $A$ so that $[H_0,iA]$ extends to an operator defined on $D(H)$. Finally, assume that the core of test vectors $S$ used to define the operator $[H_0,iA]$ is mapped into itself by $A$. Then, let $\bigtriangleup$ be an interval in which a Mourre estimate holds for $H$ with conjugate operator $A$, so that $\bigtriangleup$ does not contain any eigenvalues of $A$. We have

$$\sup_{0< \epsilon < 1} \Vert (|A|+1)^{-\mu} (H-\lambda-i\epsilon)^{-1} (|A|+1)^{-\mu} \Vert<\infty$$

for any fixed $\mu>\frac{1}{2}$, where this holds uniformly as $\lambda$ runs through compact subsets of $\bigtriangleup$.
\end{lemma}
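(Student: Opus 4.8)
The plan is to run the classical Mourre differential-inequality argument for the limiting absorption principle (\cite{mourre}; in the abstract packaging used here, see \cite{pss}, or Chapter~4 of \cite{cfks}), and to check that its only inputs are the hypotheses listed. First I would reduce to a \emph{strict} Mourre estimate on a subinterval around the point $\lambda\in\bigtriangleup$ at issue. Since $H$ and $A$ satisfy (\ref{2COMM}), Lemma~\ref{commlemma} applies both to $H$ and to $C=[H,iA]$, so in particular the virial theorem gives $\langle\psi,[H,iA]\psi\rangle=0$ for every eigenvector $\psi$ of $H$. Because $\bigtriangleup$ contains no eigenvalues of $H$, the spectral projections $E_{\bigtriangleup'}(H)$ converge strongly to $0$ as $\bigtriangleup'\searrow\{\lambda\}$, so the compact error $K$ in the Mourre estimate obeys $\Vert E_{\bigtriangleup'}(H)KE_{\bigtriangleup'}(H)\Vert\to0$; hence on a small enough neighborhood $\bigtriangleup_0$ of $\lambda$ one gets $E_{\bigtriangleup_0}(H)[H,iA]E_{\bigtriangleup_0}(H)\ge\theta E_{\bigtriangleup_0}(H)$ with $\theta>0$. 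Fixing $f\in C^\infty_0(\bigtriangleup_0)$ with $0\le f\le1$ and $f\equiv1$ on a smaller neighborhood $\bigtriangleup_1\ni\lambda$, and using that $[H,iA](H+i)^{-1}$ is bounded (by (\ref{formal4})), this yields the localized strict estimate $f(H)[H,iA]f(H)\ge\theta f(H)^2$ with $M:=f(H)[H,iA]f(H)$ a bounded, nonnegative operator.

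Next I would set up the a priori estimate on the weighted resolvent. Write $z=\lambda+i\epsilon$ with $0<\epsilon<1$, $G(z)=(H-z)^{-1}$, and regularize the weight by $\langle A\rangle_R:=(1+A^2/R^2)^{1/2}$; the object to control is the bounded operator $F_{\epsilon,R}:=\langle A\rangle_R^{-\mu}G(z)\langle A\rangle_R^{-\mu}$. Differentiating in $\epsilon$ via $\partial_\epsilon G=iG^2$ and the resolvent identity $G(\bar z)G(z)=\epsilon^{-1}\operatorname{Im}G(z)$ alone would only bound $\partial_\epsilon\Vert F_{\epsilon,R}\Vert$ by $\epsilon^{-1}\Vert F_{\epsilon,R}\Vert$, which is too weak; the improvement comes from the cutoff. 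Splitting $G(z)=f(H)G(z)f(H)+(\text{terms uniformly bounded in }\epsilon)$, the remainder being harmless because $(1-f)(x)(x-z)^{-1}$ is bounded uniformly in $\epsilon$, one replaces the dangerous factor $\epsilon^{-1}$ on the range of $f(H)$ using $M\ge\theta f(H)^2$: the quantity $\theta\Vert f(H)G(z)\phi\Vert^2$ is dominated by $|\langle G(z)\phi,[H,iA]G(z)\phi\rangle|$, which the resolvent calculus turns into terms at most first order in $A$, absorbed for $\mu>\tfrac12$ into the weights $\langle A\rangle^{-\mu}$ at the cost of the error terms below. Those errors are the commutators $[\langle A\rangle_R^{-\mu},H]$ (bounded, and negligible as $R\to\infty$) and $[[H,iA],iA]$, which is an operator on $D(H)$ precisely by (\ref{2COMM}) through Lemma~\ref{commlemma}. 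Collecting everything gives a Gronwall-type differential inequality --- schematically $\frac{d}{d\epsilon}\Vert F_{\epsilon,R}\Vert\le C_1+C_2\Vert F_{\epsilon,R}\Vert^{1/2}$, with constants uniform in $\epsilon$, in $R$, and in $\lambda$ over compact subsets of $\bigtriangleup_1$ --- whose solution stays bounded as $\epsilon\downarrow0$ since $\Vert F_{1,R}\Vert\le1$. Letting $R\to\infty$ --- legitimate because $A$ maps the test core $S$ into itself, so the regularized weights converge to $(1+|A|)^{-\mu}$ on a dense set --- recovers the bound for $\langle A\rangle^{-\mu}G(\lambda+i\epsilon)\langle A\rangle^{-\mu}$; a finite subcover of a given compact $K\subset\bigtriangleup$ by such neighborhoods $\bigtriangleup_1$ finishes the proof.

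The hard part will be establishing the Gronwall-type inequality, i.e.\ checking that after the $A$-weights are commuted through $H$ and through $[H,iA]$ every error term is $H$-bounded with the correct power of $\epsilon$. This is exactly where the $C^1(A)$-property of $C=[H,iA]$ --- condition (\ref{2COMM}), which via Lemma~\ref{commlemma} gives $[[H,iA],iA]$ as an operator defined on $D(H)$ --- is indispensable, and it is also the reason the hypotheses on $(H_0,A)$ are present: they guarantee that $[H_0,iA]$, hence $[V,iA]=[H,iA]-[H_0,iA]$, is a genuine operator on $D(H)$ with which one may compute, so that the first- and second-commutator calculus underlying the estimate is legitimate. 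A secondary, purely technical difficulty is the bookkeeping of limits --- differentiating $F_{\epsilon,R}$ while passing $R\to\infty$ and $\epsilon\downarrow0$, and verifying the regularized weights converge on the dense invariant set $S$ --- which is routine but needs care, since none of $H$ or $[H,iA]$ is $A$-bounded.
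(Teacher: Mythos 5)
The paper does not supply a proof of Lemma~\ref{PSS} at all: it states the result as an abstract fact originally due to Mourre, cites Theorem~7.8 of Perry--Sigal--Simon \cite{pss}, and then simply verifies that the hypotheses hold so that the theorem can be invoked. So your proposal provides an argument where the paper provides a citation, and your outline --- upgrade to a strict Mourre estimate via the virial theorem and absence of point spectrum, a weighted differential inequality, identification of where (\ref{2COMM}) and the $(H_0,A)$ hypotheses enter --- is indeed the canonical Mourre method that PSS use. You have also silently corrected what is almost certainly a typo in the statement (``$\bigtriangleup$ does not contain any eigenvalues of $A$'' should read ``of $H$'', since $\bigtriangleup$ lives in the energy variable and the purpose of the hypothesis is precisely to make the Mourre estimate strict).

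That said, the crucial estimating step of your sketch does not close as written. You differentiate the bare resolvent $G(z)=(H-z)^{-1}$ in $\epsilon=\operatorname{Im}z$ and propose to beat the naive $\epsilon^{-1}\Vert F_{\epsilon,R}\Vert$ bound using the cutoff $f(H)$ and the strict estimate $M=f(H)[H,iA]f(H)\ge\theta f(H)^2$. But expanding $\langle G\phi,[H,iA]G\phi\rangle$ via the resolvent calculus produces not only the boundary terms $i\langle\phi,AG\phi\rangle-i\langle G\phi,A\phi\rangle$ (which already require a fractional-power device because $A\langle A\rangle^{-\mu}$ is unbounded for $\tfrac12<\mu<1$), but also the genuinely dangerous term $2\epsilon\langle G\phi,AG\phi\rangle$, with $A$ sandwiched between two resolvents, against which the identity $\operatorname{Im}G=\epsilon\,G^*G$ gives no leverage. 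The mechanism that makes the Mourre differential inequality close --- in Mourre's original paper, in PSS \S7, in Ch.~4 of \cite{cfks}, and in \cite{abg} --- is the dissipative regularization $G_\epsilon(z)=(H-z-i\epsilon M)^{-1}$, which yields $\operatorname{Im}G_\epsilon(\lambda)=\epsilon\,G_\epsilon^*MG_\epsilon\ge\epsilon\theta\,(fG_\epsilon)^*(fG_\epsilon)$ and hence the a priori bound $\Vert fG_\epsilon\langle A\rangle^{-\mu}\Vert^2\le(\theta\epsilon)^{-1}\Vert F_\epsilon\Vert$; one then differentiates $F_\epsilon=\langle A\rangle^{-\mu}G_\epsilon\langle A\rangle^{-\mu}$ in this regularization parameter (with $\operatorname{Im}z$ held nonnegative, ultimately zero), not in $\operatorname{Im}z$ itself. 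Without this regularization the Gronwall inequality you assert, $\Vert dF/d\epsilon\Vert\le C_1+C_2\Vert F\Vert^{1/2}$, does not follow from the manipulations you describe; the central step of your proof therefore needs to be reorganized around $G_\epsilon$.
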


Since these extra conditions hold under our assumptions, we are able to invoke this lemma for our $H$, $H_0$, and $A$. Next, this estimate can be modified to remove the reference to the operator $A$ and instead say something about the position $X$. Define the notation $\langle X \rangle := \sqrt{X^2+1}$.  Specifically, we want to prove that for any interval $\bigtriangleup$ where the Mourre estimate holds for $H$,

\begin{equation}\label{swapgoal}
\sup_{0< \epsilon < 1} \Vert \langle X \rangle^{-\mu} (H-\lambda-i\epsilon)^{-1} \langle X \rangle^{-\mu} \Vert<\infty
\end{equation}

for any fixed $\mu>\frac{1}{2}$, where this holds uniformly as $\lambda$ runs through compact subsets of $\bigtriangleup$. The main fact used to perform this swap is that

\begin{equation}\label{swapfact}
(|A|+1)^\mu (H+i)^{-1} \langle X \rangle^{-\mu}
\end{equation}

is a bounded operator for any $0 \le \mu \le 1$. Assuming (\ref{swapfact}) is indeed bounded for any $0 \le \mu \le 1$, we proceed to prove (\ref{swapgoal}). Let $L^2_\mu$ be the weighted $L^2$ space $\lbrace f \in L^2(\mathbb{R}^6) : \langle X \rangle^{\mu} f \in L^2(\mathbb{R}^6)\rbrace$. Then (\ref{swapgoal}) is equivalent to saying that $(H-\lambda-i\epsilon)^{-1}$ is bounded from $L^2_\mu$ to $L^2_{-\mu}$ uniformly as $\lambda$ and $\epsilon$ vary over the required sets. Since

$$(H-\lambda-i\epsilon)^{-1} = (H+i)^{-1} + (\xi+i)^{-1}(H+i)^{-2}+(\xi+i)^2 (H+i)^{-1} (H-\lambda-i\epsilon)^{-1} (H+i)^{-1}$$

(where $\xi = \lambda + i \epsilon$) it remains to show that $(H+i)^{-1} (H-\lambda-i\epsilon)^{-1} (H+i)^{-1}$ is bounded from $L^2_\mu$ to $L^2_{-\mu}$ uniformly as $\lambda$ and $\epsilon$ vary over the required sets. But for this we can rewrite

\begin{align*}
&\langle X \rangle^{-\mu}(H+i)^{-1} (H-\lambda-i\epsilon)^{-1} (H+i)^{-1}\langle X \rangle^{-\mu} = \\
& \Big( (|A|+1)^\mu (H+i)^{-1} \langle X \rangle^{-\mu}\Big)^* \Big(  (|A|+1)^{-\mu} (H-\lambda-i\epsilon)^{-1} (|A|+1)^{-\mu} \Big) \Big( (|A|+1)^\mu (H+i)^{-1} \langle X \rangle^{-\mu}\Big)\\
\end{align*}

and then using (\ref{swapfact}) and Lemma \ref{PSS} gives us  (\ref{swapgoal}). It remains to prove that (\ref{swapfact}) is bounded for any $\mu > \frac{1}{2}$. Without loss of generality we may assume also that $\mu\le 1$. In fact, we will prove that (\ref{swapfact}) is bounded for $\mu =0$ and for $\mu = 1$, and then use Stein's interpolation theorem for analytic families of operators in order to draw the conclusion. The case $\mu=0$ is evident. We consider the case $\mu=1$. We need only bound

$$(P \cdot X) (H+i)^{-1} \langle X \rangle^{-1}$$

The equalities that follow come from restricting our attention to the dense domain of Schwartz functions.

$$(P \cdot X) (H+i)^{-1} \langle X \rangle^{-1} = S_1 + S_2$$

where

$$S_1 = (P (H+i)^{-1}) \cdot (X \langle X \rangle^{-1})$$

which is bounded, and

\begin{align*}
S_2 &= P \cdot [X, (H+i)^{-1}]\langle X \rangle^{-1} \\
&= P \cdot \Big( (H+i)^{-1} [H_0,X](H+i)^{-1} \Big)\langle X \rangle^{-1}\\
&= P \cdot \Big( (H+i)^{-1} (-2 i p, -2i\frac{k}{|k|} ) (H+i)^{-1} \Big)\langle X \rangle^{-1}\\
\end{align*}

which is also bounded (thinking of $(-2 i p, -2i\frac{k}{|k|} )$ as a vector in $\mathbb{C}^6$, so the dot product makes sense). Since (\ref{swapfact}) is then shown to be bounded for all required $\mu$, we have the desired estimate (\ref{swapgoal}). \bigskip

An operator $B$ on $L^2(\mathbb{R}^6)$ is said to be $H$-smooth if for all $\phi\in L^2(\mathbb{R}^6)$, we have $e^{-itH}\in D(B)$ a.e. $t$ and

\begin{equation}\label{hsmooth}
\int_{-\infty}^\infty \Vert B e^{-itH} \phi \Vert^2 dt \lesssim \Vert \phi \Vert ^2
\end{equation}

We say $B$ is $H$-smooth on $\overline{\Omega}$ if $B E_{\overline{\Omega}}(H)$ is $H$-smooth.\bigskip

This can be interpreted as the observable $B$ decaying along the flow. By the general theory (\cite{rs}, Theorems XIII.25 and XIII.30), the estimate (\ref{swapgoal}) implies that for any interval $\Omega$ not containing eigenvalues or thresholds of $H$, $\langle X \rangle^{-\mu}$ is $H$-smooth on $\overline{\Omega}$ for any $\mu > \frac{1}{2}$. This fact is `local decay', an important tool for proving the existence of wave operators.\bigskip

\subsection{Minimal velocity estimates}

Another important consequence of the Mourre estimate is minimal velocity estimates, which we use in what follows. In all that follows, we write $F(S)$ to signify a smoothed characteristic function of the set defined by $S$ in configuration space. It is known (cf. \cite{ss2}) that the Mourre estimate implies the following:

\begin{lemma}\label{minimalVelocityWithA}
For all $\psi$ such that the right-hand side makes sense, and $t>1$, we have

\begin{equation}\label{minimalVelocityWithA2}
\Vert F(\frac{A}{t}<b) e^{-iHt} E_\bigtriangleup(H) \psi \| \lesssim t^{-5/4} (\| \psi \|^2 + \| |A|^{5/4} \psi \|^2)^\frac{1}{2}
\end{equation}

where $\bigtriangleup$ is any interval in the continuous spectrum of $H$, and $b$ is any constant less than $\theta$ (the constant appearing in the Mourre estimate for that interval). The exponent $\frac{5}{4}$ is not optimal.

\end{lemma}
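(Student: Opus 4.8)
The plan is to treat this as a minimal velocity (propagation) estimate of Sigal--Soffer type \cite{ss2}: I would verify that our pair $(H,A)$ meets the abstract hypotheses and then run the standard positive-commutator/propagation-observable argument. The first step is to pass from Theorem \ref{mourre} to a \emph{strict} Mourre estimate. Since $\bigtriangleup$ lies in the continuous spectrum of $H$ it contains no eigenvalues, so around a given compact subset one can shrink $\bigtriangleup$ and pick $f\in C^\infty_0$ supported in $\bigtriangleup$ with $f\equiv 1$ on that subset; using that $E_{\bigtriangleup'}(H)KE_{\bigtriangleup'}(H)\to 0$ in norm as $\bigtriangleup'$ shrinks to a point of the continuous spectrum, the compact error is absorbed and $f(H)[H,iA]f(H)\ge\theta f(H)^2$ holds with $\theta$ as close to $d(E)$ as we wish, in particular $\theta>b$. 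Because $f(H)E_\bigtriangleup(H)=E_\bigtriangleup(H)$, writing $\psi_t:=e^{-iHt}E_\bigtriangleup(H)\psi=f(H)\psi_t$, it suffices to bound $\Vert F(A/t<b)f(H)\psi_t\Vert$.

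Next I would introduce the propagation observable $\Phi(t)=\langle\psi_t,f(H)g(A/t)f(H)\psi_t\rangle$, where $g\in C^\infty$ is nonincreasing, $0\le g\le 1$, with $g\equiv 1$ on $(-\infty,b]$ and $\operatorname{supp}g\subset(-\infty,b_1)$ for some $b<b_1<\theta$, and the cutoff is arranged so that $F(A/t<b)=F(A/t<b)g(A/t)$, whence $\Vert F(A/t<b)f(H)\psi_t\Vert^2\le\Phi(t)$. Computing the Heisenberg derivative with $\partial_t g(A/t)=-t^{-1}(A/t)g'(A/t)$, the first-order commutator expansion $i[H,g(A/t)]=-t^{-1}\sqrt{-g'}(A/t)[H,iA]\sqrt{-g'}(A/t)+R(t)$ with $\Vert f(H)R(t)f(H)\Vert=O(t^{-2})$, and the strict Mourre estimate, one obtains
$$\mathbf{D}\Phi(t)\le \tfrac{1}{t}\langle\psi_t,f(H)\,g'(A/t)(\theta-A/t)\,f(H)\psi_t\rangle+O(t^{-2}).$$
On $\operatorname{supp}g'$ one has $A/t\in[b,b_1]$, so $\theta-A/t\ge\theta-b_1>0$ while $g'\le 0$; hence $\mathbf{D}\Phi(t)\le -\tfrac{c}{t}\Vert\sqrt{-g'}(A/t)f(H)\psi_t\Vert^2+O(t^{-2})$ for large $t$, and integrating (using $\Phi\ge 0$) yields the integrated minimal velocity estimate $\int_1^\infty\tfrac{1}{t}\Vert\sqrt{-g'}(A/t)f(H)\psi_t\Vert^2\,dt\lesssim\Vert\psi\Vert^2$.

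To upgrade this to the pointwise rate $t^{-5/4}$ with the stated weight I would run a short bootstrap — this is where the (acknowledged non-optimal) exponent enters. One repeats the estimate with a slowly growing weight, e.g. with the observable $t^{2\gamma}\langle\psi_t,f(H)g(A/t)f(H)\psi_t\rangle$ and $g$ flat near $(-\infty,b]$: the derivative of the weight produces $\tfrac{2\gamma}{t}$ times the observable, which is absorbed by the good term above once a previous-stage bound $\Vert g(A/t)f(H)\psi_t\Vert\lesssim t^{-\gamma'}(\Vert\psi\Vert+\Vert\langle A\rangle^{\gamma'}\psi\Vert)$ is in hand for a smaller $\gamma'$; each pass raises the admissible power by a fixed fraction, and starting from $\gamma'=0$ finitely many iterations reach $\gamma=5/4$. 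The powers of $\langle A\rangle$ on the data appear from commuting the cutoffs and weights through $E_\bigtriangleup(H)\psi$, using $g(A/t)\lesssim t^{-1}\langle A\rangle$ on $\operatorname{supp}g$. As every hypothesis of the abstract result of \cite{ss2} (the $C^1$ and $C^2$ properties of $(H,A)$, strict Mourre on $\bigtriangleup$, invariance of the core $S$ under $A$) has been checked in Section 3, this last step is an invocation of that result.

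The main obstacle is the control of the commutator remainder $R(t)$ in the second step in the presence of the singular dispersion $|k|$: one must check that expanding $i[H,g(A/t)]$ to first order in $1/t$ leaves an $H$-bounded, indeed $O(t^{-2})$, error, which requires $[[H,iA],iA]$ to be $H$-bounded. This is precisely what assumptions (\ref{RB}), (\ref{RB2}) and the explicit formulas for $[H,iA]=2p^2+|k|-\sum x^a\cdot\nabla I^a$ and for $[[H,iA],iA]$ (whose kinetic parts $4p^2+|k|$ and whose potential parts are $H_0$-bounded) were designed to provide, so the nonsmoothness of $|k|$ creates no new difficulty here; the remaining work is the routine bookkeeping of weights in the bootstrap.
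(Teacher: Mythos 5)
Your proposal is correct and takes essentially the same route as the paper: the paper's proof is a one-line appeal to the abstract propagation/minimal-velocity theory of \cite{ss2} (whose hypotheses — the Mourre estimate and the $C^2(A)$ regularity — are verified in Section 3), and your sketch of the propagation observable $f(H)g(A/t)f(H)$, the Heisenberg-derivative computation via the strict Mourre bound, and the bootstrap to reach the (non-optimal) exponent $5/4$ is precisely the internal mechanism of that cited result, which you also ultimately invoke. No gap; you simply unpack what the paper treats as a black box.
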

\begin{proof}
The estimate (\ref{minimalVelocityWithA2}) follows immediately from the Mourre estimate and the abstract theory in \cite{ss2}. The constant $-5/4$ is not the best attainable, but it's sufficient for our purposes.
\end{proof}

The goal is to swap out the reference to the auxiliary operator $A$ with a reference to $x$. To this end, we will prove:

\begin{lemma}\label{minimalVelocityLemma}
For all $\psi$ such that the right-hand side makes sense, and $t>1$, we have

\begin{equation}\label{minimalVelocity}
\lim_{t\rightarrow \pm \infty} F(\frac{X^2}{t^{2-\epsilon}}<\delta) e^{-iHt} E_\bigtriangleup(H) \psi = 0 \tag{MV}
\end{equation}

where $\epsilon$ is any small positive constant, $\bigtriangleup$ is any interval in the continuous spectrum of $H$, and  $\delta$ is any positive constant less than $\theta$ (the constant appearing in the Mourre estimate for $\bigtriangleup$).
\end{lemma}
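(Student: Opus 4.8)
The plan is to deduce (MV) from Lemma \ref{minimalVelocityWithA} by a density argument combined with a commutator expansion that trades the $A$-cutoff for the $X$-cutoff. Since $F(\frac{X^2}{t^{2-\epsilon}}<\delta)\,e^{-iHt}E_\bigtriangleup(H)$ is bounded uniformly in $t$, it suffices to establish the limit for $\psi$ in a dense subset of $L^2(\mathbb{R}^6)$, and I would take $\psi\in D(|A|^{5/4})$ (so that the right-hand side of the estimate in Lemma \ref{minimalVelocityWithA} makes sense). Write $\phi_t:=e^{-iHt}E_\bigtriangleup(H)\psi=E_\bigtriangleup(H)e^{-iHt}\psi\in ran(E_\bigtriangleup(H))$, so $\|\phi_t\|\le\|\psi\|$, and abbreviate $g_t:=F(\frac{X^2}{t^{2-\epsilon}}<\delta)$, a smooth function of $X$ supported in $\{|X|\lesssim t^{1-\epsilon/2}\}$. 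Fix a constant $b$ with $0<b<\theta$ and a smooth $\chi:\mathbb{R}\to[0,1]$ with $\chi\equiv 0$ near $(-\infty,b/2]$ and $\chi\equiv 1$ on $[b,\infty)$, so that $1-\chi(A/t)$ is a smoothed characteristic function of $\{\frac{A}{t}<b\}$ and Lemma \ref{minimalVelocityWithA} applies to it. Then split
\begin{equation*}
g_t\phi_t \;=\; g_t\bigl(1-\chi(A/t)\bigr)\phi_t \;+\; g_t\chi(A/t)\phi_t .
\end{equation*}

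The first term is immediate: $\|g_t(1-\chi(A/t))\phi_t\|\le\|F(\tfrac{A}{t}<b)\,e^{-iHt}E_\bigtriangleup(H)\psi\|\to 0$ by Lemma \ref{minimalVelocityWithA} (this is where $\psi\in D(|A|^{5/4})$ enters). For the second term I would commute, $g_t\chi(A/t)\phi_t=\chi(A/t)g_t\phi_t+[g_t,\chi(A/t)]\phi_t$, and note that the commutator is $O(t^{-1})$ in operator norm: a Helffer--Sj\"ostrand commutator expansion bounds $\|[g_t,\chi(A/t)]\|$ by a constant times $\|[A/t,g_t]\|$ plus a faster-decaying remainder, and $[A,g_t]=-iX\cdot\bigtriangledown g_t$ is a bounded multiplication operator of norm $O(1)$ uniformly in $t$ (since $X\cdot\bigtriangledown g_t$ is supported where $|X|^2\lesssim t^{2-\epsilon}$ and carries a factor $t^{-(2-\epsilon)}$). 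Hence $\|[g_t,\chi(A/t)]\phi_t\|\lesssim t^{-1}\|\psi\|\to 0$.

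The heart of the argument is the term $\chi(A/t)g_t\phi_t$. Since $0\le\chi\le 1$ and $\chi$ is supported in $\{\lambda\ge b/2\}$, one has $\chi(\lambda)^2\le\tfrac{4}{b^2}\lambda^2$, whence $\|\chi(A/t)g_t\phi_t\|\le\tfrac{2}{b}\,t^{-1}\|A g_t\phi_t\|\le\tfrac{2}{b}\,t^{-1}\,\|A g_t E_\bigtriangleup(H)\|\,\|\psi\|$ (here $g_t\phi_t=g_tE_\bigtriangleup(H)e^{-iHt}\psi$ lies in $D(A)$ precisely because $A g_t E_\bigtriangleup(H)$ will be shown bounded), so it is enough to prove $\|A g_t E_\bigtriangleup(H)\|=O(t^{1-\epsilon/2})$. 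Writing $A=\tfrac12(P\cdot X+X\cdot P)$ and commuting the position-space factor $Xg_t$ (a bounded multiplication operator of norm $\lesssim t^{1-\epsilon/2}$) to the left of the momentum, the commutators $[P,Xg_t]$ are multiplication operators of the form $\bigtriangledown(Xg_t)$, hence $O(1)$; what remains is $(Xg_t)\cdot P\,E_\bigtriangleup(H)$ up to $O(1)$ terms, and $P\,E_\bigtriangleup(H)$ is bounded because the energy bound on $ran(E_\bigtriangleup(H))$ forces $H_0=p^2+|k|$, and therefore $p^2$ and $|k|$, to be bounded operators there (using that $V$ has relative $H_0$-bound less than $1$). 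This yields $\|A g_t E_\bigtriangleup(H)\|\lesssim t^{1-\epsilon/2}$, so $\|\chi(A/t)g_t\phi_t\|\lesssim t^{-\epsilon/2}\|\psi\|\to 0$. Combining the three bounds proves (MV) for $\psi$ in the dense set, and the general case follows by uniform boundedness and density; the case $t\to-\infty$ is identical. The main obstacle is exactly the estimate $\|A g_t E_\bigtriangleup(H)\|=O(t^{1-\epsilon/2})$ rather than the naive $O(t)$: one must track the non-commutativity of $A$ with the cutoff $g_t$ carefully, exploiting simultaneously the $o(t)$ localization of $g_t$ in configuration space and the boundedness of $P$ on the spectral subspace.
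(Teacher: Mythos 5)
Your argument is correct and takes a genuinely different (and noticeably shorter) route than the paper. You both begin with the same split into $(1-\chi(A/t))g_t\phi_t$ and $\chi(A/t)g_t\phi_t$, handle the first piece directly by Lemma \ref{minimalVelocityWithA}, and dispose of $[g_t,\chi(A/t)]$ via a standard commutator expansion. The difference is in the remaining piece $\chi(A/t)g_t\phi_t$: the paper introduces the sandwiched operator $\widetilde{A}=\widetilde{F_1}\widetilde{g}A\widetilde{g}\widetilde{F_1}$, proves $\|\widetilde{A}/t\|\le b/2$ for large $t$ so that $F_2(\widetilde A)=0$, and then carries out a multi-step Fourier/Duhamel commutator analysis (the ``localization lemma'') to compare $F_2(A)$ with $F_2(\widetilde A)$. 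You instead exploit the pointwise inequality $\chi(\lambda)^2\le\tfrac{4}{b^2}\lambda^2$ (valid because $\chi\le 1$ is supported in $\lambda\ge b/2$) to pass to $\tfrac{2}{bt}\|Ag_t\phi_t\|$, and then show $\|Ag_tE_\bigtriangleup(H)\|=O(t^{1-\epsilon/2})$ by writing $Ag_t=(Xg_t)\cdot P + O(1)$, where $Xg_t$ is a multiplication operator of sup norm $O(t^{1-\epsilon/2})$ and $PE_\bigtriangleup(H)$ is bounded (since $p^2$ and $|k|$ are $H$-bounded on the spectral subspace). This is conceptually cleaner and avoids the $\widetilde{A}$ machinery entirely. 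Two observations worth recording: (i) your final rate for the bad region is $O(t^{-\epsilon/2})$ rather than the $t^{-1}$ the paper asserts before stating the localization lemma; since the paper's localization lemma and its downstream uses only require the norm to vanish, this weaker rate is sufficient, but be aware the claimed $t^{-1}$ is not what you obtain; (ii) when invoking the commutator expansion for $[g_t,\chi(A/t)]$, note that $\chi$ itself is not in $L^1$ (it is asymptotically constant), so one should run the expansion on $\chi-1$ or use the representation through $\chi'$ (compactly supported), which still yields the $O(t^{-1})$ bound — this is a routine point but worth stating explicitly.
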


We need to take a few steps before we can prove this. The operator $F(\frac{X^2}{t^{2-\epsilon}}<\delta)$ can be written as $$ F(\frac{A}{t}<b) F(\frac{X^2}{t^{2-\epsilon}}<\delta) + F(\frac{A}{t}\ge b) F(\frac{X^2}{t^{2-\epsilon}}<\delta)$$

Since $$ \|F(\frac{A}{t}<b) F(\frac{X^2}{t^{2-\epsilon}}<\delta) e^{-iHt} E_\bigtriangleup(H) \psi\|  \lesssim t^{-5/4} (\| \psi \|^2 + \| |A|^{5/4} \psi \|^2)^\frac{1}{2}$$

by (\ref{minimalVelocityWithA2}), we need only prove that
$$ \|F(\frac{A}{t}\ge b) F(\frac{X^2}{t^{2-\epsilon}}<\delta) e^{-iHt} E_\bigtriangleup(H) \psi\|  \lesssim t^{-1} \|\psi \|$$
For expedience of notation, define the following:

$$F_1 = F(\frac{X^2}{t^{2-\epsilon}}<\delta)$$
$$\widetilde{F_1} = F(\frac{X^2}{t^{2-\epsilon}}<\frac{\delta}{2})$$
Note that $F_1 \widetilde{F_1} = F_1$ if the cutoffs are made sufficiently sharp; this is the point of using $\frac{\delta}{2}$.
$$ F_2(A) := F(\frac{A}{t}\ge b)$$
$$g := g(H)$$
where $g$ is a smoothed version of the energy cutoff function $E_\bigtriangleup$ so that $g(H) E_\bigtriangleup(H)= E_\bigtriangleup(H)$. Let $\widetilde{g}$ be a less sharp version of the smoothed cutoff function $g$ so that $g \widetilde{g} = g$.
$$\widetilde{A}= \widetilde{F_1} \widetilde{g} A \widetilde{g} \widetilde{F_1}$$
$$F_2(\widetilde{A}) = F(\frac{\widetilde{A}}{t}\ge b)$$

Thus, the thing to be estimated is: $$\|F_2(A) F_1 e^{-iHt} E_\bigtriangleup(H) \psi\|$$ Ensuing computations are much simplified by understanding some commutators of these operators. Here, $s$ is a constant, and $O(t^n)$ represents an operator with norm bounded by a constant times $t^n$.

$$[\frac{A}{t},F_1]  \approx \frac{X^2}{t^{3-\epsilon}} F'(\frac{X^2}{t^{2-\epsilon}}<\delta) = O(t^{-1}) $$
Then, (by e.g. Lemma 4.12 in \cite{cfks}) $[A,g]$ is bounded, so $$[\frac{A}{t},g]=O(t^{-1})$$

The next objective is to show that $[F_1,g]\rightarrow 0$ as $t\rightarrow \infty$. In this analysis, we treat $F_1= F(\frac{X^2}{t^{2-\epsilon}}<\delta)$ as a function of the vector $X/(t^{1-\epsilon/2})$. We write $\eta=1-\frac{\epsilon}{2}$ for convenience.

$$[F_1,g]= \int_{\mathbb{R}^6_\lambda} \widehat{F_1}(\lambda) [e^{i\lambda\cdot X/t^\eta},g] d\lambda$$

$$=  \int_{\mathbb{R}^6_\lambda} \widehat{F_1}(\lambda) e^{i\lambda\cdot X/t^\eta}\Big( g - e^{-i\lambda\cdot X/t^\eta} g e^{i\lambda\cdot X/t^\eta}\Big) d\lambda$$

$$\approx \int_{\mathbb{R}^6_\lambda} \widehat{F_1}(\lambda) e^{i\lambda\cdot X/t^\eta} \int_0^1 e^{-i s \lambda \cdot X/t^\eta} [g,X/t^\eta] e^{is \lambda \cdot X/t^\eta}\cdot \lambda \text{ } ds \text{ } d\lambda$$

Above, we have parameterized the line segment in $\mathbb{R}^6$ connecting the origin to $\lambda$ by $ \lambda s $ for $0\le s \le 1$. All we'd need to do now is show that $[g,X]$ gives a vector of bounded operators. Then, we'd have

 $$[F_1,g]\approx \frac{1}{t^\eta} \int_{\mathbb{R}^6_\lambda} \widehat{F_1}(\lambda) \lambda\cdot  e^{i\lambda\cdot X/t^\eta} \int_0^1 e^{-i s \lambda \cdot X/t^\eta} [g,X] e^{is \lambda \cdot X/t^\eta} \text{ } ds \text{ } d\lambda$$
$$ = \frac{1}{t^\eta} \int_{\mathbb{R}^6_\lambda} \widehat{\bigtriangledown F_1}(\lambda) \cdot  e^{i\lambda\cdot X/t^\eta} \int_0^1 e^{-i s \lambda \cdot X/t^\eta} [g,X] e^{is \lambda \cdot X/t^\eta} \text{ } ds \text{ } d\lambda$$
and the integral above would be norm convergent to a bounded operator. Unfortunately, $[g,X]$ is unbounded. So we estimate $[g,X]$.

$$[g(H),X]\approx \int_{-\infty}^\infty \widehat{g}(\lambda) e^{i\lambda H} \int_0^\lambda e^{-isH} [H,X] e^{isH} \text{ } ds \text{ } d\lambda$$

Now, $[H,x] = p + \frac{k}{|k|}$. Since $\frac{k}{|k|}$ is bounded after all, we may focus our analysis on $p$. It remains to estimate:

 $$ \int_{-\infty}^\infty \widehat{g}(\lambda) e^{i\lambda H} \int_0^\lambda e^{-isH} p e^{isH} \text{ } ds \text{ } d\lambda$$

 It suffices to bound

  $$ \int_{-\infty}^\infty \widehat{g}(\lambda) \Big( \frac{d}{d\lambda} e^{i\lambda H} \Big)\frac{1}{H+i} \int_0^\lambda e^{-isH} p e^{isH} \text{ } ds \text{ } d\lambda$$

because the difference between this and the desired expression is bounded. Integrating by parts, we find that this equals

\begin{align*}
&\int_{-\infty}^\infty \widehat{g}'(\lambda)  e^{i\lambda H} \frac{1}{H+i} \int_0^\lambda e^{-isH} p e^{isH} \text{ } ds \text{ } d\lambda \\
&\hspace{1cm} + \int_{-\infty}^\infty \widehat{g}(\lambda) \frac{1}{H+i} p e^{i\lambda H}  \text{ } d\lambda\\
\end{align*}

which is certainly bounded. Thus we conclude that $ [F_1,g] = O(t^{-\eta})$. Continuing, we have
\begin{align*}
[F_1g,\frac{A}{t}] &= O(t^{-\eta})\\
\end{align*}
\begin{align*}
[F_1g,e^{is\frac{A}{t}}] &= e^{is\frac{A}{t}}\Big( F_1g - e^{-is\frac{A}{t}} F_1g e^{is\frac{A}{t}} \Big)\\
&=e^{is\frac{A}{t}} \int_0^s e^{-ir\frac{A}{t}} [F_1g, \frac{A}{t}] e^{ir\frac{A}{t}} dr\\
 &= e^{is\frac{A}{t}}sO(t^{-\eta})\\
\end{align*}

The same estimates all hold when $F_1$ and $g$ are replaced by $\widetilde{F_1}$ and $\widetilde{g}$. \bigskip

Our first objective towards the proof of (\ref{minimalVelocity}) will be the following.
\begin{lemma}
$\| \frac{\widetilde{A}}{t} \| \le \frac{b}{2} $ for all sufficiently large $t$.
\end{lemma}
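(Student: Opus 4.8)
The plan is to prove the stronger statement that $\|\widetilde{A}/t\|\to 0$ as $t\to\infty$; the claimed bound $\|\widetilde{A}/t\|\le b/2$ for all sufficiently large $t$ is then immediate. Two soft facts drive the estimate. First, for each fixed $t$ the multiplier $\widetilde{F_1}=F(X^2/t^{2-\epsilon}<\tfrac{\delta}{2})$ vanishes once $X^2\gtrsim\delta\,t^{2-\epsilon}$, so for each coordinate $X_j$ ($j=1,\dots,6$) the multiplication operator $\widetilde{F_1}X_j$ is bounded with $\|\widetilde{F_1}X_j\|\lesssim\sqrt{\delta}\,t^{1-\epsilon/2}$. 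Second, the energy cutoff localizes momenta: since $D(H)=D(H_0)$ and the graph norm of $H_0$ is equivalent to the weighted $L^2$ norm with weight $(1+p^4+k^2)\,dp\,dk$, and since $(H+i)\widetilde{g}(H)$ is bounded, the operators $p\,\widetilde{g}(H)$ and $k\,\widetilde{g}(H)$ — hence each $P_j\,\widetilde{g}(H)$ and its adjoint $\widetilde{g}(H)P_j$, understood as closures — are bounded with norms independent of $t$.

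Next I would expand $A=\tfrac12\sum_{j=1}^{6}(P_jX_j+X_jP_j)$ and regroup
$$\widetilde{A}=\frac12\sum_{j=1}^{6}\Big[\big(\widetilde{F_1}\,\widetilde{g}\,P_j\big)\big(X_j\,\widetilde{g}\,\widetilde{F_1}\big)+\big(\widetilde{F_1}\,\widetilde{g}\,X_j\big)\big(P_j\,\widetilde{g}\,\widetilde{F_1}\big)\Big],$$
where each bracketed factor is a bona fide bounded operator (each product interpreted as its bounded extension from the Schwartz class, on which the rearrangements below are literal identities). In the factors carrying $X_j$ I push $\widetilde{F_1}$, a function of $X$, past $\widetilde{g}$ using $[X_j,\widetilde{F_1}]=0$: $X_j\,\widetilde{g}\,\widetilde{F_1}=\widetilde{g}\,\widetilde{F_1}X_j+[X_j,\widetilde{g}]\widetilde{F_1}$, so that $\|X_j\,\widetilde{g}\,\widetilde{F_1}\|\le\|\widetilde{g}(H)\|\,\|\widetilde{F_1}X_j\|+\|[X_j,\widetilde{g}(H)]\|\lesssim\sqrt{\delta}\,t^{1-\epsilon/2}$, the commutator norm being a finite, $t$-independent constant by exactly the $[g,X]$ estimate carried out just above (verbatim with $\widetilde{g}$ in place of $g$). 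Likewise $\widetilde{F_1}\,\widetilde{g}\,X_j=\widetilde{F_1}X_j\,\widetilde{g}+\widetilde{F_1}[\widetilde{g},X_j]$ has norm $\lesssim\sqrt{\delta}\,t^{1-\epsilon/2}$, while $\widetilde{F_1}\,\widetilde{g}\,P_j$ and $P_j\,\widetilde{g}\,\widetilde{F_1}$ are bounded with $t$-independent norm by the momentum-localization point. Multiplying these bounds term by term gives $\|\widetilde{A}\|\lesssim\sqrt{\delta}\,t^{1-\epsilon/2}$, hence $\|\widetilde{A}/t\|\lesssim\sqrt{\delta}\,t^{-\epsilon/2}\to0$, which is $\le b/2$ once $t$ is large.

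I do not expect a genuine obstacle here; the content is bookkeeping on top of the two soft facts above. The only points needing care are (i) the legitimacy of the regrouping as an identity of bounded operators — fine, because after the $\widetilde{F_1}$-versus-$\widetilde{g}$ commutation every unbounded $X_j$ or $P_j$ is flanked on the correct side either by $\widetilde{F_1}$ (compactly supported in $X$ for fixed $t$) or by $\widetilde{g}(H)$ (which maps $L^2$ into $D(H_0)\subset D(P_j)$), rendering each product bounded; and (ii) reusing the boundedness of $[X,\widetilde{g}(H)]$, which I simply cite from the computation immediately preceding, since $\widetilde{g}$ is of the same type as $g$. As an alternative one could estimate $\widetilde{A}$ directly as a quadratic form on $\operatorname{ran}\widetilde{g}(H)$ via $|\langle u,Au\rangle|\le\|Pu\|\,\|Xu\|$ with $u=\widetilde{g}(H)\widetilde{F_1}\phi$, but the route above is cleaner and needs no separate justification that $\widetilde{A}$ is a bounded self-adjoint operator.
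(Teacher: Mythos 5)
Your argument is correct and rests on the same two soft facts the paper uses: momentum localization (so that $P_j\widetilde{g}(H)$ and $\widetilde{g}(H)P_j$ are bounded, uniformly in $t$) and position localization (so that $\widetilde{F_1}X_j$ has norm $O(t^{1-\epsilon/2})$). The one substantive difference is in which commutator you invoke to bring $X$ and $\widetilde{F_1}$ next to each other. You commute $X_j$ past $\widetilde{g}(H)$ and use boundedness of $[X_j,\widetilde{g}(H)]$ directly; the paper instead commutes $\widetilde{g}$ past $\widetilde{F_1}$, peeling off a $[\widetilde{F_1},\widetilde{g}]=O(t^{-\eta})$ correction (which it has already established) and then separately treating the remaining term $\widetilde{F_1}\,\widetilde{g}\,\frac{P\cdot X}{t}\,\widetilde{F_1}\,\widetilde{g}$ via $\widetilde{g}P$ bounded and $\frac{X}{t}\widetilde{F_1}=O(t^{-\epsilon/2})$. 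Since the paper's $[F_1,g]=O(t^{-\eta})$ estimate is itself proved by controlling a $[g(H),X]$-type expression, the two routes draw on the same underlying input; yours is a little more economical because it does not re-route through the full $[F_1,g]$ bound (which the paper needs elsewhere in any case), and it gets the cleaner global estimate $\|\widetilde A/t\|\lesssim t^{-\epsilon/2}$ in one stroke.

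One remark worth flagging: the paper writes, in passing, that ``$[g,X]$ is unbounded,'' but the computation that immediately follows shows precisely that the relevant expression for $[g(H),X]$ (after the integration by parts) is a bounded operator; the earlier remark is only a comment about the unboundedness of the naive intermediate objects $e^{-isH}pe^{isH}$. So your citation of the boundedness of $[X_j,\widetilde{g}(H)]$ is consistent with, and indeed supported by, the paper's own calculation. If you prefer a self-contained justification, the Helffer--Sj\"ostrand formula together with the boundedness of $p(H+i)^{-1}$ and $k(H+i)^{-1}$ gives it directly.

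Finally, the heuristic hiding in both arguments is worth keeping in view: $\widetilde{F_1}$ confines $|X|\lesssim t^{1-\epsilon/2}$ and the energy cutoff confines $|P|\lesssim 1$, so the dilation generator $A\sim X\cdot P$ is morally $O(t^{1-\epsilon/2})$ on the range of $\widetilde{g}\widetilde{F_1}$; dividing by $t$ yields $O(t^{-\epsilon/2})$, which is all the lemma claims. The bookkeeping in either your version or the paper's merely makes this heuristic rigorous.
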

\begin{proof}
Compute
\begin{align*}
\frac{\widetilde{A}}{t} &= \widetilde{F_1} \widetilde{g} \frac{A}{t} \widetilde{g}\widetilde{F_1} \\
&\approx \widetilde{F_1} \widetilde{g} \frac{P\cdot X}{t} \widetilde{g} \widetilde{F_1} +\widetilde{F_1} \widetilde{g} \frac{X\cdot P}{t} \widetilde{g} \widetilde{F_1} \\
&= \widetilde{F_1} \widetilde{g} \frac{P\cdot X}{t} [\widetilde{g}, \widetilde{F_1}] +[\widetilde{F_1}, \widetilde{g}] \frac{X\cdot P}{t} \widetilde{g} \widetilde{F_1}  + \widetilde{F_1} \widetilde{g} \frac{P\cdot X}{t} \widetilde{F_1} \widetilde{g} +\widetilde{g} \widetilde{F_1} \frac{X\cdot P}{t} \widetilde{g} \widetilde{F_1}\\
\end{align*}

Since $\widetilde{g} p$ is bounded and $\frac{X}{t} \widetilde{F_1} = O(t^{-\epsilon/2})$, the last two terms above are $O(t^{-\epsilon/2})$. As for the first two terms:

\begin{align*}
\widetilde{F_1} \widetilde{g} \frac{P\cdot X}{t} [\widetilde{g}, \widetilde{F_1}] +[\widetilde{F_1}, \widetilde{g}] \frac{X\cdot P}{t} \widetilde{g} \widetilde{F_1} &= \widetilde{F_1}  \frac{[\widetilde{g} P, X]}{t} [\widetilde{g}, \widetilde{F_1}] +[\widetilde{F_1}, \widetilde{g}] \frac{[X, P \widetilde{g}]}{t}  \widetilde{F_1}\\
&\hspace{1cm} + \widetilde{F_1} \frac{X}{t}\cdot \widetilde{g} P [\widetilde{g}, \widetilde{F_1}] +[\widetilde{F_1}, \widetilde{g}] P \widetilde{g} \cdot \frac{X}{t} \widetilde{F_1}\\
&= O(t^{-\epsilon/2}) O(t^{-\eta}) + O(t^{-\epsilon/2})O(t^{-\eta})\\
&= O(t^{-1})\\
\end{align*}
\end{proof}

The implication of this lemma is that $F_2(\widetilde{A})$ actually equals $0$ for sufficiently large $t$. So to estimate $\|F_2(A) F_1 e^{-iHt} E_\bigtriangleup(H) \psi\|$, it is sufficient to estimate $\|\Big( F_2(\widetilde{A})-F_2(A)\Big) F_1 e^{-iHt} E_\bigtriangleup(H) \psi\|$.\bigskip

Next, we prove the \textbf{localization lemma}.

\begin{lemma}\label{localizationLemma}

$\| F_2(A)F_1E_\bigtriangleup(H) \| \rightarrow 0 $ as $t \rightarrow \infty$.
\end{lemma}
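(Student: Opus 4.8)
The plan is to reduce the claim to a single velocity bound, exploiting that $F_2(A)$ is a smoothing of a sharp cutoff supported away from the origin. Write $F_2(A) = \chi(A/t)$ with $\chi = F(\,\cdot \ge b)$ a fixed smooth function; since (as in Lemma \ref{minimalVelocityWithA}) $0<b<\theta$, I may arrange $\mathrm{supp}\,\chi \subset [b',\infty)$ for some $b'>0$, so that $|\chi(a)|^2 \le (b')^{-2}\|\chi\|_\infty^2\, a^2$ for all real $a$. By the functional calculus this yields the operator inequality $F_2(A)^*F_2(A) \le C\,(A/t)^2$ with $C=(b')^{-2}\|\chi\|_\infty^2$. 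Since (one checks) $F_1 E_\bigtriangleup(H)\psi\in D(A)$ for every $\psi$, it follows that
\[
\bigl\| F_2(A)\,F_1 E_\bigtriangleup(H)\,\psi \bigr\|^2 \;\le\; C\,\bigl\| (A/t)\,F_1 E_\bigtriangleup(H)\,\psi \bigr\|^2 ,
\]
so the lemma is reduced to showing $\bigl\|(A/t)\,F_1 E_\bigtriangleup(H)\bigr\| = O(t^{-\epsilon/2})$.

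To prove that velocity bound I would insert the smoothed energy cutoff, $F_1 E_\bigtriangleup(H) = F_1\, g(H)\, E_\bigtriangleup(H)$, and use that $g$ has compact support: then $(H_0+i)(H+i)^{-1}$ and $(H+i)g(H)$ are bounded, hence $H_0\,g(H)$ is bounded, hence $p\,g(H)$, $|k|\,g(H)$, and therefore $P\,g(H)$ are bounded operators. On the configuration side, $F_1$ is a function of $X/(\sqrt\delta\,t^{1-\epsilon/2})$, so the associated multiplication operators obey $\|X F_1(X)\|_\infty \lesssim t^{1-\epsilon/2}$, $\|\nabla(X F_1)\|_\infty \lesssim 1$, $\|\nabla F_1\|_\infty \lesssim t^{-(1-\epsilon/2)}$, and $\|X\cdot\nabla F_1\|_\infty \lesssim 1$. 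Distributing $A = \tfrac12(P\cdot X + X\cdot P)$ through $F_1 g(H)$ and collecting the resulting terms — each one either a bounded multiplier of size $O(t^{1-\epsilon/2})$ composed with the bounded operator $P\,g(H)$, or a bounded multiplier of size $O(1)$ — gives $\bigl\| A\, F_1 g(H) \bigr\| \lesssim t^{1-\epsilon/2}$, and hence $\bigl\|(A/t)\,F_1 E_\bigtriangleup(H)\bigr\| \lesssim t^{-\epsilon/2}\to 0$. Combined with the displayed inequality this proves the lemma. (One may equivalently route this through $\widetilde A$: since $\|F_2(A)\|=O(1)$ and $F_1 E_\bigtriangleup(H) = \widetilde F_1\widetilde g\, F_1 E_\bigtriangleup(H) + O(t^{-\eta})$ by $[\widetilde g,F_1] = O(t^{-\eta})$, one reduces to estimating $F_2(A)\widetilde F_1\widetilde g$, and the same bound gives $\|F_2(A)\widetilde F_1\widetilde g\| = O(t^{-\epsilon/2})$; together with $F_2(\widetilde A)=0$ from the preceding lemma this is precisely the promised estimate of $\|(F_2(\widetilde A)-F_2(A))F_1 E_\bigtriangleup(H)\|$.)

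The functional-calculus inequality $F_2(A)^*F_2(A)\le C(A/t)^2$ is harmless; the step needing care is the velocity bound, and specifically checking that $F_1 g(H) E_\bigtriangleup(H)\psi$ really lies in $D(A)$ and that each commutator of $A$ with $F_1$ and with $g(H)$ appearing in the expansion of $A\,F_1 g(H)$ is a bounded operator of the stated size. This is where the earlier facts are used — that $p\,g(H)$ (hence $P\,g(H)$) is bounded, and that although $[H,X] = p + k/|k|$ is unbounded, $[g(H),X]$ is nonetheless bounded. Beyond that bookkeeping there is no new content: energy localization forces $|P| = O(1)$ while $F_1$ forces $|X| = O(t^{1-\epsilon/2})$, so $|A| = O(t^{1-\epsilon/2}) = o(t)$ on the relevant subspace, which is exactly what kills $F_2(A) = F(A/t\ge b)$ in the limit.
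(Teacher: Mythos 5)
Your proof is correct, and it takes a genuinely different route from the paper. The paper proceeds by introducing the auxiliary operator $\widetilde A = \widetilde F_1 \widetilde g A \widetilde g \widetilde F_1$, establishes in the preceding lemma that $\|\widetilde A/t\|\le b/2$ for large $t$ (so $F_2(\widetilde A)=0$), and then estimates $\bigl(F_2(\widetilde A)-F_2(A)\bigr)\widetilde F_1\widetilde g^2\widetilde F_1$ via a Fourier representation of $F_2$ and a Duhamel/commutator expansion of $e^{-i\lambda \widetilde A/t}-e^{-i\lambda A/t}$. You instead observe that because $F_2=\chi(\cdot/t)$ with $\mathrm{supp}\,\chi\subset[b',\infty)$, $b'>0$, the pointwise bound $|\chi(a)|^2\le (b')^{-2}\|\chi\|_\infty^2\,a^2$ holds for all real $a$, so the spectral theorem gives $\|F_2(A)\phi\|^2\le C\|(A/t)\phi\|^2$ for $\phi\in D(A)$; the lemma then reduces to a single velocity bound $\|(A/t)F_1 g(H)\|=O(t^{-\epsilon/2})$, which follows from writing $A F_1 g(H)=\tfrac12\bigl((X F_1)\cdot P g(H)+\cdots\bigr)$ and using that $P g(H)$ is bounded, $\|X F_1\|_\infty\lesssim t^{1-\epsilon/2}$, and $\|X\cdot\nabla F_1\|_\infty$, $\|\nabla\cdot(X F_1)\|_\infty$ are $O(1)$. (One also checks $F_1 g(H)\phi\in C_0^\infty\subset D(A)$, since it has compact $X$-support and lies in $\cap_m D(H_0^m)$; alternatively one can argue $A F_1 g(H)$ extends from Schwartz functions to a bounded operator and conclude membership in $D(A)=D(A^*)$ by duality.)

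What your route buys: it bypasses $\widetilde A$, the Duhamel expansion, and — if adopted — renders the preceding lemma $\|\widetilde A/t\|\le b/2$ unnecessary for this purpose, since the mechanism (energy cutoff bounds $P$, configuration cutoff bounds $X$, hence $A=O(t^{1-\epsilon/2})=o(t)$ on the range) is extracted directly. The paper's route obtains a somewhat better rate ($O(t^{-\eta})$ with $\eta=1-\epsilon/2$, versus your $O(t^{-\epsilon/2})$), but the lemma only asserts convergence to zero, so the weaker rate is adequate here. One should note that the quadratic pointwise bound hinges on $b>0$; while the text around Lemma \ref{minimalVelocityWithA} only says ``$b$ is any constant less than $\theta$,'' the statement of Lemma \ref{minimalVelocityLemma} (and the very content of the present lemma) forces $b>0$, so your assumption is justified.
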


\begin{proof}
It's sufficient to prove that (notice it doesn't make a difference replacing $\widetilde{F_1}$ or $\widetilde{g}$ with its square or changing their order):
\begin{align*}
\lim_{t\rightarrow \infty} \| \Big( F_2(\widetilde{A})-F_2(A)\Big) (\widetilde{F_1}) (\widetilde{g})^2(\widetilde{F_1})  E_\bigtriangleup(H) e^{-iHt}\psi \| = 0 \\
\end{align*}

We (formally) express $F_2$ using a Fourier transform:
\begin{align*}
&\Big( F_2(\widetilde{A})-F_2(A)\Big)  (\widetilde{F_1}) (\widetilde{g})^2(\widetilde{F_1})  \\
&\approx \int \widehat{F_2}(\lambda) \Big( e^{-i\lambda\widetilde{A}/t} - e^{-i\lambda A/t} \Big) \hspace{.1cm}d\lambda \hspace{.1cm} (\widetilde{F_1}) (\widetilde{g})^2(\widetilde{F_1}) \\
&= \int \widehat{F_2}(\lambda) e^{-i\lambda\widetilde{A}/t} \int_0^\lambda  e^{i\lambda\widetilde{A}/t} \Big( \frac{\widetilde{A}}{t} - \frac{A}{t} \Big) e^{-i\lambda A/t} \hspace{.1cm}ds \hspace{.1cm}d\lambda \hspace{.1cm} (\widetilde{F_1}) (\widetilde{g})^2(\widetilde{F_1})  \\
&= \int \widehat{F_2}(\lambda) e^{-i\lambda\widetilde{A}/t} \int_0^\lambda  e^{i\lambda\widetilde{A}/t} \Big( \frac{\widetilde{A}}{t} - \frac{A}{t} \Big) [e^{-i\lambda A/t},(\widetilde{F_1}) (\widetilde{g})^2(\widetilde{F_1})] \hspace{.1cm}ds \hspace{.1cm}d\lambda \hspace{.1cm}  \\
&\hspace{1cm}+ \int \widehat{F_2}(\lambda) e^{-i\lambda\widetilde{A}/t} \int_0^\lambda  e^{i\lambda\widetilde{A}/t} \Big( \frac{\widetilde{A}}{t} - \frac{A}{t} \Big) (\widetilde{F_1}) (\widetilde{g})^2(\widetilde{F_1}) e^{-i\lambda A/t} \hspace{.1cm}ds \hspace{.1cm}d\lambda \hspace{.1cm}   \\
&= \int \widehat{F_2}(\lambda) e^{-i\lambda\widetilde{A}/t} \int_0^\lambda  e^{i\lambda\widetilde{A}/t} \Big( \frac{\widetilde{A}}{t} - \frac{A}{t} \Big) e^{-i\lambda A/t} sO(t^{-\eta}) \hspace{.1cm}ds \hspace{.1cm}d\lambda \hspace{.1cm}  \\
&\hspace{1cm}+ \int \widehat{F_2}(\lambda) e^{-i\lambda\widetilde{A}/t} \int_0^\lambda  e^{i\lambda\widetilde{A}/t} \Big( \frac{\widetilde{A}}{t} - \frac{A}{t} \Big) (\widetilde{F_1}) (\widetilde{g})^2(\widetilde{F_1}) e^{-i\lambda A/t} \hspace{.1cm}ds \hspace{.1cm}d\lambda \hspace{.1cm}   \\
\end{align*}

The first integral converges to an operator that is $O(t^{-\eta})$ (applying integration by parts once to the inner integral). The second integral is shown to also converges to an operator that is $O(t^{-\eta})$ by the following argument: we have

\begin{align*}
\Big( \frac{\widetilde{A}}{t} - \frac{A}{t} \Big) (\widetilde{F_1}) (\widetilde{g})^2(\widetilde{F_1}) &= \Big( F_1 g \frac{A}{t}g F_1 - \frac{A}{t} \Big) (\widetilde{F_1}) (\widetilde{g})^2(\widetilde{F_1}) \\
\end{align*}

If we commute one copy of $\widetilde{F_1} \widetilde{g}$ all the way to the left, then we are left with $(\widetilde{F_1}) (\widetilde{g})\frac{A}{t} (\widetilde{g})(\widetilde{F_1})- (\widetilde{F_1}) (\widetilde{g})\frac{A}{t} (\widetilde{g})(\widetilde{F_1}) = 0$, but using the fact that $F_1 g \frac{A}{t}$ is $O(t^{-\epsilon})$, each term of the commutator is at least $O(t^{-\epsilon})$.
\end{proof}

Now, (\ref{minimalVelocity}) is an immediate corollary of the localization lemma \ref{localizationLemma}.

\section{Asymptotic completeness}
\subsection{Existence of the wave operators}

In this section our aim is to prove existence of a collection of Deift-Simon wave operators arising from operators $\lbrace F_a : \#(a)=2 \rbrace$ that form a partition of unity. Let $\bigtriangleup$ be an interval in the continuous spectrum of $H$ such that $\bigtriangleup<0$, let $\delta>0$ (where $\delta$ is selected based on $\bigtriangleup$ to make the minimal velocity estimates hold), and let $\epsilon>0$ be a very small positive constant so that $(\delta-\epsilon)>0$. We write $\delta' := \delta -\epsilon$. We define

$$F_{a} = F(\frac{x_a^2}{t^{2-\epsilon}}<\delta')$$

Then, we have the following.

\begin{lemma}\label{waveOpsExist}
Let $\psi$ be a wavefunction such that $(\| \psi \|^2 + \| |A|^{5/4} \psi \|^2)^\frac{1}{2}<\infty$. The following limits exist for every $a$ such that $\#(a)=2$:
\begin{equation}
\lim_{t\rightarrow \pm \infty} E_{\bigtriangleup}(H_a)e^{iH_a t} F_a e^{-iHt}E_{\bigtriangleup}(H)\psi
\end{equation}
\end{lemma}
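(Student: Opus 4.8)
The plan is to establish each limit by Cook's method. Fix a $2$-cluster decomposition $a$ and set $\Phi(t) := E_{\bigtriangleup}(H_a)e^{iH_a t}F_a e^{-iHt}E_{\bigtriangleup}(H)\psi$; it suffices to show $\int_1^\infty \|\frac{d}{dt}\Phi(t)\|\,dt<\infty$, the case $t\to-\infty$ being symmetric. Since the spectral projections are $t$-independent and $H=H_a+I_a$, differentiating and using that $F_a$ is a function of the external coordinate $x_a$ while the internal potential $I^a$ is a function of $x^a$ (so $[I^a,F_a]=0$ and $[H_a,F_a]=[H_0,F_a]$) gives
$$\frac{d}{dt}\Phi(t)=E_{\bigtriangleup}(H_a)e^{iH_a t}\Big(\partial_t F_a+i[H_0,F_a]-iF_aI_a\Big)e^{-iHt}E_{\bigtriangleup}(H)\psi .$$
Dropping the norm-one $e^{iH_a t}$ and the bounded $E_{\bigtriangleup}(H_a)$, I am reduced to showing that $\|(\partial_t F_a+i[H_0,F_a])e^{-iHt}E_{\bigtriangleup}(H)\psi\|$ and $\|F_aI_a e^{-iHt}E_{\bigtriangleup}(H)\psi\|$ are integrable in $t$ on $[1,\infty)$.

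For the first quantity, both $\partial_t F_a$ and $[H_0,F_a]$ are supported on the collar where $\chi'(x_a^2/t^{2-\epsilon})\neq0$, i.e. $x_a^2\sim\delta' t^{2-\epsilon}$, and carry prefactors of size $O(t^{-1})$ and $O(t^{-(1-\epsilon/2)})$ respectively; as these rates are not themselves summable, the missing decay must come from the propagation estimates. I would insert $1=F(X^2/t^{2-\epsilon}<\delta)+F(X^2/t^{2-\epsilon}\ge\delta)$ with $\delta'<\delta$: against the first summand the minimal velocity estimate (Lemma \ref{minimalVelocityWithA}, via the localization Lemma \ref{localizationLemma}) gives decay $\lesssim t^{-1}(\|\psi\|^2+\||A|^{5/4}\psi\|^2)^{1/2}$, which paired with the prefactors is integrable; against the second summand one has $X^2\gtrsim\delta t^{2-\epsilon}$ on the collar, so the cutoff is controlled by $\langle X\rangle^{-\mu}$ up to a matching power of $t$, and I would use the local decay estimate \eqref{swapgoal}, equivalently the $H$-smoothness of $\langle X\rangle^{-\mu}$ on $\overline{\bigtriangleup}$ in \eqref{hsmooth}, to turn the collar factor into an $L^2_t$ quantity and close by Cauchy--Schwarz. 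When $x_a$ is built from the photon position $y$, or for the cluster $a=(xy)(0)$ where $H_0$ does not split into an external plus an internal kinetic term, the operator $[H_0,F_a]$ is genuinely nonlocal and carries the singularity of $|k|$ at $k=0$; I would handle it as in Section 3, using $|k|=\frac{1}{\pi}\int_0^\infty \frac{s^{-1/2}}{s+k^2}k^2\,ds$ to reduce to commutators of $F_a$ with the benign multipliers $\frac{s^{-1/2}}{s+k^2}k^2$ plus an $s$-integral of integrable operator norms, exactly the mechanism used for \eqref{compact6}.

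For the intercluster term $\|F_aI_ae^{-iHt}E_{\bigtriangleup}(H)\psi\|$, I would exploit the negativity of $\bigtriangleup$. Writing $H_a$ fiberwise as the subsystem Hamiltonian $h_a$ acting in $x^a$ plus a nonnegative dispersive term in the external momentum $p_a$, the projection $E_{\bigtriangleup}(H_a)$ is fiberwise a finite-rank projection in $x^a$ supported on the negative eigenvalues of $h_a$, whose eigenfunctions decay exponentially by (\ref{FDE}). Each two-body summand of $I_a$ is a potential in a relative coordinate between particles in different clusters of $a$; with $\delta'$ chosen small, the support restriction $|x_a|\lesssim\sqrt{\delta'}t^{1-\epsilon/2}$ in $F_a$ together with the minimal velocity estimate (which forces $e^{-iHt}E_{\bigtriangleup}(H)\psi$ out of $\{|X|\lesssim t^{1-\epsilon/2}\}$) makes the relative coordinates not pinned by $x^a$ of size $\gtrsim t^{1-\epsilon/2}$, so by short range (\ref{SR}) those factors are $\lesssim t^{-(1+\mu_0)(1-\epsilon/2)}$, integrable once $\epsilon$ is small; the remaining summand (a potential in a coordinate that $F_a$ leaves free but which is tied to the bound cluster, e.g. a photon coordinate for $a=(y)(x0)$) I would absorb using the exponential localization of the eigenfunctions inside $E_{\bigtriangleup}(H_a)$, commuting $e^{iH_at}$ and $E_{\bigtriangleup}(H_a)$ past the cutoffs first, together with local decay.

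The step I expect to be the main obstacle is the commutator term $i[H_0,F_a]$. Unlike the non-relativistic $N$-body case, $H_0$ does not decouple into external and internal kinetic parts for the electron-photon cluster, and whenever the external coordinate involves the photon position $[H_0,F_a]$ is a nonlocal operator governed by the singular symbol $|k|$, so the one-line Cook estimate available for quadratic dispersions does not apply. Making this term integrable requires reconciling the time-dependent, sub-ballistic cutoff $F_a$ with the square-root representation of $|k|$ and with the cluster-adapted use of the minimal velocity and local decay estimates; I expect the delicate balance is choosing the exponent $2-\epsilon$ small enough that the $t$-integral of the collar term converges after the propagation bounds are applied, yet large enough that $F_a$ still localizes at a scale where the short-range decay of $I_a$ is effective.
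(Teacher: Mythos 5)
You set up the proof the same way the paper does: Cook's method, reducing Lemma \ref{waveOpsExist} to integrability in $t$ of the three pieces of the Heisenberg derivative, namely $F_a I_a$, $\partial_t F_a$, and $[H_0,F_a]$; your use of the square-root representation of $|k|$ for the nonlocal commutator, and your partition and use of (\ref{SR})/(\ref{FDE}) for the intercluster potential, also match the paper's mechanism. However, there is a genuine gap in your treatment of the collar terms $\partial_t F_a$ and $[H_0,F_a]$, and there is an upstream misreading of what $F_a$ localizes.

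The paper's proof uses $F_a$ as a cutoff in the \emph{internal} cluster coordinate $x^a$, not the external coordinate $x_a$ (see the explicit $F(\frac{x^2}{t^{2-\epsilon}}<\delta')$ for $a=(y)(x0)$ and $F(\frac{y^2}{t^{2-\epsilon}}<\delta')$ for $a=(x)(y0)$ in the argument of Lemma \ref{heisDerivsAreIntegrable}); the displayed definition with $x_a$, and the $\langle x_a\rangle^4$ in the statement of (\ref{FDE}), appear to be typos. This matters: with $F_a$ in $x^a$, the collar $\{ (x^a)^2\approx\delta' t^{2-\epsilon}\}$ is localized in exactly the variable that (\ref{FDE}) controls — $E_\bigtriangleup(H_a)\langle x^a\rangle^4$ bounded gives $\Vert E_\bigtriangleup(H_a)\,\partial_t F_a\Vert\lesssim t^{-1}\cdot t^{-4(1-\epsilon/2)}$, and similarly for $[H_0,F_a]$ — so \emph{no propagation estimate at all} is needed for these two terms. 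Your proposal instead takes $F_a$ in $x_a$ and then tries to extract the missing decay from a split $1=F(\frac{X^2}{t^{2-\epsilon}}<\delta)+F(\frac{X^2}{t^{2-\epsilon}}\ge\delta)$, using minimal velocity on the first piece and ``local decay via $\langle X\rangle^{-\mu}$ and Cauchy--Schwarz in $t$'' on the second. The first piece is fine. The second piece does not close: on $\{X^2\gtrsim \delta t^{2-\epsilon}\}$ the weight $\langle X\rangle^{-\mu}$ is \emph{small}, not large, and the factor you would need bounded, $\Vert (\partial_t F_a)\,F(\frac{X^2}{t^{2-\epsilon}}\ge\delta)\,\langle X\rangle^{\mu}\Vert$, is in fact unbounded (the direction transverse to the collar is unconstrained and $|X|$ can be arbitrarily large there). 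Local decay gives decay only on regions where $X$ is confined; it gives nothing on a collar where $|X|\gtrsim t^{1-\epsilon/2}$. The fix is exactly what the paper does: take $F_a$ in $x^a$, and use (\ref{FDE}) on the left to convert the collar localization $(x^a)^2\approx t^{2-\epsilon}$ into a negative power of $t$; once that is done, $\partial_t F_a$ and every term of $[H_0,F_a]$ — including the square-root terms for $[|k|,F_a]$, as in Lemmas \ref{917first}--\ref{917fourth} — are integrable on their own, and minimal velocity/local decay are reserved for the $F_a I_a$ term as in Lemma \ref{usingEverything}.
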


The necessary facts for the proof of Lemma \ref{waveOpsExist} are the minimal velocity estimate (\ref{minimalVelocity})), the \textbf{short range assumptions} for $\#(a)=2$:

\begin{equation}\label{SR}
\| F(\frac{x_a^2}{t^{2-\epsilon}}>c)I^a \| \text{ converges to 0 and is integrable in } t\tag{SR}
\end{equation}

where $c>0$ is any constant, and \textbf{fast decay of the eigenfunctions} for $\#(a)=2$:

\begin{equation}\label{FDE}
E_\bigtriangleup(H_a) \langle x_a \rangle^{4} \text{ is a bounded operator.}\tag{FDE}
\end{equation}

where $\bigtriangleup$ is an interval below $0$. The fast decay of the eigenfunctions is where we are using the negativity of $\bigtriangleup$; the above energy cutoffs are then projections onto eigenvalues of subsystems, so $(\ref{FDE})$ merely alleges that these eigenfunctions decay rapidly.\bigskip

The existence of the limits in Lemma \ref{waveOpsExist} is connected to time estimates on Heisenberg derivatives.

\begin{lemma}\label{heisDerivsAreIntegrable}
Let $\psi$ be a wavefunction such that $(\| \psi \|^2 + \| |A|^{5/4} \psi \|^2)^\frac{1}{2}<\infty$. The following are integrable functions of $t$ near $t=\infty$ for $\#(a)=2$:
\begin{equation}\label{heisDerivs1}
\Vert E_{\bigtriangleup}(H_a)e^{iH_a t} \Bigg( I_a F_a + \frac{d}{dt} F_a +[H_0,F_a]\Bigg) e^{-iHt}E_{\bigtriangleup}(H)\psi\Vert
\end{equation}
\end{lemma}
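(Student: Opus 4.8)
The plan is to prove the integrability term by term. Writing $\phi_t := e^{-iHt}E_{\bigtriangleup}(H)\psi$ and using that $e^{iH_at}$ is unitary and commutes with $E_{\bigtriangleup}(H_a)$, the quantity in (\ref{heisDerivs1}) equals $\|E_{\bigtriangleup}(H_a)\big(I_aF_a + \tfrac{d}{dt}F_a + [H_0,F_a]\big)\phi_t\|$, so by the triangle inequality it suffices to show that each of $\|E_{\bigtriangleup}(H_a)\tfrac{d}{dt}F_a\,\phi_t\|$, $\|E_{\bigtriangleup}(H_a)[H_0,F_a]\,\phi_t\|$, $\|E_{\bigtriangleup}(H_a)I_aF_a\,\phi_t\|$ is integrable near $t=\infty$ (the factors of $i$ and the sign of $I_aF_a$ coming from the actual Heisenberg derivative of $e^{iH_at}F_ae^{-iHt}$ are irrelevant once one passes to norms). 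Throughout I would insert a smooth energy localizer $g=g(H)$ with $g\,E_{\bigtriangleup}(H)=E_{\bigtriangleup}(H)$, so $g\phi_t=\phi_t$, using repeatedly that $p_a\,g(H)$ and $p^a g(H)$ are bounded; I would also use the factorization $E_{\bigtriangleup}(H_a)=\big(E_{\bigtriangleup}(H_a)\langle x_a\rangle^{4}\big)\langle x_a\rangle^{-4}$ furnished by (\ref{FDE}), together with the rapid (indeed exponential) spatial decay in the internal variable $x^a$ of the subsystem eigenfunctions onto which $E_{\bigtriangleup}(H_a)$ projects for $\bigtriangleup$ below zero, i.e. the boundedness of $E_{\bigtriangleup}(H_a)\langle x^a\rangle^{M}$ for every $M$.

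The two ``boundary'' terms $\tfrac{d}{dt}F_a$ and $[H_0,F_a]$ are the easier ones. The operators $F_a'$, $\nabla_{x_a}F_a$, $\Delta_{x_a}F_a$ are multiplication operators supported on the shell $x_a^2\asymp t^{2-\epsilon}$, with $\|\tfrac{d}{dt}F_a\|=O(t^{-1})$ and, after absorbing one power of $p_a$ (resp. $p^a$) into $g(H)$, $\|[H_0,F_a]g(H)\|=O(t^{-(1-\epsilon/2)})$. On that shell $\langle x_a\rangle^{-4}=O(t^{-4(1-\epsilon/2)})$, so the (\ref{FDE}) factorization yields $\|E_{\bigtriangleup}(H_a)\tfrac{d}{dt}F_a\|=O(t^{-5+2\epsilon})$ and $\|E_{\bigtriangleup}(H_a)[H_0,F_a]g(H)\|=O(t^{-5+\frac52\epsilon})$, both integrable for small $\epsilon$. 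One point needs care: for the clusters $a=(y)(x0)$ and $a=(xy)(0)$ the commutator $[H_0,F_a]$ contains the nonlocal piece $[|k|,F_a]$. Here I would follow Section 3.3, inserting $|k|=\tfrac1\pi\int_0^\infty \tfrac{s^{-1/2}k^2}{s+k^2}\,ds$ and estimating the $s$-integral of the commutators $[\tfrac{s^{-1/2}k^2}{s+k^2},F_a]$ exactly as there; the only novelty is that the weight $\langle x_a\rangle^{-4}$ must be commuted past the resolvents $\tfrac{s^{-1/2}}{s+k^2}$, which produces strictly faster-decaying weights ($\langle x_a\rangle^{-5}$ and beyond), while the shell localization of $\nabla F_a,\Delta F_a$ still supplies the factor $t^{-4(1-\epsilon/2)}$, so the nonlocality costs only a convergent $s$-integral and no power of $t$.

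For the intercluster term $I_aF_a$ I would split $I_a$ into its two-body summands. Each is a short-range two-body potential evaluated at an intercluster separation; on the support of $F_a$ this separation is forced to be large, directly for the summands whose argument is essentially $x_a$, and — for the summands whose argument mixes $x^a$ and $x_a$ — after using the $\langle x^a\rangle^{-M}$-localization from $E_{\bigtriangleup}(H_a)$: on $\{|x^a|\le\tfrac12|x_a|\}$ the separation is comparable to $|x_a|$, while on the complement $\langle x^a\rangle^{-M}=O(t^{-M(1-\epsilon/2)})$ dominates. Combined with the short-range assumption (\ref{SR}) this gives a bound $O(t^{-(1-\epsilon/2)(1+\nu)})$ for some $\nu>0$, integrable once $\epsilon$ is small; where the geometry of $\mathrm{supp}\,F_a$ alone does not yet place the state away from the small-separation region, I would first apply the minimal velocity estimate (\ref{minimalVelocity}) before invoking (\ref{SR}). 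Assembling the three bounds gives the claimed integrability near $t=\infty$.

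I expect the intercluster term to be the real obstacle. Unlike the boundary terms it does not reduce to a one-line operator-norm estimate, and obtaining a genuinely integrable (rather than merely $o(1)$, or merely square-integrable via local decay) bound requires carefully matching the geometry of $\mathrm{supp}\,F_a$ against the decay rates packaged in (\ref{SR}) and in the subsystem eigenfunctions — in particular handling the summands of $I_a$ that depend on the internal coordinate $x^a$ through the bound-state localization coming from $E_{\bigtriangleup}(H_a)$ rather than from $F_a$, and using the negativity of $\bigtriangleup$ precisely at that point. The secondary difficulty, already noted, is the nonlocality of $|k|$ in $[H_0,F_a]$ for the clusters $(y)(x0)$ and $(xy)(0)$, which forces the detour through the square-root representation of Section 3.3 and the associated bookkeeping of weights against resolvents.
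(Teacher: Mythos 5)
Your plan matches the paper's in its essential structure: a term-by-term estimate, using the fast decay (\ref{FDE}) of the subsystem eigenfunctions together with the shell localization of $F_a'$, $F_a''$ for the $\frac{d}{dt}F_a$ and $[H_0,F_a]$ terms; the square-root representation $|k|=\frac1\pi\int_0^\infty \frac{s^{-1/2}}{s+k^2}k^2\,ds$ (as in Lemmas \ref{917first}--\ref{917fourth}) for the nonlocal commutator $[|k|,F_a]$ in the clusters $(x)(y0)$ and $(xy)(0)$; and a combination of (\ref{SR}), (\ref{FDE}) and minimal velocity for the intercluster term $I_aF_a$. You also correctly flag the intercluster term as the genuine obstacle.

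The gap is exactly there, at the point you identify but do not resolve. After splitting off the region where the intercluster separation is $\gtrsim\sqrt{\epsilon}\,t^{1-\epsilon/2}$ (which (\ref{SR}) handles with an integrable rate), the complementary region has all coordinates of order $O(t^{1-\epsilon/2})$ and the potentials supply no smallness; your plan falls back on the minimal velocity estimate (\ref{minimalVelocity}), but that gives only $o(1)$, not integrability, so your claimed bound $O\bigl(t^{-(1-\epsilon/2)(1+\nu)}\bigr)$ does not cover this piece. The paper's proof of Lemma \ref{usingEverything} instead manufactures a weight $\langle X\rangle^{-(1+\epsilon)}$ from (\ref{FDE}) and the potential decay, then splits $1=F(\tfrac{A}{t}<b)+F(\tfrac{A}{t}>b)$: for the first factor it invokes the \emph{quantitative}, $A$-localized estimate of Lemma \ref{minimalVelocityWithA} (rate $t^{-5/4}$), and for the second it proves $\|\langle X\rangle^{-(1+\epsilon)}F(\tfrac{A}{t}>b)E_\bigtriangleup(H)\|=O(t^{-(1+\epsilon)})$ via a $\tanh$-smoothed weight-trading argument plus Stein interpolation. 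That mechanism, rather than an appeal to (\ref{minimalVelocity}), is the technical core of the lemma, and your proposal would need to supply it. A secondary, more cosmetic point: (\ref{SR}) as literally stated controls $F(\tfrac{x_a^2}{t^{2-\epsilon}}>c)I^a$, and (\ref{FDE}) uses the weight $\langle x_a\rangle^4$, whereas the proof needs the intercluster potential $I_a$ and the \emph{internal} weight $\langle x^a\rangle^4$; you mirror the paper's notation, so this is inherited rather than introduced, but you should fix it when you carry out the argument, since your dichotomy on $\{|x^a|\le\tfrac12|x_a|\}$ quietly relies on having the internal-variable decay available.
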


First, we note that Lemma \ref{heisDerivsAreIntegrable} implies Lemma \ref{waveOpsExist}. \bigskip

We want to show that $\Vert E_{\bigtriangleup}(H_a)e^{iH_a t_1} F_a e^{-iHt_1}E_{\bigtriangleup}(H)\psi - E_{\bigtriangleup}(H_a)e^{iH_a t_2} F_a e^{-iHt_2}E_{\bigtriangleup}(H)\psi\Vert$ can be made arbitrarily small by taking $t_1$ and $t_2$ large enough. We can rewrite:

\begin{align*}
&\Vert E_{\bigtriangleup}(H_a)e^{iH_a t_1} F_a e^{-iHt_1}E_{\bigtriangleup}(H)\psi - E_{\bigtriangleup}(H_a)e^{iH_a t_2} F_a e^{-iHt_2}E_{\bigtriangleup}(H)\psi\Vert \\
&= \sup_{\Vert \phi \Vert = 1} \Big| \langle E_{\bigtriangleup } (H_a) \phi , e^{iH_a t_1} F_a e^{-iHt_1}E_{\bigtriangleup}(H)\psi \rangle - \langle E_{\bigtriangleup  } (H_a) \phi,e^{iH_a t_2} F_a e^{-iHt_2}E_{\bigtriangleup}(H)\psi\rangle \Big| \\
&\le \sup_{\Vert \phi \Vert = 1} \int_{t_1}^{t_2} \Big|\frac{d}{dt} \langle E_{\bigtriangleup } (H_a)\phi, e^{iH_a t} F_a e^{-iHt}E_{\bigtriangleup}(H)\psi \rangle  \Big| dt \\
&\le \int_{t_1}^{t_2}\Vert E_{\bigtriangleup}(H_a)e^{iH_a t} \Bigg( I_a F_a + \frac{d}{dt} F_a +[H_0,F_a]\Bigg) e^{-iHt}E_{\bigtriangleup}(H)\psi\Vert dt\\
\end{align*}

We move on prove that (\ref{heisDerivs1}) is an integrable function of $t$ for $a=(y)(x0)$, the simplest case. We estimate term by term. Consider first the term containing $$I_a F_a= \left( V_{13}(y) + V_{23}(x-y)\right) F(\frac{x^2}{t^{2-\epsilon}}<\delta') $$

Since we can partition unity into $1 = F(\frac{y^2}{t^{2-\epsilon}}<\epsilon) + F(\frac{y^2}{t^{2-\epsilon}}>\epsilon)$, we have

$$I_a F_a = \left( V_{13}(y) + V_{23}(x-y)\right) \left( F(\frac{x^2}{t^{2-\epsilon}}<\delta')F(\frac{y^2}{t^{2-\epsilon}}<\epsilon) + F(\frac{x^2}{t^{2-\epsilon}}<\delta')F(\frac{y^2}{t^{2-\epsilon}}>\epsilon)\right) $$

\begin{lemma}\label{usingEverything}
The term

$$\Vert E_{\bigtriangleup}(H_a) \Bigg(\left( V_{13}(y) + V_{23}(x-y)\right) \left( F(\frac{x^2}{t^{2-\epsilon}}<\delta')F(\frac{y^2}{t^{2-\epsilon}}<\epsilon)\right) \Bigg) e^{-iHt}E_{\bigtriangleup}(H)\psi\Vert$$

 is integrable in $t$.
\end{lemma}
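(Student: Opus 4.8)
The plan is to use the fast decay of the subsystem eigenfunctions, (\ref{FDE}), to absorb the potentials against a polynomial weight $\langle X\rangle^{\mu}$, reducing the estimate to an integrable--in--time decay statement for the weighted evolution $\langle X\rangle^{-\mu}e^{-iHt}E_\bigtriangleup(H)\psi$, which is then supplied by the local decay estimate (\ref{swapgoal}) together with the second--commutator regularity. Recall that here $a=(y)(x0)$, so that $H_a=p^{2}+|k|+V_{12}(x)$ and, since $\bigtriangleup<0$ lies below all thresholds, $E_\bigtriangleup(H_a)$ is the fiber projection onto bound states of the electron--proton pair; these are eigenfunctions in the internal coordinate $x=x^{a}$, and for this cluster (\ref{FDE}) is precisely the statement that $E_\bigtriangleup(H_a)\langle x\rangle^{4}$ is a bounded operator.

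The first step is algebraic. Fix $\mu>1$ close enough to $1$ that $\langle z\rangle^{\mu}V_{13}(z)$ and $\langle z\rangle^{\mu}V_{23}(z)$ are bounded on $\mathbb{R}^{3}$ --- the short-range hypothesis (\ref{SR}) guarantees such a $\mu$ exists. Using $\langle X\rangle^{\mu}\lesssim\langle x\rangle^{\mu}+\langle y\rangle^{\mu}$ together with Peetre's inequality $\langle y\rangle\lesssim\langle x\rangle\langle x-y\rangle$, one checks that the multiplier $\langle x\rangle^{-4}\bigl(V_{13}(y)+V_{23}(x-y)\bigr)\langle X\rangle^{\mu}$ is bounded: each term that appears is a product of a bounded power of $\langle x\rangle$ (namely $\langle x\rangle^{-4}$ or $\langle x\rangle^{\mu-4}$, bounded because $\mu<4$) with one of $\langle y\rangle^{\mu}V_{13}(y)$, $\langle x-y\rangle^{\mu}V_{23}(x-y)$, $V_{13}(y)$ or $V_{23}(x-y)$, all bounded. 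Therefore
$$E_\bigtriangleup(H_a)\bigl(V_{13}(y)+V_{23}(x-y)\bigr)\langle X\rangle^{\mu}=\bigl(E_\bigtriangleup(H_a)\langle x\rangle^{4}\bigr)\Bigl(\langle x\rangle^{-4}\bigl(V_{13}(y)+V_{23}(x-y)\bigr)\langle X\rangle^{\mu}\Bigr)$$
is a bounded operator whose norm does not depend on $t$. Inserting $\langle X\rangle^{\mu}\langle X\rangle^{-\mu}=1$ into the operator appearing in the Lemma, and using that the two configuration-space cutoffs commute with $\langle X\rangle^{-\mu}$ and have norm $\le 1$, one obtains that the term of the Lemma is bounded by a $t$-independent constant times $\bigl\|\langle X\rangle^{-\mu}e^{-iHt}E_\bigtriangleup(H)\psi\bigr\|$.

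It thus suffices to show that $\bigl\|\langle X\rangle^{-\mu}e^{-iHt}E_\bigtriangleup(H)\psi\bigr\|$ is integrable in $t$ near $+\infty$ for $\psi$ in the class of Lemma \ref{heisDerivsAreIntegrable}, and this is the main obstacle. Since $\bigtriangleup<0$ contains neither eigenvalues nor thresholds of $H$, the local decay estimate (\ref{swapgoal}) applies and gives that $\langle X\rangle^{-\mu}$ is $H$-smooth on $\overline{\bigtriangleup}$, so $\bigl\|\langle X\rangle^{-\mu}e^{-iHt}E_\bigtriangleup(H)\psi\bigr\|$ lies in $L^{2}(dt)$ --- which is not quite enough. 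To upgrade $L^{2}(dt)$ to $L^{1}(dt)$ I would run the standard argument that improves the limiting absorption principle under the second--commutator hypothesis (\ref{2COMM}): differentiating the boundary values of the resolvent once more in the spectral variable yields, for $1<\mu<2$, a pointwise rate $\bigl\|\langle X\rangle^{-\mu}e^{-iHt}E_\bigtriangleup(H)\psi\bigr\|\lesssim t^{-\mu}\bigl(\|\psi\|^{2}+\||A|^{5/4}\psi\|^{2}\bigr)^{1/2}$, and $t^{-\mu}$ is integrable because $\mu>1$. (An alternative is to retain the cutoff $F(\tfrac{X^{2}}{t^{2-\epsilon}}<\delta)$ that is implicit in the term of the Lemma and to combine the quantitative minimal velocity bound of Lemma \ref{minimalVelocityWithA} with the localization Lemma \ref{localizationLemma}; in either route the delicate point is to produce a genuinely integrable decay rate rather than the mere vanishing of (\ref{minimalVelocity}) or the square-integrability of (\ref{swapgoal}).) Once such a bound is available the Lemma is immediate.
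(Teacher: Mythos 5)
Your reduction is the same as the paper's: use (\ref{FDE}) to replace $E_\bigtriangleup(H_a)$ by a bounded operator times $\langle x\rangle^{-4}$, then use (\ref{SR}) to absorb $V_{13}(y)+V_{23}(x-y)$ and produce a weight $\langle X\rangle^{-\mu}$ with $\mu$ slightly larger than $1$. The gap is in the next step, which is the actual content of the lemma.

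You try to conclude by asserting a pointwise decay rate $\|\langle X\rangle^{-\mu}e^{-iHt}E_\bigtriangleup(H)\psi\|\lesssim t^{-\mu}$ obtained from ``differentiating the boundary values of the resolvent once more in the spectral variable'' under (\ref{2COMM}). This is not justified and is almost certainly unavailable here. The condition (\ref{2COMM}) is a $C^2(A)$ hypothesis, and the Jensen--Mourre--Perry--type improvement it buys gives at most differentiability of order one of $\langle A\rangle^{-s}(H-\lambda\mp i0)^{-1}\langle A\rangle^{-s}$ and a decay rate no better than $t^{-1}$ (and then only for $s>3/2$), which is not integrable; moreover, the paper only establishes the $A\!\leftrightarrow\!X$ weight swap (\ref{swapfact}) for $0\le\mu\le1$, so the passage from $\langle A\rangle$ weights to $\langle X\rangle$ weights with $\mu>1$ would itself need a separate argument. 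Finally, the loss term $\||A|^{5/4}\psi\|$ you write down belongs to the propagation estimate of Lemma \ref{minimalVelocityWithA}, not to a limiting-absorption argument; you are mixing two different mechanisms.

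The paper does something else: after absorbing the potentials, it keeps the configuration-space cutoff $F(\tfrac{X^2}{t^{2-\epsilon}}<\delta)$ and splits with $1=F(\tfrac{A}{t}<b)+F(\tfrac{A}{t}\ge b)$. On the region $F(\tfrac{A}{t}<b)$ it invokes the quantitative minimal velocity bound (\ref{minimalVelocityWithA2}) to get $O(t^{-5/4})$. On the region $F(\tfrac{A}{t}\ge b)$ it shows the operator $\langle X\rangle^{-\alpha}F(\tfrac{A}{t}\ge b)E_\bigtriangleup(H)$ has norm $O(t^{-\alpha})$ for $\alpha=1,2$ (trading $t^{-\alpha}A^\alpha$ for the cutoff, controlling $\langle X\rangle^{-\alpha}F_3\langle X\rangle^{\alpha}$ via the analyticity of $\tanh$, and using $\langle P\rangle^\alpha E_\bigtriangleup(H)$ bounded), then interpolates to $\alpha=1+\epsilon$. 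You gesture at this route in your parenthetical remark, but it is where the real work lies and you have not carried it out. If you want to repair your primary route, you would need either more commutator regularity than the paper assumes, or you should follow the $A/t$-splitting strategy and supply the interpolation argument.
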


The proof of Lemma (\ref{usingEverything}) is as follows. By fast decay of the eigenfunctions, it suffices to prove that

$$\Vert <x>^{-4} \Bigg(\left( V_{13}(y) + V_{23}(x-y)\right) \left( F(\frac{x^2}{t^{2-\epsilon}}<\delta')F(\frac{y^2}{t^{2-\epsilon}}<\epsilon)\right) \Bigg) e^{-iHt}E_{\bigtriangleup}(H)\psi\Vert$$

 is integrable in $t$. But then by the short range assumption, it is sufficient to prove that

 $$\Vert <X>^{-(1+\epsilon)} \left( F(\frac{x^2}{t^{2-\epsilon}}<\delta')F(\frac{y^2}{t^{2-\epsilon}}<\epsilon)\right) \Bigg) e^{-iHt}E_{\bigtriangleup}(H)\psi\Vert$$

is integrable in $t$. But this is true if

$$\Vert <X>^{-(1+\epsilon)} F(\frac{X^2}{t^{2-\epsilon}}<\delta)  e^{-iHt}E_{\bigtriangleup}(H)\psi\Vert$$

is integrable in $t$. So it is sufficient to see if

$$ \Vert <X>^{-(1+\epsilon)} F(\frac{A}{t}<b) E_{\bigtriangleup}(H)e^{-iHt} \psi \Vert$$

$$ \Vert <X>^{-(1+\epsilon)}  F(\frac{A}{t}> b) E_{\bigtriangleup}(H)e^{-iHt} \psi \Vert$$

are both integrable in $t$. Evidently the former is integrable in $t$, from Lemma \ref{minimalVelocityWithA}. The latter is proven integrable in $t$ as follows. \bigskip

We can prove $$ \Vert <X>^{-(1+\epsilon)}  F(\frac{A}{t}>b) E_{\bigtriangleup}(H)\Vert$$

is $O(t^{-\alpha})$ for $\alpha=1,2$ and then use complex interpolation to conclude. We write

\begin{align*}
\Vert <X>^{-\alpha}  F(\frac{A}{t}>b) E_{\bigtriangleup}(H)\Vert \lesssim \frac{1}{t^\alpha} \Vert <X>^{-\alpha}  F(\frac{A}{t}>b)  A^\alpha E_{\bigtriangleup}(H)\Vert
\end{align*}

Since the difference between $\tanh(b-\frac{A}{t})+1$ and $F(\frac{A}{t}>b)$ decays fast at infinity and therefore $$\big( \tanh(b-\frac{A}{t})+1 - F(\frac{A}{t}>b)\big) A^\alpha$$ is bounded, it is sufficient to prove that

$$\Vert <X>^{-\alpha}  \big( \tanh(b-\frac{A}{t})+1\big)  A^\alpha E_{\bigtriangleup}(H)\Vert$$

is bounded. We let $F_3=  F_3(A) = \big( \tanh(b-\frac{A}{t})+1\big)$. Then since $<P>^\alpha E_\bigtriangleup(H)$ is bounded, it is sufficient to prove

$$\Vert <X>^{-\alpha}  F_3  <X>^\alpha \Vert$$

is a bounded operator. In what follows we use the fact that $F_3$ is analytic in a strip of width greater than 1 and containing the real line.

\begin{align*}
 <X>^{-\alpha}  F_3  <X>^\alpha &= (bounded) + [<X>^{-\alpha},  F_3 ]  <X>^\alpha\\
 &= <X>^{-\alpha} [<X>^{\alpha},  F_3] \\
 &\approx <X>^{-\alpha} \int \int_0^\lambda e^{i s A/t} [<X>^{\alpha},  \frac{A}{t}] e^{-i s A/t} \hspace{.1cm} ds \hspace{.1cm} e^{i \lambda A/t} \widehat{F_3}(\lambda) \hspace{.1cm} d\lambda \\
 &\approx <X>^{-\alpha} \int \int_0^\lambda e^{i s A/t} \frac{<X>^{\alpha}}{t} e^{-i s A/t} \hspace{.1cm} ds \hspace{.1cm} e^{i \lambda A/t} \widehat{F_3}(\lambda) \hspace{.1cm} d\lambda \\
 &\approx <X>^{-\alpha} \int \int_0^\lambda e^{i s/t} \frac{<X>^{\alpha}}{t}  \hspace{.1cm} ds \hspace{.1cm} e^{i \lambda A/t} \widehat{F_3}(\lambda) \hspace{.1cm} d\lambda \\
 &\approx <X>^{-\alpha} \int (e^{i \lambda/t}-1) <X>^{\alpha}   \hspace{.1cm} e^{i \lambda A/t} \widehat{F_3}(\lambda) \hspace{.1cm} d\lambda \\
  &\approx \int (e^{i \lambda/t}-1) \hspace{.1cm} e^{i \lambda A/t} \widehat{F_3}(\lambda) \hspace{.1cm} d\lambda \\
  &\approx F_3(A+i) - F_3(A)\\
\end{align*}

This is bounded, so using Stein's interpolation theorem, we get the desired result. Moving on,

 $$\Vert E_{\bigtriangleup}(H_a) \Bigg(\left( V_{13}(y) + V_{23}(x-y)\right) \left( F(\frac{x^2}{t^{2-\epsilon}}<\delta')F(\frac{y^2}{t^{2-\epsilon}}>\epsilon)\right) \Bigg) e^{-iHt}E_{\bigtriangleup}(H)\psi\Vert$$

 is integrable in $t$ because of the short range assumption (\ref{SR}). \bigskip

Second, consider the term containing $\frac{d}{dt} F(\frac{x^2}{t^{2-\epsilon}}<\delta')$, which is a constant times $\frac{x^2}{t^{3-\epsilon}}F'(\frac{x^2}{t^{2-\epsilon}}<\delta')$. Since from (\ref{FDE}) $E_\bigtriangleup(H_a)$ is a bounded operator times $\langle x \rangle^{-4}$, the term

 $$\Vert E_{\bigtriangleup}(H_a) \Bigg( \frac{x^2}{t^{3-\epsilon}}F'(\frac{x^2}{t^{2-\epsilon}}<\delta') \Bigg) e^{-iHt}E_{\bigtriangleup}(H)\psi\Vert$$

decays in $t$ at least as fast as $t^{-(5-2\epsilon)}$ (and so is integrable). This is because $F'(\frac{x^2}{t^{2-\epsilon}}<\delta')$ is supported only on the configuration space region where $x^2\approx {t^{2-\epsilon}}$; we will use this fact repeatedly.\bigskip

Third, consider $[H_0,F(\frac{x^2}{t^{2-\epsilon}}<\delta')]$, which can be rewritten as $[p^2,F(\frac{x^2}{t^{2-\epsilon}}<\delta')]$ and is therefore equal to $$-2i p \cdot F'(\frac{x^2}{t^{2-\epsilon}}<\delta')\frac{2x}{t^{2-\epsilon}} -\bigtriangleup (F(\frac{x^2}{t^{2-\epsilon}}<\delta'))$$

 $$= -2i p \cdot F'(\frac{x^2}{t^{2-\epsilon}}<\delta')\frac{2x}{t^{2-\epsilon}} -F''(\frac{x^2}{t^{2-\epsilon}}<\delta'))\frac{4x^2}{t^{2(2-\epsilon)}} - F'(\frac{x^2}{t^{2-\epsilon}}<\delta')\frac{2}{t^{2-\epsilon}}$$

This in mind, the following are integrable in $t$ by (\ref{FDE}).

$$\Vert E_{\bigtriangleup}(H_a)\Bigg(p \cdot F'(\frac{x^2}{t^{2-\epsilon}}<\delta')\frac{2x}{t^{2-\epsilon}} \Bigg) e^{-iHt}E_{\bigtriangleup}(H)\psi\Vert$$
$$\Vert E_{\bigtriangleup}(H_a) \Bigg(F''(\frac{x^2}{t^{2-\epsilon}}<\delta'))\frac{4x^2}{t^{2(2-\epsilon)}} \Bigg) e^{-iHt}E_{\bigtriangleup}(H)\psi\Vert$$
$$\Vert E_{\bigtriangleup}(H_a) \Bigg(F'(\frac{x^2}{t^{2-\epsilon}}<\delta')\frac{2}{t^{2-\epsilon}} \Bigg) e^{-iHt}E_{\bigtriangleup}(H)\psi\Vert$$

Therefore so is the whole term

$$\Vert E_{\bigtriangleup}(H_a) \Bigg([H_0,F(\frac{x^2}{t^{2-\epsilon}}<\delta')] \Bigg) e^{-iHt}E_{\bigtriangleup}(H)\psi\Vert$$

and the proof is finished for the case $a=(y)(x0)$. We move on to the case $a=(x)(y0)$. Again we estimate term by term. The estimate of the first two terms (those containing $I_a F_a$ and $\frac{d}{dt}F_a$) are so similar to the previous case that we omit them. We investigate in detail the term containing $[H_0,F(\frac{y^2}{t^{2-\epsilon}}<\delta')]$, which can be rewritten as $[|k|,F(\frac{x^2}{t^{2-\epsilon}}<\delta']$. The approach here is to use the square root expression for $|k|$. We build up to this with a series of lemmas.

\begin{lemma}\label{917first}
The following operator-valued integral converges in norm to a bounded operator.
$$\int_0^\infty \langle y \rangle^{-4} \frac{s^{-1/2}}{s+k^2}ds$$
\end{lemma}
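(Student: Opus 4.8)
The plan is to treat the integrand as an operator-valued function $T(s) := \langle y \rangle^{-4}\, \frac{s^{-1/2}}{s+k^2}$ on $L^2(\mathbb{R}^3_y)$ and to show that $s \mapsto \Vert T(s)\Vert_{L^2 \to L^2}$ is integrable on $(0,\infty)$. Granting this, and noting that $T(s)$ is norm-continuous in $s$ (the Fourier multiplier $\frac{s^{-1/2}}{s+k^2}$ has operator norm equal to its $L^\infty(\mathbb{R}^3_k)$ norm, which is continuous in $s$, while $\langle y\rangle^{-4}$ is a fixed bounded multiplication operator), the Bochner integral $\int_0^\infty T(s)\,ds$ exists and converges in operator norm to a bounded operator of norm at most $\int_0^\infty \Vert T(s)\Vert\, ds$, exactly as in the discussion around (\ref{schwartz6}) (cf. \cite{berger}). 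I will bound $\Vert T(s)\Vert$ separately on $[1,\infty)$ and on $(0,1]$, splitting the integral accordingly.

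On $[1,\infty)$ the crude bound $\Vert T(s)\Vert \le \Vert \langle y\rangle^{-4}\Vert_\infty \, \Vert \frac{s^{-1/2}}{s+k^2}\Vert_{L^\infty(\mathbb{R}^3_k)}$ suffices: the multiplier is maximized at $k=0$, so $\Vert \frac{s^{-1/2}}{s+k^2}\Vert_\infty = s^{-3/2}$ and $\Vert T(s)\Vert \lesssim s^{-3/2}$, which is integrable near $s=\infty$.

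On $(0,1]$ this $L^\infty$ bound is not integrable, so — precisely as in the estimates of the integrands (\ref{integrand1})--(\ref{integrand4}) — one trades $L^\infty$ in the momentum variable for $L^2$, paying with the decay of the weight $\langle y\rangle^{-4}$. Passing to the adjoint, whose operator norm is the same and which equals $\frac{s^{-1/2}}{s+k^2}\langle y\rangle^{-4}$ (both factors being self-adjoint), an elementary boundedness estimate of the type in Lemma \ref{elembound} (which holds in any dimension) with $\ell = 2$ gives $\Vert T(s)\Vert \le \Vert \frac{s^{-1/2}}{s+k^2}\Vert_{L^2(\mathbb{R}^3_k)}\, \Vert \langle y\rangle^{-4}\Vert_{L^2(\mathbb{R}^3_y)}$, and the latter factor is finite. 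For the former, the substitution $k=\sqrt{s}\,u$ gives $\Vert \frac{s^{-1/2}}{s+k^2}\Vert_{L^2(\mathbb{R}^3_k)}^2 = s^{-3/2}\int_{\mathbb{R}^3}(1+u^2)^{-2}\,du$ with the remaining integral convergent, so $\Vert T(s)\Vert \lesssim s^{-3/4}$, which is integrable near $s=0$. Combining the two ranges, $\int_0^\infty \Vert T(s)\Vert\,ds < \infty$ and the lemma follows.

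I do not expect a genuine obstacle; the only point requiring care is the $s\to 0$ endpoint, and it is resolved by the same elementary-boundedness / Hausdorff--Young mechanism already exploited in Section 3. Note that no single exponent $\ell$ can handle both endpoints (integrability near $0$ forces $\ell<3$, integrability near $\infty$ forces $\ell>3$), which is why the integral must be split; the choice $\ell=2$ near $0$ and $\ell=\infty$ near $\infty$ is convenient and far from optimal.
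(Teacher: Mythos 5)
Your proof is correct, but it handles the $s\to 0$ endpoint by a route different from the paper's. The paper writes $\langle y\rangle^{-4}\frac{s^{-1/2}}{s+k^2} = \big(\langle y\rangle^{-4}|k|^{-6/5}\big)\cdot\big(|k|^{6/5}\frac{s^{-1/2}}{s+k^2}\big)$, invokes a Hardy--Littlewood--Sobolev estimate to show $\langle y\rangle^{-6/5}|k|^{-6/5}$ (hence $\langle y\rangle^{-4}|k|^{-6/5}$) is bounded on $L^2(\mathbb{R}^3)$, and then uses an $L^\infty$ bound on the remaining multiplier, $\Vert |k|^{6/5}\frac{s^{-1/2}}{s+k^2}\Vert_\infty\lesssim s^{-9/10}$. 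You instead apply the elementary boundedness estimate (Lemma~\ref{elembound} with $\ell=2$, i.e.\ the Hilbert--Schmidt bound) directly, since $\frac{s^{-1/2}}{s+k^2}\in L^2(\mathbb{R}^3_k)$ and $\langle y\rangle^{-4}\in L^2(\mathbb{R}^3_y)$, yielding the slightly sharper rate $s^{-3/4}$ without any appeal to HLS. Both approaches split the integral at a finite point and use the crude $L^\infty$ bound $s^{-3/2}$ near infinity; your treatment of the $s\to 0$ endpoint is simpler and more self-contained, whereas the paper's approach (peeling off a power of $|k|$ via HLS) is the same device the paper has already set up for the estimates (\ref{integrand1})--(\ref{integrand4}), where the extra flexibility in the choice of exponent is actually needed, so there it is being reused for uniformity rather than because it is essential here.
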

\begin{proof}
We have that $\langle y \rangle^{-4}$ is a bounded operator, and $$ \| \frac{s^{-1/2}}{s+k^2} \|_{op} \approx s^{-3/2}$$, so the integrand has sufficient decay near $s=\infty$. To estimate the integrand near $s=0$, we use Hardy-Littlewood-Sobolev (where $6/5$ is a non-optimal choice):
$$ \langle y \rangle^{-4} \frac{s^{-1/2}}{s+k^2} = \langle y \rangle^{-4} \frac{1}{|k|^{6/5}} |k|^{6/5} \frac{s^{-1/2}}{s+k^2}$$
Since $\langle y \rangle^{-6/5} \frac{1}{|k|^{6/5}}$ is a bounded operator on $L^2(\mathbb{R}^3)$ and $\| |k|^{6/5} \frac{s^{-1/2}}{s+k^2} \|_{op} \lesssim s^{-9/10}$, we have our conclusion.
\end{proof}

\begin{lemma}\label{917second}
The following operator-valued integrals converge to bounded operators.
\begin{equation}\label{9175}
\int_0^\infty \langle y \rangle^{-4}  \frac{s^{-1/2}}{s+k^2} 2y \cdot q E_\bigtriangleup(H) ds
\end{equation}
\begin{equation}\label{9176_1}
\int_0^\infty \langle y \rangle^{-4} \frac{s^{-1/2}}{s+k^2} y^2 E_\bigtriangleup(H) ds
\end{equation}
\begin{equation}\label{9176_2}
\int_0^\infty \langle y \rangle^{-4} \frac{s^{-1/2}}{s+k^2} E_\bigtriangleup(H) ds
\end{equation}
\begin{equation}\label{9177}
\int_0^\infty \langle y \rangle^{-4}  \frac{s^{-1/2}}{s+k^2}2y \cdot k \frac{k^2}{s+k^2}E_\bigtriangleup(H) ds
\end{equation}
\begin{equation}\label{9178_1}
\int_0^\infty \langle y \rangle^{-4} \frac{s^{-1/2}}{s+k^2}\left( y^2\right) \frac{k^2}{s+k^2}E_\bigtriangleup(H) ds
\end{equation}
\begin{equation}\label{9178_2}
\int_0^\infty \langle y \rangle^{-4} \frac{s^{-1/2}}{s+k^2} \frac{k^2}{s+k^2}E_\bigtriangleup(H) ds
\end{equation}
\end{lemma}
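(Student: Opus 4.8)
The plan is to bound the operator norm of each of the six integrands as a function of $s$, show it is an $L^1(0,\infty)$ function, and conclude (cf. \cite{berger}, exactly as in the proof of Lemma \ref{917first}) that each operator-valued integral converges in operator norm to a bounded operator. I would split $(0,\infty)=(0,1]\cup[1,\infty)$ and record three uniform facts used repeatedly: (i) $\|\tfrac{k^2}{s+k^2}\|_{L^2\to L^2}=\sup_k\tfrac{k^2}{s+k^2}=1$ for all $s>0$; (ii) the multiplication operators $\langle y\rangle^{-4}y_j$ and $\langle y\rangle^{-4}|y|^2$ are bounded on $L^2(\mathbb{R}^3_y)$; (iii) since $H_0$ is $H$-bounded and $\bigtriangleup$ is bounded, $H_0E_\bigtriangleup(H)$ is bounded, so $|k|E_\bigtriangleup(H)$, $pE_\bigtriangleup(H)$ and $qE_\bigtriangleup(H)$ are bounded (here $q$ is the first-order momentum operator appearing in (\ref{9175}), which is controlled by $E_\bigtriangleup(H)$ by the same argument as $|k|$).

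On $[1,\infty)$ the estimate is routine: $\|\tfrac{s^{-1/2}}{s+k^2}\|_{L^2\to L^2}=s^{-3/2}$, so after moving the position weight ($2y\cdot q$, $y^2$, $2y\cdot k$, or none) to the left of the resolvent using $[\tfrac{s^{-1/2}}{s+k^2},y_j]=\tfrac{2is^{-1/2}k_j}{(s+k^2)^2}$ and the analogous second-order formula, the weight $y$ or $y^2$ sits next to $\langle y\rangle^{-4}$ and is absorbed by (ii); each commutator term is dominated, up to bounded factors, by $\tfrac{s^{-1/2}}{s+k^2}\cdot\tfrac{k^2}{s+k^2}$ and hence is again $O(s^{-3/2})$; and any loose momentum is kept adjacent to $E_\bigtriangleup(H)$ and absorbed by (iii). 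So every integrand is $\lesssim s^{-3/2}$ on $[1,\infty)$, which is integrable.

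The real content is the behaviour as $s\to0$, where $\tfrac{s^{-1/2}}{s+k^2}$ is singular at $k=0$ — this is precisely the infrared difficulty from the massless dispersion — and the plain bound $s^{-3/2}$ is no longer integrable. Here I would extract decay by pairing $\tfrac{s^{-1/2}}{s+k^2}$ with $\langle y\rangle^{-4}$ as in Lemma \ref{917first}: fix $\beta\in(1,\tfrac32)$ and write $\langle y\rangle^{-N}\tfrac{s^{-1/2}}{s+k^2}=\bigl(\langle y\rangle^{-N}|k|^{-\beta}\bigr)\bigl(|k|^{\beta}\tfrac{s^{-1/2}}{s+k^2}\bigr)$, where $\langle y\rangle^{-N}|k|^{-\beta}$ is bounded for every $N>\beta$ by Hardy--Littlewood--Sobolev (Cor.~5.10 of \cite{liebloss}), while $\||k|^{\beta}\tfrac{s^{-1/2}}{s+k^2}\|_{L^2\to L^2}=\sup_k\tfrac{s^{-1/2}|k|^{\beta}}{s+k^2}\approx s^{(\beta-3)/2}$, integrable near $0$ since $\beta>1$. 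This handles (\ref{9176_2}) directly and (\ref{9178_2}) after using (i). For (\ref{9175}) and (\ref{9177}) I would commute the weight $y$ to the left as above: the main term takes the shape $(\langle y\rangle^{-4}y)\,\tfrac{s^{-1/2}}{s+k^2}\,(kE_\bigtriangleup(H))$ (times $\tfrac{k^2}{s+k^2}$ for (\ref{9177})), to which the $\beta$-split applies with $N=3$ while $kE_\bigtriangleup(H)$ and $\tfrac{k^2}{s+k^2}$ are absorbed by (iii) and (i); the commutator term reduces, up to bounded factors, to $\langle y\rangle^{-4}\tfrac{s^{-1/2}}{s+k^2}\tfrac{k^2}{s+k^2}E_\bigtriangleup(H)$, already estimated. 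The delicate point is (\ref{9176_1}) and (\ref{9178_1}): here the weight is a bare $y^2$ with no compensating momentum, and in the application it always appears multiplied by the derivative cutoff $F''(\tfrac{y^2}{t^{2-\epsilon}}<\delta')$, which confines the support to $\{y^2\lesssim t^{2-\epsilon}\}$ and thus replaces $y^2$ by a bounded operator of norm $O(t^{2-\epsilon})$; with that understood, these two integrals reduce to (\ref{9176_2}) and (\ref{9178_2}) at the cost of an $s$-independent constant.

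Combining the two regimes, each integrand is an $L^1(0,\infty)$ function of $s$, giving the six bounded operators. I expect the small-$s$ estimate of the third paragraph to be the main obstacle — it is exactly where the non-quadratic, infrared-singular kinetic term $|k|$ bites — and it is overcome by the Hardy--Littlewood--Sobolev mechanism of Lemma \ref{917first}, the only subtlety being to keep careful track of the order in which the resolvent $\tfrac{s^{-1/2}}{s+k^2}$, the position weights $y$ and $y^2$, the momentum factors, and the energy cutoff $E_\bigtriangleup(H)$ are arranged so that the weights land against $\langle y\rangle^{-4}$ and the momenta land against $E_\bigtriangleup(H)$.
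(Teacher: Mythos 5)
Your overall route --- commuting the $y$ weights leftward so they pair with $\langle y\rangle^{-4}$, absorbing stray momenta into $E_\bigtriangleup(H)$ or into the bounded multiplier $\tfrac{k^2}{s+k^2}$, splitting $(0,\infty)$ at $s=1$, and using Hardy--Littlewood--Sobolev for the small-$s$ regime --- is exactly the paper's route: the paper's entire proof of this lemma is the one-line instruction to commute all $y$ to the left and apply Hardy--Littlewood--Sobolev as in Lemma \ref{917first}.

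Where you go astray is in your treatment of (\ref{9176_1}) and (\ref{9178_1}). You claim the bare $y^2$ weight cannot be controlled within the lemma itself and must be cut off by the factor $F''(\tfrac{y^2}{t^{2-\epsilon}}<\delta')$ supplied by the application, replacing $y^2$ by an operator of norm $O(t^{2-\epsilon})$. This reading is not tenable: the lemma is a standalone statement in which no $F''$ and no $t$ appear, and an $O(t^{2-\epsilon})$ bound grows rather than being uniform, so even in the application it would only be rescued by the separate prefactor $\tfrac{1}{t^{2(2-\epsilon)}}$ that shows up in Lemma \ref{917third} --- an ingredient that is not available here. The correct and much simpler point is that you have not yet finished commuting: once $y^2$ is moved all the way to the left, the main term carries the factor $\langle y\rangle^{-4}y^2 = \tfrac{y^2}{(1+y^2)^2}$, which is a bounded multiplication operator decaying like $|y|^{-2}$. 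That decay is stronger than the $|y|^{-7/5}$ the paper uses in its Hardy--Littlewood--Sobolev pairing with $|k|^{-6/5}$, so (\ref{9176_1}) and (\ref{9178_1}) converge by precisely the same mechanism as (\ref{9176_2}) and (\ref{9178_2}), with no reference to the application. The commutator terms from moving $y^2$ past $\tfrac{s^{-1/2}}{s+k^2}$ leave at most one residual $y$, which again sits next to $\langle y\rangle^{-4}$ and supplies ample decay for the same HLS pairing. In short: the $y^2$ is not a ``bare'' weight with nothing to compensate it; its compensator is the $\langle y\rangle^{-4}$ already sitting on the left, exactly as for the $y$ and $2y\cdot k$ weights in the other four integrals.
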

\begin{proof}
In all cases, commute all $y$ all the way to the left and use Hardy-Littlewood-Sobolev if necessary, as in the proof of Lemma \ref{917first}.
\end{proof}

\begin{lemma}\label{917third}
The following operator-valued integrals converge to bounded operators, which are integrable in $t$ near $t=\infty$.
\begin{equation}\label{9171}
\int_0^\infty \langle y \rangle^{-4}  \frac{s^{-1/2}}{s+k^2}F'(\frac{y^2}{t^{2-\epsilon}}<\delta') \frac{2y}{t^{2-\epsilon}}\cdot k E_\bigtriangleup(H) ds
\end{equation}
\begin{equation}\label{9172}
\int_0^\infty \langle y \rangle^{-4} \frac{s^{-1/2}}{s+k^2}\left( F''(\frac{y^2}{t^{2-\epsilon}}<\delta') \frac{4y^2}{t^{2(2-\epsilon)}} + F'(\frac{y^2}{t^{2-\epsilon}}<\delta') \frac{2}{t^{2-\epsilon}}\right) E_\bigtriangleup(H) ds
\end{equation}
\begin{equation}\label{9173}
\int_0^\infty \langle y \rangle^{-4}  \frac{s^{-1/2}}{s+k^2}F'(\frac{y^2}{t^{2-\epsilon}}<\delta') \frac{2y}{t^{2-\epsilon}}\cdot k \frac{k^2}{s+k^2}E_\bigtriangleup(H) ds
\end{equation}
\begin{equation}\label{9174}
\int_0^\infty \langle y \rangle^{-4} \frac{s^{-1/2}}{s+k^2}\left( F''(\frac{y^2}{t^{2-\epsilon}}<\delta') \frac{4y^2}{t^{2(2-\epsilon)}} + F'(\frac{y^2}{t^{2-\epsilon}}<\delta') \frac{2}{t^{2-\epsilon}}\right) \frac{k^2}{s+k^2}E_\bigtriangleup(H) ds
\end{equation}
\end{lemma}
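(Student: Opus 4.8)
The plan is to handle the four integrals (\ref{9171})-(\ref{9174}) by one mechanism, recognizing them as precisely the terms produced when $\langle y \rangle^{-4}\,\big[\,|k|,\,F(\tfrac{y^2}{t^{2-\epsilon}}<\delta')\,\big]\,E_\bigtriangleup(H)$ is expanded using the square root formula $|k| = \tfrac1\pi\int_0^\infty \tfrac{s^{-1/2}}{s+k^2}k^2\,ds$ together with the commutator identity (\ref{schwartz5}) applied with $G(y) := F(\tfrac{y^2}{t^{2-\epsilon}}<\delta')$, so that $\bigtriangledown_y G = F'(\tfrac{y^2}{t^{2-\epsilon}}<\delta')\tfrac{2y}{t^{2-\epsilon}}$ and $\bigtriangleup_y G = F''(\tfrac{y^2}{t^{2-\epsilon}}<\delta')\tfrac{4y^2}{t^{2(2-\epsilon)}} + F'(\tfrac{y^2}{t^{2-\epsilon}}<\delta')\tfrac{2}{t^{2-\epsilon}}$: the ``low order'' terms $\tfrac{s^{-1/2}}{s+k^2}(\bigtriangleup_y G)$ and $\tfrac{s^{-1/2}}{s+k^2}(2ik\cdot\bigtriangledown_y G)$ give (\ref{9172}) and (\ref{9171}), while the ``high order'' terms, carrying the extra factor $\tfrac{s^{-1/2}}{s+k^2}(s^{1/2})k^2 = \tfrac{k^2}{s+k^2}$, give (\ref{9174}) and (\ref{9173}). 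It therefore suffices to bound, for each fixed $t$, the $s$-integral of each such operator-valued integrand in norm, by a bound that is an integrable function of $t$ near $t=\infty$.

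Norm convergence for fixed $t$ follows exactly as in Lemmas \ref{917first} and \ref{917second}. One first commutes the $y$-dependent cutoff factors all the way to the left (which, as in the derivation of (\ref{schwartz5}), only generates further factors $\tfrac1{s+k^2}$ and lower-order derivatives of the cutoff, which are no worse), so that all $y$-multipliers sit next to $\langle y \rangle^{-4}$; then one splits $\int_0^\infty = \int_0^1 + \int_1^\infty$. On $[1,\infty)$ one uses $\Vert \tfrac{s^{-1/2}}{s+k^2}\Vert_{L^\infty}\lesssim s^{-3/2}$, keeping $kE_\bigtriangleup(H)$ (which is bounded, since $D(H) = D(H_0)$) and $\tfrac{k^2}{s+k^2}$ (of norm $\le 1$) together as bounded factors on the right -- in particular never composing the lone $k$ from $2ik\cdot\bigtriangledown_y G$ with resolvent powers, which would produce an unbounded symbol such as $\tfrac{|k|^3}{(s+k^2)^2}$. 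On $(0,1]$ one factors $\tfrac{s^{-1/2}}{s+k^2} = \tfrac1{|k|^{6/5}}\cdot |k|^{6/5}\tfrac{s^{-1/2}}{s+k^2}$, invokes the Hardy-Littlewood-Sobolev estimate (Corollary 5.10 in \cite{liebloss}) that $\langle y \rangle^{-6/5}\tfrac1{|k|^{6/5}}$ is bounded on $L^2(\mathbb{R}^3_y)$, and uses $\Vert |k|^{6/5}\tfrac{s^{-1/2}}{s+k^2}\Vert_{L^\infty}\lesssim s^{-9/10}$ (and the analogous $\Vert |k|^{16/5}\tfrac{s^{-1/2}}{(s+k^2)^2}\Vert_{L^\infty}\lesssim s^{-9/10}$ for the high-order terms). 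Since $\langle y \rangle^{-4}$ is far more $y$-decay than the $\langle y \rangle^{-6/5}$ spent on Hardy-Littlewood-Sobolev, there is room to spare.

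The gain in $t$ is the only new ingredient beyond Lemma \ref{917second}. Once the cutoff factors sit immediately to the right of $\langle y \rangle^{-4}$, one uses that $F'(\tfrac{y^2}{t^{2-\epsilon}}<\delta')$ and $F''(\tfrac{y^2}{t^{2-\epsilon}}<\delta')$ are supported, as multipliers in $y$, on the shell $\{\,y : y^2 \asymp \delta'\,t^{2-\epsilon}\,\}$, where $\langle y \rangle^{-4}\lesssim t^{-(4-2\epsilon)}$; the explicit prefactors $\tfrac{2y}{t^{2-\epsilon}}$, $\tfrac{4y^2}{t^{2(2-\epsilon)}}$, $\tfrac2{t^{2-\epsilon}}$ there each contribute further factors $t^{-c}$ with $c\ge 0$. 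Splitting the left weight as $\langle y \rangle^{-4} = \langle y \rangle^{-6/5}\cdot\langle y \rangle^{-14/5}$ so that the Hardy-Littlewood-Sobolev step uses $\langle y \rangle^{-6/5}$ and the cutoff estimate uses $\langle y \rangle^{-14/5}$ (and using the full $\langle y \rangle^{-4}$ in the regime $s\ge 1$, where no Hardy-Littlewood-Sobolev is needed), one finds that in either $s$-range each integrand carries a $t$-prefactor $\lesssim t^{-(2-\epsilon)}$; integrating over $s$ keeps this finite, so the operator in each of (\ref{9171})-(\ref{9174}) has norm $\lesssim t^{-(2-\epsilon)}$, which is integrable in $t$ near $t=\infty$ since $\epsilon$ is small. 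The commutator remainders arising when the cutoffs are moved left are supported on the same shell, with comparable or smaller $t$-prefactors and with extra $\tfrac1{s+k^2}$ factors, hence are estimated identically.

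The step I expect to require the most care is the bookkeeping near $s = 0$, where the Hardy-Littlewood-Sobolev estimate, the $t$-decay coming from the cutoff shell, and the boundedness and $s$-integrability of the surviving Fourier symbols must all be arranged simultaneously and with the correct distribution of $y$- and $k$-weights. Because $\langle y \rangle^{-4}$ so comfortably dominates the $\langle y \rangle^{-6/5}$ needed for Hardy-Littlewood-Sobolev, and because the derivatives of the cutoff supply additional $t^{-1}$ powers, these constraints are mutually compatible, and no estimate beyond those already developed for Lemmas \ref{917first} and \ref{917second} is required.
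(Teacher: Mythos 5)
Your proof is correct, and the key reduction — recognize \eqref{9171}--\eqref{9174} as the terms of $\langle y\rangle^{-4}\bigl[\tfrac{s^{-1/2}}{s+k^2}k^2,F\bigr]E_\bigtriangleup(H)$, commute the $y$-multipliers to the left, split $\int_0^1+\int_1^\infty$, and use Hardy--Littlewood--Sobolev for $s$ small and $L^\infty$ for $s$ large — is the same as the paper's, which it inherits from Lemmas \ref{917first} and \ref{917second}. The difference is in where the $t$-decay comes from. The paper's proof of Lemma \ref{917third} is a one-liner: after the $y$-commutation, the cutoff multipliers $F'(\tfrac{y^2}{t^{2-\epsilon}}<\delta')\tfrac{1}{t^{2-\epsilon}}$ and $F''(\tfrac{y^2}{t^{2-\epsilon}}<\delta')\tfrac{1}{t^{2(2-\epsilon)}}$ are simply factored to the far left as bounded $y$-multipliers with operator norms $O(t^{-(2-\epsilon)})$ and $O(t^{-2(2-\epsilon)})$ respectively, and what remains is verbatim one of the integrals \eqref{9175}--\eqref{9178_2} of Lemma \ref{917second}, whose $s$-integrability is already established; the shell support of $F'$, $F''$ is never invoked. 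You instead split $\langle y\rangle^{-4}=\langle y\rangle^{-6/5}\langle y\rangle^{-14/5}$ and extract extra $t$-decay from the shell $\{y^2\asymp\delta't^{2-\epsilon}\}$. This is sound and yields a stronger (unnecessary) rate, but it reproves the $s$-integrability from scratch rather than quoting Lemma \ref{917second}, and it entangles the $t$-estimate with the $s$-estimate in a way the paper's factorization avoids. One small inaccuracy: the symbol $\tfrac{|k|^3}{(s+k^2)^2}$ is bounded for each fixed $s>0$ (it peaks at $O(s^{-1/2})$); the real danger of composing the lone $k$ with extra resolvents is not unboundedness for fixed $s$ but loss of $s$-integrability near $s=0$, which is what the Hardy--Littlewood--Sobolev factorization circumvents.
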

\begin{proof}
Applying Lemma \ref{917second} and the fact that $F'(\frac{y^2}{t^{2-\epsilon}}<\delta')\frac{1}{t^{2-\epsilon}}$ and $F''(\frac{y^2}{t^{2-\epsilon}}<\delta')\frac{1}{t^{2(2-\epsilon)}}$ are bounded operators that are integrable in $t$ near $t=\infty$, this is immediate.
\end{proof}

\begin{lemma}\label{917fourth}
The following operator-valued integral converges to a bounded operator that is integrable in $t$ near $t=\infty$.
$$\int_0^\infty \langle y \rangle^{-4} [\frac{s^{-1/2}}{s+k^2}k^2, F(\frac{y^2}{t^{2-\epsilon}}<\delta')] E_\bigtriangleup(H) ds$$
\end{lemma}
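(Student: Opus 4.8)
The plan is to reduce Lemma \ref{917fourth} entirely to Lemma \ref{917third}, using the same commutator identity employed in Section 3 for the configuration-space partition of unity. Write $F = F(\frac{y^2}{t^{2-\epsilon}}<\delta')$ for brevity; the point of the lemma is that the stated integral is (up to a factor $\pi$) the operator $\langle y\rangle^{-4}[|k|,F]E_\bigtriangleup(H)$ needed for the $[H_0,F_a]$ term in Lemma \ref{heisDerivsAreIntegrable} when $a=(x)(y0)$. First I would record the algebraic identity obtained by running the derivation of (\ref{schwartz1})--(\ref{schwartz5}) verbatim with $F$ in place of $j_a$:
\[
\Big[\frac{s^{-1/2}}{s+k^2}k^2,\ F\Big] = \pm\left( \frac{s^{-1/2}}{s+k^2}\,s^{1/2}\big( (\bigtriangleup_y F) - 2ik\cdot(\bigtriangledown_y F)\big)\frac{s^{-1/2}}{s+k^2}k^2 \ -\ \frac{s^{-1/2}}{s+k^2}\big( (\bigtriangleup_y F) - 2ik\cdot(\bigtriangledown_y F)\big)\right).
\]
That this identity, a priori valid only on Schwartz vectors, transfers to $ran(E_\bigtriangleup(H))$ so that it may be integrated against $ds$ is justified exactly as in the discussion surrounding (\ref{schwartz0}): $F$ and all its derivatives are smooth and bounded, $ran(E_\bigtriangleup(H))\subset D(H^n)$ for every $n$, hence lies in $D(|k|)$, and the square-root representation of $|k|$ converges as a Bochner integral; a density argument closes the gap.

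Next I would substitute $\bigtriangledown_y F = F'(\frac{y^2}{t^{2-\epsilon}}<\delta')\frac{2y}{t^{2-\epsilon}}$ and $\bigtriangleup_y F = F''(\frac{y^2}{t^{2-\epsilon}}<\delta')\frac{4y^2}{t^{2(2-\epsilon)}} + cF'(\frac{y^2}{t^{2-\epsilon}}<\delta')\frac{1}{t^{2-\epsilon}}$ (the constant $c$ being the dimensional factor, immaterial here), sandwich the identity between $\langle y\rangle^{-4}$ on the left and $E_\bigtriangleup(H)$ on the right, and use $s^{1/2}\cdot\frac{s^{-1/2}}{s+k^2}k^2 = \frac{k^2}{s+k^2}$. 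The four resulting terms are then, up to bounded scalar constants, precisely the integrands of (\ref{9171}), (\ref{9172}), (\ref{9173}), (\ref{9174}): the two "bare" pieces coming from $-\frac{s^{-1/2}}{s+k^2}\big((\bigtriangleup_y F) - 2ik\cdot(\bigtriangledown_y F)\big)$ reproduce (\ref{9172}) and (\ref{9171}), while the two pieces carrying the trailing $\frac{k^2}{s+k^2}$ reproduce (\ref{9174}) and (\ref{9173}). The one harmless point to note is that $k$ and $F'(\frac{y^2}{t^{2-\epsilon}})\frac{2y}{t^{2-\epsilon}}$ do not commute, but their commutator is $-i\bigtriangledown_y\!\big(F'(\cdot)\frac{2y}{t^{2-\epsilon}}\big)$, which is again of the types handled in Lemma \ref{917third}, so the ordering used in (\ref{9171}), (\ref{9173}) is inconsequential.

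Finally, integrating in $s$ and invoking Lemma \ref{917third}, which asserts that each of these four operator-valued integrals converges in norm to a bounded operator whose norm is integrable in $t$ near $t=\infty$, one concludes that $\int_0^\infty \langle y\rangle^{-4}[\frac{s^{-1/2}}{s+k^2}k^2, F]E_\bigtriangleup(H)\,ds$ converges in norm to a bounded operator, integrable in $t$; the $t$-integrability is exactly the $t$-integrability of the coefficients $F'(\cdot)t^{-(2-\epsilon)}$ and $F''(\cdot)t^{-2(2-\epsilon)}$ supplied by Lemma \ref{917third}. The main obstacle is purely organizational: carrying out the commutator expansion cleanly and matching each term against the correct integral of Lemma \ref{917third}. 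The genuine analytic content --- $s$-integrability of each piece, obtained by commuting the $y$-factors to the left and invoking Hardy--Littlewood--Sobolev, as in the proofs of Lemmas \ref{917first}--\ref{917second} --- has already been absorbed into Lemma \ref{917third}, so nothing new is required there.
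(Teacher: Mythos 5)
Your proposal is correct and follows the same route as the paper, whose proof is simply the one-line remark that expanding the commutator produces terms as in (\ref{9171})--(\ref{9174}); you have filled in exactly the intended expansion (the $j_a\to F$ version of (\ref{schwartz5})), the substitution of $\bigtriangledown_y F$ and $\bigtriangleup_y F$, the matching of the four pieces to Lemma \ref{917third}, and the minor $k$-versus-$F'$ ordering point.
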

\begin{proof}
Expanding this commutator, we get several terms as in (\ref{9171})-(\ref{9174}).
\end{proof}

Thus, we claim the term

$$\| E_\bigtriangleup(H_a)e^{iH_a t} \Bigg( [|k|,F(\frac{y^2}{t^{2-\epsilon}}<\delta')] \Bigg) e^{-iHt} E_\bigtriangleup(H)\| $$ is integrable in $t$; since $E_{\bigtriangleup}(H_a)$ is a bounded operator times $\langle y \rangle^{-4}$ and we can use the square-root representation of $|k|$, this reduces to Lemma \ref{917fourth}. This concludes the proof for the case $a=(x)(y0)$.\bigskip

Finally, we mention that a substantially similar proof gives that (\ref{heisDerivs1}) is integrable in $t$ for the case $a=(xy)(0)$. Proceeding as in the previous case, simply replace references to $x$ or $y$ with references to $(x-y)$ or $(x+y)$. \bigskip

This proves Lemma \ref{waveOpsExist}. The next goal is to show that the energy cutoffs on the left hand side are actually redundant.

\begin{lemma}\label{noEnergyCutoffs}
Let $\psi$ be a wavefunction such that that $ \| \psi \| + \| |A|^{5/4} \psi \|<\infty$. The following limits exist for $\#(a)=2$:
\begin{equation}
\lim_{t\rightarrow \pm \infty} e^{iH_a t} F_a e^{-iHt}E_{\bigtriangleup}(H)\psi
\end{equation}
\end{lemma}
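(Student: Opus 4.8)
The plan is to peel off a smooth energy localization for $H_a$ and treat the two pieces separately. Fix a real $g\in C^\infty(\mathbb{R})$ with $g\equiv 1$ on a neighborhood of $\bigtriangleup$ and $\operatorname{supp}g$ contained in a closed interval strictly below $0$, and a real $\tilde g\in C^\infty_0(\mathbb{R})$ with $\tilde g\equiv 1$ on $\bigtriangleup$ and $g\equiv 1$ on $\operatorname{supp}\tilde g$; then $E_\bigtriangleup(H)=\tilde g(H)E_\bigtriangleup(H)$ and $(1-g(H_a))\tilde g(H_a)=0$. Writing
$$e^{iH_at}F_ae^{-iHt}E_\bigtriangleup(H)\psi=e^{iH_at}g(H_a)F_ae^{-iHt}E_\bigtriangleup(H)\psi+e^{iH_at}\big(1-g(H_a)\big)F_ae^{-iHt}E_\bigtriangleup(H)\psi,$$
it suffices to show the first term converges as $t\to\pm\infty$ and the second tends to $0$.

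For the first term, $g(H_a)$ commutes with $e^{iH_at}$, so by Cook's argument convergence follows once $\big\|g(H_a)\big(I_aF_a+\tfrac{d}{dt}F_a+[H_0,F_a]\big)e^{-iHt}E_\bigtriangleup(H)\psi\big\|$ is integrable in $t$ near $\pm\infty$ (here $[H_a,F_a]=[H_0,F_a]$ and $F_aI_a=I_aF_a$, since $I^a,I_a,F_a$ are all multiplication operators). This is exactly the quantity bounded in Lemma \ref{heisDerivsAreIntegrable}, with $g(H_a)$ replacing $E_\bigtriangleup(H_a)$. Because $\operatorname{supp}g$ lies strictly below $0$, the range of $g(H_a)$ is spanned by bound states of the subsystem Hamiltonians, so the analogue of $(\ref{FDE})$ holds for $g(H_a)$; since the proofs of Lemmas \ref{heisDerivsAreIntegrable} and \ref{usingEverything} use only $(\ref{FDE})$, $(\ref{SR})$ and $(\ref{minimalVelocity})$, they carry over verbatim, and the first term converges.

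For the second term, unitarity of $e^{iH_at}$ reduces matters to proving $(1-g(H_a))F_ae^{-iHt}E_\bigtriangleup(H)\psi\to 0$. Inserting $\tilde g(H)$ for free and using $(1-g(H_a))\tilde g(H_a)=0$,
$$(1-g(H_a))F_a\tilde g(H)=(1-g(H_a))[F_a,\tilde g(H)]+(1-g(H_a))\big(\tilde g(H)-\tilde g(H_a)\big)F_a.$$
The commutator $[F_a,\tilde g(H)]$ is $O(t^{-(1-\epsilon/2)})$ in operator norm by the computations of Section 4.2 (write $F_a$ as a function of $X/t^{1-\epsilon/2}$ and use that $[\tilde g(H),X]$ is bounded), so that term vanishes. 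In the remaining term, $\tilde g(H)-\tilde g(H_a)$ is controlled by the intercluster potential $I_a$ through the second resolvent identity $(H-z)^{-1}-(H_a-z)^{-1}=-(H-z)^{-1}I_a(H_a-z)^{-1}$ (applied via a Helffer--Sjöstrand representation of $\tilde g$, or through the class-of-functions argument of Section 3, handling resolvents first). Commuting $F_a$ leftward through the resolvents of $H_a$ (it commutes with $I_a$, and each pass through a resolvent costs a commutator $[H_0,F_a]$ estimated as in Lemma \ref{heisDerivsAreIntegrable}) and repeatedly peeling off those resolvents from the propagated state, one is reduced, modulo terms of size $O(t^{-(1-\epsilon/2)})$, to showing that $I_aF_a$ applied to an energy-localized propagated state $e^{-iHt}(H-z)^{-1}E_\bigtriangleup(H)\psi$ tends to $0$. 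On $\operatorname{supp}F_a$ the internal coordinate $x^a$ is $O(t^{1-\epsilon/2})$, while $(\ref{minimalVelocity})$ forces $|X|\gtrsim t^{1-\epsilon/2}$, so $I_a$ is small there; splitting $I_aF_a$ exactly as in the proof of Lemma \ref{usingEverything} and using $(\ref{SR})$ on the piece where the complementary cluster coordinate is large and $(\ref{minimalVelocity})$ on the piece where it is small gives the claim.

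I expect the main obstacle to be the bookkeeping just indicated: ensuring that none of the resolvents $(H_a-z)^{-1}$ generated by the expansion get trapped between $F_a$ and the propagated vector in a form that cannot be estimated, and that every error left over from commuting $F_a$ through is either an explicit $O(t^{-(1-\epsilon/2)})$ operator or an $I_a$ multiplied against an energy-localized, minimal-velocity-controlled state. The device for making this work is to expand each stray resolvent by $(H_a-z)^{-1}=(H-z)^{-1}+(H-z)^{-1}I_a(H_a-z)^{-1}$, so that its leading part acting on $e^{-iHt}E_\bigtriangleup(H)\psi$ becomes $e^{-iHt}(H-z)^{-1}E_\bigtriangleup(H)\psi$ — still localized in $\bigtriangleup$, with norm $\lesssim|\operatorname{Im}z|^{-1}$, so $(\ref{minimalVelocity})$ and $(\ref{minimalVelocityWithA2})$ still apply — while every correction carries an extra factor $I_a$; together with the $F_a$-cutoff this closes the estimate. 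One must also check that $(H+i)^{-1}E_\bigtriangleup(H)$ preserves the weight domain $D(|A|^{5/4})$, so that the auxiliary vectors produced along the way remain admissible inputs for the minimal velocity estimates.
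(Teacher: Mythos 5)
Your proposal follows essentially the same strategy as the paper: split off the low-energy subsystem localization (the paper uses $E_\bigtriangleup(H_a)$ directly via Lemma~\ref{waveOpsExist} and then smooths by Stone--Weierstrass; you smooth at the outset with $g(H_a)$ and redo Cook's method), show the commutator $[F_a,\,\text{cutoff}]$ is $O(t^{-(1-\epsilon/2)})$ using the Section 4.2 machinery, and control the resolvent-difference term $I_aF_a$ acting on the propagated state by splitting the complementary cluster coordinate and invoking (\ref{SR}) on one piece and (\ref{minimalVelocity}) on the other. The argument is sound; your flagged concern about $(H+i)^{-1}E_\bigtriangleup(H)$ preserving $D(|A|^{5/4})$ is a reasonable technical point, but since the minimal velocity estimate is applied after the propagated state is already formed (not after passing through extra resolvents of $H$), the paper sidesteps it by keeping the resolvents of $H_a$ to the left and commuting the cutoff through them rather than through $(H+i)^{-1}$.
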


Once this is established, then because such $\psi$ form a dense set, it is immediate that the strong limits $$s-\lim_{t\rightarrow \pm \infty} e^{iH_a t} F_a e^{-iHt}E_{\bigtriangleup}(H)$$ exist. This is the fact that we will use in the proof of asymptotic completeness.

\begin{proof}
 By Lemma \ref{waveOpsExist}, we need to show that $\lim_{t\rightarrow \pm \infty} (1- E_\bigtriangleup(H_a))e^{iH_a t} F_a e^{-iHt}E_{\bigtriangleup}(H)\psi$ exists in all three cases $a$. We can rewrite
\begin{align*}
(1- E_\bigtriangleup(H_a))e^{iH_a t} F_a e^{-iHt}E_{\bigtriangleup}(H)\psi &= e^{iH_a t} [- E_\bigtriangleup(H_a),F_a] e^{-iHt}E_{\bigtriangleup}(H)\psi  \\
&\hspace{1cm} + e^{iH_a t} F_a (E_{\bigtriangleup}(H)\psi- E_\bigtriangleup(H_a))e^{-iHt}  E_{\bigtriangleup}(H)\psi\\
\end{align*}

By the usual Stone-Weierstrass argument (replacing $E_\bigtriangleup(H_a)$ with a smoothed version), this converges to zero if the following converge to $0$.

\begin{equation}\label{9251}
e^{iH_a t} [- \frac{1}{H_a+i},F_a] e^{-iHt}E_{\bigtriangleup}(H) \psi
\end{equation}
\begin{equation}\label{9252}
e^{iH_a t} F_a (\frac{1}{H+i}- \frac{1}{H_a+1})e^{-iHt}  E_{\bigtriangleup}(H)\psi
\end{equation}

Regarding (\ref{9251}), we compute
\begin{align*}
[\frac{1}{H_a+i},F_a]  &= \frac{1}{H_a +i}[H_0,F_a] \frac{1}{H_a+i}\\
\end{align*}

By estimating $[p^2,F_a]$ and $[|k|,F_a]$ as before, one proves that (\ref{9251}) converges to $0$. This estimate is strictly easier to prove than those in Lemma \ref{waveOpsExist} because we merely to show convergence to $0$, not integrability in $t$. \bigskip

As for (\ref{9252}), we compute

\begin{align*}
F_a (\frac{1}{H+i}- \frac{1}{H_a+1}) &= [F_a,\frac{1}{H+i}] I_a \frac{1}{H_a+i} + \frac{1}{H+i}F_a I_a \frac{1}{H_a+i}\\
\end{align*}
The analysis of the term containing $[F_a,\frac{1}{H+i}] I_a \frac{1}{H_a+i}$ reduces to the same analysis as for (\ref{9251}), so we need only consider the term containing $\frac{1}{H+i}F_a I_a \frac{1}{H_a+i}$. As an example we show how this works for the case $a=(y)(x0)$.

$$\frac{1}{H+i}\Big(V_{13}(y) + V_{23}(x-y)\Big) F(\frac{x^2}{t^{2-\epsilon}}<\delta') \frac{1}{H_a+i} $$ $$= \frac{1}{H+i}\Big(V_{13}(y) + V_{23}(x-y)\Big) F(\frac{x^2}{t^{2-\epsilon}}<\delta')F(\frac{y^2}{t^{2-\epsilon}}>\epsilon)\frac{1}{H_a+i} $$
$$+ \frac{1}{H+i}\Big(V_{13}(y) + V_{23}(x-y)\Big) F(\frac{x^2}{t^{2-\epsilon}}<\delta')F(\frac{y^2}{t^{2-\epsilon}}<\epsilon)\frac{1}{H_a+i} $$

The first term here can be taken care of with the short range property (\ref{SR}), i.e.

$$\| F_a \frac{1}{H+i}I_a  F(\frac{x^2}{t^{2-\epsilon}}<\delta')F(\frac{y^2}{t^{2-\epsilon}}>\epsilon) \frac{1}{H_a+1}e^{-iHt}  E_{\bigtriangleup}(H)\psi \|$$

converges to $0$ in $t$. For the second term, one wishes to use the minimal velocity estimate (\ref{minimalVelocity}). We commute

$$\frac{1}{H+i}\Big(V_{13}(y) + V_{23}(x-y)\Big) F(\frac{x^2}{t^{2-\epsilon}}<\delta')F(\frac{y^2}{t^{2-\epsilon}}<\epsilon)\frac{1}{H_a+i}$$
$$= \frac{1}{H+i}\Big(V_{13}(y) + V_{23}(x-y)\Big) [F(\frac{x^2}{t^{2-\epsilon}}<\delta')F(\frac{y^2}{t^{2-\epsilon}}<\epsilon),\frac{1}{H_a+i}]$$
$$+ \frac{1}{H+i}\Big(V_{13}(y) + V_{23}(x-y)\Big))\frac{1}{H_a+i}F(\frac{x^2}{t^{2-\epsilon}}<\delta')F(\frac{y^2}{t^{2-\epsilon}}<\epsilon)$$
The first term above once again reduces to the same analysis as (\ref{9251}). Then we can use the minimal velocity estimate on the second term, i.e.

$$\| F_a \frac{1}{H+i}I_a \frac{1}{H_a+1} F(\frac{x^2}{t^{2-\epsilon}}<\delta')F(\frac{y^2}{t^{2-\epsilon}}<\epsilon)e^{-iHt}  E_{\bigtriangleup}(H)\psi \|$$

converges to $0$ in $t$. Having shown all terms converge to $0$ in $t$, this concludes the proof.
\end{proof}

\subsection{Proof of the theorem}
Define the following operators $W_{a}(t)$ for $\#(a)=2$ by

\begin{equation}
W_{a}(t):= e^{iH_a t} F_a e^{-iHt}E_{\bigtriangleup}(H)
\end{equation}

Given a state $\psi$, we define $\phi_a:= \lim_{t\rightarrow \pm \infty} W_{a}(t) \psi$. \bigskip

We are ready to prove Theorem 2. Let $\psi$ be a state on the range of $E_\bigtriangleup(H)$. Following \cite{ss}, we write:

\begin{align*}
e^{-iHt} \psi &= \sum_{\#(a)=2}  F_a e^{-iHt} \psi + rem.\\
&= \sum_{\#(a)=2} e^{-iH_a t} W_a(t) \psi + rem.\\
\end{align*}

where $rem.$ converges to $0$ in $t$. Taking limits, we arrive at the statement of the theorem. It remains to show that the remainder does indeed decay.

\begin{align*}
rem. = (1- F(\frac{x^2}{t^{2-\epsilon}}<\delta')-F(\frac{y^2}{t^{2-\epsilon}}<\delta')- F(\frac{(x-y)^2}{t^{2-\epsilon}}<\delta')) e^{-iHt}E_{\bigtriangleup}(H)\psi\\
\end{align*}

By minimal velocity it is free to add:

\begin{align*}
rem. &= (1- F(\frac{x^2}{t^2}<\delta')F(\frac{y^2}{t^2}>\epsilon)-F(\frac{y^2}{t^2}<\delta')F(\frac{x^2}{t^2}>\epsilon)\\
&\hspace{1cm}- F(\frac{(x-y)^2}{t^2}<\delta')F(\frac{(x+y)^2}{t^2}>\epsilon)) e^{-iHt}E_{\bigtriangleup}(H)\psi\\
\end{align*}

since the rest converges to $0$ in $t$. Now, the operator $$(1- F(\frac{x^2}{t^2}<\delta')F(\frac{y^2}{t^2}>\epsilon)-F(\frac{y^2}{t^2}<\delta')F(\frac{x^2}{t^2}>\epsilon)- F(\frac{(x-y)^2}{t^2}<\epsilon)F(\frac{(x+y)^2}{t^2}>\epsilon)) $$ is supported in the phase space region where $x^2>\epsilon t^2$, $y^2 > \epsilon t^2$, and $(x-y)^2 > \epsilon t^2$. Because we are on negative energy, we may write $$E_\bigtriangleup(H)= E_\bigtriangleup(H)- E_\bigtriangleup(H_0)$$ and then use Stone-Weierstrass to obtain from this: $$\frac{1}{H_0+i}(V_{12}(x) + V_{13}(y) + V_{23}(x-y)) \frac{1}{H+i}$$ Commuting these phase space operators in, we see that the necessary decay is achieved. This concludes the proof.

\end{document}